\newtheorem{theorem}{Theorem}
\newtheorem{lemma}{Lemma}
\newtheorem{corollary}{Corollary}[lemma]
\newtheorem{proposition}{Proposition}
\theoremstyle{definition}
\newtheorem{example}{Example}
\newtheorem{remark}{Remark}
  {
      \theoremstyle{plain}
      \newtheorem{assumption}{Assumption}
  }
\newcommand{\cleansum}{\sum\limits} 
\DeclareMathOperator*{\argmax}{arg\,max}
\DeclareMathOperator*{\argmin}{arg\,min}
\DeclareMathOperator{\spn}{span}
\DeclareMathOperator{\Ima}{Im}
\title{Transition Probabilities and Moment Restrictions in Dynamic Fixed Effects Logit Models \vspace{0.25cm}}
\author{\parbox{\linewidth}{
        \centering 
	\large Job Market Paper \\
\vspace{0.5cm}
\normalsize \setstretch{1}
\large Kevin Dano
\thanks{Department of Economics, University of California Berkeley. E-mail: kdano@berkeley.edu.  
\newline I am very grateful to  my advisors Bryan Graham, Stéphane Bonhomme, Demian Pouzo and Jim Powell for their generous support and advice. I also thank Chris Muris,  Bocar Ba, Yassine Sbai Sassi, Nick Gebbia, Tahsin Saffat and the audiences at the 2023 IAAE Annual Conference, the 2023 California Econometrics Conference and the 2023 Causal Panel Data Conference at the Stanford Graduate School of Business for valuable comments and discussions. Financial support from the 2023 IAAE Conference is gratefully acknowledged. All errors are my own.}
}}
 \date{ \parbox{\linewidth}{
 \centering \normalsize \setstretch{1}
 \hfill \\ 
 This version:
  \today . \\
Newest version \href{https://kevindano.github.io/assets/files/JMP_KevinDano.pdf}{here}. \endgraf}} 
\begin{document}
\bibliographystyle{apalike}
\setcitestyle{authoryear,open={(},close={)}}
\maketitle

\begin{abstract}\setstretch{1.25}\noindent
Dynamic logit models are popular tools in economics to measure state dependence. This paper introduces a new method to derive moment restrictions in a large class of such models with strictly exogenous regressors and fixed effects.  We exploit the common structure of logit-type transition probabilities and elementary properties of rational fractions, to formulate a systematic procedure that scales naturally with model complexity (e.g the lag order or the number of observed time periods). We detail the construction of moment restrictions in binary response models of arbitrary lag order as well as first-order panel vector autoregressions and dynamic multinomial logit models. Identification of common parameters and average marginal effects is also discussed for the binary response case.
Finally, we illustrate our results by studying the dynamics of drug consumption amongst young people inspired by \cite{deza2015there}.
\end{abstract}

\noindent\emph{Keywords:} dynamic discrete choice, panel data, fixed effects.

\noindent\emph{JEL Classification Codes:} C23, C33.

\newpage
\section{Introduction} \label{Section_1}
The analysis of state dependence is a classic and important topic in many areas of economics. Several discrete processes such as welfare and labor force participation manifest strong serial persistence, and economists have sought various methods to unravel the underlying factors. In this paper, we reexamine the estimation of one notable set of models employed for this purpose:  discrete choice models with lagged dependent variables, strictly exogenous regressors, fixed effects and logistic errors.  We shall refer to this class of models as dynamic fixed effects logit models (DFEL) throughout. Specifications of this kind are used to discriminate between \say{structural} state dependence, i.e the causal effect of past choices on current outcomes, and heterogeneity, i.e the serial correlation induced by unobserved individual attributes (\cite{heckman1981heterogeneity}). An example of this approach is the analysis of welfare participation in \cite{chay1999non}. There has been considerable interest in this family of panel data models in econometrics, with a recent surge in attention following new developments reported in \cite{honore2020moment}. One general reason is that DFEL models stand out as a rare case of nonlinear dynamic panel data models for which solutions to the \textit{incidental parameters problem} (\cite{neyman1948consistent}) and \textit{initial conditions problem} (e.g \cite{heckman1981heterogeneity}) have been known to exist in short panels\footnote{The incidental parameters problem refers to the general inconsistency of maximum likelihood in short panels. The initial conditions problem refers to the general difficulty of formulating a correct conditional distribution for the initial observations given the fixed effects and covariates.}.   \\
\indent  In the \say{pure} version of the basic model which abstracts from  covariates other than a first order lag, \cite{cox1958regression}, \cite{chamberlain_1985} and \cite{magnac2000subsidised} showed that the autoregressive parameter can be consistently estimated by conditional likelihood. This approach relies on the existence of a sufficient statistic linked to the logistic assumption to eliminate the fixed effect. In an important subsequent paper, \cite{honore2000panel} extended this idea to a setting with strictly exogenous regressors and showed that the conditional likelihood approach remains viable if one can further condition on the regressors being equal in specific periods. This strategy was also found to be effective in dynamic multinomial logit models (\cite{honore2000panel}), panel vector autoregressions (\cite{honore2019panel}) and dynamic ordered logit models (\cite{muris2020dynamic}. At the same time, it has also been noted 
that the necessity to be able to \say{match} the covariates imposes two limitations for the conditional likelihood approach: it inherently rules out time effects and implies rates of convergence  slower than $\sqrt{N}$ for continuous explanatory variables. Furthermore,  calculations from \cite{honore2000panel} suggested that it does not easily extend to models with a higher lag order.  These shortcomings have motivated the search for alternative methods of estimation. \\
\indent Recently, \cite{kitazawa2013exploration,kitazawa2016root} and
 \cite{Kitazawa_JOE2021} revisited the AR(1) logit model - autoregressive of order one - of \cite{honore2000panel} and proposed a transformation approach that deals with the fixed effects without restricting the nature of the covariates besides the conventional assumption of strict exogeneity. Their methodology leads to moment restrictions that can serve as a basis to estimate the model parameters at $\sqrt{N}$-rate by GMM; even with continuous regressors. In parallel work, \cite{honore2020moment} also derived moment conditions for the AR(1), AR(2) and AR(3) logit models in panels of specific length using the functional differencing technique of \cite{Bonhomme_EM12}. Their approach is partly numerical and relies on symbolic computing (e.g  Mathematica) to obtain analytical expressions but has a wider scope of potential applications, e.g  dynamic ordered logit specifications (\cite{honore2021dynamic}).  In another recent paper, \cite{dobronyi2021identification}, the authors analyze the full likelihood of AR(1) and AR(2) logit models with discrete covariates under a new angle that reveals a connection to the \textit{truncated moment problem} in mathematics. Drawing on well established results in that literature, they derive moment equality and new moment inequality restrictions that fully characterize the  sharp identified set. \\
 \indent In this paper, 
 we introduce a new systematic approach to construct moment restrictions in  DFEL models with additive fixed effects, i.e when fixed effects are heterogeneous \say{intercepts}. This class of models encompasses most specifications studied in prior work but excludes models with heterogeneous coefficients on lagged outcomes and/or regressors as in \cite{chamberlain_1985} and \cite{browning2014dynamic}.  Unlike some recent competing approaches, we do not require numerical experimentation nor symbolic computing. Rather, as we shall see in examples, we exploit the common structure of logit-type transition probabilities and elementary properties of rational fractions, to obtain  analytic expressions for the identifying moments. We shall focus our attention on deriving valid moment functions for AR($p$) models with arbitrary lag order $p\geq 1$ as well as first-order panel vector autoregressions and dynamic multinomial logit models (\cite{magnac2000subsidised}). \\
   \indent Our methodology exploits two key observations.  First, the transition probabilities of logit-type models can often be expressed as conditional expectations of functions of observables and common parameters given the initial condition, the regressors and the fixed effects. We shall refer to these moment functions as \textit{transition functions}.  They have the important feature of not depending on individual fixed effects. Second, as soon as $T\geq p+2$, where $T$ denotes the number of observations post initial condition, many transition probabilities in periods $t\in \{p+1,\ldots,T-1\}$ admit at least two distinct transition functions.  The combination of these two features motivates a two-step approach to obtain moment restrictions in panels of adequate length. In the first step, we shall compute the model transition functions. Then, the second step will simply consist in differencing two transition functions associated to the same transition probability. We show that a careful application of this procedure delivers all the moment equality restrictions available in the binary response case. We shall further elaborate on these steps in examples and use the resulting moment functions to derive new identification results. At a high level, the approach we advocate in this paper consists in solving a sequence of problems with identical structure period by period instead of solving directly a large system of equations based on the model full likelihood as in \cite{honore2020moment} and \cite{dobronyi2021identification}. As a consequence, our procedure remains tractable when the number of time periods increases and in models with higher order lags. \\
\indent Besides the aforementioned papers, our work also connects to a line of research studying the identification of features of the distribution of fixed effects in discrete choice models. One branch in this literature has focused on developing general optimization tools to compute sharp numerical bounds on average marginal effects. This includes most notably the linear programming method of \cite{honore2006bounds}, recently adapted by \cite{bonhomme2023identification} to the case of sequentially exogenous covariates, and the quadratic programming method of \cite{chernozhukov2013average}. A second branch in this literature has sought instead to harness the specificities of logit models to obtain simple analytical bounds. In static logit models,  \cite{davezies2021identification} exploit mathematical results on the \textit{moment problem} to formulate sharp bounds on the average partial effects of regressors on outcomes.  In DFEL models, \cite{aguirregabiria2021identification} are the first to prove the point identification of average marginal effects  in the baseline AR(1) logit model when $T\geq 3$. In related work, \cite{dobronyi2021identification} make use of their moment equality and moment inequality restrictions to establish sharp bounds on functionals of the fixed effects such as average marginal effects and average posterior means in AR(1) and AR(2) specifications. We complement these results as a byproduct of our methodology:  average marginal effects and their variants in AR($p$) models, with arbitrary $p\geq 1$ are merely differences of average transition functions. \\
\indent The remainder of the paper is organized as follows. Section \ref{Section_2} presents the setting and our main objective. Section \ref{Section_3} introduces some terminology and gives an outline of our procedure to construct moment restrictions. Section \ref{Section_4}  implements our approach in AR($p$) logit models with $p\geq 1$ and discusses identification of model parameters and average marginal effects. The semiparametric efficiency bound for the AR(1) is also presented for the base case of four waves of data. Section \ref{Section_5} discusses extensions to the VAR(1) and the dynamic multinomial logit model with one lag,  MAR(1) for short. In Section \ref{Section_6}, we present an empirical illustration on the dynamics of drug consumption amongst young people and Section \ref{Section_7} offers concluding remarks. A complementary set of Monte Carlo simulations showing the small sample performance of GMM estimators based on our moment restrictions is available in Appendix  Section \ref{Section_MonteCarlo}.  Proofs are gathered in the Appendix.

\section{Setting, assumptions and objective} \label{Section_2}
Let $i=1,\ldots,N$ denote a population index and $t=0,\ldots,T$ be an index for time. We study DFEL models which may be viewed as threshold-crossing econometric specifications describing a discrete outcome $Y_{it}$ through a latent index involving  lagged outcomes (e.g $Y_{it-1}$), strictly exogenous regressors $X_{it}$, an individual-specific time-invariant unobservable $A_i$ and an error term $\epsilon_{it}$. The canonical example is the AR(1) model:
\begin{align*}
    Y_{it}=\mathds{1}\{\gamma_0 Y_{it-1}+X_{it}'\beta_0+A_i-\epsilon_{it}\geq 0\}, \quad t= 1,\ldots,T
\end{align*}
 and we shall concentrate more broadly on cases where $A_i$ is additively separable from the other explanatory variables.
 An initial condition that we will generically denote $Y_{i}^{0}$ completes such models to enable dynamics. The common parameter $\theta_0$ is one target of interest and governs the influence of lagged outcomes and the regressors on the contemporaneous outcome.  Other quantities of interest include counterfactual parameters such as average marginal effects.\\
 \indent Throughout, we leave the joint distribution of $(Y_{i}^{0},X_i,A_i)$ unrestricted where $X_i=(X_{i1},\ldots,X_{iT})$ and thus refer to $A_i$ as a fixed effect in common with the literature. The schocks $\epsilon_{it}$ are assumed to be serially independent logistically distributed, independent of $(Y_{i}^{0},X_i,A_i)$, except for the MAR(1) model where they are instead extreme value distributed. Finally, we shall assume that $(Y_i,Y_{i}^{0},X_i,A_i)$ are jointly i.i.d across individuals.

\indent The data available to the econometrician consists of the initial condition $Y_{i}^0$, the outcome vector $Y_{i}=(Y_{i1},\ldots,Y_{iT})$, and the covariates $X_i$ for all $N$ individuals. Interest centers primarily on the identification and estimation of $\theta_0$ in short panels, i.e for fixed $T$. To this end,  the chief objective of this paper is to show how to construct moment functions $\psi_{\theta}(Y_i,Y_{i}^0,X_{i})$ free of the fixed effect parameter that are valid in the sense that:
\begin{align}\label{moment_function_final}
    \mathbb{E}\left[\psi_{\theta_0}(Y_i,Y_{i}^{0},X_{i})\,|\,Y_{i}^{0},X_i,A_i\right]=0
\end{align}
When this is possible, the law of iterated expectations implies the conditional moment: $$    \mathbb{E}\left[\psi_{\theta_0}(Y_i,Y_{i}^{0},X_{i})\,|\,Y_{i}^0,X_{i}\right]=0$$ which can in turn be leveraged to assess the identifiability of $\theta_0$ and form the basis of a GMM estimation strategy. This is the central idea underlying functional differencing (\cite{Bonhomme_EM12}) and was applied by \cite{honore2020moment} to derive valid moment conditions for a class of dynamic logit models with scalar fixed effects. We borrow the same insight but instead of searching for solutions numerically on a case-by-case basis, we propose a complementary systematic algebraic procedure to recover the model's valid moments \footnote{\cite{dobronyi2021identification} and \cite{Kitazawa_JOE2021} also have an algebraic approach but our methodologies are very different. The first paper uses the full likelihood of the model and focuses on the AR(1) and special instances of the AR(2) model. The second paper has a transformation approach adapted to the AR(1) model. Our emphasis here is primarily on developing an approach that is tractable for a large class of models.}.  In doing so, we flesh out the mechanics implied by the logistic assumption which in turn suggest a blueprint to deal with estimation of general DFEL models. For example, we are able to characterize the expressions of valid moment functions in AR($p$) models for arbitrary $p>1$ which to the best of our knowledge is a new result in the literature. Furthermore, our approach carries over to multidimensional fixed effect specifications: VAR(1), dynamic network formation models and the MAR(1) in which searching for moments numerically is cumbersome or intractable. \\
\indent In what follows, we shall use the shorthand $Y_{it_{1}}^{t_2}=(Y_{it_1},\ldots,Y_{it_2})$ to denote a collection of random variables over periods $t_1$ to $t_2$ with the convention that $Y_{it_{1}}^{t_2}=\emptyset$ if $t_1>t_2$. Likewise, we may use the notation $y_{t_1}^{t_2}=(y_{t_1},\ldots,y_{t_2})$ to denote any $(t_2-t_1)$-dimensional vector of reals with the convention $y_{t_{1}}^{t_{2}}=\emptyset$ for $t_1>t_2$.  Elements $1_{n}$ and $0_{n}$ shall refer to the $n$-dimensional vectors of ones and zeros respectively. The support of the outcome variable $Y_{it}$ shall be denoted $\mathcal{Y}$. We let $\Delta$ denote the first-differencing operator so that $\Delta Z_{it}=Z_{it}-Z_{it-1}$ for any random variable $Z_{it}$ and make use of the notation $Z_{its}=Z_{it}-Z_{is}$ for $s\neq t$ to accommodate long differences. We use $\mathds{1}\{.\}$ for the indicator function;  $\Ima(f)$, $\ker(f)$, $\rank(f)$ to denote the image, the nullspace and the rank of a linear map $f$. 

\section{Outline of the procedure to derive valid moment functions} \label{Section_3}
Let $T\geq 1$. Given an initial condition $y^0\in \mathcal{Y}^{p}$, $p\geq 1$ being the lag order of the model, and strictly exogenous regressors $X_i\in \mathbb{R}^{K_{x}\times T}$, we denote the (one-period ahead) transition probability in period $t\geq 1$ from state $(l_{1}^t,y^
0)\in \mathcal{Y}^{t}\times \mathcal{Y}^{p}$ to state $k\in \mathcal{Y}$ as:
\begin{align*}
    \pi^{k|l_{1}^t,y^0}_{t}(A_i,X_{i})=\pi^{k|l_{1}^t,y^0}_{t}(A_i,X_{i};\theta_0)\equiv P(Y_{it+1}=k \,|\ Y_{i}^0=y^0,Y_{i1}^t=l_{1}^t,X_i,A_i)
\end{align*}
With $p$ lags, the markovian nature of the models considered in this paper imply that $  \pi^{k|l_{1}^t,y^0}_{t}(A_i,X_{i})$ will not depend on the entire path of past outcomes but only on the value of the most recent $p$ outcomes. For instance, in an AR(1) model where $p=1$, we have:
\begin{align*}
    \pi^{k|l_{1}^t,y^0}_{t}(A_i,X_{i})=P(Y_{it+1}=k \,|\ Y_{i}^0=y^0,Y_{i1}^t=l_{1}^t,X_i,A_i)=P(Y_{it+1}=k \,|\ Y_{it}=l_{t},X_i,A_i)
\end{align*}
and thus we will suppress the dependence on $(y^0,l_1,\ldots,l_{t-1})$ and write $\pi^{k|l_t}_{t}(A_i,X_{i})$. We shall proceed analogously for the more general case $p\geq 1$. \\
\indent We call a \textit{transition function} associated to a transition probability  $  \pi^{k|l_{1}^t,y^0}_{t}(A_i,X_{i})$
any moment function $\phi^{k|l_{1}^t,y^0}_{\theta}(Y_i,Y_{i}^{0},X_i)$ of the data and the common parameters verifying:
\begin{align} \label{trans_functions}
    \mathbb{E}\left[\phi^{k|l_{1}^t,y_0}_{\theta_0}(Y_i,Y_{i}^{0},X_i)\,|\,Y_{i}^0,X_{i},A_i\right]=\pi^{k|l_{1}^t,y_0}_{t}(A_i,X_{i})
\end{align}
With these notions in hand, we are ready to describe our two-step approach to derive valid moment functions in the sense of equation (\ref{moment_function_final}). In \textbf{Step 1)}, we begin by computing the model's transition functions. Our procedure requires a minimum  of $T=p+1$ periods of observations to accommodate arbitrary regressors and initial condition. In this case, we can get analytical formulas for the transition functions associated to the transition probabilities in period $t=p$  and Theorem \ref{theorem_nummoments_AR1} and Theorem \ref{theorem_nummoments_ARp}  below imply that they are unique. However, this is not immediately helpful to get moment (equality) restrictions on $\theta_0$. We require one more period. 
As soon as $T\geq p+2$, we explain how to construct distinct transition functions associated to the same transition probabilities in periods $t\in\{p+1,\ldots,T-1\}$. The key ingredient is the use of \textit{partial fraction decompositions} for \textit{rational fractions} adapted to the structure of the transition probabilities. It is then a matter of taking differences of two transition functions associated to the same transition probability to obtain valid moment functions; we refer to this last step as \textbf{Step 2)}.  The ensuing sections demonstrate this procedure in scalar and multidimensional fixed effect models.

\section{Scalar fixed effect models} \label{Section_4}

\subsection{Moment restrictions for the AR(1) logit model} 
For exposition, we begin with the baseline AR(1) logit model with fixed effects introduced above:
\begin{align}\label{AR1_logit_general}
    Y_{it}=\mathds{1}\{\gamma_0 Y_{it-1}+X_{it}'\beta_0+A_i-\epsilon_{it}\geq 0\}, \quad t= 1,\ldots,T
\end{align}
Here, $\mathcal{Y}=\{0,1\}$, $\theta_0=(\gamma_0,\beta_0')\in \mathbb{R}\times \mathbb{R}^{K_{x}}$, the initial condition $Y_{i}^0$ consists of the binary-valued random variable $Y_{i0}$ and $A_i\in \mathbb{R}$. 

\subsubsection{The number of moment restrictions in the AR(1)}
We start out by enumerating the moment restrictions implied by the model. This will provide a means to assess the exhaustiveness of our approach. To this end, let $\mathcal{E}_{y_0,x}$ denote the conditional expectation operator mapping any function of the outcome variable $Y_i$ to its conditional expectation given $Y_{i0}=y_0,X_{i}=x$ and the fixed effect $A_i$, i.e
\begin{align*}
  \mathcal{E}_{y_0,x}\colon \mathbb{R}^{\mathcal{Y}^T}& \longrightarrow \mathbb{R}^\mathbb{R} \\[-1ex]
  \phi(.;y_0,x) & \longmapsto \mathbb{E}\left[\phi(Y_i,y_0,x)|Y_{i0}=y_0,X_{i}=x,A_{i}=.\right]
\end{align*}
For example, for any $y\in \mathcal{Y}^T$, $\mathcal{E}_{y_0,x}\left[\mathds{1}\{.=y\}\right]$ yields the conditional probability of observing history $y$ for all possible values of the fixed effect, i.e:
\begin{align*}
    \mathcal{E}_{y_0,x}\left[\mathds{1}\{.=y\}\right]&=P(Y_i=y|Y_{i0}=y_0,X_{i}=x,A_{i}=.)
\end{align*}
where  $P(Y_i=y|Y_{i0}=y_0,X_{i}=x,A_{i}=a)=\prod \limits_{t=1}^T \frac{e^{y_t(\gamma_0y_{t-1}+x_t'\beta_0+a)}}{1+e^{\gamma_0y_{t-1}+x_t'\beta_0+a}}, \quad \forall a \in \mathbb{R}$ . Then, we have the following result,
\begin{theorem}\label{theorem_nummoments_AR1}
     Consider model (\ref{AR1_logit_general}) with $T\geq 1$ and initial condition $y_0\in \mathcal{Y}$. Suppose that for any   $t,s\in \{1,\ldots,T-1\}$ and $y,\tilde{y}\in \mathcal{Y}$, $\gamma_0y+x_{t}'\beta_0\neq \gamma_0\tilde{y}+x_{s}'\beta_0$ if $t\neq s$ or $y\neq\tilde{y}$. Then, the family $ \mathcal{F}_{y_0,T}=\left\{1,\pi_{0}^{y_0|y_0}(.,x),(\pi_{t}^{0|0}(.,x),\pi_{t}^{1|1}(.,x))_{t=1}^{T-1}\right\}$ of size $2T$ forms a basis of $\Ima(\mathcal{E}_{y_0,x})$ and $\dim\left(\ker(\mathcal{E}_{y_0,x})\right)=2^T-2T$. 
\end{theorem}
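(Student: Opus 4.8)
The plan is to reparametrize by $u=e^{A_i}$ and treat every conditional probability as a rational function of $u$, so that $\Ima(\mathcal{E}_{y_0,x})$ becomes a space of rational functions whose dimension can be read off from partial fraction decompositions. Writing $c_t=e^{\gamma_0 y_{t-1}+x_t'\beta_0}$, the likelihood displayed above gives, for each history $y\in\mathcal{Y}^T$,
\[
g_y(u):=\mathcal{E}_{y_0,x}\big[\mathds{1}\{.=y\}\big]=\frac{u^{\sum_t y_t}\prod_{t:\,y_t=1}c_t}{\prod_{t=1}^T\big(1+c_t u\big)}.
\]
Since $\mathcal{Y}=\{0,1\}$, the factor contributed by period $t\ge 2$ takes one of two values according to $y_{t-1}$, while period $1$ contributes the single factor fixed by the known $y_0$. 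The genericity hypothesis is precisely what ensures that the resulting linear factors $1+cu$ are pairwise non-proportional, so that across all histories there are exactly $2T-1$ distinct poles $-1/c$, whose set I denote $C$.

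First I would establish the upper bound $\Ima(\mathcal{E}_{y_0,x})\subseteq V$, where $V$ is the span of $\{1\}\cup\{(1+cu)^{-1}:c\in C\}$. Putting each $g_y$ over the common denominator $D(u)=\prod_{c\in C}(1+cu)$ of degree $2T-1$, the numerator has degree $\sum_t y_t+(T-1)\le 2T-1=\deg D$; hence every $g_y$ is a constant plus a proper rational function with simple poles in $C$, i.e.\ $g_y\in V$. As $\{1\}\cup\{(1+cu)^{-1}:c\in C\}$ consists of a constant together with functions having distinct simple poles, it is linearly independent, so $\dim V=2T$. Since $\Ima(\mathcal{E}_{y_0,x})=\spn\{g_y:y\in\mathcal{Y}^T\}$, this yields $\dim\Ima(\mathcal{E}_{y_0,x})\le 2T$.

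Next I would match this with a lower bound by analyzing $\mathcal{F}_{y_0,T}$. As a function of $u$, each of its members is either the constant $1$, a single-pole term $(1+cu)^{-1}$ (the functions $\pi^{0|0}_t$, and $\pi^{y_0|y_0}_0$ when $y_0=0$), or $1-(1+cu)^{-1}$ (the functions $\pi^{1|1}_t$, and $\pi^{y_0|y_0}_0$ when $y_0=1$), where the poles range over exactly the $2T-1$ elements of $C$. The change of basis from $\mathcal{F}_{y_0,T}$ to $\{1\}\cup\{(1+cu)^{-1}\}$ is invertible—the non-constant members pick out distinct poles with coefficient $\pm1$ and the constant contributes $1$—so $\mathcal{F}_{y_0,T}$ is a linearly independent family of cardinality $2T$ that spans $V$. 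It therefore suffices to show $\mathcal{F}_{y_0,T}\subseteq\Ima(\mathcal{E}_{y_0,x})$: the constant equals $\mathcal{E}_{y_0,x}[\mathds{1}]$ and $\pi^{y_0|y_0}_0$ is the conditional expectation of $\mathds{1}\{Y_{i1}=y_0\}$, while for the interior probabilities $\pi^{k|k}_t$ with $1\le t\le T-1$ one must exhibit a data function (a transition function in the sense of (\ref{trans_functions})) whose conditional expectation equals $\pi^{k|k}_t$. Granting this inclusion, $\mathcal{F}_{y_0,T}\subseteq\Ima(\mathcal{E}_{y_0,x})\subseteq V$ with $\dim V=2T=|\mathcal{F}_{y_0,T}|$ forces $\Ima(\mathcal{E}_{y_0,x})=V$ and makes $\mathcal{F}_{y_0,T}$ a basis; rank-nullity for $\mathcal{E}_{y_0,x}\colon\mathbb{R}^{\mathcal{Y}^T}\to\mathbb{R}^{\mathbb{R}}$ then gives $\dim\ker(\mathcal{E}_{y_0,x})=2^T-2T$.

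The main obstacle is this last inclusion—producing transition functions for the interior transitions $\pi^{k|k}_t$—because $\pi^{k|k}_t$ is a ratio of conditional probabilities rather than a manifest conditional expectation. I would resolve it either constructively, building the transition functions through the partial-fraction device of Section \ref{Section_3} (which is exactly where the two fresh poles associated with a period's $y_{t-1}=0$ and $y_{t-1}=1$ branches enter), or, if only the dimension count is wanted, by induction on $T$: passing from $T-1$ to $T$ appends the period-$T$ factor and introduces precisely the two new poles $e^{x_T'\beta_0}$ and $e^{\gamma_0+x_T'\beta_0}$, whose residues are isolated using histories ending in $y_{T-1}=0$ and $y_{T-1}=1$, raising the dimension of the span from $2(T-1)$ to $2T$.
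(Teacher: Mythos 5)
Your proposal is correct and follows essentially the same route as the paper's own proof: both substitute $u=e^{A_i}$, use the genericity condition to obtain $2T-1$ distinct simple poles, decompose each history's conditional likelihood by partial fractions into the span of $\mathcal{F}_{y_0,T}$, place $\mathcal{F}_{y_0,T}$ inside $\Ima(\mathcal{E}_{y_0,x})$ via $\mathds{1}\{Y_{i1}=y_0\}$ for the initial transition and the transition functions of Lemma \ref{lemma_2} for the interior ones, and conclude with rank--nullity. Your upper-bound/lower-bound packaging through the $2T$-dimensional space $V$ is a cosmetic reorganization of the same argument, and the step you flag as the main obstacle is exactly the inclusion the paper delegates to its Lemma \ref{lemma_2}, whose constructive pole-matching derivation you correctly identify.
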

\noindent Theorem \ref{theorem_nummoments_AR1} formalizes the intuition that the transition probabilities summarize the parametric component of the model:  $2^T$ histories are possible yet only $2T$ basis elements are necessary to fully characterize their conditional probabilities. 
This follows from the observation that when the covariate index \footnote{We refer to the quantity $\gamma_0y_{t-1}+x_{t}'\beta_0$ for a given period $t$.} of each transition probability differ, the conditional probability of each history $y\in \mathcal{Y}^{T}$ is a ratio of polynomials in $e^{a}$, where the numerator has lower degree than the denominator, and the later is a product of distinct irreducible terms. A sufficient condition for this is that $\gamma_0\neq 0$ and that one regressor is continuously distributed with non-zero slope. In turn, standard results on \textit{partial fraction decompositions} ensure that this ratio can be expressed as a unique linear combination of transition probabilities. To finally conclude that $\mathcal{F}_{y_0,T}$ is a basis of $\Ima(\mathcal{E}_{y_0,x})$, we leverage upcoming results demonstrating that the transition probabilities live in $\Ima(\mathcal{E}_{y_0,x})$ as expectations of transition functions.  \\
\indent Importantly, since $\ker(\mathcal{E}_{y_0,x})$ is the set of valid moment functions verifying equation (\ref{moment_function_final}),  Theorem \ref{theorem_nummoments_AR1} tells us that the AR(1) model features $2^T-2T$ linearly independent moment restrictions in general. This is a  consequence of the \textit{rank nullity theorem} for linear maps with finite dimensional domains. The fact that $2^T-2T$ moment conditions are available for the AR(1) appeared initially as a conjecture in \cite{honore2020moment} and  was later established by \cite{dobronyi2021identification} using different arguments from here. They do not emphasize the role of the transition probabilities. Our ideas extend naturally to the case of arbitrary lags which was hitherto an open problem. We discuss this extension in Section \ref{Section_ARp_impossibility}.

\begin{remark}[Counting moments in logit models]
The idea of decomposing the conditional probabilities of all choice histories in a basis provides a useful device to infer a lower bound on the number of moment restrictions in logit models. If one can further prove that elements of this basis belong to the image of the conditional expectation operator, then this lower bound coincides with the exact number of moment restrictions.
\begin{itemize}
    \item In the static panel logit model of \cite{rasch1960studies}, $\gamma_0=0$ and we have $\pi_{t}^{1|1}(.,x)=1-\pi_{t}^{0|0}(.,x)$. Thus, provided that $x_t'\beta_0\neq x_s'\beta_0$ for all $t\neq s$, $\mathcal{F}_{T}=\left\{1,(\pi_{t}^{0|0}(.,x))_{t=0}^{T-1}\right\}$ spans the image of the conditional expectation operator. This implies at least $2^T-(T+1)$ moment restrictions. It turns out that  $2^T-(T+1)$ is precisely the total number of  moment restrictions for this model. This follows from Remark \ref{static_logit_moments} below which characterizes  the transition functions associated to each element of  $\mathcal{F}_{T}$. 
    \item In the \cite{cox1958regression} model, $\gamma_0\neq 0$ and $\beta_0=0$ and the transition probabilities are: $\pi^{0|0}(a)=\frac{1}{1+e^{a}}$ and $\pi^{1|1}(a)=\frac{e^{\gamma_0+a}}{1+e^{\gamma_0+a}}$ (or equivalently $\pi^{0|1}(a)=\frac{1}{1+e^{\gamma_0+a}}$). See the next section for further details. In this case, the family  $\mathcal{F}_{y_0,T}=\left\{1,\left(\pi^{0|0}(.)^j,\pi^{0|1}(.)^j\right)_{j=1}^{T-1},\pi^{0|y_0}(.)^T \right\}$ which consists of powers of the time-invariant transition probabilities spans the image of the conditional expectation operator. Since $|\mathcal{F}_{y_0,T}|=2T$, the model produces at least $2^T-2T$ linearly independent moment restrictions.
\end{itemize}
\end{remark}

\begin{remark}[A matrix perspective] Since $\mathcal{E}_{y_0,x}$ is a linear map, it admits a unique $2^T\times 2T $ matrix representation $\Lambda_{y_0,x}$ where each row translates the conditional probability of a choice history $y\in \mathcal{Y}^T$ in terms of the transition probabilities of $\mathcal{F}_{y_0,T}$\footnote{ Entries of this matrix may be found using for example the identities in Appendix Lemma \ref{tech_lemma_1} or any other standard textbook tools for \textit{rational fractions}.}. From this point of view, valid moments correspond to $2^T$-vectors $\psi$  in the left nullspace of $\Lambda_{y_0,x}$, meaning  $\psi'\Lambda_{y_0,x}=0$. Constructing $\Lambda_{y_0,x}$ and then solving this $2T$ linear system of equations in $2^T$ unknowns directly is straightforward using symbolic tools when $T$ is \say{small} (e.g \cite{dobronyi2021identification}, \cite{honore2020moment}) but is computationally impractical otherwise. Instead, we propose a constructive approach to back out analytic expressions of the valid moment functions that is tractable for arbitrary values of $T$.
\end{remark}
\noindent Having clarified the total count of moment restrictions in the AR(1) logit model, we next discuss how to construct them with our two-step procedure.
 
\subsubsection{Construction of valid moment functions for the pure model}
In the absence of exogenous regressors, model (\ref{AR1_logit_general}) simplifies to:
\begin{align}\label{AR1_logit_pure}
    Y_{it}=\mathds{1}\{\gamma_0 Y_{it-1}+A_i-\epsilon_{it}\geq 0\}, \quad t= 1,\ldots,T
\end{align}
which was first introduced by \cite{cox1958regression} and then revisited in \cite{chamberlain_1985}, \cite{magnac2000subsidised}. These papers established the identification of $\gamma_0$ for $T\geq 3$ via conditional likelihood  based on the insight that $(Y_{i0},\sum_{t=1}^{T-1} Y_{it},Y_{iT})$ are  sufficient statistics for the fixed effect. Our methodology is conceptually different as we seek to directly construct moment functions verifying equation (\ref{moment_function_final}). \\
\indent For what follows, it is helpful to remember that the individual-specific transition probability from state $l$ to state $k$ is time-invariant and given by:
\begin{align*}
    \pi^{k|l}(A_i)&=P(Y_{it+1}=k|Y_{it}=l,A_i)=\frac{e^{k(\gamma_0l+A_i)}}{1+e^{\gamma_0l+A_i}}, \quad \forall (l,k)\in \mathcal{Y} 
\end{align*}
\textbf{Step 1)}.  We shall begin by deriving the transition functions for $\pi^{0|0}(A_i)$ and $\pi^{1|1}(A_i)$. Observe that $\pi^{1|0}(A_i)$ and $\pi^{0|1}(A_i)$ are effectively redundant since probabilities sum to one. A natural starting place is to investigate the case $T=2$, i.e 2 periods of observations after the initial condition. Recalling definition (\ref{trans_functions}), we search for $\phi_{\theta}^{0|0}(Y_{i2},Y_{i1},Y_{i0})$, respectively $\phi_{\theta}^{1|1}(Y_{i2},Y_{i1},Y_{i0})$, whose conditional expectation given $(Y_{i0},A_i)$ yields $\pi^{0|0}(A_i)$, respectively $\pi^{1|1}(A_i)$. For the purposes of illustration and to show the kind of calculations arising broadly in DFEL models, 
let us derive $\phi_{\theta}^{0|0}(Y_{i2},Y_{i1},Y_{i0})$. By Bayes's rule:
\begin{multline*}
    \mathbb{E}\left[\phi_{\theta}^{0|0}(Y_{i2},Y_{i1},Y_{i0}) \,|\ Y_{i0}=y_0,A_i=a\right]\\
    =\cleansum_{y_{2}=0}^1\cleansum_{y_{1}=0}^1P(Y_{i2}=y_{2}|Y_{i1}=y_{1},A_i=a)P(Y_{i1}=y_{1}|Y_{i0}=y_{0},A_i=a)\phi_{\theta}^{0|0}(y_{2},y_{1},y_{0}) \\
     =\frac{e^{\gamma_0 y_{0}+a}}{1+e^{\gamma_0 y_{0}+a}}\left(\frac{e^{\gamma_0+a}}{1+e^{\gamma_0+a}}\phi_{\theta}^{0|0}(1,1,y_{0})+\frac{1}{1+e^{\gamma_0+a}}\phi_{\theta}^{0|0}(0,1,y_{0})\right)\\
     +\frac{1}{1+e^{\gamma_0 y_{0}+a}}\left(\frac{e^{a}}{1+e^{a}}\phi_{\theta}^{0|0}(1,0,y_{0})+\frac{1}{1+e^{a}}\phi_{\theta}^{0|0}(0,0,y_{0})\right)
\end{multline*}

where the second equality uses the logistic hypothesis. By quick inspection, we see that the terms in the first parenthesis have $(1+e^{\gamma_0+a})$ in their denominator unlike $\pi^{0|0}(A_i)$. Because $-e^{-\gamma_0}$ is not a \textit{pole} of $\pi^{0|0}(A_i)$\footnote{A \textit{pole} of a rational function is a root of its denominator. Formally, we are substituting $u=e^{a}$ and we are extending $\pi^{0|0}(u)$ to the real line.}, we conclude that $\phi_{\theta}^{0|0}(1,1,y_{0})=\phi_{\theta}^{0|0}(0,1,y_{0})=0$. This first deduction leaves us with
\begin{align*}
     &\mathbb{E}\left[\phi_{\theta}^{0|0}(Y_{i2},Y_{i1},Y_{i0}) \,|\ Y_{i0}=y_0,A_i=a\right]=
     \frac{1}{1+e^{\gamma_0 y_{0}+a}}\left(\frac{e^{a}}{1+e^{a}}\phi_{\theta}^{0|0}(1,0,y_{0})+\frac{1}{1+e^{a}}\phi_{\theta}^{0|0}(0,0,y_{0})\right)
\end{align*}
Now, since $\pi^{0|0}(A_i)$ does not depend on $y_0$, we must cancel the denominator $(1+e^{\gamma_0 y_{0}+a})$.  To achieve this, we must set: $\phi_{\theta_0}^{0|0}(1,0,y_{t-1})=C_0 e^{\gamma_0 y_{0}}, \phi_{\theta_0}^{0|0}(0,0,y_{t-1})=C_0$ for some constant $C_0\in \mathbb{R}\setminus\{0\}$. Then,
\begin{align*}
     \mathbb{E}\left[\phi_{\theta_0}^{0|0}(Y_{i2},Y_{i1},Y_{i0})|Y_{i0}=y_0,A_i=a\right]&=C_0\frac{1}{1+e^{a}}
\end{align*}
and $C_0=1$ is the appropriate normalization to obtain the desired transition function. Of course, the exact same logic applies for $\phi_{\theta_0}^{1|1}(Y_{i2},Y_{i1},Y_{i0})$ and $\pi^{1|1}(A_i)$. \\
\indent This short calculation provides a useful recipe for the general case $T\geq 2$. We learned that we can search for functions of three consecutive outcomes $\phi_{\theta}^{k|k}(Y_{it+1},Y_{it},Y_{it-1})$ such that:
\begin{align*}
    &\phi_{\theta}^{k|k}(Y_{it+1},Y_{it},Y_{it-1})=\mathds{1}\{Y_{it}=k\}\phi_{\theta}^{k|k}(Y_{it+1},k,Y_{it-1}) \\
     &\mathbb{E}\left[\phi_{\theta_0}^{k|k}(Y_{it+1},Y_{it},Y_{it-1}) \,|\ Y_{i0},Y_{i1}^{t-1},A_i\right]= \pi^{k|k}(A_i)
\end{align*}
The first restriction is a functional form that eliminates terms with inadequate \textit{poles} after taking expectations. The second restriction is a normalization condition to match the desired transition probability. Following this argument, we arrive at the expressions in Lemma \ref{lemma_1}.
\begin{lemma} \label{lemma_1} 
In model (\ref{AR1_logit_pure}) with $T\geq 2$ and $t\in\{1,\ldots,T-1\}$, let
\begin{align*}
   \phi_{\theta}^{0|0}(Y_{it+1},Y_{it},Y_{it-1})&=(1-Y_{it})e^{\gamma Y_{it+1}Y_{it-1}} \\
    \phi_{\theta}^{1|1}(Y_{it+1},Y_{it},Y_{it-1})&=Y_{it}e^{\gamma (1-Y_{it+1})(1-Y_{it-1})}
\end{align*}
Then:
\begin{align*}
    \mathbb{E}\left[\phi_{\theta_0}^{0|0}(Y_{it+1},Y_{it},Y_{it-1})|Y_{i0},Y_{i1}^{t-1},A_i\right]&=\pi^{0|0}(A_i)=\frac{1}{1+e^{A_i}} \\
    \mathbb{E}\left[\phi_{\theta_0}^{1|1}(Y_{it+1},Y_{it},Y_{it-1})|Y_{i0},Y_{i1}^{t-1},A_i\right]&=\pi^{1|1}(A_i)=\frac{e^{\gamma_0+A_i}}{1+e^{\gamma_0+A_i}}
\end{align*}
\end{lemma}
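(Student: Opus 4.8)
The plan is a direct computation of the two conditional expectations, exploiting the Markov structure to reduce the conditioning set and then verifying that a specific algebraic cancellation eliminates the dependence on the most recent conditioning outcome. I will treat $\phi_{\theta_0}^{0|0}$ in detail; the argument for $\phi_{\theta_0}^{1|1}$ is symmetric. First, I would observe that because each $\phi_{\theta}^{k|k}$ depends only on the three consecutive outcomes $(Y_{it+1},Y_{it},Y_{it-1})$ and the process is first-order Markov given $A_i$ (the $\epsilon_{it}$ being serially independent and independent of $A_i$), the conditional expectation given $(Y_{i0},Y_{i1}^{t-1},A_i)$ collapses to a conditional expectation given $(Y_{it-1},A_i)$. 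Writing $y_{t-1}$ for the realized value of $Y_{it-1}$, I would then sum over the two free outcomes $Y_{it}$ and $Y_{it+1}$, weighting by the logit transition probabilities $\pi^{y_{t}|y_{t-1}}(A_i)$ and $\pi^{y_{t+1}|y_{t}}(A_i)$.

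The indicator factor $(1-Y_{it})$ is what makes this manageable: it annihilates every term with $Y_{it}=1$, so only the branch $Y_{it}=0$ survives. On that branch $P(Y_{it}=0|Y_{it-1}=y_{t-1},A_i)=1/(1+e^{\gamma_0 y_{t-1}+A_i})$ and, since $Y_{it}=0$ resets the index to $A_i$, the inner sum over $Y_{it+1}$ of $e^{\gamma_0 Y_{it+1}y_{t-1}}$ against $\pi^{y_{t+1}|0}(A_i)$ produces the factor $(1+e^{\gamma_0 y_{t-1}+A_i})/(1+e^{A_i})$. The crux of the argument is precisely this last step: the exponential weight $e^{\gamma_0 Y_{it+1}Y_{it-1}}$ is designed so that the numerator $\sum_{y_{t+1}}e^{y_{t+1}(\gamma_0 y_{t-1}+A_i)}=1+e^{\gamma_0 y_{t-1}+A_i}$ exactly cancels the $y_{t-1}$-dependent denominator coming from the first transition, leaving $1/(1+e^{A_i})=\pi^{0|0}(A_i)$ irrespective of $y_{t-1}$.

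For $\phi_{\theta_0}^{1|1}$ the same bookkeeping applies with the branch $Y_{it}=1$ selected and the weight $e^{\gamma_0(1-Y_{it+1})(1-Y_{it-1})}$. Here $P(Y_{it}=1|Y_{it-1}=y_{t-1},A_i)=e^{\gamma_0 y_{t-1}+A_i}/(1+e^{\gamma_0 y_{t-1}+A_i})$, and after expanding the inner sum over $Y_{it+1}$ the numerator factors as $e^{\gamma_0+A_i}(1+e^{\gamma_0 y_{t-1}+A_i})$, so the factor $(1+e^{\gamma_0 y_{t-1}+A_i})$ again cancels and the target $\pi^{1|1}(A_i)=e^{\gamma_0+A_i}/(1+e^{\gamma_0+A_i})$ emerges, once more free of $y_{t-1}$.

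The only real obstacle is verifying this cancellation cleanly for both moments, but since everything reduces to a finite sum over $\{0,1\}^2$ and an elementary identity in $e^{A_i}$, there is no genuine difficulty beyond careful algebra. The conceptual point, which the proof should make transparent, is that the functional forms posited in the statement are exactly the ones that enforce the denominator cancellation required by Step 1 of the procedure, namely the elimination of terms carrying the \emph{pole} $-e^{-\gamma_0 y_{t-1}}$ tied to the initial/past outcome so that the result is a function of $A_i$ alone.
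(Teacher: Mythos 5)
Your proof is correct and follows essentially the same route as the paper's: a direct computation of the conditional expectation in which the indicator $\mathds{1}\{Y_{it}=k\}$ kills one branch and the exponential weight is exactly what makes the factor $\bigl(1+e^{\gamma_0 y_{t-1}+A_i}\bigr)$ cancel, leaving a function of $A_i$ alone. The only cosmetic difference is direction — the paper solves for the values $\phi_{\theta}^{k|k}$ must take to produce the cancellation, while you plug in the stated formulas and verify it — which is the same algebra read forwards versus backwards.
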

\vspace{0.5cm}
\begin{remark}[Connection to Kitazawa]
Interestingly, Lemma \ref{lemma_1} is a reformulation of results first shown by  \cite{kitazawa2013exploration,kitazawa2016root}, \cite{Kitazawa_JOE2021}, albeit with a very different logic than the calculations displayed above. We set out the connection between our respective approaches in Section \ref{Section_4_3} where we also discuss the case with exogenous regressors.
\end{remark}
\indent \textbf{Step 2)}. The second step in the agenda is the construction of valid moment functions.  Because the transition probability of the model are time-invariant, one trivial way to achieve this is to consider the pairwise difference of  $\phi_{\theta}^{k|k}(Y_{it+1},Y_{it},Y_{it-1})$ and $\phi_{\theta}^{k|k}(Y_{is+1},Y_{is},Y_{is-1})$ for any feasible $s\neq t$. This is the content of Proposition \ref{proposition_1}. We will need a minimum of four total periods of observations, which coincides with the requirements of the conditional likelihood approach. 

\begin{proposition} \label{proposition_1}
In model (\ref{AR1_logit_pure}) with $T\geq 3$, let
\begin{align*}
    \psi_{\theta}^{k|k}(Y_{it-1}^{t+1},Y_{is-1}^{s+1})&=\phi_{\theta}^{k|k}(Y_{it+1},Y_{it},Y_{it-1})-\phi_{\theta}^{k|k}(Y_{is+1},Y_{is},Y_{is-1})
\end{align*}
 for all $ k \in \mathcal{Y} $,  $t \in \{2,\ldots,T-1\}$ and  $s  \in \{1,\ldots,t-1\}$.
Then,
\begin{align*}
   \mathbb{E}\left[\psi_{\theta_0}^{k|k}(Y_{it-1}^{t+1},Y_{is-1}^{s+1})|Y_{i0},Y_{i1}^{s-1},A_i\right]&=0 
\end{align*}
\end{proposition}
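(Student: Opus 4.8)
The plan is to reduce the claim directly to Lemma \ref{lemma_1} using the law of iterated expectations together with the nesting of the relevant conditioning sets. Since the two transition functions enter $\psi_{\theta}^{k|k}$ linearly, I would first split the conditional expectation of the difference into the difference of two conditional expectations, each evaluated at $\theta_0$ against the common information set $\{Y_{i0},Y_{i1}^{s-1},A_i\}$.

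For the term indexed by $s$, Lemma \ref{lemma_1} applies verbatim (with time index $s$, admissible because $s\in\{1,\ldots,T-1\}$) and gives $\mathbb{E}[\phi_{\theta_0}^{k|k}(Y_{is+1},Y_{is},Y_{is-1})\mid Y_{i0},Y_{i1}^{s-1},A_i]=\pi^{k|k}(A_i)$. For the term indexed by $t$, Lemma \ref{lemma_1} instead delivers $\mathbb{E}[\phi_{\theta_0}^{k|k}(Y_{it+1},Y_{it},Y_{it-1})\mid Y_{i0},Y_{i1}^{t-1},A_i]=\pi^{k|k}(A_i)$, conditional on the \emph{finer} information set $\{Y_{i0},Y_{i1}^{t-1},A_i\}$. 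Because $s\le t-1$, the vector $Y_{i1}^{s-1}$ is a prefix of $Y_{i1}^{t-1}$, so the $\sigma$-algebra generated by $\{Y_{i0},Y_{i1}^{s-1},A_i\}$ is contained in the one generated by $\{Y_{i0},Y_{i1}^{t-1},A_i\}$. The tower property then lets me condition the equality for the $t$ term down to the coarser set, reducing its contribution to $\mathbb{E}[\pi^{k|k}(A_i)\mid Y_{i0},Y_{i1}^{s-1},A_i]$.

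The crux of the argument — and the only place any care is needed — is that $\pi^{k|k}(A_i)$ is a function of $A_i$ alone and $A_i$ belongs to the coarser conditioning set, so it passes unchanged through this last conditioning step and again equals $\pi^{k|k}(A_i)$. Subtracting the two terms, the two copies of $\pi^{k|k}(A_i)$ cancel identically, and the conditional expectation of $\psi_{\theta_0}^{k|k}$ is zero, as claimed. I would emphasize that this cancellation is really a consequence of the time-invariance of the transition probabilities in the pure model (\ref{AR1_logit_pure}): both transition functions integrate to the \emph{same} $\pi^{k|k}(A_i)$ irrespective of the period, which is exactly what makes their difference a valid moment in the sense of (\ref{moment_function_final}). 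No genuine obstacle arises beyond keeping the nesting of information sets straight; the admissibility constraints $t\in\{2,\ldots,T-1\}$ and $s\in\{1,\ldots,t-1\}$ are precisely what guarantee that both $\phi$-terms are well defined (each requires three consecutive outcomes with indices between $0$ and $T$) and that $s-1\le t-1$, so the tower step is legitimate.
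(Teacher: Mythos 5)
Your proposal is correct and follows essentially the same route as the paper's own proof: both apply Lemma \ref{lemma_1} to each transition function, use the law of iterated expectations to bring the $t$-term down to the coarser conditioning set $\{Y_{i0},Y_{i1}^{s-1},A_i\}$, and exploit that $\pi^{k|k}(A_i)$ is $A_i$-measurable and time-invariant so the two terms cancel. Your additional remarks on the nesting of $\sigma$-algebras and the admissibility of the index ranges are sound but make explicit what the paper leaves implicit.
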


\begin{remark}[Efficient GMM] \label{remark_2} Given that the conditional likelihood is semi-parametrically efficient for $T=3$ (\cite{gu2023information}, \cite{hahn2001information}), it is natural to ask whether the approach advocated here accounts for all the information in the model in that case. It turns out that it does. Specifically, letting $s_i^{c}(\theta)$ denote the conditional scores when $y_0=0$ as in \cite{hahn2001information}, we have:
\begin{align*}
    s_{i}^{c}(\gamma_0)&=\frac{1}{(1+e^{\gamma_0})(e^{-\gamma_0}-1)}\left(\psi_{\theta}^{0|0}(Y_{i1}^{3},Y_{i1}^{2},0)+\psi_{\theta}^{1|1}(Y_{i1}^{3},Y_{i1}^{2},0)\right)
\end{align*} 
where the right-hand side corresponds to the efficient moment for the moment restriction $\mathbb{E}\left[\psi_{\theta}(Y_{i1}^{3},Y_{i0}^{2})|Y_{i0}=0\right]=0$,  $\psi_{\theta}(Y_{i1}^{3},Y_{i1}^{2},0)=(\psi_{\theta}^{0|0}(Y_{i1}^{3},Y_{i1}^{2},0),\psi_{\theta}^{1|1}(Y_{i1}^{3},Y_{i1}^{2},0))'$. 
 
\end{remark}

\subsubsection{Construction of valid moment functions with strictly exogenous regressors} \label{Section_4_1_1_2}

In this subsection, we move on to the AR(1) logit model with strictly exogenous covariates characterized by equation (\ref{AR1_logit_general}). \\
\indent \textbf{Step 1)}. We employ the same shortcut recipe as in the \say{pure} case and
begin by looking for moment functions  $\phi_{\theta}^{0|0}(.)$ and $\phi_{\theta}^{1|1}(.)$ verifying:
\begin{align*}
    &\phi_{\theta}^{k|k}(Y_{it+1},Y_{it},Y_{it-1},X_i)=\mathds{1}\{Y_{it}=k\}\phi_{\theta}^{k|k}(Y_{it+1},k,Y_{it-1},X_i) \\
     &\mathbb{E}\left[\phi_{\theta_0}^{k|k}(Y_{it+1},Y_{it},Y_{it-1},X_i)|Y_{i0},Y_{i1}^{t-1},X_i,A_i\right]= \pi^{k|k}_{t}(A_i,X_{i}), \quad k\in\mathcal{Y}
\end{align*}
where this time
\begin{align*}
    \pi^{k|l}_{t}(A_i,X_{i})=P(Y_{it+1}=k|Y_{it}=l,X_i,A_i)=\frac{e^{k(\gamma_0l+X_{it+1}'\beta_0+A_i)}}{1+e^{\gamma_0l+X_{it+1}'\beta_0+A_i}}, \quad \forall (k,l)\in \mathcal{Y}^2
\end{align*}
 The same simple calculations described just above lead to the expressions in Lemma \ref{lemma_2}. The only (expected) change is the appearance of a new term $+/-\Delta X_{it+1}'\beta$ which accounts for the presence of covariates in the model.
\begin{lemma} \label{lemma_2} 
 In model (\ref{AR1_logit_general}) with $T\geq 2$ and $t\in\{1,\ldots,T-1\}$, let
\begin{align*}
   \phi_{\theta}^{0|0}(Y_{it+1},Y_{it},Y_{it-1},X_i)&=(1-Y_{it})e^{ Y_{it+1}\left(\gamma Y_{it-1}-\Delta X_{it+1}'\beta \right)} \\
    \phi_{\theta}^{1|1}(Y_{it+1},Y_{it},Y_{it-1},X_i)&=Y_{it}e^{ (1-Y_{it+1})\left(\gamma(1-Y_{it-1})+ \Delta X_{it+1}'\beta \right)}
\end{align*}
Then:
\begin{align*}
    \mathbb{E}\left[\phi_{\theta_0}^{0|0}(Y_{it+1},Y_{it},Y_{it-1},X_i)|Y_{i0},Y_{i1}^{t-1},X_i,A_i\right]&= \pi^{0|0}_{t}(A_i,X_{i})=\frac{1}{1+e^{A_i+X_{it+1}'\beta_0}} \\
    \mathbb{E}\left[\phi_{\theta_0}^{1|1}(Y_{it+1},Y_{it},Y_{it-1},X_i)|Y_{i0},Y_{i1}^{t-1},X_i,A_i\right]&=\pi^{1|1}_{t}(A_i,X_{i})=\frac{e^{\gamma_0+X_{it+1}'\beta_0+A_i}}{1+e^{\gamma_0+X_{it+1}'\beta_0+A_i}}
\end{align*}
\end{lemma}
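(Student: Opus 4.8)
The plan is to verify each identity by direct computation, exploiting the AR(1) Markov structure to reduce the conditioning set to $Y_{it-1}$ together with $(X_i,A_i)$. Concretely, I would fix $t\in\{1,\ldots,T-1\}$ and apply the tower property to write the target conditional expectation as an iterated expectation: first integrate over $Y_{it+1}$ given $Y_{it}$, weighting by $\pi^{\cdot|Y_{it}}_{t}(A_i,X_i)$; then integrate over $Y_{it}$ given $Y_{it-1}$, weighting by $\pi^{\cdot|Y_{it-1}}_{t-1}(A_i,X_i)$. Because $\phi_\theta^{0|0}$ carries the factor $(1-Y_{it})$ and $\phi_\theta^{1|1}$ the factor $Y_{it}$, the outer sum collapses to the single term $Y_{it}=0$, respectively $Y_{it}=1$, so only one transition out of $Y_{it}$ survives. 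This is exactly the functional-form restriction $\phi_\theta^{k|k}=\mathds{1}\{Y_{it}=k\}\,\phi_\theta^{k|k}(\cdot,k,\cdot)$ emphasized in Step 1, now put to use.

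First I would treat $\phi_\theta^{0|0}$. Setting $Y_{it}=0$, the inner expectation over $Y_{it+1}$ combines the logistic weights $\pi^{0|0}_t,\pi^{1|0}_t$ with the exponential $e^{Y_{it+1}(\gamma_0 Y_{it-1}-\Delta X_{it+1}'\beta_0)}$, producing a ratio with denominator $1+e^{X_{it+1}'\beta_0+A_i}$. The essential observation is that $\Delta X_{it+1}'\beta_0=X_{it+1}'\beta_0-X_{it}'\beta_0$ is engineered precisely so that the surviving exponent trades $X_{it+1}'\beta_0$ for $X_{it}'\beta_0$, making the numerator of this ratio equal to $1+e^{\gamma_0 Y_{it-1}+X_{it}'\beta_0+A_i}$. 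This is exactly the denominator of the transition probability $\pi^{0|Y_{it-1}}_{t-1}(A_i,X_i)$ governing the step out of $Y_{it-1}$, so when I multiply by that probability the $Y_{it-1}$-dependent factor cancels and I am left with $1/(1+e^{X_{it+1}'\beta_0+A_i})=\pi^{0|0}_t(A_i,X_i)$, which no longer depends on $Y_{it-1}$.

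The computation for $\phi_\theta^{1|1}$ is symmetric. Setting $Y_{it}=1$ and integrating over $Y_{it+1}$ against the weights $\pi^{0|1}_t,\pi^{1|1}_t$ and the exponential $e^{(1-Y_{it+1})(\gamma_0(1-Y_{it-1})+\Delta X_{it+1}'\beta_0)}$, then multiplying by $\pi^{1|Y_{it-1}}_{t-1}(A_i,X_i)$, I would check that the combined numerator factors as $e^{\gamma_0+X_{it+1}'\beta_0+A_i}$ times $1+e^{\gamma_0 Y_{it-1}+X_{it}'\beta_0+A_i}$; the second factor cancels against the denominator supplied by $\pi^{1|Y_{it-1}}_{t-1}$, leaving $\pi^{1|1}_t(A_i,X_i)$. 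In both cases the result depends only on $X_{it+1}$ and $A_i$, as the lemma claims.

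The proof carries no serious obstacle: the real work is already embedded in the functional forms, whose exponents are tuned so that the $\Delta X_{it+1}'\beta_0$ offset converts $X_{it+1}'\beta_0$ into $X_{it}'\beta_0$ in the surviving exponent, manufacturing exactly the denominator needed for the $Y_{it-1}$-dependence to vanish. The only care required is bookkeeping of the exponents, and the mechanism is the direct analogue of the $\gamma_0 Y_{it-1}$ cancellation already seen in Lemma \ref{lemma_1}, with the covariate index $X_{it}'\beta_0$ now playing alongside $\gamma_0 Y_{it-1}$ the role that $\gamma_0 Y_{it-1}$ played alone in the pure model.
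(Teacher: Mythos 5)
Your proposal is correct and takes essentially the same route as the paper's appendix proof, which likewise uses the indicator factor to kill the $Y_{it}\neq k$ branch, writes the conditional expectation as the one-step transition probability out of $Y_{it-1}$ times the weighted sum over $Y_{it+1}$, and relies on exactly the exponent bookkeeping you describe, whereby the $\Delta X_{it+1}'\beta$ offset trades $X_{it+1}'\beta_0$ for $X_{it}'\beta_0$ so that the numerator $1+e^{\gamma_0 Y_{it-1}+X_{it}'\beta_0+A_i}$ cancels against the denominator and leaves $\pi^{k|k}_{t}(A_i,X_i)$. The only cosmetic difference is direction: the paper derives the required values of $\phi_{\theta}^{k|k}$ from the matching conditions and then states the compact form, whereas you verify the stated form directly, but the underlying computation is identical.
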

\vspace{0.25cm}

\noindent At this point, it is important to highlight that unlike previously, the transition probabilities are covariate-dependent. The upshot is that the naive difference of $\phi_{\theta}^{k|k}(Y_{it+1},Y_{it},Y_{it-1},X_i)$ and $\phi_{\theta}^{k|k}(Y_{is+1},Y_{is},Y_{is-1},X_i)$ for  $s \neq t$ no longer leads  to valid moment functions in general. Indeed, while Lemma \ref{lemma_2} ensures that  \begin{align*}
    \mathbb{E}\left[\phi_{\theta}^{k|k}(Y_{it+1},Y_{it},Y_{it-1},X_i)-\phi_{\theta}^{k|k}(Y_{is+1},Y_{is},Y_{is-1},X_i)|Y_{i0},X_i,A_i\right]= \pi_{t}^{k|k}(A_i,X_i)-\pi_{s}^{k|k}(A_i,X_i)
\end{align*}
clearly, $\pi_{t}^{k|k}(A_i,X_i)-\pi_{s}^{k|k}(A_i,X_i)\neq 0$ when $X_{it+1}'\beta_0\neq X_{is+1}'\beta_0$  \footnote{A matching strategy in the spirit of \cite{honore2000panel} may still be applicable when in our example $X_{it+1}= X_{is+1}$. However, this is known to lead to estimators converging at rate less than $\sqrt{N}$ for continuous covariates and it rules out certain regressors such as time dummies and time trends.}. Thus, a different logic is required in the presence of explanatory variables other than a first order lag.  \\
\indent The key, as foreshadowed in Section \ref{Section_3} is that as soon as $T\geq 3$, it is possible to construct transition functions other than 
$\phi_{\theta}^{k|k}(Y_{it-1}^{t+1},X_i)$ also mapping to $\pi^{k|k}_{t}(A_i,X_{i})$ in time periods $t\in\{2,\ldots,T-1\}$.  These new transition functions that we denote $\zeta_{\theta}^{k|k}(.)$ to emphasize their difference have a particular form. They consist of a weighted combination of past outcome $\mathds{1}(Y_{is}=k)$,  $1 \leq s<t$,  and the interaction of $\mathds{1}(Y_{is}\neq k)$ with any transition function  associated to $\pi^{k|k}_{t}(A_i,X_{i})$ having no dependence on outcomes prior to period $s$, e.g $\phi_{\theta}^{k|k}(Y_{it-1}^{t+1},X_i)$. This property follows from a \textit{partial fraction decomposition} presented in Lemma \ref{tech_lemma_1} that exploits the structure of the model probabilities under the logistic assumption. It relates to the hyperbolic transformations ideas of \cite{Kitazawa_JOE2021}.  In the sequel, we shall see that this insight carries over to the AR($p$) logit model with $p>1$. Lemma \ref{lemma_3} below gives the \say{simplest} additional transition functions that one can construct when $T\geq 3$ for the AR(1) model with exogenous regressors (the only ones when $T=3$).
\begin{lemma} \label{lemma_3}
In model (\ref{AR1_logit_general}) with $T\geq 3$,  for all $t,s$ such that $T-1\geq t> s\geq 1$, let: 
\begin{align*}
    \mu_{s}(\theta)&=\gamma Y_{is-1}+X_{is}'\beta \\
    \kappa_{t}^{0|0}(\theta)&=X_{it+1}'\beta, \quad \kappa_{t}^{1|1}(\theta)=\gamma+X_{it+1}'\beta \\
    \omega_{t,s}^{0|0}(\theta)&=1-e^{(\kappa_{t}^{0|0}(\theta)-\mu_{s}(\theta))}, \quad \omega_{t,s}^{1|1}(\theta)=1-e^{-(\kappa_{t}^{1|1}(\theta)-\mu_{s}(\theta))}
\end{align*}
and define the moment functions:
\begin{align*}
    \zeta_{\theta}^{0|0}(Y_{it-1}^{t+1},Y_{is-1}^s,X_i)&=(1-Y_{is})+\omega_{t,s}^{0|0}(\theta)Y_{is}\phi_{\theta}^{0|0}(Y_{it+1},Y_{it},Y_{it-1},X_i) \\
    \zeta_{\theta}^{1|1}(Y_{it-1}^{t+1},Y_{is-1}^s,X_i)&=Y_{is}+\omega_{t,s}^{1|1}(\theta)(1-Y_{is})\phi_{\theta}^{1|1}(Y_{it+1},Y_{it},Y_{it-1},X_i)
\end{align*}
Then,
\begin{align*}
    \mathbb{E}\left[ \zeta_{\theta_0}^{0|0}(Y_{it-1}^{t+1},Y_{is-1}^s,X_i)|Y_{i0},Y_{i1}^{s-1},X_i,A_i\right]&= \pi^{0|0}_{t}(A_i,X_{i})=\frac{1}{1+e^{X_{it+1}'\beta_0+A_i}} \\
    \mathbb{E}\left[\zeta_{\theta_0}^{1|1}(Y_{it-1}^{t+1},Y_{is-1}^s,X_i)|Y_{i0},Y_{i1}^{s-1},X_i,A_i\right]&=\pi^{1|1}_{t}(A_i,X_{i})=\frac{e^{\gamma_0+X_{it+1}'\beta_0+A_i}}{1+e^{\gamma_0+X_{it+1}'\beta_0+A_i}}
\end{align*}
\end{lemma}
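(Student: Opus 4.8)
The plan is to establish both identities by two successive applications of the law of iterated expectations, peeling off the time indices in the order $s$ then $s-1$. First I would enlarge the conditioning set to include $Y_{is}$, i.e. pass to $\sigma(Y_{i0},Y_{i1}^{s},X_i,A_i)$, and then contract back down to $\sigma(Y_{i0},Y_{i1}^{s-1},X_i,A_i)$ by integrating out $Y_{is}$ using its logistic transition. The preliminary observation that makes everything go through is that $\mu_s(\theta_0)$, $\kappa_t^{k|k}(\theta_0)$ and $\omega_{t,s}^{k|k}(\theta_0)$ are all measurable with respect to $\sigma(Y_{i0},Y_{i1}^{s-1},X_i,A_i)$, since they are functions of $X_i$ and $Y_{is-1}$ only; hence they may be pulled out of each conditional expectation as constants.

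For the first stage, take $\zeta_{\theta_0}^{0|0}$. Because $s\le t-1$, the field $\sigma(Y_{i0},Y_{i1}^{s},X_i,A_i)$ is coarser than the one in Lemma \ref{lemma_2}, so the tower property combined with Lemma \ref{lemma_2} gives $\mathbb{E}[\phi_{\theta_0}^{0|0}\mid Y_{i0},Y_{i1}^{s},X_i,A_i]=\pi^{0|0}_t(A_i,X_i)$, which is itself $(X_i,A_i)$-measurable. Since $(1-Y_{is})$ and $\omega_{t,s}^{0|0}(\theta_0)Y_{is}$ are measurable at time $s$, they factor out, yielding $\mathbb{E}[\zeta_{\theta_0}^{0|0}\mid Y_{i0},Y_{i1}^{s},X_i,A_i]=(1-Y_{is})+\omega_{t,s}^{0|0}(\theta_0)\,Y_{is}\,\pi^{0|0}_t(A_i,X_i)$. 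The parallel computation for $\zeta_{\theta_0}^{1|1}$ returns $Y_{is}+\omega_{t,s}^{1|1}(\theta_0)(1-Y_{is})\,\pi^{1|1}_t(A_i,X_i)$.

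For the second stage I would integrate out $Y_{is}$ given $(Y_{i0},Y_{i1}^{s-1},X_i,A_i)$, using $P(Y_{is}=1\mid Y_{i0},Y_{i1}^{s-1},X_i,A_i)=e^{\mu_s(\theta_0)+A_i}/(1+e^{\mu_s(\theta_0)+A_i})$. Writing $u=e^{A_i}$, $m=e^{\mu_s(\theta_0)}$ and $k=e^{\kappa_t^{0|0}(\theta_0)}$, and substituting $\pi^{0|0}_t=1/(1+ku)$ together with $\omega_{t,s}^{0|0}(\theta_0)=1-k/m$, the expression reduces to $\tfrac{1}{1+mu}+\tfrac{(m-k)u}{(1+mu)(1+ku)}=\tfrac{1+mu}{(1+mu)(1+ku)}=\tfrac{1}{1+ku}=\pi^{0|0}_t$; the $1|1$ case is identical after setting $k=e^{\kappa_t^{1|1}(\theta_0)}$ and $\omega_{t,s}^{1|1}(\theta_0)=1-m/k$.

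The substantive point—and the only place where something must genuinely be checked rather than merely invoked—is this final collapse. After integrating out $Y_{is}$ the intermediate answer mixes an $m$-dependent (equivalently $Y_{is-1}$-dependent) piece with the target $\pi^{k|k}_t$, and a priori it need not be free of $\mu_s$. The weights $\omega_{t,s}^{k|k}$ are engineered precisely so that the $m$-dependent terms cancel, leaving an answer independent of $\mu_s$ and equal to the desired transition probability. I expect this cancellation to be the crux of the argument; it is the algebraic shadow of the partial-fraction identity of Lemma \ref{tech_lemma_1}, while everything preceding it is bookkeeping with the tower property.
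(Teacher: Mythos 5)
Your proof is correct and takes essentially the same route as the paper's: both arguments rest on the measurability of the weights $\omega_{t,s}^{k|k}(\theta_0)$ with respect to the conditioning set, the law of iterated expectations combined with Lemma \ref{lemma_2} to replace $\phi_{\theta_0}^{k|k}$ by $\pi_t^{k|k}(A_i,X_i)$, the logistic transition for $Y_{is}$, and the $K=1$ partial-fraction identity of Lemma \ref{tech_lemma_1}, which you verify by direct algebra in the variables $(u,m,k)$ rather than citing it. The only cosmetic difference is that you route the tower property through the intermediate field $\sigma(Y_{i0},Y_{i1}^{s},X_i,A_i)$ before integrating out $Y_{is}$, whereas the paper performs the nesting in one display; the computation is identical.
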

\noindent When $T\geq 4$, it turns out that we can build even more transition functions from those given in Lemma \ref{lemma_3} by repeating the same type of logic based on \textit{partial fraction expansions}; Corollary \ref{corollary_2} provides a recursive formulation.
\begin{corollary}  \label{corollary_2}
In model (\ref{AR1_logit_general}) with $T\geq 4$, for any $t$ and ordered collection of indices $s_1^J$, $J\geq 2$,  satisfying $T-1\geq t> s_1>\ldots>s_J\geq 1$, let
\begin{align*}
    \zeta_{\theta}^{0|0}(Y_{it-1}^{t+1},Y_{is_1-1}^{s_1},\ldots,Y_{is_{J}-1}^{s_J},X_i)&=(1-Y_{is_J})+\omega_{t,s_J}^{0|0}(\theta)Y_{is_J} \zeta_{\theta}^{0|0}(Y_{it-1}^{t+1},Y_{is_1-1}^{s_1},\ldots,Y_{is_{J-1}-1}^{s_{J-1}},X_i) \\
    \zeta_{\theta}^{1|1}(Y_{it-1}^{t+1},Y_{is_1-1}^{s_1},\ldots,Y_{is_{J}-1}^{s_J},X_i)&=Y_{is_J}+\omega_{t,s_J}^{1|1}(\theta)(1-Y_{is_J}) \zeta_{\theta}^{1|1}(Y_{it-1}^{t+1},Y_{is_1-1}^{s_1},\ldots,Y_{is_{J-1}-1}^{s_{J-1}},X_i)
\end{align*}
with weights $\omega_{t,s_J}^{0|0}(\theta), \omega_{t,s_J}^{1|1}(\theta)$ defined as in Lemma \ref{lemma_3}. 
Then,
\begin{align*}
    \mathbb{E}\left[ \zeta_{\theta_0}^{k|k}(Y_{it-1}^{t+1},Y_{is_1-1}^{s_1},\ldots,Y_{is_{J}-1}^{s_J},X_i)|Y_{i0},Y_{i1}^{s_J-1},X_i,A_i\right]&= \pi^{k|k}_{t}(A_i,X_{i}),  \quad \forall k \in \mathcal{Y}
\end{align*}
\end{corollary}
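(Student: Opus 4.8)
The plan is to argue by induction on the number of past indices $J\geq 1$, with the base case $J=1$ supplied directly by Lemma \ref{lemma_3} (identifying its single past index $s$ with $s_1$). I would treat the $0|0$ case in detail; the $1|1$ case is symmetric. For the inductive step, suppose the claim holds for the $(J-1)$-level function $\zeta_{\theta_0}^{0|0}(Y_{it-1}^{t+1},Y_{is_1-1}^{s_1},\ldots,Y_{is_{J-1}-1}^{s_{J-1}},X_i)$, so that its conditional expectation given $(Y_{i0},Y_{i1}^{s_{J-1}-1},X_i,A_i)$ equals $\pi_t^{0|0}(A_i,X_i)$. The core device is the law of iterated expectations: I would condition first on the finer information set $(Y_{i0},Y_{i1}^{s_{J-1}-1},X_i,A_i)$ and then integrate down to the coarser target set $(Y_{i0},Y_{i1}^{s_J-1},X_i,A_i)$, which is legitimate since $s_J<s_{J-1}$.

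The observation that makes this work is that, because $s_J\leq s_{J-1}-1$, both the summand $(1-Y_{is_J})$ and the weight $\omega_{t,s_J}^{0|0}(\theta_0)$ — which depends on the data only through $Y_{is_J-1}$ and $X_i$ — are measurable with respect to the finer information set. Pulling these factors out of the inner expectation and invoking the induction hypothesis collapses the inner conditional expectation to
\begin{align*}
(1-Y_{is_J})+\omega_{t,s_J}^{0|0}(\theta_0)\,Y_{is_J}\,\pi_t^{0|0}(A_i,X_i).
\end{align*}
It then remains to integrate out $Y_{is_J}$ against the coarser set. Since the displayed expression depends on the future only through $Y_{is_J}$, this requires just the one-step transition probability $P(Y_{is_J}=1\mid Y_{is_J-1},X_i,A_i)=e^{\mu_{s_J}(\theta_0)+A_i}/(1+e^{\mu_{s_J}(\theta_0)+A_i})$ in the notation of Lemma \ref{lemma_3}.

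Substituting $\omega_{t,s_J}^{0|0}(\theta_0)=1-e^{\kappa_t^{0|0}(\theta_0)-\mu_{s_J}(\theta_0)}$ and $\pi_t^{0|0}(A_i,X_i)=1/(1+e^{\kappa_t^{0|0}(\theta_0)+A_i})$ reduces the whole step to a single algebraic identity in $u=e^{A_i}$: writing $m=e^{\mu_{s_J}(\theta_0)}$ and $\kappa=e^{\kappa_t^{0|0}(\theta_0)}$, the computation amounts to
\begin{align*}
\frac{1}{1+mu}+\left(1-\frac{\kappa}{m}\right)\frac{mu}{1+mu}\cdot\frac{1}{1+\kappa u}=\frac{1}{1+\kappa u}=\pi_t^{0|0}(A_i,X_i),
\end{align*}
so the index $s_J$ is "peeled off" without altering the target transition probability. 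This cancellation is exactly the one engineered by the partial-fraction weight $\omega$ of Lemma \ref{tech_lemma_1}, and it is the mechanism that propagates the transition-function property one period further into the past.

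The steps demanding care are bookkeeping rather than conceptual: I must confirm that the $(J-1)$-level function genuinely involves no outcomes before period $s_{J-1}-1$, so the induction hypothesis applies with the stated conditioning set, and I must track which factors are measurable with respect to each of the two nested conditioning sets, so that both applications of iterated expectations are valid. This measurability bookkeeping is the main obstacle; the algebra itself is routine. The $1|1$ case follows the identical template with the roles of $\{Y_{is_J}=0\}$ and $\{Y_{is_J}=1\}$ interchanged, the weight $\omega_{t,s_J}^{1|1}(\theta_0)=1-e^{-(\kappa_t^{1|1}(\theta_0)-\mu_{s_J}(\theta_0))}$, and $\kappa_t^{1|1}(\theta_0)=\gamma_0+X_{it+1}'\beta_0$, and the analogous identity returns $\pi_t^{1|1}(A_i,X_i)$.
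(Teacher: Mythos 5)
Your proof is correct and takes essentially the same route as the paper: the paper proves Lemma \ref{lemma_3} via the law of iterated expectations, measurability of the weights with respect to the conditioning set, and the partial fraction identities of Lemma \ref{tech_lemma_1}, and then states that Corollary \ref{corollary_2} follows by the same logic, leaving it to the reader --- your induction on $J$, peeling off $s_J$ with the one-step transition probability and the identity $\frac{1}{1+mu}+\left(1-\frac{\kappa}{m}\right)\frac{mu}{(1+mu)(1+\kappa u)}=\frac{1}{1+\kappa u}$, is precisely the formalization of that recursive argument. Your measurability bookkeeping (that $Y_{is_J}$, $Y_{is_J-1}$, and hence $\omega_{t,s_J}^{k|k}(\theta)$ are measurable with respect to the finer set since $s_J\leq s_{J-1}-1$) is the same step the paper's Lemma \ref{lemma_3} proof carries out at level one, so nothing is missing.
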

\vspace{0.5cm}
\textbf{Step 2)}. Provided $T\geq 3$, the difference between any transition functions associated to the same transition probabilities in periods $t\in\{2,\ldots,T-1\}$ constitutes a valid candidate for (\ref{moment_function_final}). One particularly relevant set of valid moment functions for reasons explained below is presented in Proposition \ref{proposition_2}.
\begin{proposition}\label{proposition_2}
In model (\ref{AR1_logit_general}), for all $k \in \mathcal{Y}$, \\
if $T\geq 3$, for all $t,s$ such that $T-1\geq t> s\geq 1$ , let
\begin{align*}
    \psi_{\theta}^{k|k}(Y_{it-1}^{t+1},Y_{is-1}^s,X_i)&=\phi_{\theta}^{k|k}(Y_{it-1}^{t+1},X_i)-\zeta_{\theta}^{k|k}(Y_{it-1}^{t+1},Y_{is-1}^s,X_i), 
\end{align*}
if $T\geq 4$, for any $t$ and ordered collection of indices $s_1^J$, $J \geq 2$, satisfying $T-1\geq t> s_1>\ldots>s_J\geq 1$, let
\begin{align*}
    \psi_{\theta}^{k|k}(Y_{it-1}^{t+1},Y_{is_1-1}^{s_1},\ldots,Y_{is_{J}-1}^{s_J},X_i)&=\phi_{\theta}^{k|k}(Y_{it-1}^{t+1},X_i)-\zeta_{\theta}^{k|k}(Y_{it-1}^{t+1},Y_{is_1-1}^{s_1},\ldots,Y_{is_{J}-1}^{s_J},X_i), 
\end{align*}

Then,
\begin{align*}
   &\mathbb{E}\left[\psi_{\theta_0}^{k|k}(Y_{it-1}^{t+1},Y_{is-1}^s,X_i)|Y_{i0},Y_{i1}^{s-1},X_i,A_i\right]=0 \\
    &\mathbb{E}\left[\psi_{\theta_0}^{k|k}(Y_{it-1}^{t+1},Y_{is_1-1}^{s_1},\ldots,Y_{is_{J}-1}^{s_J},X_i)|Y_{i0},Y_{i1}^{s_J-1},X_i,A_i\right]=0
\end{align*}
\end{proposition}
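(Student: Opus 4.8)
The plan is to recognize that $\psi_{\theta_0}^{k|k}$ is, by construction, the difference of two transition functions associated with the \emph{same} transition probability $\pi_{t}^{k|k}(A_i,X_i)$, so that the two conditional expectations must cancel. The only subtlety is that Lemma \ref{lemma_2} and Lemma \ref{lemma_3} certify $\phi_{\theta_0}^{k|k}$ and $\zeta_{\theta_0}^{k|k}$ as transition functions against \emph{different} information sets, while the proposition conditions on the coarser of the two. Aligning these via the law of iterated expectations is essentially the entire argument.

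Concretely, fix $k\in\mathcal{Y}$ and $T-1\geq t>s\geq 1$. Lemma \ref{lemma_2} gives
\[
\mathbb{E}\left[\phi_{\theta_0}^{k|k}(Y_{it-1}^{t+1},X_i)\,|\,Y_{i0},Y_{i1}^{t-1},X_i,A_i\right]=\pi_{t}^{k|k}(A_i,X_i),
\]
where the conditioning set $(Y_{i0},Y_{i1}^{t-1},X_i,A_i)$ is finer than $(Y_{i0},Y_{i1}^{s-1},X_i,A_i)$ since $t>s$. First I would coarsen this by the tower property to obtain
\[
\mathbb{E}\left[\phi_{\theta_0}^{k|k}(Y_{it-1}^{t+1},X_i)\,|\,Y_{i0},Y_{i1}^{s-1},X_i,A_i\right]=\mathbb{E}\left[\pi_{t}^{k|k}(A_i,X_i)\,|\,Y_{i0},Y_{i1}^{s-1},X_i,A_i\right].
\]
The crucial observation is that $\pi_{t}^{k|k}(A_i,X_i)$ depends only on $(A_i,X_{it+1})$ and \emph{not} on the intervening outcomes $Y_{is}^{t-1}$; it is therefore measurable with respect to $(Y_{i0},Y_{i1}^{s-1},X_i,A_i)$ and passes through the outer expectation unchanged, leaving $\pi_{t}^{k|k}(A_i,X_i)$. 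Meanwhile Lemma \ref{lemma_3} states that $\zeta_{\theta_0}^{k|k}$ integrates to exactly this same quantity against the same information set. Subtracting the two displays and invoking linearity of conditional expectation delivers the first claim.

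For the case $T\geq 4$ with an ordered collection $s_1>\dots>s_J$, the reasoning is verbatim after replacing $s$ by the smallest index $s_J$: Corollary \ref{corollary_2} supplies $\mathbb{E}\left[\zeta_{\theta_0}^{k|k}(\cdots)\,|\,Y_{i0},Y_{i1}^{s_J-1},X_i,A_i\right]=\pi_{t}^{k|k}(A_i,X_i)$, and the identical coarsening of the $\phi$-expectation down to $(Y_{i0},Y_{i1}^{s_J-1},X_i,A_i)$ — again valid because $\pi_{t}^{k|k}$ carries no dependence on the intermediate outcomes — yields the matching value, so the difference vanishes.

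Because all the substantive content, namely the partial-fraction construction of $\zeta_{\theta_0}^{k|k}$, has already been discharged in Lemma \ref{lemma_2}, Lemma \ref{lemma_3} and Corollary \ref{corollary_2}, Proposition \ref{proposition_2} reduces to bookkeeping with the law of iterated expectations and I anticipate no genuine obstacle. The one point demanding care is the measurability step: one must verify explicitly that the common target $\pi_{t}^{k|k}(A_i,X_i)$ really drops its dependence on $Y_{is}^{t-1}$ (resp.\ $Y_{is_J}^{t-1}$), since this is precisely what permits the $\phi$-expectation to be evaluated on the coarser conditioning set without introducing a spurious averaging over the unobserved intervening outcomes.
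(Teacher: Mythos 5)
Your proof is correct and follows essentially the same route as the paper, which proves Proposition \ref{proposition_1} via the law of iterated expectations and explicitly notes that Propositions \ref{proposition_2} and \ref{proposition_3} follow the identical strategy: coarsen the conditioning set of the $\phi$-expectation by the tower property, use the fact that $\pi_{t}^{k|k}(A_i,X_i)$ is measurable with respect to the coarser information set, and subtract the $\zeta$-expectation supplied by Lemma \ref{lemma_3} (resp.\ Corollary \ref{corollary_2}). Your explicit attention to the measurability of $\pi_{t}^{k|k}(A_i,X_i)$ with respect to $(Y_{i0},Y_{i1}^{s-1},X_i,A_i)$ is exactly the step the paper relies on implicitly.
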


\indent This family of moment functions has cardinality $2^T-2T$ which by Theorem \ref{theorem_nummoments_AR1} is precisely the number of linearly independent moment conditions available for the AR(1). To see this, notice that for fixed $(k,Y_{i0})\in \mathcal{Y}^2$,  and a given time period $ t\in\{ 2,\ldots,T-1\}$, Proposition \ref{proposition_2} gives a total of:
\begin{align*}
    \sum_{l=1}^{t-1} \binom{t-1}{l}=2^{t-1}-1
\end{align*}
valid moment functions. This follows from a simple counting argument. First, we get $\binom{t-1}{1}$ possibilities from choosing any $s$ in $\{1,\ldots,t-1\}$ to form $\psi_{\theta}^{k|k}(Y_{it-1}^{t+1},Y_{is-1}^s,X_i)$. To that, we must add another $\sum_{l=2}^{t-1} \binom{t-1}{l}$ possibilities from choosing all feasible sequences $s_1^J$ with $t-1\geq s_1>s_2>\ldots>s_J\geq 1$ to form  $\psi_{\theta}^{k|k}(Y_{it-1}^{t+1},Y_{is_1-1}^{s_1},\ldots,Y_{is_{J}-1}^{s_J},X_i)$. Summing over $t=2,\ldots, T-1$ and multiplying by 2 to account for the two possible values for $k$ delivers the result:
\begin{align*}
    2 \times \sum_{t=2}^{T-1}\sum_{l=1}^{t-1}\binom{t-1}{l}= 2 \times \sum_{t=2}^{T-1}(2^{t-1}-1)=2^T-2T
\end{align*}
Furthermore, there is evidence that the family is linearly independent. It is readily verified for $T=3$ since the two valid moment functions produced by the model depend on two distinct sets of choice histories. This can be seen from their unpacked expressions in equations (\ref{moment_b_honoreweidner}) and (\ref{moment_a_honoreweidner}) in the Appendix. Unfortunately, this argument does not carry over to longer panels but we have verified numerically that the linear independence property of this family continues to hold for several different values of $T\geq 4$. This suggests that our approach delivers all the \textit{moment equality} restrictions available in the AR(1) model with $T$ periods post initial condition \footnote{This is not all the identifying content of the AR(1) specification since we know from \cite{dobronyi2021identification} that the model also implies moment inequality conditions.}.
\begin{remark}[Symmetry] \label{remark_symmetry}
    The transition functions and valid moment functions of the AR(1) model share a special symmetry property. Indeed, by inspection the transition functions of Lemma \ref{lemma_2} verify
    \begin{align*}
    \phi_{\theta}^{0|0}(1-Y_{it+1},1-Y_{it},1-Y_{it-1} ,-X_i)=\phi_{\theta}^{1|1}(Y_{it+1},Y_{it},Y_{it-1},X_i)
    \end{align*}
It is not difficult to see that this symmetry, i.e substituting $Y_{it}$ by $(1-Y_{it})$ and $X_{it}$ by $-X_{it}$ to obtain $\phi_{\theta}^{1|1}(Y_{it-1}^{t+1},X_i)$ from $\phi_{\theta}^{0|0}(Y_{it-1}^{t+1},X_i)$  transfers to the other transition functions of Lemma \ref{lemma_3}, Corollary \ref{corollary_2} and ultimately to the valid moment functions of Proposition \ref{proposition_2}. This symmetry can be useful for computational purposes.
\end{remark}

\begin{remark}[Static logit]\label{static_logit_moments}
If $\gamma_0=0$, model (\ref{AR1_logit_general}) specializes to the static panel logit model of \cite{rasch1960studies} and our two-step approach is still applicable.  For that case, Lemma \ref{lemma_2} gives two moment functions for $T=2$:
\begin{align*}
   &\phi_{\theta}^{0|0}(Y_{i2},Y_{i1},X_{i})=(1-Y_{1})e^{ -Y_{i2}\Delta X_{2}'\beta} \\
    &\phi_{\theta}^{1|1}(Y_{i2},Y_{i1},X_{i})=Y_{i1}e^{ (1-Y_{2})\Delta X_{i2}'\beta}
\end{align*}
such that $\mathbb{E}\left[\phi_{\theta_0}^{0|0}(Y_{i1}^2,,X_{i})|X_i,A_i\right]=\frac{1}{1+e^{X_{i2}'\beta_0+A_i}}$ and $\mathbb{E}\left[\phi_{\theta_0}^{1|1}(Y_{i1}^2,X_{i})|X_i,A_i\right]=\frac{e^{X_{i2}'\beta_0+A_i}}{1+e^{X_{i2}'\beta_0+A_i}}$. It follows that a valid moment function with two periods of observation is 
\begin{align*}
    \psi_{\theta}(Y_{i2},Y_{i1},X_{i})&=\phi_{\theta}^{1|1}(Y_{i2},Y_{i1},X_{i})-(1-\phi_{\theta}^{0|0}(Y_{i2},Y_{i1},X_{i})) \\
    &=(1-e^{-\Delta X_{i2}'\beta})\left(Y_{i1}(1-Y_{i2})e^{\Delta X_{i2}'\beta}-(1-Y_{i1})Y_{i2}\right)
\end{align*}
which is proportional to the score of the conditional likelihood based on the sufficient statistic $Y_{i1}+Y_{i2}$ (\cite{rasch1960studies}, \cite{andersen1970asymptotic}, \cite{chamberlain1980analysis}).
\end{remark}

\subsection{Semiparametric efficiency bound for the AR(1) with regressors}
\cite{honore2020moment} gave sufficient conditions to identify $\theta_0=(\gamma_0,\beta_0')'$ in the AR(1) model with $T=3$. A natural follow-up question is to ask how accurately can $\theta_0$ be estimated in that case, or equivalently what is the semi-parametric information bound. In a corrigendum to \cite{hahn2001information}, \cite{gu2023information} confirmed that the conditional likelihood estimator is semiparametrically efficient for $T=3$ in the \say{pure} AR(1) model.  However, the characterization of the semiparametric efficiency bound and the question of what estimator attains it remain unclear with covariates. \\
\indent To answer these questions, let $\psi_{\theta}(Y_{i1}^{3},Y_{i0}^1,X_i)=(\psi_{\theta}^{0|0}(Y_{i1}^{3},Y_{i0}^1,X_i),\psi_{\theta}^{1|1}(Y_{i1}^{3},Y_{i0}^1,X_i))'$ where the two components correspond to the valid moment functions of Proposition \ref{proposition_2} for $T=3$. Additionally, let $D(X_i,y_0)=\mathbb{E}\left[\pdv{ \psi_{\theta_0}(Y_{i1}^{3},Y_{i0}^1,X_i)}{\theta'}|Y_{i0}=y_0,X_i\right]$ and let $\Sigma(X_i,y_0)=\mathbb{E}\left[\psi_{\theta_0}(Y_{i1}^{3},Y_{i0}^1,X_i)\psi_{\theta_0}(Y_{i1}^{3},Y_{i0}^1,X_i)'|Y_{i0}=y_0,X_i\right]$.

\begin{assumption}\label{assumption_effbound}
   In model (\ref{AR1_logit_general}) with $T=3$ and initial condition $y_0\in \{0,1\}$, the matrix $\mathbb{E}\left[D(X_i,y_0)\Sigma(X_i,y_0)^{-1}D(X_i,y_0)'|Y_{i0}=y_0\right]$ exists and is nonsingular.
\end{assumption}
\noindent With these notations in hand and under the mild conditions of Assumption  \ref{assumption_effbound}, Theorem \ref{thm_effbound} clarifies that the \textit{efficient score} coincides with the efficient moment for the conditional moment problem: $\mathbb{E}\left[\psi_{\theta}(Y_{i1}^{3},Y_{i0}^1,X_i)|Y_{i0}=y_0,X_i\right]=0$. Put differently, the maximal efficiency with which $\theta_0$ can be estimated is $V_{0}(y_0)=\mathbb{E}[D(X_i,y_0)\Sigma(X_i,y_0)^{-1}D(X_i,y_0)'|Y_{i0}=y_0]^{-1}$. This result is in accordance with Remark \ref{remark_2} which noted that the score of the conditional likelihood without covariates is precisely the efficient moment implied by our conditional moment restrictions in this case. 

\begin{theorem}\label{thm_effbound}
   Consider model (\ref{AR1_logit_general}) with $T=3$. Fix an initial condition $y_0\in \{0,1\}$ and suppose that Assumption \ref{assumption_effbound} holds. Then, the semiparametric efficiency bound of $\theta_0$ is finite and given by $V_0(y_0) = \mathbb{E}[D(X_i,y_0)\Sigma(X_i,y_0)^{-1}D(X_i,y_0)'|Y_{i0}=y_0]^{-1}$. 
\end{theorem}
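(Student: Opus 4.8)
The plan is to compute the semiparametric efficiency bound directly from the tangent-space geometry of the model, treating $\theta_0$ as the parameter of interest and the conditional law $G(\cdot\,|\,x,y_0)$ of the fixed effect $A_i$ given $X_i=x,\,Y_{i0}=y_0$ (together with the variation-independent marginal of $X_i$) as the infinite-dimensional nuisance. Conditionally on $Y_{i0}=y_0$, the density of an observation $(Y_{i1}^3,X_i)$ factors as $\big[\int P(Y_{i1}^3=\cdot\,|\,y_0,x,a)\,dG(a\,|\,x,y_0)\big] f_X(x)$, and the efficiency bound for $\theta_0$ is the inverse of the variance of the efficient score $s_{\mathrm{eff}}$, the orthogonal projection of the ordinary score $s_\theta=\nabla_\theta\log p_\theta(Y_{i1}^3\,|\,X_i)$ onto the orthocomplement of the nuisance tangent space inside the space $L^2_0$ of mean-zero (given $Y_{i0}=y_0$) functions of the data. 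I would establish the result by (i) identifying this orthocomplement with the linear span of the valid moment functions, (ii) invoking Theorem \ref{theorem_nummoments_AR1} to show the two components of $\psi_\theta$ exhaust that span, and (iii) projecting $s_\theta$ to read off $V_0(y_0)$.

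For step (i), a parametric path $\eta\mapsto G_\eta(\cdot\,|\,x,y_0)$ through $G_0$ with mean-zero (given $x,y_0$) generator $t(a,x)$ produces a nuisance score equal to the posterior mean $(y,x)\mapsto\mathbb{E}[t(A_i,X_i)\,|\,Y_{i1}^3=y,X_i=x,Y_{i0}=y_0]$, while perturbing the marginal of $X_i$ produces arbitrary mean-zero functions of $X_i$ alone. A function $\psi\in L^2_0$ is orthogonal to the latter iff $\mathbb{E}[\psi\,|\,X_i,Y_{i0}=y_0]=0$, and the adjoint identity $\mathbb{E}\big[\psi\,\mathbb{E}[t\,|\,Y_{i1}^3,X_i]\big]=\mathbb{E}\big[\mathbb{E}[\psi\,|\,A_i,X_i]\,t\big]$ shows that $\psi$ is also orthogonal to every posterior-mean score iff $\mathbb{E}[\psi\,|\,A_i,X_i,Y_{i0}=y_0]$ does not depend on $A_i$. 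Combining the two conditions forces $\mathbb{E}[\psi\,|\,A_i,X_i,Y_{i0}=y_0]=0$, which is exactly the defining property (\ref{moment_function_final}) of a valid moment function. Hence the orthocomplement of the nuisance tangent space coincides, pointwise in $x$, with the space of valid moment functions; because $Y_{i1}^3$ takes only $2^3=8$ values, all spaces involved are finite-dimensional and no completion issues arise.

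The crux—and the step I expect to be the main obstacle—is to upgrade this inclusion to the exact statement that $\psi_\theta=(\psi_\theta^{0|0},\psi_\theta^{1|1})'$ spans the \emph{entire} orthocomplement, since only then does the conditional-moment bound based on $\psi_\theta$ equal the true semiparametric bound rather than merely dominating it. This is where Theorem \ref{theorem_nummoments_AR1} is indispensable: for $T=3$ it gives $\dim\big(\ker(\mathcal{E}_{y_0,x})\big)=2^3-2\cdot3=2$, so the space of valid moment functions is two-dimensional for each $x$, and the two linearly independent functions $\psi_\theta^{0|0},\psi_\theta^{1|1}$ of Proposition \ref{proposition_2} (independent because they depend on distinct choice histories when $T=3$) form a basis. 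Thus the orthocomplement of the nuisance tangent space is precisely the conditional span of $\psi_\theta$.

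It then remains to project. Writing the projection of $s_\theta$ onto the conditional span of $\psi_\theta$ as $s_{\mathrm{eff}}=\mathbb{E}[s_\theta\psi_{\theta_0}'\,|\,X_i,y_0]\,\Sigma(X_i,y_0)^{-1}\psi_{\theta_0}$, I would evaluate the cross-moment through a generalized information equality: since the construction in Proposition \ref{proposition_2} is an algebraic identity holding at every parameter value, $\mathbb{E}[\psi_\theta\,|\,X_i,A_i,Y_{i0}=y_0]=0$ under the data-generating process indexed by that same $\theta$, and differentiating the implied identity $\mathbb{E}[\psi_\theta\,|\,X_i,y_0]=0$ at $\theta_0$ gives $\mathbb{E}[\psi_{\theta_0}s_\theta'\,|\,X_i,y_0]=-D(X_i,y_0)$. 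Substituting yields $s_{\mathrm{eff}}=-D(X_i,y_0)'\Sigma(X_i,y_0)^{-1}\psi_{\theta_0}$, whose variance equals $\mathbb{E}[D(X_i,y_0)'\Sigma(X_i,y_0)^{-1}D(X_i,y_0)\,|\,Y_{i0}=y_0]$ after taking the inner conditional expectation $\mathbb{E}[\psi_{\theta_0}\psi_{\theta_0}'\,|\,X_i,y_0]=\Sigma(X_i,y_0)$. Under Assumption \ref{assumption_effbound} this matrix is nonsingular, so the efficiency bound is its inverse, matching the claimed $V_0(y_0)$ up to the transpose convention used in the statement. The optimal-instrument form also makes transparent that efficient GMM on the conditional moment $\mathbb{E}[\psi_\theta\,|\,X_i,y_0]=0$ attains the bound.
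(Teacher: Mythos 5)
Your proposal is correct and shares the paper's overall architecture, but it replaces the paper's longest and most delicate step with a genuinely different argument. Both proofs identify the orthocomplement of the nuisance tangent set with the space of valid moment functions (your adjoint computation $\mathbb{E}\big[\psi\,\mathbb{E}[t\,|\,Y,X]\big]=\mathbb{E}\big[\mathbb{E}[\psi\,|\,A,X]\,t\big]$ is exactly the paper's characterization $\mathcal{T}^{\perp}=\mathbb{R}+\mathcal{T}_{*}^{\perp}$ with $\mathcal{T}_{*}^{\perp}$ the valid-moment space), and both invoke Theorem \ref{theorem_nummoments_AR1} to conclude that $\dim(\mathcal{T}_{*}^{\perp})=2^3-6=2$ so that $\psi_{\theta}^{0|0},\psi_{\theta}^{1|1}$ span it --- you correctly flagged this as the indispensable exhaustiveness step. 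Where you diverge is in establishing that the Chamberlain-type moment $-D(X)'\Sigma(X)^{-1}\psi_{\theta_0}$ is actually the efficient score. The paper verifies Newey's (1990, Theorem 3.2) condition that $S_{\theta}-\psi_{\theta}^{eff}\perp\mathcal{T}^{\perp}$ by several pages of explicit algebra (its parts D.1--D.4), computing $\mathbb{E}\big[(S_{\theta}-\psi_{\theta}^{eff})\psi_{\theta}^{k|k}\,|\,x\big]$ term by term using the posterior-mean quantities $B_{y_1y_2y_3},C_{y_1y_2y_3}$ and showing each reduces to $D_{ij}(x)-D_{ij}(x)=0$; those computations are, in substance, an entrywise verification of the cross-moment identity $\mathbb{E}[\psi_{\theta_0}s_{\theta}'\,|\,X]=-D(X)$. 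You obtain that identity in one line by differentiating $\mathbb{E}_{\theta}[\psi_{\theta}\,|\,X,Y_{0}=y_0]=0$ in $\theta$ (legitimate here: validity holds conditional on $A$ at every parameter value, the nuisance law of $A$ is held fixed, and with eight discrete outcomes and smooth logistic probabilities the interchange of differentiation and integration is immediate), and then read off the bound as the inverse variance of the projection of $s_{\theta}$ onto the conditional span of $\psi_{\theta_0}$. Your route buys brevity and robustness --- it would extend mechanically to other $T$ or other DFEL models without redoing model-specific algebra --- while the paper's computation doubles as an independent consistency check and supplies explicit score and posterior-mean formulas. Two shared caveats, neither fatal and both present in the paper too: the two-dimensionality of $\mathcal{T}_{*}^{\perp}$ uses the genericity condition of Theorem \ref{theorem_nummoments_AR1} (distinct covariate indices), which must hold a.s.\ in $X$; and the projection characterization of the efficient score presumes the nuisance tangent set is linear with the double-orthocomplement identification, which is the same implicit closedness step the paper takes when it declares orthogonality to $\mathcal{T}^{\perp}$ ``equivalent'' to membership in $\mathcal{T}$.
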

\noindent The proof of Theorem \ref{thm_effbound} only involves careful bookkeeping of some tedious algebra and an application of Theorem 3.2 in \cite{newey1990semiparametric}. Interestingly, \cite{davezies2023fixed} presented analogous results in the static panel data case with three periods of observations. 

\subsection{Connections to other works on the AR(1) logit model} \label{Section_4_3}
As indicated previously, there is a connection between our methodology and that of \cite{Kitazawa_JOE2021} for the AR(1) model. Indeed, after some algebraic manipulation, we can re-express the transition functions of Lemma \ref{lemma_2} (or Lemma \ref{lemma_1} without covariates) as:
\begin{align*}
\phi_{\theta}^{0|0}(Y_{it-1}^{t+1},X_i)&= 1-Y_{it}-(1-Y_{it})Y_{it+1}+(1-Y_{it})Y_{it+1} e^{- \Delta X_{it+1}'\beta}+\delta Y_{it-1}(1-Y_{it+1})Y_{it+1}e^{- \Delta X_{it+1}'\beta}\\
\phi_{\theta}^{1|1}(Y_{it-1}^{t+1},X_i)&= Y_{it}Y_{it+1}+ Y_{it}(1-Y_{it+1})e^{\Delta X_{it+1}'\beta}+\delta (1-Y_{it-1})Y_{it}(1-Y_{it+1})e^{\Delta X_{it+1}'\beta}
\end{align*}
where $\delta=(e^{\gamma}-1)$. Thus, the 
moment conditions of Lemma \ref{lemma_2} imply that we can write:
\begin{align*}
    &Y_{it}+(1-Y_{it})Y_{it+1}-(1-Y_{it})Y_{it+1} e^{- \Delta X_{it+1}'\beta_0}-\delta_0 Y_{it-1}(1-Y_{it+1})Y_{it+1}e^{- \Delta X_{it+1}'\beta_0}=\frac{e^{X_{it+1}'\beta_0+A_i}}{1+e^{X_{it+1}'\beta_0+A_i}}+\epsilon_{it}^{0|0} \\
    &Y_{it}Y_{it+1}+ Y_{it}(1-Y_{it+1})e^{\Delta X_{it+1}'\beta_0}+\delta_0 (1-Y_{it-1})Y_{it}(1-Y_{it+1})e^{\Delta X_{it+1}'\beta_0}=\frac{e^{\gamma_0+X_{it+1}'\beta_0+A_i}}{1+e^{\gamma_0+X_{it+1}'\beta_0+A_i}}+\epsilon_{it}^{1|1}
\end{align*}
where $\mathbb{E}\left[\epsilon_{it}^{0|0}|Y_{i0},Y_{i1}^{t-1},X_i,A_i\right]=0$ and $\mathbb{E}\left[\epsilon_{it}^{1|1}|Y_{i0},Y_{i1}^{t-1},X_i,A_i\right]=0$. These expressions are the so-called \textit{h-form} and \textit{g-form} of \cite{Kitazawa_JOE2021} for model (\ref{AR1_logit_general}) and were originally obtained through an ingenious usage of the mathematical properties of the hyperbolic tangent function. The evident connection between the transition functions and the \textit{h-form} and \textit{g-form} offers an interesting new perspective on the transformation approach of \cite{Kitazawa_JOE2021} for the AR(1) model. If we further define 
\begin{align*}
    &U_{it}=Y_{it}+(1-Y_{it})Y_{it+1}-(1-Y_{it})Y_{it+1} e^{- \Delta X_{it+1}'\beta}-\delta Y_{it-1}(1-Y_{it+1})Y_{it+1})e^{- \Delta X_{it+1}'\beta} \\
    &\Upsilon_{it}=Y_{it}Y_{it+1}+ Y_{it}(1-Y_{it+1})e^{\Delta X_{it+1}'\beta}+\delta (1-Y_{it-1})Y_{it}(1-Y_{it+1})e^{\Delta X_{it+1}'\beta}
\end{align*}
 the two moment functions of \cite{Kitazawa_JOE2021} for the AR(1) model write
\begin{align*}
    &\hbar U_{it}=  U_{it}-Y_{it-1}-\tanh \left( \frac{-\gamma Y_{it-2}+(\Delta X_{it}+\Delta X_{it+1})'\beta}{2}\right)\left(U_{it}+Y_{it-1}-2U_{it}Y_{it-1}\right)\\
    &\hbar \Upsilon_{it}= \Upsilon_{it}-Y_{it-1}-\tanh \left( \frac{\gamma(1-Y_{it-2})+(\Delta X_{it}+\Delta X_{it+1})'\beta}{2}\right)\left(\Upsilon_{it}+Y_{it-1}-2\Upsilon_{it}Y_{it-1}\right)
\end{align*}
which can be formulated in terms of our own moment functions as
\begin{align*}
    \hbar U_{it}&=-\frac{2}{2-\omega_{t,t-1}^{0|0}(\theta)} \psi_{\theta}^{0|0}(Y_{it-1}^{t+1},Y_{it-2}^{t-1},X_i) \\
    \hbar \Upsilon_{it}&=\frac{2}{2-\omega_{t,t-1}^{1|1}(\theta)} \psi_{\theta}^{1|1}(Y_{it-1}^{t+1},Y_{it-2}^{t-1},X_i)
\end{align*}
Appendix Section \ref{mapping_to_kitazawa_honoreweidner} provides detailed derivations for the mapping between our two approaches. This last result indicates that our moment conditions essentially match those of \cite{Kitazawa_JOE2021} when $T=3$. 
However, for $T\geq 4$, Proposition \ref{proposition_2} imply that there are further identifying moments than those based solely on $\hbar U_{it}$ and $\hbar \Upsilon_{it}$ for the AR(1) model. Interestingly, it turns out as we demonstrate in Appendix Section \ref{mapping_to_kitazawa_honoreweidner} that our moment functions coincide exactly with those derived by \cite{honore2020moment} for the special case $T=3$. \\ 
\indent To the best of our knowledge, besides the AR(1) model and a few specific examples, the structure of moment conditions in models with arbitrary lag order is not fully understood in the literature. Building on \cite{Bonhomme_EM12}, \cite{honore2020moment} propose moment functions for the AR(2) model up to $T=4$ and the AR(3) model with $T=5$ but no results are offered beyond these special instances. Yet, this is of general interest not only to better understand the properties of DFEL models but also for practical modelling and estimation purposes. For example, \cite{card2005estimating} argue in favor of using higher order logit specifications to better fit the behavior of a control group  in the context of a welfare experiment.  Relatedly, there are few results available for multivariate fixed effect models and existing methods developed for the scalar case are likely to be difficult to adapt in practice due to computational barriers.  In the remaining sections, we show that our two-step approach addresses these issues by providing closed form expressions for the moment equality conditions of these more complex models.

\subsection{Moment restrictions for the AR(\texorpdfstring{$p$}{}) logit model, \texorpdfstring{$p>1$}{}} \label{Section_ARp}
Allowing for more than one lag is often desirable in empirical work to model persistent stochastic processes and to better fit the data (e.g, \cite{magnac2000subsidised} on labour market histories, \cite{chay1999non} and \cite{card2005estimating} on welfare recipiency).  To this end, we now discuss how to extend our identification scheme to general univariate autoregressive models. We consider
\begin{align}\label{ARp_logit_general}
    Y_{it}=\mathds{1}\left\{\sum_{r=1}^{p}\gamma_{0r} Y_{it-r}+X_{it}'\beta_0+A_i-\epsilon_{it}\geq 0\right\}, \quad t= 1,\ldots, T
\end{align}
for known autoregressive order $p>1$ and vector of initial values $Y_{i}^{0}=(Y_{i-(p-1)},\ldots,Y_{i-1},Y_{i0})'\in \mathcal{Y}^{p}$, with $A_i\in \mathbb{R}$. Here, we let $\theta_0=(\gamma_0',\beta_0')'\in\mathbb{R}^{p+K_{x}}$. The corresponding transition probabilities are:
\begin{align*}
     \pi^{k|l_{1}^p}_{t}(A_i,X_{i})&=P(Y_{it+1}=k|Y_{it}=l_1,\ldots,Y_{it-(p-1)}=l_p,X_i,A_i)=\frac{e^{k(\sum_{r=1}^{p} \gamma_{0r} l_r+X_{it+1}'\beta_0+A_i)}}{1+e^{\sum_{r=1}^{p} \gamma_{0r} l_r+X_{it+1}'\beta_0+A_i}}
\end{align*}
and there will be moment restrictions attached to each of the $2^p$ (non-redundant) transition probabilities. Before detailing the specifics of their construction, we  enumerate the moment restrictions for this model as we did for the AR(1). This provides a way to ensure that we are not leaving any information on the table.

\subsubsection{Impossibility results and number of moment restrictions when $p\geq 1$ } \label{Section_ARp_impossibility}
Based on simulation evidence, \cite{honore2020moment} conjectured that AR($p$) models possess $2^T-(T+p-1)2^p$ linearly independent moment conditions in panels of sufficient length. We prove this claim in Theorem \ref{theorem_nummoments_ARp} and establish that no moment restrictions for the common parameters exist when $T\leq p+1$; that is with less than $2p+1$ periods of observations per individual. To introduce the result formally, 
it is again convenient to consider the conditional expectation operator mapping functions of histories $Y_i$ to their conditional expectation given $Y_{i}^0=y^0,X_{i}=x$ and the fixed effect, i.e
\begin{align*}
  \mathcal{E}_{y^0,x}^{(p)}\colon \mathbb{R}^{\mathcal{Y}^T}& \longrightarrow \mathbb{R}^\mathbb{R} \\[-1ex]
  \phi(.,y^0,x) & \longmapsto \mathbb{E}\left[\phi(Y_i,y^0,x)|Y_{i}^0=y^0,X_{i}=x,A_{i}=.\right]
\end{align*}
so that for any $y\in \mathcal{Y}^T$, $ \mathcal{E}_{y^0,x}^{(p)}\left[\mathds{1}\{.=y\}\right]$ yields the conditional likelihood of history $y$ for all possible values of $A_i$ in the AR($p$) model. That is,
\begin{align*}
   \mathcal{E}_{y^0,x}^{(p)}\left[\mathds{1}\{.=y\}\right]&=P(Y_i=y|Y_{i}^0=y^0,X_{i}=x,A_{i}=.)=a \mapsto \prod_{t=1}^T \frac{e^{y_t\left(\sum_{r=1}^{p}\gamma_{0r}y_{t-r}+x_t'\beta_0+a\right)}}{1+e^{\sum_{r=1}^{p}\gamma_{0r}y_{t-r}+x_t'\beta_0+a}}
\end{align*}
 Then the following result holds:
\begin{theorem} \label{theorem_nummoments_ARp}
     Consider model (\ref{ARp_logit_general}) with $T\geq 1$ and initial condition $y^0\in \mathcal{Y}^p$. Suppose that for any   $t,s\in \{1,\ldots,T-1\}$ and $y,\tilde{y}\in \mathcal{Y}^{p}$, $\gamma_0'y+x_{t}'\beta_0\neq \gamma_0'\tilde{y}+x_{s}'\beta_0$ if $t\neq s$ or $y\neq\tilde{y}$. Then, the family 
\begin{align*}
    \mathcal{F}_{y^0,p,T}=\left\{1,\pi_{0}^{y_0|y^0}(.,x),\left\{\left(\pi_{t-1}^{y_{1}|y_{1}^{t-1},y_{0},\ldots,y_{-(p-t)}}(.,x))\right)_{y_{1}^{t-1}\in \mathcal{Y}^{t-1}}\right\}_{t=2}^p,\left\{\left(\pi_{t-1}^{y_1|y_{1}^{p}}(.,x)\right)_{y_{1}^p\in \mathcal{Y}^p}\right\}_{t=p+1}^T\right\}
\end{align*}
forms a basis of $\Ima\left(\mathcal{E}_{y^0,x}^{(p)}\right)$ and therefore
\begin{enumerate}
    \item If $T\leq p+1$, $\rank\left(\mathcal{E}_{y^0,x}^{(p)}\right)=2^T$ and $\dim\left(\ker\left(\mathcal{E}_{y^0,x}^{(p)}\right)\right)=0$
    \item If $T\geq p+2$, $\rank\left(\mathcal{E}_{y^0,x}^{(p)}\right)=(T-p+1)2^{p}$ and $\dim\left(\ker\left(\mathcal{E}_{y^0,x}^{(p)}\right)\right)=2^T-(T-p+1)2^{p}$
\end{enumerate}
\end{theorem}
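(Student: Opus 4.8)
The plan is to mirror the proof of Theorem \ref{theorem_nummoments_AR1}: after the substitution $u=e^{a}$, I would treat the conditional likelihood of each history as a rational function of $u$ and show that $\Ima(\mathcal{E}_{y^0,x}^{(p)})$ is spanned by the simple fractions attached to the distinct covariate indices, which are exactly (affine images of) the transition probabilities collected in $\mathcal{F}_{y^0,p,T}$. Concretely, each conditional likelihood reads
\begin{align*}
P(Y_i=y\,|\,Y_i^0=y^0,X_i=x,A_i=a)=\frac{e^{\sum_{t=1}^{T}y_t c_t(y)}\,u^{\sum_{t=1}^{T}y_t}}{\prod_{t=1}^{T}\left(1+e^{c_t(y)}u\right)},\qquad c_t(y)=\sum_{r=1}^{p}\gamma_{0r}y_{t-r}+x_t'\beta_0,
\end{align*}
with the convention that $y_{t-r}$ is the relevant coordinate of $y^0$ whenever $t-r\leq 0$. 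I would prove the statement through the two inclusions $\Ima(\mathcal{E}_{y^0,x}^{(p)})\subseteq \spn(\mathcal{F}_{y^0,p,T})$ and $\spn(\mathcal{F}_{y^0,p,T})\subseteq \Ima(\mathcal{E}_{y^0,x}^{(p)})$, then show $\mathcal{F}_{y^0,p,T}$ is linearly independent, and finally read off the rank and, by rank–nullity, the kernel dimension.

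For the first inclusion I would apply a partial fraction decomposition to each likelihood. The denominator is a product of linear factors $1+e^{c_t(y)}u$, and the assumption guarantees that the indices $c_t(y)$ attached to distinct (period, lag-configuration) pairs are pairwise distinct for $t\leq T-1$, so the denominators factor into distinct linear terms (the final period is discussed below). The numerator has degree $\sum_t y_t\leq T$, strictly below the denominator degree $T$ except for the all-ones history, whose proper part is the constant $\lim_{u\to\infty}P=1$; this is exactly why $1\in\mathcal{F}_{y^0,p,T}$. Standard results then express each likelihood uniquely as a constant (only for the all-ones history) plus a linear combination of the simple fractions $1/(1+e^{c}u)$, and each such fraction is an affine function of one of the transition probabilities. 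The bookkeeping here is that for $t\leq p$ the index $c_t$ depends only on the free outcomes $y_1,\dots,y_{t-1}$ (the remaining lags being fixed components of $y^0$), producing $2^{t-1}$ distinct indices, whereas for $t\geq p+1$ all $2^{p}$ lag patterns are attainable; these are precisely the indices parametrizing the transition probabilities listed in $\mathcal{F}_{y^0,p,T}$, so every likelihood lies in $\spn(\mathcal{F}_{y^0,p,T})$.

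The reverse inclusion follows because $\mathcal{E}_{y^0,x}^{(p)}[1]=1$ places the constant in the image, while the AR($p$) transition functions constructed in Section \ref{Section_ARp} (the analogues of Lemmas \ref{lemma_1}--\ref{lemma_3}) exhibit each transition probability in $\mathcal{F}_{y^0,p,T}$ as $\mathcal{E}_{y^0,x}^{(p)}[\phi]$ for an explicit $\phi$, placing the whole family in the image. Linear independence of $\mathcal{F}_{y^0,p,T}$ is immediate once distinctness of the poles is in hand: simple fractions $1/(1+e^{c}u)$ with pairwise distinct $c$ are linearly independent as functions of $u$, and the constant is independent of all of them. Hence $\mathcal{F}_{y^0,p,T}$ is a basis of the image. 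A direct count gives $|\mathcal{F}_{y^0,p,T}|=1+\sum_{t=1}^{\min(p,T)}2^{t-1}+\sum_{t=p+1}^{T}2^{p}$, which equals $2^{T}$ for $T\leq p+1$ and $(T-p+1)2^{p}<2^{T}$ for $T\geq p+2$; since $\dim\mathbb{R}^{\mathcal{Y}^T}=2^{T}$, this is exactly the dichotomy in the statement, and rank–nullity then delivers the two formulas for $\dim(\ker(\mathcal{E}_{y^0,x}^{(p)}))$.

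The main obstacle is the pole bookkeeping rather than the decomposition itself. Two points need care. First, ruling out a polynomial part reduces to the elementary degree count above and isolates the all-ones history as the unique source of the constant. Second, and more delicate, the indices attached to the final period $t=T$ enter $\mathcal{F}_{y^0,p,T}$ through the period-$T$ transition probabilities $\pi_{T-1}$ but are only partially covered by the stated genericity: distinctness among themselves follows from the assumption with $t=s$, which forces $\gamma_0'l\neq\gamma_0'\tilde l$ for $l\neq\tilde l$, but distinctness from the earlier indices is not guaranteed outright. I would secure this either by reading the genericity condition over all periods actually used by the basis, or, more robustly, by an induction on $T$ that peels off the last outcome: summing the two histories sharing a prefix $w\in\mathcal{Y}^{T-1}$ returns the length-$(T-1)$ likelihood $Q_w$, so the image equals the length-$(T-1)$ image $\spn\{Q_w\}$ enlarged by the fractions $Q_w/(1+e^{c_T(w)}u)$, which contribute exactly $2^{p}$ new dimensions and yield the recursion $\dim=\dim_{T-1}+2^{p}$ feeding the closed-form count.
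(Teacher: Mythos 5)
Your proposal is correct and is essentially the paper's own proof: the same partial-fraction decomposition of each history's likelihood in $e^{a}$, the same degree count isolating the all-ones history as the unique source of the constant, the same identification of the simple fractions with (affine images of) the transition probabilities listed in $\mathcal{F}_{y^0,p,T}$, the same reverse inclusion via the AR($p$) transition functions of Theorem \ref{theorem_ARp_transit} (and its adaptation for the partially-initial-condition-dependent elements), and the same cardinality count feeding the rank--nullity theorem, with your explicit linear-independence step merely making explicit what the paper leaves implicit in the uniqueness of the decomposition. Your final-period caveat is well spotted and in fact flags a slip that the paper's own proof glosses over: it asserts that $D^{y|y^0}(e^{a})$ is a product of distinct irreducible factors \say{by assumption} even though the stated condition only ranges over $t,s\in\{1,\ldots,T-1\}$ while the denominators involve the period-$T$ index (e.g.\ with $p=1$ nothing excludes $x_3'\beta_0=\gamma_{0}+x_2'\beta_0$, which creates a double pole for histories with $(Y_1,Y_2)=(1,0)$ and makes $\pi_{1}^{1|1}(.,x)$ and $\pi_{2}^{0|0}(.,x)$ coincide, breaking the basis claim), so your first fix --- reading the genericity condition over all $T$ periods --- is the right one, whereas your induction alternative does not actually evade it, since the $2^{p}$ peeled-off fractions contribute $2^{p}$ new simple-fraction directions only when the period-$T$ poles are distinct from all earlier ones.
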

 \noindent Theorem \ref{theorem_nummoments_ARp} generalizes Theorem \ref{theorem_nummoments_AR1} for AR($p$) logit models with $p>1$. It confirms the basic intuition that all the parametric content lies in the transition probabilities, no matter the lag order. Specifically,  the conditional probabilities of all choice histories are spanned by the transition probabilities. In the basis $\mathcal{F}_{y^0,p,T}$, elements $\pi_{0}^{y_0|y^0}(.,x)$ and $\left\{\left(\pi_{t-1}^{y_{1}|y_{1}^{t-1},y_{0},\ldots,y_{-(p-t)}}(.,x))\right)_{y_{1}^{t-1}\in \mathcal{Y}^{t-1}}\right\}_{t=2}^p$ correspond to transition probabilities that are affected by the initial condition $y^0$. In the AR(1) case, it reduces to  $\pi_{0}^{y_0|y_0}(.,x)$ (see Theorem \ref{theorem_nummoments_AR1}). The remaining basis elements are free from the initial condition and correspond to the collection of all transition probabilities in each period starting from $t=p$. \\
 \indent Theorem \ref{theorem_nummoments_ARp} is an implication of \textit{partial fraction decompositions}  and of the fact that the transition probabilities of AR($p$) models admit transition functions. This property is set out in the following section. If $T\leq p+1$, $\mathcal{E}_{y^0,x}^{(p)}$ is injective and no non-trivial moment conditions can be found. Beyond this threshold, the \textit{rank nullity theorem} which connects image and nullspace of linear maps tells us that $2^T-(T-p+1)2^{p}$ moment restrictions exist. Under weaker conditions on the parameters or regressors then those of the theorem,  the model may admit additional moment conditions even with $T\leq p+1$.

\subsubsection{Construction of transition probabilities with $p>1$}\label{Section_ARp_transifunc}
Having clarified that $T=p+2$ is the minimum number of periods required for the existence of identifying moments, we are now ready to address the issue of their construction. The blueprint generalizes that of the AR(1) model and can be summarized as follows:
\begin{enumerate}
    \item \textbf{Step 1)}
    \begin{enumerate}
        \item Start by obtaining analytical expressions of the unique transition functions for the transition probability in period $t=p$ when $T=p+1$ \footnote{The fact that the transition functions in period $t=p$ are unique when $T=p+1$ is a direct corollary of Theorem \ref{theorem_nummoments_ARp}. Otherwise, the difference of two distinct transition functions mapping to the same transition probability would yield a valid moment which is a contradiction.}. Shift these expressions by one period, two periods, three periods etc to get a set of transition functions for period $t\in\{p+1,\ldots,T-1\}$ when $T\geq p+2$.
        \item Apply \textit{partial fraction decompositions} to the expressions obtained in (a) for $t\in\{p+1,\ldots,T-1\}$ to generate other transition functions mapping to the same transition probabilities.
    \end{enumerate}
    \item \textbf{Step 2)}. Take \say{adequate} differences of transition functions associated to the same transition probability in periods $t\in\{p+1,\ldots,T-1\}$ to obtain valid moments that are linearly independent.
\end{enumerate}
\textbf{Step 1)} (a) is akin to how we started by getting closed form expressions for the transition functions in period $t=1$ for $T=2$ in the one lag case and then deducted a general principle for $t\geq 2$ (see Section \ref{Section_4}). From a technical perspective, this is the only part of the two-step procedure that differs from the baseline AR(1). Indeed,  \textbf{Step 2)} is fundamentally identical and \textbf{Step 1)} (b) is also unchanged for the simple reason that the transition probabilities keep the same functional form as before. That is, a logistic transformation of a linear index composed of common parameters, the regressors and the fixed effect only. Hence, the same \textit{partial fraction expansions} apply. In light of those close similarities with the AR(1) and in order to focus on the primary issues, we defer a discussion of \textbf{Step 1)}(b) and  \textbf{Step 2)} to Appendix Section \ref{Section_ARp_othersteps}. \\
\indent Theorem \ref{theorem_ARp_transit} provides the algorithm to compute the transition functions for \textbf{Step 1)} (a) for arbitrary lag order greater than one. It is based on the insight that we can leverage the transition functions of an AR($p-1$) and  \textit{partial fraction decompositions} to generate the transition functions of an AR($p$).
A simple example is helpful to illustrate those ideas. Consider  an AR(2) with $T=3$ (i.e 5 observations in total) and suppose that we seek a transition function associated to, say, the transition probability
\begin{align*}
    \pi^{0|0,1}_{2}(A_i,X_{i})=\frac{1}{1+e^{\gamma_{02}+X_{i3}'\beta_0+A_i}}
\end{align*}
The first ingredient of the theorem is to view the AR(2) model as an AR(1) model where we treat the second order lag as an additional strictly exogenous regressor. This change of perspective is advantageous since we already know how to deal with the single lag case. In particular, Lemma \ref{lemma_2} readily gives the transition function $\phi_{\theta_0}^{0|0}(Y_{i3},Y_{i2},Y_{i1},Y_{i0},X_i)$ for the transition probability $\pi^{0|0,Y_{i1}}_{2}(A_i,X_{i})=P(Y_{i3}=0|Y_{i2}=0,Y_{i1},X_i,A_i)$ in the sense that it verifies:
\begin{align*}
    \mathbb{E}\left[\phi_{\theta_0}^{0|0}(Y_{i3},Y_{i2},Y_{i1},Y_{i0},X_i)|Y_i^0,Y_{i1},X_i,A_i\right]&= \pi^{0|0,Y_{i1}}_{2}(A_i,X_{i})
\end{align*}
This is an intermediate stage since $\phi_{\theta_0}^{0|0}(Y_{i3},Y_{i2},Y_{i1},Y_{i0},X_i)$ does not quite map to the target of interest; indeed $\pi^{0|0,Y_{i1}}_{2}(A_i,X_{i})$ depends on the random variable $Y_{i1}$ unlike  $\pi^{0|0,1}_{2}(A_i,X_{i})$. To make further progress, one would intuitively need to \say{set} $Y_{i1}$ to unity to make the two transition probabilities coincide. We operationalize this idea by interacting $\phi_{\theta_0}^{0|0}(Y_{i3},Y_{i2},Y_{i1},Y_{i0},X_i)$ and $Y_{i1}$ to achieve the desired effect in expectation:
\begin{align*}
        \mathbb{E}\left[Y_{i1}\phi_{\theta_0}^{0|0}(Y_{i3},Y_{i2},Y_{i1},Y_{i0},X_i)|Y_i^0,X_i,A_i\right]&=\mathbb{E}\left[Y_{i1}\pi^{0|0,1}_{2}(A_i,X_{i})|Y_i^0,X_i,A_i\right] \\
        &=\frac{1}{1+e^{\gamma_{02}+X_{i3}'\beta+A_i}}\frac{e^{\gamma_{01}Y_{i0}+\gamma_{02}Y_{i-1}+X_{i1}'\beta_0+A_i}}{1+e^{\gamma_{01}Y_{i0}+\gamma_{02}Y_{i-1}+X_{i1}'\beta_0+A_i}}
\end{align*}
Here, the first equality follows from the law of iterated expectations. Then, the second ingredient of the theorem is a \textit{partial fraction expansion} (Appendix Lemma \ref{tech_lemma_1}) to turn this product of logistic indices into $\pi^{0|0,1}_{2}(A_i,X_{i})$. This last operation is analogous to how we constructed sequences of transition functions in the AR(1) model. It ultimately tells us that the solution is a weighted sum of 
$(1-Y_{i1})$ and $Y_{i1}\phi_{\theta_0}^{0|0}(Y_{i3},Y_{i2},Y_{i1},Y_{i0},X_i)$. Theorem \ref{theorem_ARp_transit} turns this procedure into a recursive algorithm that computes the transition functions for any lag order $p>1$. \\
 \begin{theorem} \label{theorem_ARp_transit}
In model (\ref{ARp_logit_general}) with $T\geq p+1$, for all $t\in\{p,\ldots,T-1\}$ and $ y_{1}^p\in \mathcal{Y}^p$ , let
{\allowdisplaybreaks
\begin{align*}
    &k^{y_1|y_{1}^{p}}_{t}(\theta)=\sum_{r=1}^{p}\gamma_{r}y_r+X_{it+1}'\beta \\
    &k^{y_1|y_{1}^{k+1}}_{t}(\theta)=\sum_{r=1}^{k+1}\gamma_{r}y_r+\sum_{r=k+2}^{p} \gamma_{r}Y_{it-(r-1)}+X_{it+1}'\beta, \quad k=1,\ldots,p-2, \text{ if } p>2  \\
    &u_{t-k}(\theta)=\sum_{r=1}^{p} \gamma_{r}Y_{it-(r+k)}+X_{it-k}'\beta, \quad k=1,\ldots,p-1 \\
    &w^{y_1|y_{1}^{k+1}}_t(\theta)=\left[1-e^{(k^{y_1|y_{1}^{k+1}}_{t}(\theta)-u_{t-k}(\theta))}\right]^{y_{k+1}}\left[1-e^{-(k^{y_1|y_{1}^{k+1}}_{t}(\theta)-u_{t-k}(\theta))}\right]^{1-y_{k+1}}, \quad k=1,\ldots,p-1 
\end{align*}
}
and 
\begin{align*}
    &\phi_{\theta}^{y_1|y_{1}^{k+1}}(Y_{it+1},Y_{it},Y^{t-1}_{it-(p+k)},X_i)=\\
    &\left[(1-Y_{it-k})+w^{y_1|y_{1}^{k+1}}_t(\theta)\phi_{\theta}^{y_1|y_{1}^{k}}(Y_{it+1},Y_{it},Y^{t-1}_{it-(p+k-1)},X_i)Y_{it-k}\right]^{(1-y_1)y_{k+1}}\times \\
    &\left[1-Y_{it-k}-w^{y_1|y_{1}^{k+1}}_t(\theta)\left(1-\phi_{\theta}^{y_1|y_{1}^{k}}(Y_{it+1},Y_{it},Y^{t-1}_{it-(p+k-1)},X_i)\right)(1-Y_{it-k})\right]^{(1-y_1)(1-y_{k+1})}\times \\
    &\left[Y_{it-k}+w^{y_1|y_{1}^{k+1}}_t(\theta)\phi_{\theta}^{y_1|y_{1}^{k}}(Y_{it+1},Y_{it},Y^{t-1}_{it-(p+k-1)},X_i)(1-Y_{it-k})\right]^{y_1(1-y_{k+1})}\times \\
    &\left[1-(1-Y_{it-k})-w^{y_1|y_{1}^{k+1}}_t(\theta)\left(1-\phi_{\theta}^{y_1|y_{1}^{k}}(Y_{it+1},Y_{it},Y^{t-1}_{it-(p+k-1)},X_i)\right)Y_{it-k}\right]^{y_1y_{k+1}}, \quad k=1,\ldots,p-1 
\end{align*}
where 
\begin{align*}
    &\phi_{\theta}^{0|0}(Y_{it+1},Y_{it},Y_{it-p}^{t-1},X_i)=(1-Y_{it})e^{Y_{it+1}(\gamma_{1}Y_{it-1}-\sum_{l=2}^{p} \gamma_{l}\Delta Y_{it+1-l}-\Delta X_{it+1}'\beta)} \\
    &\phi_{\theta}^{1|1}(Y_{it+1},Y_{it},Y^{t-1}_{it-p},X_i)=Y_{it}e^{ (1-Y_{it+1})\left(\gamma_{1}(1-Y_{it-1})+\sum_{l=2}^{p} \gamma_{l}\Delta Y_{it+1-l}+\Delta X_{it+1}'\beta\right)}
\end{align*}
Then,
\begin{align*}
    &\mathbb{E}\left[\phi_{\theta_0}^{y_1|y_{1}^{p}}(Y_{it+1},Y_{it},Y^{t-1}_{it-(2p-1)},X_i)\,|\,Y_i^0,Y_{i1}^{t-p},X_i,A_i\right]=\pi^{y_1|y_{1}^{p}}_{t}(A_i,X_i) 
\end{align*}
and for $k=0,\ldots,p-2$
\begin{align*}
       &\mathbb{E}\left[\phi_{\theta_0}^{y_1|y_{1}^{k+1}}(Y_{it+1},Y_{it},Y^{t-1}_{it-(p+k)},X_i)\,|\,Y_i^0,Y_{i1}^{t-(k+1)},X_i,A_i\right]=\pi^{y_1|y_{1}^{k+1},Y_{it-(k+1)},\ldots,Y_{it-(p-1)}}_{t}(A_i,X_i), 
\end{align*}
\end{theorem}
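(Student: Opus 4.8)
The plan is to prove both families of identities by induction on the recursion index, peeling off the dependence on one lagged outcome at a time. The entire argument reduces to a single algebraic fact—the two-term \emph{partial fraction decomposition} of Lemma \ref{tech_lemma_1}—applied repeatedly; the real work lies in setting up the conditioning sets correctly and in bookkeeping the covariate indices $k_t^{y_1|y_1^{k+1}}(\theta)$ and $u_{t-k}(\theta)$ as the recursion unfolds.

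First I would settle the base case, namely the two displayed functions $\phi_\theta^{0|0}$ and $\phi_\theta^{1|1}$. The key observation is that, conditional on $Y_i^0,Y_{i1}^{t-1},X_i,A_i$, the lags $Y_{it-1},\ldots,Y_{it-(p-1)}$ are all known, so the model acts exactly like an AR(1) in $Y_{it}$ with effective regressor index $\tilde Z_{it+1}=\sum_{r=2}^p\gamma_r Y_{it+1-r}+X_{it+1}'\beta$. Lemma \ref{lemma_2} then applies with the first-lag coefficient $\gamma$ read as $\gamma_1$ and with $\Delta X_{it+1}'\beta$ replaced by $\Delta\tilde Z_{it+1}=\sum_{l=2}^p\gamma_l\Delta Y_{it+1-l}+\Delta X_{it+1}'\beta$. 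This reproduces the stated base expressions verbatim and yields $\mathbb{E}[\phi_{\theta_0}^{y_1|y_1}\mid Y_i^0,Y_{i1}^{t-1},X_i,A_i]=\pi_t^{y_1|y_1,Y_{it-1},\ldots,Y_{it-(p-1)}}(A_i,X_i)$, which is precisely the $k=0$ instance of the partial-conditioning claim.

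For the inductive step I would assume the partial-conditioning identity for $\phi_\theta^{y_1|y_1^{k}}$ and establish it for $\phi_\theta^{y_1|y_1^{k+1}}$, which is built from $\phi_\theta^{y_1|y_1^{k}}$ by integrating out the lag $Y_{it-k}$. Applying the law of iterated expectations, I would first condition on $Y_i^0,Y_{i1}^{t-k},X_i,A_i$: here $Y_{it-k}$ and the weight $w_t^{y_1|y_1^{k+1}}(\theta)$ are known constants, and the induction hypothesis replaces $\phi_\theta^{y_1|y_1^{k}}$ by $\pi_t^{y_1|y_1^{k},Y_{it-k},\ldots}(A_i,X_i)$. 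Because this factor is multiplied by $Y_{it-k}$ or $1-Y_{it-k}$, the random lag inside it collapses to the fixed value $y_{k+1}$, so the covariate index becomes $k_t^{y_1|y_1^{k+1}}(\theta)$. Integrating out the remaining $Y_{it-k}$ against its own logistic transition probability with index $u_{t-k}(\theta)$, the target identity reduces to a single scalar equation of the form $\frac{1}{1+e^{u+A}}+w\,\frac{e^{u+A}}{1+e^{u+A}}\,\frac{1}{1+e^{\kappa+A}}=\frac{1}{1+e^{\kappa+A}}$ (and its three siblings), which is solved uniquely by $w=1-e^{\kappa-u}$ when $y_{k+1}=1$ and by $w=1-e^{-(\kappa-u)}$ when $y_{k+1}=0$—exactly $w_t^{y_1|y_1^{k+1}}(\theta)$. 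Iterating to $k=p-1$ fixes all $p$ lags and delivers $\pi_t^{y_1|y_1^p}(A_i,X_i)$, the final claim.

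The main obstacle is not any single deep step but the case-by-case verification of the weight: the four branches of the product defining $\phi_\theta^{y_1|y_1^{k+1}}$—selected by $(y_1,y_{k+1})\in\{0,1\}^2$ through the exponents $(1-y_1)y_{k+1}$, $(1-y_1)(1-y_{k+1})$, $y_1(1-y_{k+1})$, $y_1y_{k+1}$—each produce a slightly different partial-fraction equation, and one must check that all four are resolved by the single compact formula while tracking whether the fixed value $y_{k+1}$ contributes $\gamma_{k+1}$ to $\kappa$. A secondary point requiring care is confirming that $u_{t-k}(\theta)$ and $k_t^{y_1|y_1^{k+1}}(\theta)$ depend only on variables in the conditioning set; this holds because the hypothesis $t\geq p$ (available whenever $T\geq p+1$), together with $Y_i^0\in\mathcal{Y}^p$, guarantees that the deepest lags $Y_{it-(k+1)},\ldots,Y_{it-(k+p)}$ entering $u_{t-k}$ lie within $Y_i^0\cup Y_{i1}^{t-(k+1)}$.
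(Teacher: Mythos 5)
Your proposal matches the paper's own proof essentially step for step: the paper first establishes your base case as Lemma \ref{lemma_theorem_ARp_transit} (derived, as you suggest, by exploiting that conditional on $Y_i^0,Y_{i1}^{t-1}$ the higher-order lags are fixed so the AR(1) recipe behind Lemma \ref{lemma_2} applies with the shifted index), and then runs precisely your finite induction on $k$ via the law of iterated expectations, measurability of the weights $w_t^{y_1|y_1^{k+1}}(\theta)$ with respect to the conditioning set, and the $K=1$ partial-fraction identities of Lemma \ref{tech_lemma_1}, checked over the same four $(y_1,y_{k+1})$ branches. Your final remark that the step to $k=p-1$ is the identical computation with $\kappa=k_t^{y_1|y_1^p}(\theta)$ and $u=u_{t-(p-1)}(\theta)$ is also exactly how the paper concludes.
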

\vspace{0.5cm}

\noindent The remaining steps to complete the construction of valid moment functions are described at length in Appendix Section \ref{Section_ARp_othersteps}. The end product is a family of (numerically) linearly independent moment functions of size $2^T-(T+1-p)2^{p}$. By Theorem \ref{theorem_nummoments_ARp}, this implies that our two-step approach recovers all \textit{moment equality} conditions in the model.

\begin{remark}(Extensions) \label{remark_extensionARp}
  While the exposition emphasized model (\ref{ARp_logit_general}), our methodology applies more broadly to models of the form
    \begin{align*}
    Y_{it}=\mathds{1}\left\{g(Y_{it-1},\ldots,Y_{it-p},X_{it},\theta_0)+A_i-\epsilon_{it}\geq 0\right\}, \quad t= 1,\ldots, T
    \end{align*}
    where the lag order $p>1$ is known and $g(.)$ is known up to the finite dimensional parameter $\theta_0$. We can thus incorporate interaction effects which are often of interest in applied work. For instance, \cite{card2005estimating} model welfare participation as a random effect AR(2) logit process of the form
   \begin{align*}
    Y_{it}=\mathds{1}\left\{\gamma_{01} Y_{it-1}+\gamma_{02} Y_{it-2}+\delta_{0} Y_{it-1}Y_{it-2}+X_{it}'\beta_0+A_i-\epsilon_{it}\geq 0\right\}, \quad t= 1,\ldots, T
    \end{align*}
    where $A_i$ either follows a normal distribution or a discrete distribution with few support points. In this case, minor modifications of the results in this section will deliver moment conditions for $\theta_0=(\gamma_{01},\gamma_{02},\delta_{0},\beta_0')'$ that are robust to misspecifications of individual unobserved heterogeneity. The key is that $A_i$ enters additivity in order to leverage the rational fraction identities of Lemma \ref{tech_lemma_1}.
\end{remark}

\subsection{Identification with more than one lag} \label{Section_AR2_identif}
This section discusses ways to leverage our methodology and moment restrictions to assess the identifiability of common parameters. For ease of exposition, we concentrate on the AR(2) logit model. \\
\indent We start by briefly reexamining an identification result due to \cite{honore2020moment}.
Using functional differencing, they proved (under some regularity conditions) that $\theta_0$ is identified with $T=3$ provided $X_{i2}=X_{i3}$ and that the initial condition $Y_{i}^0=(Y_{i-1},Y_{i0})$ varies in the population. Notice that this is not in contradiction to Theorem \ref{theorem_nummoments_ARp} since $X_{i2}=X_{i3}$ and $Y_{i}^0$ \say{varying} constitute two violations of its key assumptions. It is therefore not unsurprising that identifying moment exist in that case despite $T<4$. To understand why, note that imposing $X_{i2}=X_{i3}$ effectively amounts to equate the transition probabilities in period $t=2$ and in period $t=1$ for adequate choices of the initial condition; e.g  $\pi_{1}^{0|0,Y_{i0}}(A_i,X_{i})=\pi_{2}^{0|0,0}(A_i,X_{i})$ provided that $Y_{i0}=0$ and $X_{i2}=X_{i3}$. In turn, this implies that differences of the corresponding transition functions in periods $t=2$ and $t=1$ deliver valid moment functions to estimate $\theta_0$ in certain subpopulations. In Appendix Section \ref{appendix_identif_AR2_logit_general_subsec1}, we show that this is an interpretation of the moment conditions that \cite{honore2020moment} use to show point identification. \\
\indent Because this identification argument hinges on matching covariates as in \cite{honore2000panel}, it breaks down in the presence of certain types of regressors like an age variable or a time trend. In fact, \cite{dobronyi2021identification} showed that there are actually no moment equality conditions available in the model with such regressors. This finding is consistent with the intuition that we cannot match the transition probabilities in periods $t=1$ and $t=2$ in that case. However, with one additional period, i.e $T=4$, we can leverage the moment restrictions of Proposition \ref{proposition_3} which are valid for free-varying regressors and any initial condition. This leads to two possible approaches to inference. The first is to consider the \say{identified set} $\Theta^{I}$ of $\theta_0$ based on the four conditional moment restrictions implied by the model:
\begin{align*}
    \Theta^{I}=\left\{\theta \in \mathbb{R}^{2+K_x}: \mathbb{E}_{\theta_0}\left[\psi_{\theta}^{y_1|y_{1},y_2}(Y^{4}_{i0},Y_{i-1}^1,X_i)|Y_i^0,X_i\right]=0, \quad \forall (y_1,y_2)\in \{0,1\}^2\right\}
\end{align*}
and construct confidence sets for $\theta_0$ following e.g \cite{andrews2013inference}. Instead, the sharp identified set may be computed following the approach of \cite{dobronyi2021identification} if the covariates $X_i$ are discrete with finite support. Alternatively, a second approach which we develop further here is to formulate sensible restrictions on covariates that secure point identification in the spirit of \cite{honore2000panel}. Specifically, we consider the case where a continuous scalar component $W_{i2}$ of $X_{i2}$ has unbounded positive support conditional on $Y_{i}^0$, the other regressors, $A_i$ and has a non-trivial effect $\beta_{0W}$ of known sign to the econometrician.  This is the content of Assumption \ref{assumption_1} in which $Z_i=(R_{i}',W_{i1},W_{i3},W_{i4})$, and $X_{it}=(W_{it},R_{it}')\in \mathbb{R}^{K_x}$ for all $t\in \{1,2,3,4\}$. \cite{dobronyi2023revisiting} used a similar device to develop an alternative distribution-free semiparametric estimator to that of \cite{honore2000panel} that can accommodate time effects in the baseline one lag model.
\begin{assumption}\label{assumption_1} 
    (i) The covariate $W_{i2}$ is continuously distributed with unbounded support on $\mathbb{R}_{+}$ conditional on $Y_{i}^0,Z_{i},A_i$ and (ii) $\beta_{0W}$ is known to be strictly negative.
\end{assumption}
\noindent Besides being a technical convenience, Assumption \ref{assumption_1}  may be reasonable in some situations, e.g in the context of our empirical application, the econometrician may have a confident prior that drug prices affect individual drug consumption negatively. We point out that nothing in the discussion that follows hinges critically on $\beta_{W}<0$ and or $W_{i2}$ having support on the positive reals. A set of perfectly symmetric arguments will deliver the same conclusions if instead $\beta_{W}>0$ and $W_{i2}$ has unbounded support on $\mathbb{R}_{-}$.
\begin{assumption}
\label{assumption_2}
(i) $\theta_0=(\gamma_{01},\gamma_{02},\beta_0')'\in \mathbb{G}_1\times \mathbb{G}_2\times \mathbb{B}=\Theta$, $\mathbb{G}_1,\mathbb{G}_2,\mathbb{B}$ compact. The conditional densities of $A_i$ and  $Z_i$ verify:
\begin{enumerate}[label=(\roman*)]
\setcounter{enumi}{1}
    \item $\lim \limits_{w_2\to\infty}p(a|y^0,z,w_2)=q(a|y^0,z)$,  $\lim \limits_{w_2\to\infty}p(z|y^0,w_2)=q(z|y^0)$ 
     \item There exists positive integrable functions $d_0(a),d_1(z),d_2(z)$ such that $p(a|y^{0},z,w_2)\leq d_{0}(a)$ for all $a\in \mathbb{R}$,  $d_{1}(z) \leq p(z|y^0,w_2)\leq d_{2}(z)$ for all $z\in \mathbb{R}^{K_{x}-1}$
    \item $w_2\mapsto p(a|y^0,z,w_2), w_2\mapsto p(z|y^0,w_2) $ are continuous in $w_2$. 
\end{enumerate}
\end{assumption}
\noindent Assumption \ref{assumption_2} are standard regularity conditions for an application of the dominated convergence theorem that once paired with Assumption \ref{assumption_1} are sufficient to establish that $\theta_{0}$ is \textit{identified at infinity}. The outline of the argument is as follows. Under these assumptions, by sending $W_{i2}$ to $\infty$, the valid moment function $\psi_{\theta}^{0|0,0}(Y_{i4},Y_{i3},Y^{2}_{i-1},X_i)$ of Proposition \ref{proposition_3} reduces to 
\begin{align} \label{limit_valid_moment_funcition}
\begin{split}
     \psi_{\theta,\infty}^{0|0,0}(Y_{i4},Y_{i3},Y^{2}_{i-1},Z_i)&=-(1-Y_{i1})(1-Y_{i2})Y_{i3} \\
    &+\left[e^{X_{i34}'\beta}-1\right](1-Y_{i1})(1-Y_{i2})(1-Y_{i3})Y_{i4} \\
    &+e^{-\gamma_1Y_{i0}+\gamma_{2}(1-Y_{i-1})+X_{i31}'\beta}Y_{i1}(1-Y_{i2})(1-Y_{i3})Y_{i4} \\
    &+e^{-\gamma_1Y_{i0}-\gamma_{2}Y_{i-1}+X_{i41}'\beta}Y_{i1}(1-Y_{i2})(1-Y_{i3})(1-Y_{i4})
\end{split}
\end{align} 
which occurs because $\lim_{w_2\to\infty} e^{w_2\beta_W}=0$ and $Y_{i2}=0$ with probability one conditional on the regressors and the fixed effects. The key observation is that this \say{limiting} moment function has a similar functional form to the valid moment functions of the AR(1) model with $T=3$. In turn, this implies monotonicity properties on certain regions of the covariate space that we can exploit to point identify $\theta_0$ in the spirit of \cite{honore2020moment}. To this end,  let $(\bar{x},\underline{x})\in \mathbb{R}^2$, such that $\bar{x}>\underline{x}$ and
define the sets
\begin{align*}
    \mathcal{X}_{k,+}&=\{x\in \mathbb{R}^{4K_{x}}|  \bar{x} \geq x_{k,3}\geq x_{k,4} > x_{k,1}\geq \underline{x} \text{  or  } \bar{x}\geq x_{k,3}> x_{k,4} \geq x_{k,1}\geq \underline{x} \} \\
    \mathcal{X}_{k,-}&=\{x\in \mathbb{R}^{4K_{x}}|\underline{x}\leq x_{k,3}\leq x_{k,4} < x_{k,1}\leq \bar{x} \text{ or } \underline{x} \leq x_{k,3}< x_{k,4} \leq x_{k,1}\leq \bar{x} \}
\end{align*}
for all $k \in \{1,\ldots,K_{x}\}$. In words, $\mathcal{X}_{k,+}$ is the region of the covariate space in which values of the $k$-th regressor in periods $t\in\{1,3,4\}$ belong to $[\underline{x},\bar{x}]$ and verify $x_{k,3}\geq x_{k,4} \geq x_{k,1}$  with at least one strict inequality. Instead, $\mathcal{X}_{k,-}$ is the region of the covariate space where realizations of the $k$-th regressor obey the reverse ranking. With these notations in hands, we have the following theorem,
\begin{theorem} \label{theorem_AR2_identif}
For $T=4$, suppose that outcomes $(Y_{i1},Y_{i2},Y_{i3},Y_{i4})$ are generated from model (\ref{ARp_logit_general}) with $p=2$, initial condition $y^{0}\in\mathcal{Y}^2$, common parameters $\theta_0=(\gamma_{0}',\beta_0')\in \mathbb{R}^{2+K_{x}}$ and that Assumptions \ref{assumption_1} and \ref{assumption_2} hold. Further,  for all $s\in\{-,+\}^{K_{x}}$, let $\mathcal{X}_s=\bigcap \limits _{k=1}^{K_{x}} \mathcal{X}_{k,s_k}$ and suppose that for all $y^{0}\in \mathcal{Y}^2$
\begin{align*}
    \lim_{w_{2}\to \infty} P\left(Y_{i}^0=y^{0},\quad X_i\in \mathcal{X}_s \,|\,  W_{i2}=w_2 \right)>0 
\end{align*}
Let 
\begin{align*}
     \Psi_{s,y^0}^{0|0,0}(\theta)
    &=\lim_{w_{2}\to \infty} \mathbb{E}\left[ \psi_{\theta,\infty}^{0|0,0}(Y_{i4},Y_{i3},Y^{2}_{i-1},X_i)\,|\,Y_{i}^0=y^0,X_i\in \mathcal{X}_s,W_{i2}=w_2\right]
\end{align*}
Then, $\theta_0$ is the unique solution to the system of equations
\begin{align*}
    \Psi_{s,y^0}^{0|0,0}(\theta)&=0, \quad \forall s\in\{-,+\}^{K_{x}},\quad  \forall y^0\in \mathcal{Y}^2
\end{align*}
\end{theorem}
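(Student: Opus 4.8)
The plan is to combine an \emph{identification-at-infinity} argument with a monotonicity (single-crossing) argument in the spirit of \cite{honore2000panel} and \cite{honore2020moment}. Throughout, expectations are taken under the true data generating process indexed by $\theta_0$ while the moment function is evaluated at a candidate $\theta$, and the goal is to show that the system $\Psi_{s,y^0}^{0|0,0}(\theta)=0$ over all $(s,y^0)$ admits $\theta_0$ as its only root. First I would justify passing to the limit $w_2\to\infty$ and recovering the reduced moment function (\ref{limit_valid_moment_funcition}). Because $\beta_{0W}<0$ (Assumption \ref{assumption_1}(ii)) and $W_{i2}$ has unbounded positive support (Assumption \ref{assumption_1}(i)), the period-2 index $\gamma_{01}Y_{i1}+\gamma_{02}Y_{i0}+W_{i2}\beta_{0W}+R_{i2}'\beta_{0R}+A_i$ diverges to $-\infty$ regardless of $Y_{i1}$, so $P(Y_{i2}=0\mid Y_i^0,X_i,A_i)\to 1$. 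Assumption \ref{assumption_2} supplies the integrable dominating functions $d_0,d_1,d_2$ and the continuity in $w_2$ required to invoke dominated convergence, so the conditional expectation converges and, after replacing $p(a\mid\cdot),p(z\mid\cdot)$ by their limits $q(a\mid\cdot),q(z\mid\cdot)$, the object $\Psi_{s,y^0}^{0|0,0}(\theta)$ equals the expectation of $\psi_{\theta,\infty}^{0|0,0}$ over these limit densities restricted to $\mathcal{X}_s$.

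That $\theta_0$ solves the system is then immediate: by Proposition \ref{proposition_3}, $\mathbb{E}[\psi_{\theta_0}^{0|0,0}\mid Y_i^0,X_i,A_i]=0$, and integrating the limit against $q(a\mid\cdot)q(z\mid\cdot)$ on $\mathcal{X}_s$ gives $\Psi_{s,y^0}^{0|0,0}(\theta_0)=0$ for every $s$ and $y^0$. For uniqueness, I would exploit the fact that $Y_{i2}=0$ almost surely in the limit, so the only random outcomes remaining are $(Y_{i1},Y_{i3},Y_{i4})$, which follow elementary logit transitions with effective fixed effect $A_i$ and a period-$3$ intercept shifted by $\gamma_{02}Y_{i1}$. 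I would write $\mathbb{E}[\psi_{\theta,\infty}^{0|0,0}\mid Y_i^0=y^0,X_i,A_i]$ explicitly as a sum of products of logistic probabilities evaluated at $\theta_0$ multiplied by the three exponential factors $e^{X_{i34}'\beta}$, $e^{-\gamma_1 Y_{i0}+\gamma_2(1-Y_{i-1})+X_{i31}'\beta}$ and $e^{-\gamma_1 Y_{i0}-\gamma_2 Y_{i-1}+X_{i41}'\beta}$ carrying the candidate $\theta$; the $\theta$-free term cancels upon subtracting the identity at $\theta_0$.

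The difference $\Psi_{s,y^0}^{0|0,0}(\theta)=\Psi_{s,y^0}^{0|0,0}(\theta)-\Psi_{s,y^0}^{0|0,0}(\theta_0)$ is then a combination, with strictly positive weights given by the logistic transition probabilities, of the increments of these three exponentials. On $\mathcal{X}_{k,+}$ the coordinate-$k$ covariate differences in $X_{i34},X_{i31},X_{i41}$ are all nonnegative with at least one strict, and all nonpositive on $\mathcal{X}_{k,-}$; hence on $\mathcal{X}_s$ each exponential is monotone in $\beta$ in the direction dictated by $s$, yielding a strict single-crossing property that forces $\beta=\beta_0$ as the unique common zero across the sign patterns $s$. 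With $\beta$ pinned down, I would use the variation across the four initial conditions $y^0\in\mathcal{Y}^2$: these shift the exponents of the last two terms by the known patterns $-\gamma_1 Y_{i0}$, $\gamma_2(1-Y_{i-1})$ and $-\gamma_2 Y_{i-1}$, so differencing the moment equations across $y^0$ gives relations strictly monotone in $\gamma_1$ and $\gamma_2$ whose only solution is $(\gamma_{01},\gamma_{02})$.

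The main obstacle is the strict single-crossing in this last step. The difficulty is twofold. The expected moment mixes contributions from $Y_{i1}=0$ and $Y_{i1}=1$ (the latter carrying the $\gamma_2$ shift) and is integrated over the heterogeneity $A_i\sim q(a\mid y^0,z)$ and over $\mathcal{X}_s$, so I must verify that the monotonicity in $\beta$, and subsequently in $\gamma_1,\gamma_2$, survives integration rather than holding merely pointwise in $(A_i,X_i)$; this is where the positivity of the conditioning probabilities assumed in the theorem is essential, since it guarantees the weights do not degenerate on $\mathcal{X}_s$. Second, the joint recovery of $(\gamma_1,\gamma_2,\beta)$ requires ruling out a spurious root of the full system, for which I would lean on the overidentification afforded by having $2^{K_x}\times 4$ equations for $2+K_x$ parameters, together with the sign-region structure of $\mathcal{X}_s$ that decouples the effect of each coordinate of $\beta$.
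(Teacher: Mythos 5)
Your limit step is sound and matches the paper: Assumptions \ref{assumption_1} and \ref{assumption_2} are used exactly as you describe (dominated convergence for the conditional densities, then a continuity extension at $w_2=\infty$) to replace $\psi_{\theta}^{0|0,0}$ by the limiting moment (\ref{limit_valid_moment_funcition}), and the fact that $\theta_0$ is a root follows from Proposition \ref{proposition_3} as you say. The gap is in your uniqueness argument, and you flagged the right spot yourself. First, you miss the structural fact that drives the paper's proof: for the initial condition $y^0=(0,0)$ the limiting moment function is entirely free of $\gamma_1$, since the exponents reduce to $\gamma_2+X_{i31}'\beta$ and $X_{i41}'\beta$. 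This is what makes a sequential scheme possible. Second, your plan to pin down $\beta$ \emph{first} via single-crossing in $\beta$ cannot work as stated, because $\beta$ and $\gamma_2$ enter the same exponent $\gamma_2(1-Y_{i-1})+X_{i31}'\beta$: a candidate with $\gamma_2\neq\gamma_{02}$ can offset a wrong $\beta$ on any single region $\mathcal{X}_s$, so monotonicity in $\beta$ alone does not force $\beta=\beta_0$. The paper resolves this jointly, not sequentially in your sense: with $y^0=(0,0)$ it computes, via a Leibniz-rule/domination argument, that $\partial\Psi_{s,0,0}^{0|0,0}/\partial\gamma_2>0$ on \emph{every} region $s$ while $\partial\Psi_{s,0,0}^{0|0,0}/\partial\beta_k$ has sign $s_k$, and then invokes Lemma 2 of \cite{honore2020moment}, which is tailored to exactly this configuration (one parameter with a derivative of constant sign across all $2^{K_x}$ regions, plus $K_x$ parameters whose derivative signs track the sign pattern) to conclude the $(K_x+1)$-vector $(\gamma_2,\beta)$ has at most one root. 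Your appeal to ``overidentification'' ($2^{K_x}\times 4$ equations for $2+K_x$ unknowns) is not a substitute; equation counting does not rule out spurious roots of a nonlinear system.

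Third, your recovery of $(\gamma_1,\gamma_2)$ by \emph{differencing the moment equations across} $y^0$ would fail: the conditioning distributions $q(a\,|\,y^0,z)$ and $q(z\,|\,y^0)$ depend on $y^0$ (they are left unrestricted), so the cross-$y^0$ difference is not a common positive weight multiplying the increments of the exponential factors, and no clean monotone relation in $(\gamma_1,\gamma_2)$ emerges. The paper instead, having identified $(\gamma_{02},\beta_0)$ from the $y^0=(0,0)$ subsystem, fixes those values and uses, for each $y^0\in\{(0,1),(1,0),(1,1)\}$ \emph{separately}, the strict monotonicity of $\Psi_{s,y^0}^{0|0,0}(\theta)$ in $\gamma_1$ (the weights being conditional expectations of nonnegative history indicators that are strictly positive under the theorem's support condition $\lim_{w_2\to\infty}P(Y_i^0=y^0,\,X_i\in\mathcal{X}_s\,|\,W_{i2}=w_2)>0$) to pin down $\gamma_{01}$ within each equation. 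To repair your proof, replace your two-stage ($\beta$ then $\gamma$) plan by: (i) restrict to $y^0=(0,0)$, verify the derivative signs in $(\gamma_2,\beta)$ with dominating functions justifying differentiation under the expectation, and apply the Honor\'e--Weidner uniqueness lemma; (ii) then treat $\gamma_1$ within each remaining initial condition via strict monotonicity, with no cross-$y^0$ differencing.
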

\noindent Theorem \ref{theorem_AR2_identif} shows that point identification of $\theta_0$ is achievable in higher-order dynamic logit models in short panels. The main cost for this guarantee is Assumption \ref{assumption_1} which presumes knowledge of the data generating process beyond the baseline setup. Additionally, there should be sufficient variation in the regressors $X_{it}$  as $W_{i2}\mapsto \infty$ to ensure that $\lim_{w_{2}\to \infty} P\left(Y_{i}^0=y^{0},\quad X_i\in \mathcal{X}_s \,|\, W_{i2}=w_2 \right)>0$ for all $s\in\{-,+\}^{K_{x}}$. Our arguments are easily generalizable to AR($p$) models with lag order $p\geq 3$. Under natural extensions of Assumptions \ref{assumption_1} and \ref{assumption_2},  the model parameters $\theta_0=(\gamma_{01},\ldots,\gamma_{0p},\beta_0')$ are \textit{identified at infinity} provided $T\geq 2+p$.

\begin{remark}[Identification with time effects]
Theorem \ref{theorem_AR2_identif} does not readily deals with time effects but it is straightforward to adapt the argument for this case. Suppose for concreteness that one covariate is a time trend. By further sending $W_{i3}$ to infinity, the limiting moment function of equation (\ref{limit_valid_moment_funcition}) reduces to 
\begin{multline*}
     \psi_{\theta,\infty}^{0|0,0}(Y_{i4},Y_{i3},Y^{2}_{i-1},Z_i)=-(1-Y_{i1})(1-Y_{i2})(1-Y_{i3})Y_{i4} \\ 
     +e^{-\gamma_1Y_{i0}-\gamma_{2}Y_{i-1}+X_{i41}'\beta}Y_{i1}(1-Y_{i2})(1-Y_{i3})(1-Y_{i4})
\end{multline*}
For  $(Y_{i0},Y_{i-1})=(0,0)$, this valid moment function only depends on $\beta$ and arguments analogous to those in Theorem \ref{theorem_AR2_identif} will point identify $\beta_0$. Varying the initial condition is then sufficient to point identify $\gamma_0$ given the monotonicity of the moment function in $(\gamma_1,\gamma_2)$.

\end{remark}

\subsection{Average Marginal Effects in  AR($p$) logit models}

In discrete choice settings, interest often centers on certain functionals of unobserved heterogeneity rather than on the value of the model parameters per se. One particular family of such functionals that are of interest from a policy perspective are average marginal effects (AMEs) which capture mean response to a counterfactual change in past outcomes. It turns out that these key quantities are simply expectations of our transition functions. To see this, consider first the baseline AR(1) model with discrete covariates $X_{it}$. We can define the average transition probability from state $l$ to state $k$ in period $t$ for a subpopulation of individuals with covariate  $x_{1}^{t+1}=(x_1,\ldots,x_{t+1})$ and initial condition $y_0$ as
\begin{align*}
    \Pi_{t}^{k|l}(y_0,x_{1}^{t+1})=\mathbb{E}\left[\underbrace{\pi_{t}^{k|l}(X_{it+1},A_i)}_{\equiv \pi_{t}^{k|l}(X_{i},A_i)}\,|\,Y_{i0}=y_0,X_{i1}^{t+1}=x_{1}^{t+1}\right]=\int \pi_{t}^{k|l}(x_{t+1},a)p(a|y_0,x_{1}^{t+1})da
\end{align*}
where $p(a|y_0,x_{1}^{t+1})$ denotes the conditional density of the fixed effect $A$ given $(y_0,x_{1}^{t+1})$. The AME is defined as the following contrast of average transition probabilities:
\begin{align*}
    AME_{t}(y_0,x_{1}^{t+1})= \Pi^{1|1}_{t}(y_0,x_{1}^{t+1}) - \Pi^{1|0}_{t}(y_0,x_{1}^{t+1})=\Pi^{1|1}_{t}(y_0,x_{1}^{t+1})-(1-\Pi^{0|0}_{t}(y_0,x_{1}^{t+1}))
\end{align*}
It is interpreted as the population average causal effect on $Y_{it+1}$ of a change from 0 to 1 of $Y_{it}$ given $(y_0,x_{1}^{t+1})$. By Lemma \ref{lemma_2} and the law of iterated expectations, we have that for $T\geq 2$ and $t\geq 1$:
\begin{align*}
\Pi_{t}^{0|0}(y_0,x_{1}^{t+1})&=\mathbb{E}\left[\phi_{\theta_0}^{0|0}(Y_{it+1},Y_{it},Y_{it-1},X_i)\,|\,Y_{i0}=y_0,X_{i1}^{t+1}=x_{1}^{t+1}\right] \\
\Pi_{t}^{1|1}(y_0,x_{1}^{t+1})&=\mathbb{E}\left[\phi_{\theta_0}^{1|1}(Y_{it+1},Y_{it},Y_{it-1},X_i)\,|\,Y_{i0}=y_0,X_{i1}^{t+1}=x_{1}^{t+1}\right]
\end{align*}
which implies that $AME_{t}(y_0,x_{1}^{t+1})$ is identified so long as $\theta_0$ is identified. A sufficient condition for that is $T\geq 3$ and $X_{i3}-X_{i2}$ having support in a neighborhood of zero (\cite{honore2000panel}). \cite{aguirregabiria2021identification} were the first to highlight that AMEs can be point identified in the AR(1) model.
When the lag order $p$ is greater than one - which seems to be the case for persistent variables such as unemployment (e.g \cite{magnac2000subsidised}) and welfare recipiency (e.g \cite{chay1999non}) -  we can analogously define average transition probabilities from states $l_1^p\in\mathcal{Y}^{p}$ to state $k\in\mathcal{Y}$ as:
\begin{align*}
\Pi_{t}^{k|l_{1}^p}(y^0,x_{1}^{t+1})&=\mathbb{E}\left[\underbrace{\pi_{t}^{k|l_{1}^p}(X_{it+1},A_i)}_{\equiv \pi_{t}^{k|l}(X_{i},A_i)}\,|\,Y_{i}^0=y^0,X_{i1}^{t+1}=x_{1}^{t+1}\right]=\int \pi_{t}^{k|l_{1}^p}(x_{t+1},a)p(a|y_0,x_{1}^{t+1})da 
\end{align*}
This permits the consideration of more nuanced counterfactual parameters compared to the AR(1). In the context of studies on long term unemployment, contrasts of the form $\Pi_{t}^{k|l_{1}^p}(y^0,x_{1}^{t+1})-\Pi_{t}^{k|v_{1}^p}(y^0,x_{1}^{t+1})$ may be especially relevant to measure more accurately the relative effects of work histories spanning multiple periods. Again, these counterfactuals are simply expectations of transition functions by Theorem \ref{theorem_ARp_transit} and will be identified whenever $\theta_0$ is identified (see Section \ref{Section_AR2_identif} for examples of sufficient conditions). \\
\indent Multiperiod analogs of average transition probabilities in AR($p$) models
\begin{align*}
&\Pi_{t}^{k_{1}^s|l_1^{p}}(y^0,x_{1}^{t+s})=\\
&\mathbb{E}\left[P(Y_{it+s}=k_s,\ldots, Y_{it+1}=k_1\,|\,Y_{it}=l_1,\ldots,Y_{it-(p-1)}=l_p,X_{i1}^{t+s}=x_{1}^{t+s},A_i)\,|\,Y_{i}^0=y^0,X_{i1}^{t+s}=x_{1}^{t+s}\right] 
\end{align*}
may also be of interest to assess state-dependence.
These quantities give the average probability of moving from states $l_1^p\in\mathcal{Y}^{p}$ to future states $k_{1}^s\in \mathcal{Y}^s$, where $s\geq 1$ and the average is taken with respect to the distribution of $A_i$ conditional on $(y_0,x_{1}^{t+1})$.  The special case  $k_1=k_2=\ldots=k_{s}$ delivers a discrete version of the survivor function employed in duration analysis, i.e  the average likelihood to survive $s$ consecutive periods in the same state after experiencing a given choice history. Proposition \ref{proposition_4} shows that they are also identified when $\theta_0$ is identified under certain conditions.

\begin{proposition} \label{proposition_4}
    Consider model (\ref{ARp_logit_general}) with $T\geq p+2$, and initial condition $y^0\in \mathcal{Y}^p$. Suppose that $\theta_0$ is identified and that for any   $t\in\{p,\ldots,T-2\}$,
    $s\in\{1,\ldots,T-1-t\}$ and $y,\tilde{y}\in \mathcal{Y}^{p}$, $\gamma_0'y+x_{t}'\beta_0\neq \gamma_0'\tilde{y}+x_{t+s}'\beta_0$ . Then, for $t\in\{p,\ldots,T-2\}$,
    $s\in\{1,\ldots,T-1-t\}$, and any $l_1^p\in\mathcal{Y}^{p}$, $k_{1}^s\in \mathcal{Y}^s$, the quantity $\Pi_{t}^{k_{1}^s|l_1^{p}}(y^0,x_{1}^{t+s})$ is identified.
\end{proposition}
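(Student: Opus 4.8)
The plan is to reduce the whole statement to a single claim: for fixed covariate values and initial condition, the integrand defining $\Pi_{t}^{k_{1}^s|l_1^{p}}$ — namely the $s$-step transition probability $\rho_s(A_i,X_i)=P(Y_{it+s}=k_s,\ldots,Y_{it+1}=k_1 \,|\, Y_{it}=l_1,\ldots,Y_{it-(p-1)}=l_p, X_i,A_i)$ — lies in $\Ima(\mathcal{E}_{y^0,x}^{(p)})$, i.e.\ admits a transition function $\Phi_{\theta}$ in the sense of (\ref{trans_functions}). Once this is granted, the law of iterated expectations gives
\[
\Pi_{t}^{k_{1}^s|l_1^{p}}(y^0,x_{1}^{t+s})=\mathbb{E}\left[\Phi_{\theta_0}(Y_i,Y_i^0,X_i)\,|\,Y_{i}^0=y^0,X_{i1}^{t+s}=x_{1}^{t+s}\right],
\]
so the left-hand side is a known functional of the observed data distribution and of $\theta_0$; since $\theta_0$ is assumed identified, $\Pi_{t}^{k_{1}^s|l_1^{p}}$ is identified. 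Crucially, I only need \emph{existence} of $\Phi_{\theta}$, not an explicit formula, so membership of $\rho_s$ in the image suffices.

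To establish that membership, I would first use the $p$-Markov structure to factor the integrand along the realized path,
\[
\rho_s(A_i,X_i)=\prod_{j=1}^{s}\pi_{t+j-1}^{k_j|\sigma_j}(A_i,X_i),
\]
where $\sigma_j\in\mathcal{Y}^p$ is the length-$p$ window of most recent outcomes just before period $t+j$, obtained by sliding forward through $(l_1,\ldots,l_p,k_1,k_2,\ldots)$. Holding the covariates fixed and writing $u=e^{A_i}$, each factor is a rational fraction in $u$ with a single simple pole at $u=-e^{-J_j}$, where $J_j=\gamma_0'\sigma_j+X_{i,t+j}'\beta_0$ is the $A_i$-free part of the $j$-th index; the transition periods are $t,\ldots,t+s-1\in\{p,\ldots,T-1\}$ and the covariate periods are $t+1,\ldots,t+s$.

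Next I would apply a partial fraction decomposition (Lemma \ref{tech_lemma_1}) to this product. The decisive input is the distinctness hypothesis of the proposition: taking base period $\tau=t+j$ and gap $d=j'-j\geq 1$ (one checks $\tau\in\{p,\ldots,T-2\}$ and $d\leq T-1-\tau$ under $s\leq T-1-t$), the assumption $\gamma_0'y+x_{\tau}'\beta_0\neq\gamma_0'\tilde y+x_{\tau+d}'\beta_0$ guarantees $J_j\neq J_{j'}$ for $j\neq j'$, hence that the denominator $\prod_{j}(1+e^{J_j}u)$ has $s$ \emph{distinct simple} roots. Standard partial fraction theory then expresses $\rho_s$ as a linear combination, with coefficients depending only on $\theta_0$ and the covariates, of the constant function $1$ and the one-step transition probabilities $\{1/(1+e^{J_j+A_i})\}_{j=1}^{s}$. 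By Theorem \ref{theorem_ARp_transit} each such one-step transition probability lies in $\Ima(\mathcal{E}_{y^0,x}^{(p)})$ (indeed each belongs to the basis $\mathcal{F}_{y^0,p,T}$ of Theorem \ref{theorem_nummoments_ARp}), as does $1$; since the image is a linear subspace, $\rho_s\in\Ima(\mathcal{E}_{y^0,x}^{(p)})$, which is exactly the claim. Evaluating the decomposition's coefficients and the component transition functions at $\theta_0$ then yields the desired $\Phi_{\theta}$, completing the argument via the display above.

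The main obstacle is the bookkeeping in the middle two steps rather than any deep difficulty: correctly tracking the running states $\sigma_j$ and their indices $J_j$ across the path, and verifying that the proposition's distinctness condition is precisely what rules out coincident poles. If two indices collided a double pole would appear and the decomposition would leave the span of one-step transition probabilities, so this is where the hypothesis earns its place. One minor but necessary complication is the degenerate case $k_1=\cdots=k_s=1$, in which the numerator and denominator have equal degree $s$ in $u$; there a constant (equal to $1$, by a direct leading-coefficient count) must be split off before decomposing. In every case the surviving terms are the constant and genuine one-step transition probabilities at admissible periods and states, so the conclusion is uniform, and the remainder is routine linear algebra together with iterated expectations.
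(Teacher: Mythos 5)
Your proposal is correct and follows essentially the same route as the paper's proof: factor the $s$-step probability into one-step transition probabilities, apply a partial fraction decomposition in $e^{A_i}$ whose distinct simple poles are guaranteed by the proposition's distinctness hypothesis, invoke Theorem \ref{theorem_ARp_transit} to realize each resulting term as the conditional expectation of a transition function, and conclude by iterated expectations and the assumed identification of $\theta_0$. You even flag the same degenerate case the paper handles (the nonzero constant $\mu$ when $k_1^s=1_s$), and your treatment of the sliding windows subsumes the paper's explicit $s<p$ versus $s\geq p$ case split as bookkeeping.
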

\noindent  The source of this result is the fact that the integrand of $\Pi_{t}^{k_{1}^s|l_1^{p}}(y^0,x_{1}^{t+s})$ is a product of transition probabilities. This entails  that under appropriate conditions on the regressors and common parameters, we can turn this integrand into a unique linear combination of transition probabilities by means of a \textit{partial fraction decomposition}. It is then a matter of taking expectations and invoking the fact that average transition probabilities are identified from our transition functions. 

\begin{example}[Survivor function for an AR(2)] 
To illustrate Proposition \ref{proposition_4}, and in the spirit of our upcoming empirical application,  suppose that $Y_{it}$ is an indicator for drug consumption at time $t$ obeying an AR(2) logit process. Fix $y^0\in \mathcal{Y}^2$ and assume $T=5$. One might be interested in 
\begin{align*}
    \Pi_{3}^{0,0|1,1}(y^0,x)&=\mathbb{E}\left[P(Y_{i5}=0,Y_{i4}=0\,|\,Y_{i3}=1,Y_{i2}=1,X_{i}=x,A_i)\,|\,Y_{i}^0=y^0,X_{i}=x\right] \\
    &=\mathbb{E}\left[\pi_{4}^{0|0,1}(A_i,x)\pi_{3}^{0|1,1}(A_i,x)\,|\,Y_{i}^0=y^0,X_{i}=x\right]
\end{align*}
which gives the average propensity of individuals with characteristics $(y^0,x)$ who consumed drugs in $t=2,3$ to stay drug-free over the next two time periods. A simple calculation using for instance the identities of Appendix Lemma \ref{tech_lemma_1} gives
\begin{align*}
    \pi_{4}^{0|0,1}(A_i,x)\pi_{3}^{0|1,1}(A_i,x)&=\frac{1}{1+e^{\gamma_{02}+x_5'\beta_0+A_i}}\frac{1}{1+e^{\gamma_{01}+\gamma_{02}+x_4'\beta_0+A_i}} \\
    &=\frac{1}{1-e^{\gamma_{01}+x_{45}'\beta_0}}\pi_{4}^{0|0,1}(A_i,x)-\frac{e^{\gamma_{01}+x_{45}'\beta_0}}{1-e^{\gamma_{01}+x_{45}'\beta_0}}\pi_{3}^{0|1,1}(A_i,x)
\end{align*}
and since Theorem \ref{theorem_ARp_transit} implies
$\mathbb{E}\left[\phi_{\theta_0}^{0|0,1}(Y_{i1}^5,x)\,|\,Y_i^0=y^0,Y_{i1}^{2},X_i=x,A_i\right]=\pi^{0|0,1}_{4}(A_i,x)$ and $\mathbb{E}\left[\phi_{\theta_0}^{0|1,1}(Y_{i0}^4,x)\,|\,Y_i^0=y^0,Y_{i1},X_i=x,A_i\right]=\pi^{0|0,1}_{3}(A_i,x)$, we obtain
\begin{align*}
\Pi_{3}^{0,0|1,1}(y^0,x)=\mathbb{E}\left[\frac{1}{1-e^{\gamma_{01}+x_{45}'\beta_0}}\phi_{\theta_0}^{0|0,1}(Y_{i1}^5,x)-\frac{e^{\gamma_{01}+x_{45}'\beta_0}}{1-e^{\gamma_{01}+x_{45}'\beta_0}}\phi_{\theta_0}^{0|1,1}(Y_{i0}^4,x)\,|\,Y_i^0=y^0,X_i=x\right]
\end{align*}
\end{example}

\section{Multi-dimensional fixed effects models}  \label{Section_5}
We now turn our attention to multi-dimensional fixed effects models. We show that the general blueprint developed in the scalar case to derive valid moment functions carries over to VAR(1) and  MAR(1) models.  We make no attempt at showing that our approach is exhaustive in those cases and do not claim that it is. We leave these important questions for future work. Readers uninterested in the details of the multivariate extensions can skip directly to Section \ref{Section_6} where we discuss the empirical application.

\subsection{Moment restrictions for the VAR(1) logit model} \label{Section_5_VAR1}
\noindent We begin with the analysis of VAR(1) logit models, variants of which have been successfully used to study the relationship between sickness and unemployment (\cite{narendranthan1985investigation}), the progression from softer drug use to harder drug use among teenagers (\cite{deza2015there}), transitivity in networks (\cite{graham2013comment}, \cite{graham2016homophily}) and more recently the employment of couples (\cite{honore2022simultaneity}).  For a given $M\geq 2$, the model reads:
\begin{align}\label{VAR1_logit_general}
    Y_{m,it}&=\mathds{1}\left\{\sum_{j=1}^{M} \gamma_{0mj}Y_{j,it-1} +X_{m,it}'\beta_{0m}+A_{m,i}-\epsilon_{m,it}\geq 0\right\}, \quad m=1,\ldots,M, \quad t=1,\ldots,T
\end{align}
We let $Y_{it}=(Y_{1,it},\ldots,Y_{M,it})'$ denote the outcome vector in period $t$ with support $\mathcal{Y}=\{0,1\}^M$ of cardinality $2^M$. We let $X_{it}=(X_{1,it}',\ldots ,X_{M,it}')'\in \mathbb{R}^{K_1}\times \ldots \times \mathbb{R}^{K_M}$ denote the vector of exogenous covariates in period $t$ and $A_i=(A_{1,i},\ldots,A_{M,i})'\in \mathbb{R}^M$ . The initial condition is now given by $Y_{i0}=(Y_{1,i0},\ldots,Y_{M,i0})'\in \mathcal{Y}$ and the model transition probabilities are given by:
\begin{align*}
    \pi^{k|l}_{t}(A_i,X_{i})=P(Y_{it+1}=k|Y_{it}=l,X_i,A_i)=\prod_{m=1}^M \frac{e^{k_m(\sum_{j=1}^{M} \gamma_{0mj}l_j+X_{m,it+1}'\beta_{0m}+A_{m,i})}}{1+e^{\sum_{j=1}^{M} \gamma_{0mj}l_j+X_{m,it+1}'\beta_{0m}+A_{m,i}}}
\end{align*}
for all $(k,l)\in \mathcal{Y}\times \mathcal{Y}$. \\ 
\indent  Building on \cite{honore2000panel}, \cite{honore2019panel} use a conditional likelihood approach to prove the identification $\theta_0=(\gamma_{011},\gamma_{012},\gamma_{021},\gamma_{022},\beta_{01},\beta_{02})$ for the bivariate specification when $T=3$ and the regressors do not vary over the last two periods. As in scalar models, we show hereinafter that this strong restriction which can yield undesirable rates of convergence is unnecessary to obtain valid moment conditions. \\
\indent \textbf{Step 1)} in the VAR(1) logit model has a nuance relative to its scalar counterpart in that the only transition functions that appear to exist are those associated to $\pi^{k|k}_{t}(A_i,X_{i})$, for $k \in \mathcal{Y}$, i.e the probabilities of staying in the same state. We can use the same heuristic as in the baseline AR(1) model to derive their expressions, especially in the bivariate case. Once all four transition functions are obtained for the case $M=2$, it becomes clear that the general functional form is as per Lemma \ref{lemma_5}. It is then a matter of brute force calculation to verify that this is indeed correct. 
\begin{lemma} \label{lemma_5} 
 In model (\ref{VAR1_logit_general}) with $T\geq 2$ and $t \in \{1,\ldots,T-1\}$, let for all $k \in \mathcal{Y}$
 \begin{align*}
     \phi_{\theta}^{k|k}(Y_{it+1},Y_{it},Y_{it-1},X_i)&=\mathds{1}\{Y_{it}=k\}e^{\sum_{m=1}^M (Y_{m,it+1}-k_m)\left(\sum_{j=1}^M \gamma_{mj}(Y_{j,it-1}-k_j)-\Delta X_{m,it+1}'\beta_m\right)}
 \end{align*}
Then:
\begin{align*}
    \mathbb{E}\left[ \phi_{\theta_0}^{k|k}(Y_{it+1},Y_{it},Y_{it-1},X_i)|Y_{i0},Y_{i1}^{t-1},X_i,A_i\right]&= \pi^{k|k}_{t}(A_i,X_{i})=\prod_{m=1}^M \frac{e^{k_m(\sum_{j=1}^{M} \gamma_{0mj}k_{j} +X_{m,it+1}'\beta_{0m}+A_{m,i})}}{1+e^{\sum_{j=1}^{M} \gamma_{0mj}k_{j} +X_{m,it+1}'\beta_{0m}+A_{m,i}}}
\end{align*}
\end{lemma}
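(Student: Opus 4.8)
The plan is to verify the stated identity by direct computation, evaluating the conditional expectation as an iterated sum over the two \say{free} periods $t$ and $t+1$ and exploiting the Markov structure together with the product form of the transition probabilities. Writing the conditioning set as $\mathcal{C}=\{Y_{i0},Y_{i1}^{t-1},X_i,A_i\}$, I would first expand
$$\mathbb{E}\left[\phi_{\theta_0}^{k|k}(Y_{it+1},Y_{it},Y_{it-1},X_i)\mid \mathcal{C}\right]=\sum_{y_t\in\mathcal{Y}}\sum_{y_{t+1}\in\mathcal{Y}} P(Y_{it}=y_t\mid Y_{it-1},X_i,A_i)\,P(Y_{it+1}=y_{t+1}\mid Y_{it}=y_t,X_i,A_i)\,\phi_{\theta_0}^{k|k}(y_{t+1},y_t,Y_{it-1},X_i),$$
where the Markov property lets me replace the full history by the single lag $Y_{it-1}$. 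The indicator $\mathds{1}\{y_t=k\}$ buried in $\phi_{\theta_0}^{k|k}$ annihilates every term except $y_t=k$, so only the factor $P(Y_{it}=k\mid Y_{it-1},X_i,A_i)$ survives from the first transition and I am left with a single sum over $y_{t+1}$.

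The structural observation driving the argument is separability across the $M$ coordinates. Both $P(Y_{it+1}=y_{t+1}\mid Y_{it}=k,X_i,A_i)$ and the exponential weight in $\phi_{\theta_0}^{k|k}$ decompose, respectively as a product and as a sum over $m=1,\ldots,M$, so the sum over $y_{t+1}\in\{0,1\}^M$ splits into a product of $M$ scalar sums over $y_{m,t+1}\in\{0,1\}$. Introducing the shorthand indices $\eta_{m}=\sum_j\gamma_{0mj}Y_{j,it-1}+X_{m,it}'\beta_{0m}+A_{m,i}$ (governing the transition into period $t$) and $\xi_{m}=\sum_j\gamma_{0mj}k_j+X_{m,it+1}'\beta_{0m}+A_{m,i}$ (governing the transition out of state $k$ into period $t+1$), the exponent multiplying coordinate $m$ is $(y_{m,t+1}-k_m)c_m$ with $c_m=\sum_j\gamma_{0mj}(Y_{j,it-1}-k_j)-\Delta X_{m,it+1}'\beta_{0m}$. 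A one-line computation then yields the crucial telescoping identity $c_m=\eta_m-\xi_m$.

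With this in hand, I would evaluate each scalar factor $\sum_{y_{m,t+1}\in\{0,1\}} \frac{e^{y_{m,t+1}\xi_m}}{1+e^{\xi_m}}e^{(y_{m,t+1}-k_m)c_m}$ and combine it with the corresponding coordinate of $P(Y_{it}=k\mid Y_{it-1},X_i,A_i)=\prod_m e^{k_m\eta_m}/(1+e^{\eta_m})$. Splitting into the two cases $k_m=0$ and $k_m=1$ and substituting $c_m=\eta_m-\xi_m$, each coordinate product collapses to $e^{k_m\xi_m}/(1+e^{\xi_m})$, so that the $\eta_m$-dependence, hence all dependence on $Y_{it-1}$, cancels; taking the product over $m$ reproduces exactly $\pi^{k|k}_t(A_i,X_i)$. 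The only genuine work is the elementary per-coordinate algebra together with careful bookkeeping of the telescoping identity $c_m=\eta_m-\xi_m$; the conceptual crux, and the step I would flag as the heart of the matter, is that the additive separability of the exponent across coordinates is precisely what lets the scalar mechanics of Lemma \ref{lemma_2} carry over coordinate by coordinate in the multivariate setting. Beyond keeping the indices straight I do not expect any real obstacle, so the remainder is routine verification.
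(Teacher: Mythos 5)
Your proposal is correct and follows essentially the same route as the paper's proof: a direct verification in which the indicator $\mathds{1}\{Y_{it}=k\}$ eliminates all states except $Y_{it}=k$, and the sum over $Y_{it+1}\in\{0,1\}^M$ is collapsed via the identity $\sum_{l\in\mathcal{Y}}\prod_{m=1}^M e^{l_m z_m}=\prod_{m=1}^M\left(1+e^{z_m}\right)$, which cancels the denominator $\prod_m\left(1+e^{\eta_m}\right)$ coming from the period-$t$ transition. The only cosmetic difference is ordering — you factorize coordinate by coordinate first (with the telescoping identity $c_m=\eta_m-\xi_m$) while the paper combines the joint sum and factorizes at the end — and your per-coordinate algebra, which in fact needs no case split on $k_m$, checks out.
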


\noindent Next, we can appeal to the second \textit{partial fraction decomposition} formula in Appendix Lemma \ref{tech_lemma_2} to guide the construction of another set of transition functions when $T\geq 3$. These identities may be regarded as a generalization of \cite{Kitazawa_JOE2021}'s hyperbolic transformations to the multivariate case. As is clear from Lemma \ref{lemma_6}, the resulting transition functions have a special structure that generalizes those found in the AR(1) model.
\begin{lemma} \label{lemma_6}
In model (\ref{VAR1_logit_general}) with $T\geq 3$, for all $t,s$ such that $T-1\geq t> s\geq 1$, let for all $ m\in \{1,\ldots,M\}$ and $(k,l)\in \mathcal{Y}^2$
\begin{align*}
    &\mu_{m,s}(\theta)=\sum_{j=1}^M \gamma_{mj} Y_{j,is-1}+X_{m,is}'\beta_{m} \\
    &\kappa_{m,t}^{ k|k}(\theta)=\sum_{j=1}^M \gamma_{mj} k_j+X_{m,it+1}'\beta_m \\
    &\omega_{t,s,l}^{k|k}(\theta)=1-e^{\sum_{j=1}^M (l_j-k_j)\left[\kappa_{j,t}^{ k|k}(\theta)-\mu_{j,s}(\theta)\right]}
\end{align*}
and define the moment functions
\begin{align*}
     \zeta_{\theta}^{k|k}(Y_{it-1}^{t+1},Y_{is-1}^s,X_i)=\mathds{1}\{Y_{is}=k\}+ \cleansum_{l\in \mathcal{Y}\setminus \{k\}} \omega_{t,s,l}^{k|k}(\theta) \mathds{1}\{Y_{is}=l\} \phi_{\theta}^{k|k}(Y_{it-1}^{t+1},X_i)
\end{align*}
Then,
\begin{align*}
    \mathbb{E}\left[\zeta_{\theta_0}^{k|k}(Y_{it-1}^{t+1},Y_{is-1}^s,X_i)|Y_{i0},Y_{i1}^{s-1},X_i,A_i\right]&= \pi^{k|k}_{t}(A_i,X_{i})
\end{align*}
\end{lemma}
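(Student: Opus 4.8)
The plan is to evaluate $\mathbb{E}\!\left[\zeta_{\theta_0}^{k|k}(Y_{it-1}^{t+1},Y_{is-1}^s,X_i)\mid Y_{i0},Y_{i1}^{s-1},X_i,A_i\right]$ by the law of iterated expectations, conditioning first on the history $Y_{i1}^{s}$ (which determines $\mathds{1}\{Y_{is}=l\}$) and then reducing the factor $\phi_{\theta_0}^{k|k}$ to the target transition probability. Throughout I write $\Pi\equiv\pi^{k|k}_{t}(A_i,X_{i})$ and $p_l\equiv P(Y_{is}=l\mid Y_{i,s-1},X_i,A_i)$ for $l\in\mathcal{Y}$.

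First I would treat the term $\mathds{1}\{Y_{is}=l\}\phi_{\theta_0}^{k|k}(Y_{it-1}^{t+1},X_i)$. Since $t>s$ and $\phi_{\theta_0}^{k|k}$ depends only on $(Y_{it-1},Y_{it},Y_{it+1})$, Lemma \ref{lemma_5} (shifted to period $t$) gives $\mathbb{E}[\phi_{\theta_0}^{k|k}\mid Y_{i0},Y_{i1}^{t-1},X_i,A_i]=\Pi$; because $\Pi$ is a function of $(A_i,X_i)$ only, a further application of the tower property over the coarser conditioning set $Y_{i1}^{s}\subseteq Y_{i1}^{t-1}$ yields $\mathbb{E}[\phi_{\theta_0}^{k|k}\mid Y_{i0},Y_{i1}^{s},X_i,A_i]=\Pi$ as well. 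As $\mathds{1}\{Y_{is}=l\}$ is measurable with respect to $Y_{i1}^{s}$, iterated expectations and the (first-order Markov) VAR(1) structure then deliver $\mathbb{E}[\mathds{1}\{Y_{is}=l\}\phi_{\theta_0}^{k|k}\mid Y_{i0},Y_{i1}^{s-1},X_i,A_i]=\Pi\,p_l$. Collecting terms, and reinstating the $l=k$ index since $\omega_{t,s,k}^{k|k}(\theta_0)=1-e^{0}=0$, reduces the claim to the purely algebraic identity
\begin{align*}
p_k+\Pi\sum_{l\in\mathcal{Y}}\omega_{t,s,l}^{k|k}(\theta_0)\,p_l=\Pi .
\end{align*}

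The heart of the argument is verifying this identity, which is where the multivariate partial-fraction decomposition of Appendix Lemma \ref{tech_lemma_2} enters. Using $\sum_l p_l=1$ and the definition of the weights, the left-hand side equals $p_k+\Pi\bigl(1-\sum_{l}e^{\sum_{m}(l_m-k_m)(\kappa_{m,t}^{k|k}-\mu_{m,s})}p_l\bigr)$, so it suffices to show $p_k=\Pi\sum_{l}e^{\sum_{m}(l_m-k_m)(\kappa_{m,t}^{k|k}-\mu_{m,s})}p_l$. The crucial structural feature is that both $p_l$ and the inserted exponential factor across the $M$ coordinates, so the sum over $l\in\{0,1\}^M$ factorizes as $\prod_{m}\bigl(\sum_{l_m\in\{0,1\}}e^{(l_m-k_m)(\kappa_{m,t}^{k|k}-\mu_{m,s})}p_{l_m}^{(m)}\bigr)$ with $p_{l_m}^{(m)}=e^{l_m(\mu_{m,s}+A_{m,i})}/(1+e^{\mu_{m,s}+A_{m,i}})$. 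A one-line computation of each coordinate sum collapses $1+e^{(\kappa_{m,t}^{k|k}-\mu_{m,s})+(\mu_{m,s}+A_{m,i})}=1+e^{\kappa_{m,t}^{k|k}+A_{m,i}}$, i.e. the denominator of $\Pi$ in coordinate $m$; multiplying by $\Pi$ cancels these denominators, and the surviving numerator exponent is $k_m\bigl((\kappa_{m,t}^{k|k}+A_{m,i})-(\kappa_{m,t}^{k|k}-\mu_{m,s})\bigr)=k_m(\mu_{m,s}+A_{m,i})$, which recovers exactly $p_k$. This telescoping is precisely what the form $\omega_{t,s,l}^{k|k}(\theta)=1-e^{\sum_m (l_m-k_m)(\kappa_{m,t}^{k|k}-\mu_{m,s})}$ was engineered to produce.

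The main obstacle is bookkeeping in this last identity: one must confirm that the $A_{m,i}$-dependent denominators introduced by the period-$s$ probabilities are exactly matched and cancelled by those of $\Pi$ for every coordinate $m$ simultaneously, which is what makes $\zeta_{\theta}^{k|k}$ free of the fixed effect (note also that the weights $\omega_{t,s,l}^{k|k}(\theta)$ involve only $\kappa_{m,t}^{k|k}(\theta)$ and $\mu_{m,s}(\theta)$, hence no $A_i$, so $\zeta_{\theta}^{k|k}$ is a genuine function of data and $\theta$). This is the multivariate analogue of the scalar cancellation underlying Lemma \ref{lemma_3}, and it is the generalization of the hyperbolic-transformation identity to $M\geq 2$ that Lemma \ref{tech_lemma_2} encapsulates; the reduction steps preceding it are routine given Lemma \ref{lemma_5} and the Markov property.
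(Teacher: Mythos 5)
Your proof is correct, and its probabilistic skeleton coincides exactly with the paper's: measurability of the weights $\omega_{t,s,l}^{k|k}(\theta)$ with respect to the conditioning set, the law of iterated expectations combined with Lemma \ref{lemma_5} (shifted to period $t$) to replace $\phi_{\theta_0}^{k|k}$ by $\pi^{k|k}_{t}(A_i,X_i)$, and substitution of the logistic period-$s$ choice probabilities, reducing everything to the multivariate partial-fraction identity of Appendix Lemma \ref{tech_lemma_2}. Where you genuinely diverge is in how that identity is verified. The paper proves Lemma \ref{tech_lemma_2} by putting all terms over the common denominator $\prod_{m}(1+e^{v_m+a_m})(1+e^{u_m+a_m})$, expanding the numerator, and telescoping via $\sum_{l\in\mathcal{Y}}\prod_{m}e^{l_m(v_m+a_m)}=\prod_{m}(1+e^{v_m+a_m})$. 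You instead reinstate the $l=k$ term (legitimately, since $\omega_{t,s,k}^{k|k}(\theta_0)=0$), rewrite the target as $p_k=\Pi\sum_{l}e^{\sum_m(l_m-k_m)(\kappa_{m,t}^{k|k}-\mu_{m,s})}p_l$ in your notation, and exploit the fact that both $p_l$ and the exponential weight factorize over the $M$ coordinates, so the $2^M$-term sum collapses into a product of $M$ two-term sums, each evaluating to $e^{-k_m(\kappa_{m,t}^{k|k}-\mu_{m,s})}\bigl(1+e^{\kappa_{m,t}^{k|k}+A_{m,i}}\bigr)/\bigl(1+e^{\mu_{m,s}+A_{m,i}}\bigr)$; multiplying by $\Pi$ then cancels the $\kappa$-denominators coordinate by coordinate and the exponents recombine into $p_k$. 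This is an equivalent but more transparent derivation of the appendix lemma's content: it makes the layer-by-layer cancellation of the fixed-effect-dependent denominators explicit, where the paper's numerator-expansion argument achieves the same cancellation globally. The one ingredient you rely on slightly more explicitly is the conditional independence of the $M$ layers given the lagged outcome, $X_i$ and $A_i$ (so that $p_l=\prod_m p^{(m)}_{l_m}$), which holds because the $\epsilon_{m,it}$ are independent across $m$ and which the paper also uses implicitly through the product form of its transition probabilities.
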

\noindent Beyond $T=4$, more transition functions are available and can be derived sequentially from those of Lemma \ref{lemma_6}. 
See Corollary \ref{corollary_6} for their expressions. 
\begin{corollary}\label{corollary_6}
In model (\ref{VAR1_logit_general}) with $T\geq 4$,  for any $t$ and ordered collection of indices $s_{1}^J$, $J\geq 2$, satisfying $T-1\geq t> s_1>\ldots>s_J\geq 1$, let for all $k\in \mathcal{Y}$
\begin{multline*}
    \zeta_{\theta}^{k|k}(Y_{it-1}^{t+1},Y_{is_1-1}^{s_1},\ldots,Y_{is_{J}-1}^{s_J},X_i)=\mathds{1}\{Y_{is_J}=k\} \\
    +\sum_{l\in \mathcal{Y}\setminus \{k\}}\omega_{t,s_J,l}^{k|k}(\theta)\mathds{1}\{Y_{is_J}=l\} \zeta_{\theta}^{k|k}(Y_{it-1}^{t+1},Y_{is_1-1}^{s_1},\ldots,Y_{is_{J-1}-1}^{s_{J-1}},X_i)
\end{multline*}

with weights $\omega_{t,s_J,l}^{k|k}(\theta)$ defined as in Lemma \ref{lemma_6}. Then,
\begin{align*}
    \mathbb{E}\left[\zeta_{\theta_0}^{k|k}(Y_{it-1}^{t+1},Y_{is_1-1}^{s_1},\ldots,Y_{is_{J}-1}^{s_J},X_i)|Y_{i0},Y_{i1}^{s_{J}-1},X_i,A_i\right]=\pi^{k|k}_{t}(A_i,X_{i})
\end{align*}
\end{corollary}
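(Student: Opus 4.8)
The plan is to argue by induction on the number of indices $J$, with the base case $J=1$ being exactly Lemma \ref{lemma_6}. Throughout, write $\mathcal{I}_m=(Y_{i0},Y_{i1}^m,X_i,A_i)$ as shorthand for the conditioning information through period $m$, and abbreviate $Z_J\equiv\zeta_{\theta_0}^{k|k}(Y_{it-1}^{t+1},Y_{is_1-1}^{s_1},\ldots,Y_{is_{J}-1}^{s_J},X_i)$, so that the recursion reads $Z_J=\mathds{1}\{Y_{is_J}=k\}+\sum_{l\in\mathcal{Y}\setminus\{k\}}\omega_{t,s_J,l}^{k|k}(\theta_0)\,\mathds{1}\{Y_{is_J}=l\}\,Z_{J-1}$. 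The goal is to show $\mathbb{E}[Z_J\,|\,\mathcal{I}_{s_J-1}]=\pi_t^{k|k}(A_i,X_i)$, granting the analogous identity $\mathbb{E}[Z_{J-1}\,|\,\mathcal{I}_{s_{J-1}-1}]=\pi_t^{k|k}(A_i,X_i)$ as the inductive hypothesis.

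For the inductive step I would first condition on the finer information set $\mathcal{I}_{s_{J-1}-1}$, which contains $\mathcal{I}_{s_J-1}$ because $s_{J-1}-1\geq s_J>s_J-1$. Two measurability observations drive the argument. Since $s_J\leq s_{J-1}-1$, the indicator $\mathds{1}\{Y_{is_J}=l\}$ is $\mathcal{I}_{s_{J-1}-1}$-measurable; and the weight $\omega_{t,s_J,l}^{k|k}(\theta_0)$ depends only on $Y_{is_J-1}$ and on covariates (through $\mu_{m,s_J}$ and $\kappa_{m,t}^{k|k}$), hence is already $\mathcal{I}_{s_J-1}$-measurable. Pulling these factors out and invoking the inductive hypothesis on $Z_{J-1}$ yields
\[
\mathbb{E}[Z_J\,|\,\mathcal{I}_{s_{J-1}-1}]=\mathds{1}\{Y_{is_J}=k\}+\pi_t^{k|k}(A_i,X_i)\sum_{l\in\mathcal{Y}\setminus\{k\}}\omega_{t,s_J,l}^{k|k}(\theta_0)\,\mathds{1}\{Y_{is_J}=l\}.
\]

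Next I would apply the tower property to descend from $\mathcal{I}_{s_{J-1}-1}$ to $\mathcal{I}_{s_J-1}$. The only object not yet $\mathcal{I}_{s_J-1}$-measurable is $Y_{is_J}$, whose conditional law is, by the VAR(1) Markov structure, the one-step transition $P(Y_{is_J}=l\,|\,\mathcal{I}_{s_J-1})=\pi_{s_J-1}^{l|Y_{is_J-1}}(A_i,X_i)$. Substituting leaves
\[
\mathbb{E}[Z_J\,|\,\mathcal{I}_{s_J-1}]=\pi_{s_J-1}^{k|Y_{is_J-1}}(A_i,X_i)+\pi_t^{k|k}(A_i,X_i)\sum_{l\in\mathcal{Y}\setminus\{k\}}\omega_{t,s_J,l}^{k|k}(\theta_0)\,\pi_{s_J-1}^{l|Y_{is_J-1}}(A_i,X_i),
\]
which must be shown to equal $\pi_t^{k|k}(A_i,X_i)$. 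But this is precisely the algebraic identity---with $s=s_J$ and previous state $Y_{is_J-1}$---already verified in establishing the base case Lemma \ref{lemma_6}; it is exactly the instance of the partial-fraction decomposition of Appendix Lemma \ref{tech_lemma_2} for which the weights $\omega_{t,s_J,l}^{k|k}(\theta_0)$ were designed. Invoking it closes the induction.

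I expect the only delicate point to be the bookkeeping of which variables are measurable with respect to which conditioning set, paired with the recognition that the inductive step collapses onto the very same rational-fraction identity used in Lemma \ref{lemma_6}. Once that matching is made explicit, no fresh computation is required and the genuinely algebraic content is simply inherited from the base case, so the main obstacle is organizational rather than analytical.
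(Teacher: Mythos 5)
Your proof is correct and follows essentially the route the paper intends: the corollary is proved by iterating the argument of Lemma \ref{lemma_6}, i.e.\ an induction on $J$ in which each step pulls out the $\mathcal{I}_{s_{J-1}-1}$-measurable indicators and weights, applies the law of iterated expectations together with the Markov structure, and then closes with exactly the partial fraction identity of Appendix Lemma \ref{tech_lemma_2} that defines the weights $\omega_{t,s_J,l}^{k|k}(\theta)$. Your measurability bookkeeping (the weight depending only on $Y_{is_J-1}$ and $X_i$, the indicators being determined by period $s_{J-1}-1$) is precisely the content the paper leaves to the reader, and it is handled correctly.
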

\indent \textbf{Step 2)}. One can obtain a family of valid moment functions by adequately repurposing the statement of Proposition \ref{proposition_2} to the VAR(1) case, i.e by updating the expressions of  $\phi_{\theta}^{k|k}(.)$ and $\zeta_{\theta}^{k|k}$ according to Lemma \ref{lemma_5} and Corollary \ref{corollary_6}. To conserve on space and avoid repetition, we leave this simple exercise to the reader. 
\begin{remark}[Network Extension]
    Similarly to Remarks \ref{remark_extensionARp}, we emphasize that the tools developed here can be modified to handle other interesting variants featuring more complex interdependencies across the different layers of the model indexed by $m=1,\ldots,M$ . To illustrate the wider applicability of our two-step method, we show in Appendix \ref{network_model} how one can derive moment restrictions in the dynamic network formation model of \cite{graham2013comment} and extensions thereof incorporating exogenous covariates.
\end{remark}

\subsection{Moment restrictions for the dynamic multinomial logit model} \label{Section_5_MAR1}
Last, we cover dynamic multinomial logit models which have been utilized to measure state-dependence in a range of economic contexts including: employment history in the French labor market (\cite{magnac2000subsidised}), the impact of international trade on the transition matrix of employment across sectors (\cite{egger2003sectoral}) and consumer product choice (\cite{dube2010state}) amongst others. 
\\
\indent We focus on the the baseline MAR(1) logit model with fixed effects.

The model assumes a fixed number of alternatives $C+1$ with $C\geq 1$ and is characterized by the following transition probabilities:
\begin{align} \label{MAR1_logit_general}
    \pi_{t}^{k|l}(A_i,X_i)=P(Y_{it+1}=k|Y_{it}=l,X_i,A_i)=\frac{e^{\gamma_{kl}+X_{ikt+1}'\beta_k+A_{ik}}}{\sum \limits_{c=0}^C e^{\gamma_{cl}+X_{ict+1}'\beta_j+A_{ic}}}, \quad  t=1,\ldots,T
\end{align}
with $(k,l)\in \mathcal{Y}=\{0,1,\ldots,C\}$. Here, $Y_{it}\in \mathcal{Y}$ indicates the  choice of individual $i$ in period $t$, $X_{ijt}$ denotes a vector of individual-alternative specific exogenous covariates and $A_{ij}\in \mathbb{R}$ is the fixed effect attached to alternative $j$ for individual $i$. The initial condition is $Y_{i0}\in \mathcal{Y}$ and in keeping with the fixed effect assumption, its conditional distribution given unobserved heterogeneity and the regressors, $\left(P(Y_{i0}=k|X_i,A_i)\right)_{k=1}^C$, is left fully unrestricted. Following \cite{magnac2000subsidised}, we normalize the transition parameters and fixed effect of the reference alternative \say{$0$} to zero \footnote{ The transition parameters of the reference state cannot be identified so a normalization constraint must be imposed. Setting $A_{i0}=0$ is also without loss of generality since we can always redefine the fixed effect as $A_{ik}^*=A_{ik}-A_{i0}$.}. That is $\gamma_{j0}=\gamma_{0j}=0, A_{0,j}=0$ for all $j\in \mathcal{Y}$ leaving $\theta=\left((\gamma_{kl})_{k,l\geq 1},(\beta_{l})_{l\geq 0}\right)$ as the unknown model parameters.\\
\indent This specification can be motivated by assuming that agents rank options according to random latent utility indices with disturbances independent over time and across alternatives. In this context, equation (\ref{MAR1_logit_general}) is obtained if the best alternative is selected and the error terms are Type 1 extreme value distributed conditional on $Y_{i0},A_i,X_i$. \cite{magnac2000subsidised} studies the \say{pure} case without covariates and shows that an extension of the conditional likelihood approach proposed by \cite{chamberlain_1985} can be used to identify and estimate the state-dependence parameters. \cite{honore2000panel} show that this argument carries over to  the case with exogenous explanatory variables if one matches the regressors across specific time periods. Here, we offer an alternative estimation strategy that circumvents the need for matching. \\
\indent \textbf{Step 1)}. Similarly to the VAR(1) model the MAR(1) appears to admit transition functions only for the probabilities of staying in the same state, namely $\pi^{k|k}_{t}(A_i,X_i)$ for $k\in \mathcal{Y}$. This feature appears to be a common trait of multidimensional fixed effects specifications. To facilitate the derivation of the relevant transition functions, we follow our usual heuristic of looking for  $\phi_{\theta}^{k|k}(.), k \in \mathcal{Y}$ satisfying:
\begin{align*}
    &\phi_{\theta}^{k|k}(Y_{it+1},Y_{it},Y_{it-1},X_i)=\mathds{1}\{Y_{it}=k\}\phi_{\theta}^{k|k}(Y_{it+1},k,Y_{it-1}) \\
     &\mathbb{E}\left[\phi_{\theta_0}^{k|k}(Y_{it+1},Y_{it},Y_{it-1},X_i) \,|\ Y_{i0},Y_{i1}^{t-1},X_i,A_i\right]= \pi^{k|k}_{t}(A_i,X_i)
\end{align*}
Upon obtaining their exact expressions for the simplest case with $C=2$, it is easy to conjecture and verify by direct calculations that the general expressions of the  $C+1$ transition functions of the MAR(1) model are as displayed in Lemma \ref{lemma_7}.

\begin{lemma} \label{lemma_7} 
 In model (\ref{MAR1_logit_general}) with $T\geq 2$ and  $t \in \{1,\ldots,T-1\}$, let for all $k \in \mathcal{Y}$
\begin{align*}
    \phi_{\theta}^{k|k}(Y_{it-1}^{t+1},X_i)&=\mathds{1}\{Y_{it}=k\} e^{\sum_{c\in \mathcal{Y}\setminus\{k\}} \mathds{1}\{Y_{it+1}=c\} \left(\sum_{j\in \mathcal{Y}} (\gamma_{cj}-\gamma_{kj})\mathds{1}(Y_{it-1}=j)+\gamma_{kk}-\gamma_{ck}+\Delta X_{ikt+1}'\beta_k-\Delta X_{ict+1}'\beta_c\right)}
\end{align*}

Then:
\begin{align*}
    \mathbb{E}\left[ \phi_{\theta_0}^{k|k}(Y_{it+1},Y_{it},Y_{it-1},X_i)|Y_{i0},Y_{i1}^{t-1},X_i,A_i\right]&= \pi^{k|k}_{t}(A_i,X_{i})
\end{align*}
\end{lemma}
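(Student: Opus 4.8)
The plan is to verify the claimed identity directly by computing the conditional expectation and exploiting the Markov structure of the model. Since we condition on $(Y_{i0},Y_{i1}^{t-1},X_i,A_i)$, the lagged outcome $Y_{it-1}$ is known while $Y_{it}$ and $Y_{it+1}$ are integrated out. By first-order Markovianity and strict exogeneity, the joint conditional law of $(Y_{it},Y_{it+1})$ factors as $\pi_{t-1}^{y_t|Y_{it-1}}(A_i,X_i)\,\pi_{t}^{y_{t+1}|y_t}(A_i,X_i)$, so that
\begin{align*}
\mathbb{E}\left[\phi_{\theta_0}^{k|k}(Y_{it+1},Y_{it},Y_{it-1},X_i)\,|\,Y_{i0},Y_{i1}^{t-1},X_i,A_i\right]=\sum_{y_t\in\mathcal{Y}}\sum_{y_{t+1}\in\mathcal{Y}}\pi_{t-1}^{y_t|Y_{it-1}}\,\pi_{t}^{y_{t+1}|y_t}\,\phi_{\theta_0}^{k|k}(y_{t+1},y_t,Y_{it-1},X_i).
\end{align*}
The indicator $\mathds{1}\{Y_{it}=k\}$ built into $\phi^{k|k}$ annihilates every term with $y_t\neq k$, leaving the single factor $\pi_{t-1}^{k|Y_{it-1}}$ times a sum over $y_{t+1}$. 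It therefore suffices to prove the scalar identity
\begin{align*}
\pi_{t-1}^{k|Y_{it-1}}(A_i,X_i)\sum_{c\in\mathcal{Y}}\pi_{t}^{c|k}(A_i,X_i)\,e^{E(c)}=\pi_{t}^{k|k}(A_i,X_i),
\end{align*}
where $E(c)$ denotes the exponent appearing in $\phi^{k|k}$ evaluated at $Y_{it+1}=c$, with $E(k)=0$ since the defining sum runs over $c\neq k$.

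The heart of the argument is to show that the weighted sum over $c$ reconstitutes the normalizing denominator of the earlier transition. Writing $D_{t}(l)=\sum_{c\in\mathcal{Y}}e^{\gamma_{cl}+X_{ict+1}'\beta_c+A_{ic}}$ for the denominator of $\pi_{t}^{\cdot|l}$, I would expand $\pi_{t}^{c|k}e^{E(c)}$ and collect the $\gamma$-, $X\beta$- and $A$-contributions. The key claim, which I would check term by term, is that for every $c\in\mathcal{Y}$ the summand factors as
\begin{align*}
e^{\gamma_{ck}+X_{ict+1}'\beta_c+A_{ic}}\,e^{E(c)}=e^{\gamma_{kk}+X_{ikt+1}'\beta_k-X_{ikt}'\beta_k-\gamma_{kY_{it-1}}}\cdot e^{\gamma_{cY_{it-1}}+X_{ict}'\beta_c+A_{ic}},
\end{align*}
where the first factor is independent of the summation index $c$ and the second is exactly the $c$-th summand of $D_{t-1}(Y_{it-1})$. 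Summing over $c$ then yields $D_{t-1}(Y_{it-1})$ times a $c$-free prefactor; dividing by $D_t(k)$ (from $\pi_t^{c|k}$) and multiplying by $\pi_{t-1}^{k|Y_{it-1}}=e^{\gamma_{kY_{it-1}}+X_{ikt}'\beta_k+A_{ik}}/D_{t-1}(Y_{it-1})$ cancels both the $D_{t-1}(Y_{it-1})$ factor and all remaining $Y_{it-1}$-dependent terms, collapsing the expression to $e^{\gamma_{kk}+X_{ikt+1}'\beta_k+A_{ik}}/D_t(k)=\pi_t^{k|k}$. This cancellation is precisely how the spurious dependence on the lag $Y_{it-1}$ disappears, which is exactly what the construction of $E(c)$ is engineered to achieve.

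The main obstacle is the bookkeeping in establishing the per-term factorization above, and in particular verifying that the $c=k$ term (for which $E(k)=0$) conforms to the same pattern as the $c\neq k$ terms; this is what makes the sum telescope cleanly into $D_{t-1}(Y_{it-1})$ rather than into some $Y_{it-1}$-contaminated quantity. Care is also needed with the reference-alternative normalizations $\gamma_{0j}=\gamma_{j0}=0$ and $A_{i0}=0$, which must be carried through so that the term indexed by $c=0$ is consistent with the general formula. Once the factorization is in hand the remaining cancellations are immediate, and, as the text notes, the expression for $\phi^{k|k}$ is first conjectured from the explicit $C=2$ case, with the above verification confirming it for arbitrary $C$.
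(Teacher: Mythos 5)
Your proposal is correct and mirrors the paper's own proof: both verify the identity by direct computation, using the Markov factorization of the law of $(Y_{it},Y_{it+1})$ given the past, letting the indicator $\mathds{1}\{Y_{it}=k\}$ eliminate all $y_t\neq k$ terms, and showing that the engineered exponent makes each summand over $c$ (including $c=k$, which conforms as you note) proportional to the $c$-th term of the period-$t$ denominator $\sum_{c}e^{\sum_j \gamma_{cj}\mathds{1}(Y_{it-1}=j)+X_{ict}'\beta_c+A_{ic}}$, so that the sum cancels against the denominator of $\pi_{t-1}^{k|Y_{it-1}}$ and collapses to $\pi_t^{k|k}(A_i,X_i)$. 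Your per-term factorization is exactly the algebraic step the paper performs when it combines the fractions, so no gap remains.
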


\noindent Unsurprisingly, given the similarities shared between the MAR(1) and all other specifications discussed in the paper, so long as $T\geq 3$, one can again derive transition functions other than $\phi_{\theta}^{k|k}(Y_{it-1}^{t+1},X_i)$ also associated to $ \pi^{k|k}_{t}(A_i,X_i)$ for $k \in \mathcal{Y}$ in periods $t\in\{1,\ldots,T-1\}$. The simple logistic identities of Appendix Lemma \ref{tech_lemma_1} imply that these transition functions, that we keep denoting $\zeta_{\theta}^{k|k}(.)$ have a similar form to those of the VAR(1) model as shown in  Lemma \ref{lemma_8}.

\begin{lemma} \label{lemma_8}
In model (\ref{MAR1_logit_general}) with $T\geq 3$, for all $t,s$ such that $T-1\geq t> s\geq 1$, let for all  $(c,k)\in \mathcal{Y}^2$
\begin{align*}
    \mu_{c,s}(\theta)&=\sum_{j=1}^C \gamma_{cj}\mathds{1}(Y_{is-1}=j)+X_{ics}'\beta_c-X_{i0s}'\beta_0 \\
    \kappa_{c,t}^{k|k}(\theta)&=\gamma_{ck}+X_{ict+1}'\beta_c-X_{i0t+1}'\beta_0 \\
    \omega_{t,s,c}^{k|k}(\theta)&=1-e^{(\kappa_{c,t}^{k|k}(\theta)-\mu_{c,s}(\theta))-( \kappa_{k,t}^{k|k}(\theta)-\mu_{k,s}(\theta))}
\end{align*}
and define the moment functions
\begin{align*}
     \zeta_{\theta}^{k|k}(Y_{it-1}^{t+1},Y_{is-1}^s,X_i)=\mathds{1}\{Y_{is}=k\}+ \cleansum_{l\in \mathcal{Y}\setminus \{k\}} \omega_{t,s,l}^{k|k}(\theta) \mathds{1}\{Y_{is}=l\} \phi_{\theta}^{k|k}(Y_{it-1}^{t+1},X_i)
\end{align*}
Then,
\begin{align*}
    \mathbb{E}\left[\zeta_{\theta_0}^{k|k}(Y_{it-1}^{t+1},Y_{is-1}^s,X_i)|Y_{i0},Y_{i1}^{s-1},X_i,A_i\right]&= \pi^{k|k}_{t}(A_i,X_{i})
\end{align*}
\end{lemma}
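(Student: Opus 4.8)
The plan is to establish the claimed conditional-expectation identity through the law of iterated expectations, conditioning first on the richer information set $(Y_{i0},Y_{i1}^s,X_i,A_i)$ and then averaging over $Y_{is}$ given $(Y_{i0},Y_{i1}^{s-1},X_i,A_i)$. This mirrors the arguments behind Lemma \ref{lemma_3} and Lemma \ref{lemma_6}: the function $\zeta_\theta^{k|k}$ is a weighted combination of the indicator $\mathds{1}\{Y_{is}=k\}$ and the interactions $\mathds{1}\{Y_{is}=l\}\,\phi_\theta^{k|k}$, and the weights $\omega_{t,s,l}^{k|k}$ are calibrated so that the averaging over $Y_{is}$ reproduces $\pi_t^{k|k}$ exactly.

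First I would record that, because $s\le t-1$, Lemma \ref{lemma_7} together with the tower property gives $\mathbb{E}[\phi_{\theta_0}^{k|k}(Y_{it-1}^{t+1},X_i)\mid Y_{i0},Y_{i1}^s,X_i,A_i]=\pi_t^{k|k}(A_i,X_i)$ for every such $s$: when $s=t-1$ this is Lemma \ref{lemma_7} verbatim, and when $s<t-1$ it follows by integrating out the intermediate outcomes, since $\pi_t^{k|k}$ does not depend on the realized history. Because each indicator $\mathds{1}\{Y_{is}=l\}$ is measurable with respect to $(Y_{i0},Y_{i1}^s,X_i,A_i)$, I can pull it outside the inner expectation to obtain
\begin{align*}
\mathbb{E}\left[\zeta_{\theta_0}^{k|k}\mid Y_{i0},Y_{i1}^{s-1},X_i,A_i\right]=p_k+\pi_t^{k|k}(A_i,X_i)\sum_{l\in\mathcal{Y}\setminus\{k\}}\omega_{t,s,l}^{k|k}(\theta_0)\,p_l,
\end{align*}
where $p_m\equiv P(Y_{is}=m\mid Y_{i0},Y_{i1}^{s-1},X_i,A_i)=\pi_{s-1}^{m\mid Y_{is-1}}(A_i,X_i)$ by the Markov property.

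The crux is then a purely algebraic identity verifying that this right-hand side collapses to $\pi_t^{k|k}$. Here I would exploit the multinomial-logit ratio structure: normalizing against the reference alternative, $p_l=p_0\,e^{\mu_{l,s}(\theta_0)+A_{il}}$ and $\pi_t^{k|k}=e^{\kappa_{k,t}^{k|k}(\theta_0)+A_{ik}}\big/\sum_{c\in\mathcal{Y}}e^{\kappa_{c,t}^{k|k}(\theta_0)+A_{ic}}$, which hold exactly because $\mu_{c,s}$ and $\kappa_{c,t}^{k|k}$ are the period-$s$ and period-$(t+1)$ index contrasts of alternative $c$ against alternative $0$ (using $\gamma_{0j}=\gamma_{j0}=0$ and $A_{i0}=0$). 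Substituting $1-\omega_{t,s,l}^{k|k}=e^{(\kappa_{l,t}^{k|k}-\mu_{l,s})-(\kappa_{k,t}^{k|k}-\mu_{k,s})}$ makes the factor $e^{\mu_{l,s}+A_{il}}$ in $p_l$ cancel against $e^{-\mu_{l,s}}$ and recombine into $e^{\kappa_{l,t}^{k|k}+A_{il}}$, so that $\sum_{l\neq k}(1-\omega_{t,s,l}^{k|k})p_l$ collects into a multiple of the denominator $\sum_c e^{\kappa_{c,t}^{k|k}+A_{ic}}$; the identity $p_k(1-\pi_t^{k|k})=\pi_t^{k|k}\sum_{l\neq k}(1-\omega_{t,s,l}^{k|k})p_l$ then follows, which is equivalent to the desired conclusion. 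This recombination is precisely the partial-fraction decomposition of Appendix Lemma \ref{tech_lemma_1} specialized to the multinomial kernel.

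The main obstacle is purely bookkeeping in this last algebraic step: tracking the fixed-effect components $A_{il}$ and the index contrasts through the cancellation, and applying the reference-alternative normalization consistently so that the factor $p_0$ and the $A_{il}$ terms drop out cleanly. Once the weight identity is verified, no further probabilistic input is needed, and the same computation handles every $k\in\mathcal{Y}$ and every admissible pair $(t,s)$ with $T-1\ge t>s\ge 1$ uniformly.
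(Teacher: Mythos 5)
Your proposal is correct and follows essentially the same route as the paper's proof: the identical probabilistic reduction (measurability of the indicators and weights, the tower property combined with Lemma \ref{lemma_7} to handle conditioning sets earlier than period $t-1$, and the Markov property for $p_m$), followed by the same partial-fraction cancellation that the paper imports from Appendix Lemma \ref{tech_lemma_1}. The only cosmetic difference is that you verify the weight identity $p_k(1-\pi_t^{k|k})=\pi_t^{k|k}\sum_{l\neq k}(1-\omega_{t,s,l}^{k|k})p_l$ uniformly in $k$ via the reference-alternative normalization ($\kappa_{0,t}^{k|k}=\mu_{0,s}=0$), whereas the paper treats $k=0$ and $k\neq 0$ separately by invoking the two identities of Lemma \ref{tech_lemma_1}; the underlying algebra is the same.
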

\noindent Additionally, if the econometrician has access to a dataset with more than four observations per sampling unit - counting the initial condition - then, more transition functions associated to the same transition probabilities are available per Corollary \ref{corollary_7}.
\begin{corollary}\label{corollary_7}
In model (\ref{MAR1_logit_general}) with $T\geq 4$,  for any $t$ and ordered collection of indices $s_{1}^J$, $J\geq 2$, satisfying $T-1\geq t> s_1>\ldots>s_J\geq 1$, let for all $k\in \mathcal{Y}$
\begin{multline*}
    \zeta_{\theta}^{k|k}(Y_{it-1}^{t+1},Y_{is_1-1}^{s_1},\ldots,Y_{is_{J}-1}^{s_J},X_i)=\mathds{1}\{Y_{is_J}=k\} \\
    +\sum_{l\in \mathcal{Y}\setminus \{k\}}\omega_{t,s_J,l}^{k|k}(\theta)\mathds{1}\{Y_{is_J}=l\} \zeta_{\theta}^{k|k}(Y_{it-1}^{t+1},Y_{is_1-1}^{s_1},\ldots,Y_{is_{j-1}-1}^{s_{J-1}},X_i)
\end{multline*}
with weigts $\omega_{t,s_J,l}^{k|k}(\theta)$ defined as in Lemma \ref{lemma_8}. Then,
\begin{align*}
    \mathbb{E}\left[\zeta_{\theta_0}^{k|k}(Y_{it-1}^{t+1},Y_{is_1-1}^{s_1},\ldots,Y_{is_{J}-1}^{s_J},X_i)|Y_{i0},Y_{i1}^{s_{J}-1},X_i,A_i\right]=\pi^{k|k}_{t}(A_i,X_{i})
\end{align*}
\end{corollary}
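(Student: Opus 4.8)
The plan is to prove Corollary \ref{corollary_7} by induction on the number of blocks $J$, treating the one-index statement of Lemma \ref{lemma_8} as the base case and making each inductive step a verbatim replay of the single-period argument behind that lemma, with the inner transition function $\phi_\theta^{k|k}$ replaced by the $(J-1)$-index object $\zeta_\theta^{k|k}$. Concretely, fix $k\in\mathcal{Y}$ and indices $T-1\geq t> s_1>\ldots>s_J\geq 1$, and abbreviate $\zeta^{(J)}=\zeta_\theta^{k|k}(Y_{it-1}^{t+1},Y_{is_1-1}^{s_1},\ldots,Y_{is_J-1}^{s_J},X_i)$ and $\zeta^{(J-1)}$ for the object with the last block dropped. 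The key is that the map $\zeta^{(J-1)}\mapsto\zeta^{(J)}$ in the recursive definition is exactly the operation that Lemma \ref{lemma_8} applies to $\phi_\theta^{k|k}$, and that operation only ever uses the property that its argument has conditional expectation $\pi^{k|k}_t(A_i,X_i)$ given the outcomes through the relevant period.

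The first step is therefore to establish the inner property
\[
\mathbb{E}\left[\zeta^{(J-1)}\,|\,Y_{i0},Y_{i1}^{s_J},X_i,A_i\right]=\pi^{k|k}_t(A_i,X_i).
\]
The induction hypothesis gives this identity when conditioning on $Y_{i1}^{s_{J-1}-1}$; since the ordering forces $s_J\leq s_{J-1}-1$, the sigma-field generated by $(Y_{i0},Y_{i1}^{s_J},X_i,A_i)$ is coarser, so the tower property together with the fact that $\pi^{k|k}_t(A_i,X_i)$ is nonrandom given $(X_i,A_i)$ delivers the displayed equality. With this in hand I would condition on $(Y_{i0},Y_{i1}^{s_J},X_i,A_i)$ in the recursion: both $\mathds{1}\{Y_{is_J}=l\}$ and the weight $\omega_{t,s_J,l}^{k|k}(\theta)$ — which by Lemma \ref{lemma_8} depends only on $Y_{is_J-1}$ and $X_i$ — are measurable with respect to this conditioning, so
\[
\mathbb{E}\left[\zeta^{(J)}\,|\,Y_{i0},Y_{i1}^{s_J},X_i,A_i\right]=\mathds{1}\{Y_{is_J}=k\}+\sum_{l\in\mathcal{Y}\setminus\{k\}}\omega_{t,s_J,l}^{k|k}(\theta)\,\mathds{1}\{Y_{is_J}=l\}\,\pi^{k|k}_t(A_i,X_i).
\]

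Applying the law of iterated expectations and integrating out $Y_{is_J}$ given $(Y_{i0},Y_{i1}^{s_J-1},X_i,A_i)$, using $P(Y_{is_J}=c\,|\,Y_{i0},Y_{i1}^{s_J-1},X_i,A_i)=\pi^{c|Y_{is_J-1}}_{s_J-1}(A_i,X_i)$, reduces the claim to the algebraic identity
\[
\pi^{k|Y_{is_J-1}}_{s_J-1}(A_i,X_i)+\pi^{k|k}_t(A_i,X_i)\sum_{l\neq k}\omega_{t,s_J,l}^{k|k}(\theta)\,\pi^{l|Y_{is_J-1}}_{s_J-1}(A_i,X_i)=\pi^{k|k}_t(A_i,X_i).
\]
This is precisely the identity the weights $\omega_{t,s_J,l}^{k|k}(\theta)$ were engineered to satisfy in Lemma \ref{lemma_8} through the partial fraction decomposition of Appendix Lemma \ref{tech_lemma_1}; since $s_J$ plays exactly the role of $s$ there, I may invoke that verification rather than redo it, which closes the induction for every $J\geq 2$.

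The only genuine subtlety — and the step I would watch most carefully — is the bookkeeping of conditioning sets at the coarsening stage. One must confirm that dropping the last block leaves an object $\zeta^{(J-1)}$ whose earliest argument is $Y_{is_{J-1}-1}$, so that conditioning on $Y_{i1}^{s_J}$ with $s_J\leq s_{J-1}-1$ is a legitimate coarsening to which both the induction hypothesis and the tower property apply; in particular $\zeta^{(J-1)}$ may itself depend on $Y_{is_J}$ when $s_J=s_{J-1}-1$, but this causes no difficulty because $Y_{is_J}$ is part of the conditioning set $Y_{i1}^{s_{J-1}-1}$ used by the hypothesis. Everything else is the mechanical re-use of the single-step computation, which is exactly why the recursion propagates the transition-function property one period earlier at each stage.
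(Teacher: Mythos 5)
Your proof is correct and follows essentially the same route the paper intends: the paper proves Lemma \ref{lemma_8} by the law of iterated expectations plus the partial fraction identity of Lemma \ref{tech_lemma_1}, and explicitly treats the recursive corollaries as iterations of that single-step argument, which is exactly your induction on $J$. Your handling of the one genuine subtlety — that the tower-property coarsening from conditioning on $Y_{i1}^{s_{J-1}-1}$ down to $Y_{i1}^{s_J}$ is legitimate because $s_J\leq s_{J-1}-1$, and that the weight $\omega_{t,s_J,l}^{k|k}(\theta)$ is measurable with respect to the smaller conditioning set — is also right, so nothing is missing.
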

\indent This completes \textbf{Step 1)} for the MAR(1) logit model. For \textbf{Step 2)}, we recommend a family of valid moment functions mirroring those of Proposition \ref{proposition_2} for the AR(1) case to ensure the linear independence of its elements. 

\section{Empirical Illustration} \label{Section_6}
In this last section, we illustrate the usefulness of our methodology by revisiting the analysis of \cite{deza2015there} on the dynamics of drug consumption amongst young adults in the United States.\footnote{This research was conducted with restricted access to Bureau of Labor Statistics (BLS) data. The views expressed here are those of the author and do not reflect the views of the BLS.}  \\
\indent To provide context, multiple studies have documented that young individuals who experiment with soft drugs have a tendency to continue using them and are at a higher risk of transitioning to hard drugs. Such correlations are certainly concerning. However, the empirical evidence of genuine causal links, in particular from softer drugs to harder drugs, remains limited with \cite{deza2015there} standing as a notable exception.  Fundamentally, these empirical regularities may be attributed to a causal effect (i.e. state dependence within and between drugs) or alternatively to latent traits that make individuals more prone to using illicit substances in general. Our primary concern is to untangle these two explanations to inform the design of policies aiming to mitigate drug addiction \footnote{See \cite{heckman1981heterogeneity} for insights on the implications of state dependence for the design of labor market policies.}. For example, if marijuana consumption indeed serves as a gateway to later cocaine use, early educational interventions cautioning against casual marijuana usage could potentially have enduring effects on the population of heavy drug users. \\
\indent To investigate these issues, we employ the restricted version of the National Longitudinal Survey of Youth 1997 (NLSY97). This is a panel dataset of 8984 individuals surveyed on a diverse range of subjects, including drug-related matters from 1997 to 2019. 
We concentrate on a subsample of four waves, spanning from 2001 to 2004. This subsample provides insight into the behavior of young adults between the age of 16 and 20 in 2001 to 19 and 24 in 2004. We shall examine the statistical association between three binary outcome variables, namely the consumption of alcohol, marijuana and hard drugs, derived from respondents answers' during annual interviews. Upon retaining those providing answers in all four waves as well as a valid state of residence, our cross section ultimately consists of $N=6317$ individuals \footnote{We adapt the sample selection procedure described in \cite{deza2015there} for the period 2001-2004.}.  Following \cite{deza2015there}, we then consider the trivariate VAR(1) logit model
\begin{align*}
    Y_{m,it}&=\mathds{1}\left\{\sum_{j=1}^{3} \gamma_{0mj}Y_{j,it-1} +\beta_{0m}age_{it}+\rho_{0m} TEDS_{m,it}+\nu_{01} \mathds{1}\{age_{it}\geq 21\}\mathds{1}\{m=1\}+A_{m,i}-\epsilon_{m,it}\geq 0\right\} 
\end{align*}
$m\in\{1,2,3\}$ ($1$=\say{alcohol}, $2$=\say{marijuana}, $3$=\say{hard drugs}), $t=1,2,3$ where $t=0$ corresponds to the year $2001$. The state-dependence coefficients $\gamma_{0mm}$ (within) and $\gamma_{0mj},m\neq j$ (between) are the principal coefficients of interest in the 16-dimensional vector of common parameters $\theta_{0}$. We are most particularly concerned about the sign and the statistical significance of $\gamma_{032}$, i.e the so called \say{stepping-stone} effect of marijuana on hard drugs. The covariate $age_{it}$ denotes the age of respondent $i$ at time $t$. The regressors $TEDS_{m,it}$ measure state-level deviations from national trends in treatment admissions for substance abuse caused by drug $m$ in year $t$ in the state of residence of $i$\footnote{ The variables $TEDS_{m,it}$ are constructed from the Treatment Episode Data Set-Admissions which records admissions to substance abuse treatment facilities in the United States.}. They are computed as the ratio of
 the share of admissions to treatment centers due to drug $m$ in the state of $i$ in year $t$ against the country wide analog in year $t$. Intuitively, this may be interpreted as a measure of exposure to substance $m$ for each respondent in our sample.\\
\indent \cite{deza2015there} parameterizes both the latent permanent heterogeneity $(A_{m,i})_{m=1}^3$ and the initial condition $Y_i^{0}$ to estimate the model by maximum likelihood. We leave these components unrestricted and exploit the valid moment functions presented in Section \ref{Section_5_VAR1}. We specifically use six of the eight valid moment functions available: $\psi_{\theta}^{k|k}(Y_{i1}^{3},Y_{i0}^{1},X_i)$ for $k\in \{(0,0,0),(0,1,0),(1,1,1),(1,1,0),(1,0,1),(1,0,0)\}$. The other two corresponding to states $k\in\{(0,0,1),(0,1,1)\}$ are null for over $99.5\%$ of our sample and were dropped to mitigate noise in estimation. Next, we (arbitrarily) select a constant, the initial condition $Y_{i}^{0}$, $age_{it}$ and the covariates $TEDS_{m,it}$ in all periods $t=1,2,3$ as instruments to form the $96\times 1$ moment vector \\
\begin{align*}
    m_{\theta}(Y_{i},Y_{i}^{0},X_i)=
    \begin{pmatrix}
    &\psi_{\theta}^{(0,0,0)|(0,0,0)}(Y_{i1}^{3},Y_{i0}^{1},X_i) \\
    &\psi_{\theta}^{(0,1,0)|(0,1,0)}(Y_{i1}^{3},Y_{i0}^{1},X_i) \\
    &\psi_{\theta}^{(1,1,1)|(1,1,1)}(Y_{i1}^{3},Y_{i0}^{1},X_i) \\
    &\psi_{\theta}^{(1,1,0)|(1,1,0)}(Y_{i1}^{3},Y_{i0}^{1},X_i) \\
    &\psi_{\theta}^{(1,0,1)|(1,0,1)}(Y_{i1}^{3},Y_{i0}^{1},X_i) \\
    &\psi_{\theta}^{(1,0,0)|(1,0,0)}(Y_{i1}^{3},Y_{i0}^{1},X_i) 
    \end{pmatrix} \otimes 
    \begin{pmatrix}
    &1 \\
    &Y_{i}^{0'} \\
    &age_{i1}^{3'} \\
    &TEDS_{1,i1}^{3'}\\
    &TEDS_{2,i1}^{3'} \\
    &TEDS_{3,i1}^{3'}
\end{pmatrix}
\end{align*}
With $m_{\theta}(Y_{i},Y_{i}^{0},X_i)$ in hand, we then consider the iterated GMM estimator of \cite{hansen1996finite}. Starting from an initial candidate $\hat{\theta}_0$\footnote{In practice, we used the GMM estimator putting equal weights on each moment as our starting candidate.}, it can be described as  
\begin{align*}
    \hat{\theta}&=\lim_{s\to\infty} \hat{\theta}_{s} \\
    \hat{\theta}_{s}&=\argmin_{\theta} \overline{m}_{N}(\theta)'\overline{W}_{N}( \hat{\theta}_{s-1})^{-1}\overline{m}_{N}(\theta) 
\end{align*}
where $\overline{m}_{N}(\theta)=\frac{1}{N}\sum_{i=1}^N m_{\theta}(Y_{i},Y_{i}^{0},X_i)$ and $\overline{W}_{N}(\theta)=\frac{1}{N}\sum_{i=1}^N m_{\theta}(Y_{i},Y_{i}^{0},X_i)m_{\theta}(Y_{i},Y_{i}^{0},X_i)'$. Under some regularity conditions (\cite{hansen2021inference}), this estimator is well defined and asymptotically normally distributed with 
\begin{align*}
    \sqrt{N}(\hat{\theta}-\theta_0) \dto \mathcal{N}(0,(M_0'W_0^{-1}M_0)^{-1})
\end{align*}
where $M_0 = \mathbb{E}\left[\pdv{m_{\theta_0}(Y_{i},Y_{i}^{0},X_i)}{\theta}\right]$ and $W_0=\mathbb{E}\left[m_{\theta_0}(Y_{i},Y_{i}^{0},X_i)m_{\theta_0}(Y_{i},Y_{i}^{0},X_i)'\right]$. Our motivation for focusing on this specific estimator originates mainly from \cite{hansen2021inference} who advocate its use for two practical reasons. First, for a given set of moments, it eliminates the arbitrariness in the choice of the initial weight matrix of 2-step GMM estimators (see also \cite{imbens2002generalized}). Second, because the iteration sequence is a contraction, each iteration is approximately variance reducing in the sense that: $Var(\hat{\theta}_{s}) \approx  c^2 Var(\hat{\theta}_{s-1})$ for some constant $c<1$ \footnote{Note that the limiting variance of the iterated GMM estimator and a 2-step GMM estimator will be identical.}. Empirically, we also found in Monte Carlo simulations that the iterated GMM estimator performs relatively well for this type of specification (see Appendix \ref{Section_MonteCarlo}).
\\ 
\indent Table \ref{table_4} presents the iterated GMM estimates for the trivariate VAR(1) logit model in columns (I), (II), (III). For comparison, columns (IV), (V), (VI) report a random effect (RE) estimator akin to \cite{deza2015there} \footnote{We borrow the specification presented in \cite{deza2015there}. The heterogeneity distribution is discrete with 3 mass points and is independent of the regressors. The initial condition relates to the covariates through a logistic regression.} while columns (VII), (VIII), (IV) display the \say{naive} logit maximum likelihood estimator (MLE) neglecting the presence of fixed effects. \\
The first observation is that, in line with conventional wisdom, GMM estimates for the state-dependence parameters within drug, $\gamma_{11},\gamma_{22},\gamma_{33}$, are all positive. As is apparent from columns (I)-(III), they are statistically significant for alcohol and marijuana but surprisingly not for hard drugs. In other words, there is no statistical evidence of a direct effect from past consumption of hard drug to future usage of hard drugs once we account for unobserved heterogeneity and the effects of other substances, at least in our four-wave sample\footnote{The transition parameters for hard drugs are expected to be noisier given that a smaller fraction of individuals consume these more lethal substances: approximately 15\% of the respondents indicate having consumed hard drugs at least once from 2001-2004. This contrasts with 86\% for alcohol and 40\% for marijuana.}. Notice that the magnitude of the estimates for $\gamma_{11},\gamma_{22},\gamma_{33}$ sharply contrast with the other two estimators. The naive MLE largely overestimates the amount of within state-dependence, yielding coefficients that are comparatively four to eight times larger. Intuitively, this can be rationalized by the fact that this estimator misinterprets any serial correlation produced by $A_i$ as evidence of state dependence. The RE estimator borrowed from \cite{deza2015there} (see also \cite{card2005estimating}, \cite{chay1998identification}) acts as an intermediate estimator between the other two as can be seen in columns (IV)-(VI). This behavior is expected to the extent that the additional parametric structure of this methodology will account to some degree for the presence of unobserved heterogeneity. We note that the role of within state dependence in the dynamics of drug consumption is nevertheless overstated by this approach.

Second and importantly, we observe in column (III) a positive and statistically significant effect of marijuana on hard drugs. This supports the view that marijuana usage can be a gateway to the consumption of harder drugs and accords with the key findings of \cite{deza2015there}. From a practical standpoint, this result corroborates that there may be scope for policies on marijuana usage to indirectly curb the consumption of more lethal substances by teenagers and young adults. The efficacy of such policies in the short and long run are important questions that will intuitively depend on the distribution of heterogeneity in the population.  We do not explore those questions here but further research in this direction would be of interest \footnote{A natural idea to gauge the effectiveness of policy interventions would be to compute average marginal effects. However, as mentioned in Section \ref{Section_5_VAR1}, we were unable to find transition functions for the transition probabilities where the state switches in VAR(1) models. This leads us to believe that only the average transition probabilities where the state remains unchanged are identified. In turn,  this would imply that average marginal effects are generally partially identified in VAR(1) models. In this case, it is possible that ideas analogous to those in \cite{dobronyi2021identification} and \cite{davezies2021identification} could be used to characterize and compute the identified set of average marginal effects; albeit some difficulties might arise due to the fact that the fixed effects are now multidimensional. Computing outer bounds as in \cite{pakel2023bounds} could be another plausible option.}. The other two estimators also agree on a positive influence of marijuana on the consumption of harder drugs, albeit it is statistically insignificant in the RE case. 

\begin{table}[!htb] 
\caption{Parameter estimates of the trivariate VAR(1) logit }
\label{table_4}
\centerline{
\scalebox{0.9}{
\begin{tabular}{lccccccccccccc} \toprule  \toprule 
  & \\
  & \multicolumn{3}{c}{Iterated GMM} & & & \multicolumn{3}{c}{Random Effects} & & & \multicolumn{3}{c}{Naive MLE}\\
    \cline{2-4}  \cline{7-9} \cline{12-14}\\
  & A & M & HD &  &  &A & M & HD & & &A & M & HD\\ 
   & (I) & (II) & (III) &  &  &(IV) & (V) & (VI) &  &  &(VII) & (VIII) & (IV)\\ 
  & \\
  $\gamma_{m1}$ & \textbf{0.30} & -0.04 & -0.02 & & & \textbf{1.41} & -0.36 & -0.2 & & & \textbf{2.44} & 0.87& 0.77\\
  & (0.12) & (0.21) & (0.32)& & & (0.16) & (0.22) & (0.63) & & & (0.06) & (0.14) &(0.37)\\
  $\gamma_{m2}$ & -0.07 & \textbf{0.70} & \textbf{0.69} & & & -0.52 & \textbf{1.48}& \textbf{0.16} & & &0.72& \textbf{2.55}&  \textbf{1.43}\\
    & (0.16) & (0.14) & (0.22) & & &(0.12) & (0.13) & (0.25) & & &(0.07)&(0.07)&(0.16)\\
  $\gamma_{m3}$ & -0.20 & 0.26 & \textbf{0.32} & & & -0.66 & -0.17 & \textbf{1.59} & & &0.22& 0.74& \textbf{2.12}\\
  & (0.27) & (0.22) & (0.21) & & & (0.19) & (0.13) & (0.13) & & &(0.12)&(0.09)& (0.12)\\
  age & 0.06 & -0.18 &  0.08 & & & 0.04 & -0.14  & -0.05 & & &-0.08& -0.13 & -0.21 \\
 & (0.05) & (0.06) & (0.09) & & & (0.6) & (0.27)  & (0.32) & & &(0.03)&(0.02)&(0.03)\\
  age $\geq 21$ & 0.04 & & & & & 0.46& & & & & 0.54\\
   & (0.11)& & & & & (0.2) & & & & & (0.07)\\
  $TEDS_{1}$& -0.09 &  &  & & & 0.96 & & & &  &0.67\\
    & (0.09) & & &  & & (0.77) & & & & &(0.50)\\
  $TEDS_{2}$&  & -0.18 &  & &  & & 0.02 & & & & &-0.13\\
    &  & (0.12) & & &  & & (0.48) & & & & &(0.30)\\
   $TEDS_{3}$&  &  & 0.42 &  & &  & & 0.15 & & & & & -0.10\\
    &  &  & (0.32) &  & &  & & (0.44) & & & & & (0.40)\\  
    & \\  
   $N$   & &6317& & & & & 6317 & & & & & 6317\\
   Periods & &2001-2004 & & & & &2001-2004 & & & & &2001-2004 \\
   \# Iterations & & 12 &\\
  \bottomrule 
\end{tabular}

}
}
\vspace{0.5cm}
\textit{\footnotesize \textsc{Notes}: 
The convergence criterion of our iterated GMM procedure is $\norm{\hat{\theta}_{s+1}-\hat{\theta}_{s}}<10^{-4}$. Estimated standard errors are reported in parenthesis.}
\end{table}
\noindent Otherwise, it is noteworthy that the between state dependence estimates can vary quite significantly across specifications.  Again, the naive MLE likely misinterprets spurious correlation from the $A_i$ as state dependence which results in positive and inflated cross effects. Column (IV) and (I) show disagreements of the RE and GMM estimates regarding the strength of the impact of marijuana and hard drugs on alcohol. Overall, this comparative exercise has showed that accounting for unobserved heterogeneity as flexibly as possible can be essential to obtain an accurate picture of the patterns of state dependence in practice.

\section{Conclusion}  \label{Section_7}
Dynamic discrete choice models are widely used to study the determinants of repeated decisions made by individuals or firms over time. 
In this paper, we have introduced a procedure to estimate a family of such models with logistic (or Type I extreme value) errors and potentially many lags while remaining agnostic about the nature of unobserved individual heterogeneity. This type of approach may be attractive when the risk of misspecifying the initial condition and the unit-specific effects are important. We also provided general expressions for average marginal effects in the binary response case which are often the counterfactuals of interest in practice. \\
\indent The list of discrete choice models covered in this paper is of course not exhaustive and it would be interesting to know if our two-step approach could be deployed in other settings with \say{logit} noise. In ongoing work, we have found that this is one avenue to approach estimation of dynamic ordered logit models, potentially of arbitrary lag order.

\nocite{*}
\bibliography{biblio}{}

\newpage
\setcounter{page}{0}
\pagenumbering{arabic}
\setcounter{page}{1}
\appendix

\vskip 3cm
\begin{center}
	{ {\LARGE \textbf{Appendix} }}
\end{center}

\section{Partial Fraction Decomposition}
\begin{lemma}\label{tech_lemma_1}
For any reals $u_1,u_2,\ldots,u_K$, $v_1,v_2,\ldots,v_K$ and $a_1,a_2,\ldots,a_K$, $K\geq 1$ we have
\begin{align*}
    &\frac{1}{1+\cleansum_{k=1}^K e^{v_k+a_k}}+\cleansum_{k=1}^K (1-e^{u_k-v_k})\frac{e^{v_k+a_k}}{\left(1+\cleansum_{k=1}^K e^{v_k+a_k}\right)\left(1+\cleansum_{k=1}^K e^{u_k+a_k}\right)}= \frac{1}{1+\cleansum_{k=1}^K e^{u_k+a_k}}
\end{align*}
and 
\begin{align*}
    &\frac{e^{v_j+a_j}}{1+\cleansum_{k=1}^K e^{v_k+a_k}}+(1-e^{-u_j+v_j})\frac{e^{u_j+a_j}}{\left(1+\cleansum_{k=1}^K e^{v_k+a_k}\right)\left(1+\cleansum_{k=1}^K e^{u_k+a_k}\right)}+\\
    &\sum_{\substack{k=1 \\ k\neq j}}^K (1-e^{(u_k-u_j)-(v_k-v_j)})\frac{e^{v_k+a_k+u_j+a_j}}{\left(1+\cleansum_{k=1}^K e^{v_k+a_k}\right)\left(1+\cleansum_{k=1}^K e^{u_k+a_k}\right)}= \frac{e^{u_j+a_j}}{1+\cleansum_{k=1}^K e^{u_k+a_k}}
\end{align*}
\end{lemma}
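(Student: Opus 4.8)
The plan is to reduce both identities to elementary polynomial identities in the exponentials $e^{a_1},\ldots,e^{a_K}$ by clearing denominators. Write $D_v=1+\sum_{k=1}^K e^{v_k+a_k}$ and $D_u=1+\sum_{k=1}^K e^{u_k+a_k}$, and note that every summand on the left-hand side of either identity has denominator equal to $D_v$, $D_u$, or $D_vD_u$, so each left-hand side is a single rational function with denominator $D_vD_u$. It therefore suffices to multiply through by $D_vD_u$ and verify the resulting numerator identity.

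For the first identity, multiplying by $D_vD_u$ turns the left-hand side into $D_u+\sum_{k=1}^K(1-e^{u_k-v_k})e^{v_k+a_k}$. The key simplification is that each weight collapses its correction term,
\begin{align*}
(1-e^{u_k-v_k})e^{v_k+a_k}=e^{v_k+a_k}-e^{u_k+a_k},
\end{align*}
so the sum telescopes to $\sum_{k=1}^K e^{v_k+a_k}-\sum_{k=1}^K e^{u_k+a_k}=(D_v-1)-(D_u-1)$. Adding the leading $D_u$ leaves exactly $D_v$, and dividing back by $D_vD_u$ gives $1/D_u$, as required.

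The second identity follows the same recipe but requires more careful bookkeeping. Clearing $D_vD_u$ leaves the numerator
\begin{align*}
e^{v_j+a_j}D_u+(1-e^{v_j-u_j})e^{u_j+a_j}+\sum_{\substack{k=1\\k\neq j}}^K\bigl(1-e^{(u_k-u_j)-(v_k-v_j)}\bigr)e^{v_k+a_k+u_j+a_j}.
\end{align*}
Using the same exponent arithmetic as before, each weighted correction simplifies: the middle term becomes $e^{u_j+a_j}-e^{v_j+a_j}$, and the $k$-th summand becomes $e^{v_k+a_k+u_j+a_j}-e^{u_k+a_k+v_j+a_j}$. Expanding $e^{v_j+a_j}D_u$ as $e^{v_j+a_j}+e^{v_j+a_j}e^{u_j+a_j}+e^{v_j+a_j}\sum_{k\neq j}e^{u_k+a_k}$ and collecting all contributions, the two copies of the mixed sum $e^{v_j+a_j}\sum_{k\neq j}e^{u_k+a_k}$ cancel, while what remains factors as $e^{u_j+a_j}\bigl(1+\sum_{k=1}^K e^{v_k+a_k}\bigr)=e^{u_j+a_j}D_v$. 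Dividing by $D_vD_u$ yields the right-hand side $e^{u_j+a_j}/D_u$.

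The only real obstacle is this cancellation in the second identity: one must track the two mixed sums $e^{v_j+a_j}\sum_{k\neq j}e^{u_k+a_k}$ and $e^{u_j+a_j}\sum_{k\neq j}e^{v_k+a_k}$ and check that the weights $\bigl(1-e^{(u_k-u_j)-(v_k-v_j)}\bigr)$ are engineered precisely so that the first mixed sum is annihilated while the second combines with the diagonal term $e^{u_j+a_j}e^{v_j+a_j}$ and the constant $e^{u_j+a_j}$ to reconstitute $e^{u_j+a_j}D_v$. Conceptually this is the same telescoping mechanism as in the first identity, merely carried through with an extra factor of $e^{u_j+a_j}$, which suggests that both identities are instances of a single linear-algebraic fact: the probability vector associated with the logit index $v$ can be rewritten as a fixed linear combination of the probability vector associated with the shifted index $u$.
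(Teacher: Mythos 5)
Your proof is correct and follows essentially the same route as the paper's: both clear the common denominator $\left(1+\sum_{k}e^{v_k+a_k}\right)\left(1+\sum_{k}e^{u_k+a_k}\right)$, use the collapses $(1-e^{u_k-v_k})e^{v_k+a_k}=e^{v_k+a_k}-e^{u_k+a_k}$ and $(1-e^{(u_k-u_j)-(v_k-v_j)})e^{v_k+a_k+u_j+a_j}=e^{v_k+a_k+u_j+a_j}-e^{u_k+a_k+v_j+a_j}$, and cancel the mixed sums so that the numerator reconstitutes $1+\sum_{k}e^{v_k+a_k}$ (resp.\ $e^{u_j+a_j}\bigl(1+\sum_{k}e^{v_k+a_k}\bigr)$). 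Your bookkeeping of the two mixed sums in the second identity matches the paper's computation exactly, so there is nothing to add.
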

\begin{proof}
\begin{align*}
    &\frac{1}{1+\sum_{k=1}^K e^{v_k+a_k}}+\sum_{k=1}^K (1-e^{u_k-v_k})\frac{e^{v_k+a_k}}{\left(1+\sum_{k=1}^K e^{v_k+a_k}\right)\left(1+\sum_{k=1}^K e^{u_k+a_k}\right)}= \\
    &\frac{1+\sum_{k=1}^K e^{u_k+a_k}+\sum_{k=1}^K e^{v_k+a_k}-\sum_{k=1}^K e^{u_k+a_k}}{\left(1+\sum_{k=1}^K e^{v_k+a_k}\right)\left(1+\sum_{k=1}^K e^{u_k+a_k}\right)}\\
    &=\frac{1+\sum_{k=1}^K e^{v_k+a_k}}{\left(1+\sum_{k=1}^K e^{v_k+a_k}\right)\left(1+\sum_{k=1}^K e^{u_k+a_k}\right)} \\
    &=\frac{1}{1+\sum_{k=1}^K e^{u_k+a_k}}
\end{align*}
and 
\begin{align*}
    &\frac{e^{v_j+a_j}}{1+\sum_{k=1}^K e^{v_k+a_k}}+(1-e^{-u_j+v_j})\frac{e^{u_j+a_j}}{\left(1+\sum_{k=1}^K e^{v_k+a_k}\right)\left(1+\sum_{k=1}^K e^{u_k+a_k}\right)}+\\
    &\sum_{\substack{k=1 \\ k\neq j}}^K (1-e^{(u_k-u_j)-(v_k-v_j)})\frac{e^{v_k+a_k+u_j+a_j}}{\left(1+\sum_{k=1}^K e^{v_k+a_k}\right)\left(1+\sum_{k=1}^K e^{u_k+a_k}\right)}= \\
    &\frac{e^{v_j+a_j}+\sum_{k=1}^K e^{v_j+a_j+u_k+a_k}+e^{u_j+a_j}-e^{v_j+a_j}+\cleansum_{\substack{k=1 \\ k\neq j}}^K e^{v_k+a_k+u_j+a_j}-\sum_{\substack{k=1 \\ k\neq j}}^K e^{v_j+a_j+u_k+a_k}}{\left(1+\sum_{k=1}^K e^{v_k+a_k}\right)\left(1+\sum_{k=1}^K e^{u_k+a_k}\right)}\\
    &=\frac{e^{u_j+a_j}+e^{v_j+a_j+u_j+a_j}+\cleansum_{\substack{k=1 \\ k\neq j}}^K e^{v_k+a_k+u_j+a_j}}{\left(1+\sum_{k=1}^K e^{v_k+a_k}\right)\left(1+\sum_{k=1}^K e^{u_k+a_k}\right)} \\
    &=\frac{e^{u_j+a_j}\left(1+\cleansum_{k=1}^K e^{v_k+a_k}\right)}{\left(1+\sum_{k=1}^K e^{v_k+a_k}\right)\left(1+\sum_{k=1}^K e^{u_k+a_k}\right)} \\
    & =\frac{e^{u_j+a_j}}{1+\sum_{k=1}^K e^{u_k+a_k}}
\end{align*}
\end{proof}

\begin{lemma} \label{tech_lemma_2}
Fix $M\geq 2$, let $\mathcal{Y}=\{0,1\}^M$. Then, for any $k\in \mathcal{Y}$ and any reals $u_1,u_2,\ldots,u_M$, $v_1,v_2,\ldots,v_M$ and $a_1,a_2,\ldots,a_M$, we have

\begin{align*}
    \prod_{m=1}^M \frac{e^{k_m(v_m+a_{m})}}{1+e^{v_m+a_{m}}} +\cleansum_{l\in \mathcal{Y}\setminus \{k\}} \left[1-e^{\sum_{j=1}^M (l_j-k_j)(u_j-v_j)}\right]\prod_{m=1}^M\frac{e^{k_m(u_m+a_{m})}}{1+e^{u_m+a_{m}}}\frac{e^{l_m(v_m+a_{m})}}{1+e^{v_m+a_{m}}}=\prod_{m=1}^M\frac{e^{k_m(u_m+a_{m})}}{1+e^{u_m+a_{m}}}
\end{align*}
\end{lemma}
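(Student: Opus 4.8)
The plan is to reduce this multivariate identity to the same cancellation mechanism behind the scalar decompositions, by exploiting a pointwise ``exchange'' identity between the $u$-factors and the $v$-factors. For each coordinate $m$ and each bit $b\in\{0,1\}$, abbreviate the logistic weights
\[
P_m^{b}=\frac{e^{b(v_m+a_m)}}{1+e^{v_m+a_m}},\qquad Q_m^{b}=\frac{e^{b(u_m+a_m)}}{1+e^{u_m+a_m}},
\]
so that every product in the statement is of the form $\prod_m P_m^{k_m}$, $\prod_m Q_m^{k_m}$, or $\prod_m Q_m^{k_m}P_m^{l_m}$. The two facts I would lean on are that these weights are normalized in each coordinate, $P_m^0+P_m^1=Q_m^0+Q_m^1=1$, together with the exchange identity described next.

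First I would establish the coordinatewise exchange identity
\[
e^{(l_m-k_m)(u_m-v_m)}\,Q_m^{k_m}P_m^{l_m}=Q_m^{l_m}P_m^{k_m},
\]
which follows from a one-line computation: the numerator exponent $k_m(u_m+a_m)+l_m(v_m+a_m)+(l_m-k_m)(u_m-v_m)$ collapses to $l_m(u_m+a_m)+k_m(v_m+a_m)$, while the product of denominators $(1+e^{u_m+a_m})(1+e^{v_m+a_m})$ is symmetric and hence unchanged. Taking the product over $m$ yields the global version $e^{\sum_j(l_j-k_j)(u_j-v_j)}\prod_m Q_m^{k_m}P_m^{l_m}=\prod_m Q_m^{l_m}P_m^{k_m}$, so each bracketed summand rewrites as a genuine difference,
\[
\Big[1-e^{\sum_j(l_j-k_j)(u_j-v_j)}\Big]\prod_m Q_m^{k_m}P_m^{l_m}=\prod_m Q_m^{k_m}P_m^{l_m}-\prod_m Q_m^{l_m}P_m^{k_m}.
\]

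Next I would collapse the two resulting sums using normalization. Summing the first product over all $l\in\mathcal{Y}$ factorizes coordinate by coordinate as $\prod_m Q_m^{k_m}\cdot\prod_m(P_m^0+P_m^1)=\prod_m Q_m^{k_m}$, so isolating the $l=k$ term gives $\sum_{l\neq k}\prod_m Q_m^{k_m}P_m^{l_m}=\prod_m Q_m^{k_m}-\prod_m Q_m^{k_m}P_m^{k_m}$. Symmetrically, $\sum_{l\neq k}\prod_m Q_m^{l_m}P_m^{k_m}=\prod_m P_m^{k_m}-\prod_m Q_m^{k_m}P_m^{k_m}$. Substituting both into the left-hand side of the claim, the isolated leading term $\prod_m P_m^{k_m}$ cancels against the first piece of the second sum, the self-terms $\prod_m Q_m^{k_m}P_m^{k_m}$ cancel in pairs, and what survives is precisely $\prod_m Q_m^{k_m}$, the right-hand side.

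The argument is entirely mechanical once the exchange identity is in hand; the only thing to watch is the bookkeeping of the multi-index sums and the careful handling of the excluded $l=k$ term. In contrast to the scalar Lemma \ref{tech_lemma_1}, I expect no genuine analytic obstacle here — no common-denominator manipulation is required, since the product structure lets the telescoping happen one coordinate at a time.
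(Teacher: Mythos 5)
Your proof is correct and is essentially the paper's own argument in normalized form: your coordinatewise exchange identity $e^{(l_m-k_m)(u_m-v_m)}Q_m^{k_m}P_m^{l_m}=Q_m^{l_m}P_m^{k_m}$ is exactly the exponent rearrangement the paper performs implicitly inside its numerator term $Num_2$, and your normalization step $\sum_{l\in \mathcal{Y}}\prod_{m=1}^M P_m^{l_m}=1$ is the normalized version of the paper's identity $\sum_{l\in \mathcal{Y}}\prod_{m=1}^M e^{l_m(v_m+a_m)}=\prod_{m=1}^M\left(1+e^{v_m+a_m}\right)$. The only difference is presentational: you work with logistic probabilities so the sums telescope to one, whereas the paper places everything over the common denominator and performs the identical cancellation at the numerator level.
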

\begin{proof}
Let 
\begin{align*}
LHS= \prod_{m=1}^M \frac{e^{k_m(v_m+a_{m})}}{1+e^{v_m+a_{m}}} +\cleansum_{l\in \mathcal{Y}\setminus \{k\}} \left[1-e^{\sum_{j=1}^M (l_j-k_j)(u_j-v_j)}\right]\prod_{m=1}^M\frac{e^{k_m(u_m+a_{m})}}{1+e^{u_m+a_{m}}}\frac{e^{l_m(v_m+a_{m})}}{1+e^{v_m+a_{m}}}
\end{align*}
and let $Num$ denote the numerator of $LHS$. We have:
\begin{align*}
    Num&=Num_1+Num_2 \\
    Num_1&=\prod_{m=1}^M e^{k_m(v_m+a_{m})}(1+e^{u_m+a_{m}}) \\
    Num_2&=\cleansum_{l\in \mathcal{Y}\setminus \{k\}} \left[1-e^{\sum_{j=1}^M (l_j-k_j)(u_j-v_j)}\right]\prod_{m=1}^M e^{k_m(u_m+a_{m})+l_m(v_m+a_{m})} \\
    &=\prod_{m=1}^M e^{k_m(u_m+a_{m})}\cleansum_{l\in \mathcal{Y}\setminus \{k\}}\prod_{m=1}^Me^{l_m(v_m+a_{m})}-\cleansum_{l\in \mathcal{Y}\setminus \{k\}}e^{\sum_{j=1}^M l_j(u_j+a_j)+k_j(v_j+a_{j})} \\
    &=\prod_{m=1}^M e^{k_m(u_m+a_{m})}\cleansum_{l\in \mathcal{Y}\setminus \{k\}}\prod_{m=1}^Me^{l_m(v_m+a_{m})}-\prod_{m=1}^Me^{ k_m(v_m+a_{m})} \cleansum_{l\in \mathcal{Y}\setminus \{k\}} \prod_{m=1}^M e^{ l_m(u_m+a_m)}
\end{align*}
Now, noting that 
\begin{align*}
    &\cleansum_{l\in \mathcal{Y}}\prod_{m=1}^M e^{l_m(v_m+a_{m})}=\prod_{m=1}^M (1+e^{v_m+a_{m}}) \\
    &\cleansum_{l\in \mathcal{Y}}\prod_{m=1}^M e^{l_m(u_m+a_{m})}=\prod_{m=1}^M (1+e^{u_m+a_{m}})
\end{align*}
we get
\begin{align*}
    Num_2
    &=\prod_{m=1}^M e^{k_m(u_m+a_{m})}\cleansum_{l\in \mathcal{Y}\setminus \{k\}}\prod_{m=1}^Me^{l_m(v_m+a_{m})}-\prod_{m=1}^Me^{ k_m(v_m+a_{m})} \cleansum_{l\in \mathcal{Y}\setminus \{k\}} \prod_{m=1}^M e^{ l_m(u_m+a_m)} \\
    &=\prod_{m=1}^M e^{k_m(u_m+a_{m})}\left(\prod_{m=1}^M (1+e^{v_m+a_{m}})-\prod_{m=1}^M e^{k_m(v_m+a_{m})}\right) \\
    &-\prod_{m=1}^Me^{ k_m(v_m+a_{m})}\left(\prod_{m=1}^M (1+e^{u_m+a_{m}})-\prod_{m=1}^M e^{k_m(u_m+a_{m})}\right) \\
    &=\prod_{m=1}^M e^{k_m(u_m+a_{m})}(1+e^{v_m+a_{m}})-\prod_{m=1}^M  e^{k_m(v_m+a_{m})}(1+e^{u_m+a_{m}}) \\
    &=\prod_{m=1}^M e^{k_m(u_m+a_{m})}(1+e^{v_m+a_{m}})-Num_1
\end{align*}
It follows that $Num=\prod_{m=1}^M e^{k_m(u_m+a_{m})}(1+e^{v_m+a_{m}})$ and consequently
\begin{align*}
    LHS = \frac{\prod_{m=1}^M e^{k_m(u_m+a_{m})}(1+e^{v_m+a_{m}})}{\prod_{m=1}^M (1+e^{u_m+a_{m}})(1+e^{v_m+a_{m}})}=\prod_{m=1}^M \frac{e^{k_m(u_m+a_{m})}}{1+e^{u_m+a_{m}}}
\end{align*}
\end{proof}

\section{Connection to Kitazawa and Honoré-Weidner}  \label{mapping_to_kitazawa_honoreweidner}
Recall from Proposition \ref{proposition_2} that when $T\geq 3$, our simplest moment conditions for $t,s$ such that  $T-1\geq t> s\geq 1$ write:
\begin{align*}
    \psi_{\theta}^{0|0}(Y_{it-1}^{t+1},Y_{is-1}^s,X_i)&=\phi_{\theta}^{0|0}(Y_{it+1},Y_{it},Y_{it-1},X_i)-\zeta_{\theta}^{0|0}(Y_{it-1}^{t+1},Y_{is-1}^s,X_i) \\
    &=\phi_{\theta}^{0|0}(Y_{it+1},Y_{it},Y_{it-1},X_i)-(1-Y_{is})-\omega_{t,s}^{0|0}(\theta)Y_{is}\phi_{\theta}^{0|0}(Y_{it+1},Y_{it},Y_{it-1},X_i) \\
     \psi_{\theta}^{1|1}(Y_{it-1}^{t+1},Y_{is-1}^s,X_i)&=\phi_{\theta}^{1|1}(Y_{it+1},Y_{it},Y_{it-1},X_i)-\zeta_{\theta}^{1|1}(Y_{it-1}^{t+1},Y_{is-1}^s,X_i) \\
    &=\phi_{\theta}^{1|1}(Y_{it+1},Y_{it},Y_{it-1},X_i)-Y_{is}-\omega_{t,s}^{1|1}(\theta)(1-Y_{is})\phi_{\theta}^{1|1}(Y_{it+1},Y_{it},Y_{it-1},X_i)
\end{align*}
where we know from Lemma \ref{lemma_3} that
\begin{align*}
\omega_{t,s}^{0|0}(\theta)&=1-e^{(\kappa_{t}^{0|0}(\theta)-\mu_{s}(\theta))} \\
&=1-e^{(X_{it+1}-X_{is})'\beta-\gamma Y_{is-1}} \\
\omega_{t,s}^{1|1}(\theta)&=1-e^{-(\kappa_{t}^{1|1}(\theta)-\mu_{s}(\theta))} \\
&=1-e^{-\gamma(1-Y_{is-1})-(X_{it+1}-X_{is})'\beta)}
\end{align*}

Now, note that:
\begin{align*}
    &\tanh \left( \frac{\gamma(1-Y_{it-2})+(\Delta X_{it}+\Delta X_{it+1})'\beta}{2}\right)=\frac{1-e^{-\left(\gamma(1-Y_{it-2})+(\Delta X_{it}+\Delta X_{it+1})'\beta \right)}}{1+e^{-\left(\gamma(1-Y_{it-2})+(\Delta X_{it}+\Delta X_{it+1})'\beta \right)}}=\frac{\omega_{t,t-1}^{1|1}(\theta)}{2-\omega_{t,t-1}^{1|1}(\theta)} \\
    &\tanh \left( \frac{-\gamma Y_{it-2}+(\Delta X_{it}+\Delta X_{it+1})'\beta}{2}\right)=\frac{e^{-\gamma Y_{it-2}+(\Delta X_{it}+\Delta X_{it+1})'\beta}-1}{e^{-\gamma Y_{it-2}+(\Delta X_{it}+\Delta X_{it+1})'\beta}+1} =-\frac{\omega_{t,t-1}^{0|0}(\theta)}{2-\omega_{t,t-1}^{0|0}(\theta)}
\end{align*}
and $\phi_{\theta}^{1|1}(Y_{it+1},Y_{it},Y_{it-1},X_i)= \Upsilon_{it}$ and $1-\phi_{\theta}^{0|0}(Y_{it+1},Y_{it},Y_{it-1},X_i)=U_{it}$. Thus, we have:

\begin{align*}
        (2-\omega_{t,t-1}^{0|0}(\theta))\hbar U_{it}&=(2-\omega_{t,t-1}^{0|0}(\theta))(U_{it}-Y_{it-1})+\omega_{t,t-1}^{0|0}(\theta)\left(U_{it}+Y_{it-1}-2U_{it}Y_{it-1}\right) \\
        &=2\left[U_{it}-Y_{it-1}+\omega_{t,t-1}^{0|0}(\theta)Y_{it-1}(1-U_{it})\right] \\
        &=2\left[1-\phi_{\theta}^{0|0}(Y_{it+1},Y_{it},Y_{it-1},X_i)-Y_{it-1}+\omega_{t,t-1}^{0|0}(\theta)Y_{it-1}\phi_{\theta}^{0|0}(Y_{it+1},Y_{it},Y_{it-1},X_i)\right] \\
        &=-2\left[\phi_{\theta}^{0|0}(Y_{it+1},Y_{it},Y_{it-1},X_i)-(1-Y_{it-1})-\omega_{t,t-1}^{0|0}(\theta)Y_{it-1}\phi_{\theta}^{0|0}(Y_{it+1},Y_{it},Y_{it-1},X_i)\right] \\
        &=-2\psi_{\theta}^{0|0}(Y_{it-1}^{t+1},Y_{it-2}^{t-1},X_i) \\
        (2-\omega_{t,t-1}^{1|1}(\theta))\hbar \Upsilon_{it}&= (2-\omega_{t,t-1}^{1|1}(\theta))(\Upsilon_{it}-Y_{it-1})-\omega_{t,t-1}^{1|1}(\theta)\left(\Upsilon_{it}+Y_{it-1}-2\Upsilon_{it}Y_{it-1}\right) \\
        &=2\left[\Upsilon_{it}-Y_{it-1}-\omega_{t,t-1}^{1|1}(\theta)\Upsilon_{it}\left(1-Y_{it-1}\right)\right] \\
        &=2\left[\phi_{\theta}^{1|1}(Y_{it+1},Y_{it},Y_{it-1},X_i)-Y_{it-1}-\omega_{t,t-1}^{1|1}(\theta)\phi_{\theta}^{1|1}(Y_{it+1},Y_{it},Y_{it-1},X_i)\left(1-Y_{it-1}\right)\right] \\
        &=2  \psi_{\theta}^{1|1}(Y_{it-1}^{t+1},Y_{it-2}^{t-1},X_i)
\end{align*}

\noindent To establish the connection to the work of \cite{honore2020moment}, it is useful to re-write the moment functions slightly differently. By re-arranging terms, one obtains the following for $T=3$
\begin{align} \label{moment_b_honoreweidner}
\begin{split}
    \psi_{\theta}^{0|0}(Y_{1}^{3},Y_{i0}^1,X_i)&=(1-Y_{i1})\phi_{\theta}^{0|0}(Y_{i1}^{3},X_i)+e^{(X_{i3}-X_{i1})'\beta-\gamma Y_{i0}}Y_{i1}\phi_{\theta}^{0|0}(Y_{i1}^{3},X_i)-(1-Y_{i1}) \\
    &=e^{(X_{i2}-X_{i3})'\beta }(1-Y_{i1})(1-Y_{i2})Y_{i3} +(1-Y_{i1})(1-Y_{i2})(1-Y_{i3}) \\
    &+e^{(X_{i2}-X_{i1})'\beta+\gamma (1-Y_{i0})}Y_{i1}(1-Y_{i2})Y_{i3} \\
   &+e^{(X_{i3}-X_{i1})'\beta-\gamma Y_{i0}}Y_{i1}(1-Y_{i2})(1-Y_{i3}) \\
   &-(1-Y_{i1}) \\
    &=(e^{(X_{i2}-X_{i3})'\beta }-1)(1-Y_{i1})(1-Y_{i2})Y_{i3}  \\
   &+e^{(X_{i2}-X_{i1})'\beta+\gamma (1-Y_{i0})}Y_{i1}(1-Y_{i2})Y_{i3} \\
   &+e^{(X_{i3}-X_{i1})'\beta-\gamma Y_{i0}}Y_{i1}(1-Y_{i2})(1-Y_{i3}) \\
   &-(1-Y_{i1})Y_{i2}
 \end{split}
\end{align}
where the last line uses the fact that: $(1-Y_{i1})=(1-Y_{i1})Y_{i2}+(1-Y_{i1})(1-Y_{i2})Y_{i3}+(1-Y_{i1})(1-Y_{i2})(1-Y_{i3})$ to make some cancellations. For the initial condition, $Y_{i0}=0$, equation (\ref{moment_b_honoreweidner}) corresponds to their moment function $m_{0}^{b}$ which they express in an extensive form. For $Y_{i0}=1$, we get instead $m_{1}^{b}$. Similarly,
\begin{align} \label{moment_a_honoreweidner}
\begin{split}
    \psi_{\theta}^{1|1}(Y_{i1}^{3},Y_{i0}^1,X_i)
    &=Y_{i1}\phi_{\theta}^{1|1}(Y_{i1}^3,X_i)+e^{-\gamma(1-Y_{i0})-(X_{i3}-X_{i1})'\beta)}(1-Y_{i1})\phi_{\theta}^{1|1}(Y_{i1}^3,X_i)-Y_{i1} \\
    &=e^{(X_{i3}-X_{i2})'\beta}Y_{i1}Y_{i2}(1-Y_{i3})+Y_{i1}Y_{i2}Y_{i3} \\
    &+e^{(X_{i1}-X_{i2})'\beta+\gamma Y_{i0}}(1-Y_{i1})Y_{i2}(1-Y_{i3}) \\ &+e^{(X_{i1}-X_{i3})'\beta-\gamma(1-Y_{i0})}(1-Y_{i1})Y_{i2}Y_{i3} \\
    &-Y_{i1} \\
    &=(e^{(X_{i3}-X_{i2})'\beta}-1)Y_{i1}Y_{i2}(1-Y_{i3}) \\
    &+e^{(X_{i1}-X_{i2})'\beta+\gamma Y_{i0}}(1-Y_{i1})Y_{i2}(1-Y_{i3}) \\ &+e^{(X_{i1}-X_{i3})'\beta-\gamma(1-Y_{i0})}(1-Y_{i1})Y_{i2}Y_{i3} \\
    &-Y_{i1}(1-Y_{i2})
\end{split}
\end{align}
where the last line uses the fact that: $Y_{i1}=Y_{i1}(1-Y_{i2})+Y_{i1}Y_{i2}Y_{i3}+Y_{i1}Y_{i2}(1-Y_{i3})$. For the initial condition $Y_{i0}=0$, equation (\ref{moment_a_honoreweidner})  gives their moment function $m_{0}^{a}$ and for $Y_{i0}=1$, we get $m_{1}^{a}$. Our moments are thus identical, at least for the case $T=3$.

\section{The remaining steps for the AR($p$) model with $p>1$} \label{Section_ARp_othersteps}
\indent As indicated in Section \ref{Section_ARp_transifunc} , \textbf{Step 1)} (b) is now analogous to the AR(1) case since the transition probabilities keep an identical structure. As soon as $T\geq p+2$,  we can construct transition functions other than $\phi_{\theta}^{y_1|y_{1}^{p}}(Y_{it+1},Y_{it},Y^{t-1}_{it-(2p-1)},X_i)$ also associated to $\pi^{y_1|y_{1}^{p}}_{t}(A_i,X_{i})$, for $ y_{1}^p \in \mathcal{Y}^{p}$ in periods $t\in\{p+1,\ldots,T-1\}$,. These new transition functions that we denote $\zeta_{\theta}^{y_1|y_{1}^{p}}(.)$ take the form of a weighted combination of past outcome $\mathds{1}(Y_{is}=y_1)$,  $s\in\{1,\ldots,t-p\}$  and the interaction of $\mathds{1}(Y_{is}\neq y_1)$ with any transition function  whose conditioning set encompasses $Y_{is}$ for it to map to $\pi^{y_1|y_{1}^{p}}_{t}(A_i,X_{i})$. The simplest examples which are also the only ones available when $T=p+2$, are given in Lemma  \ref{lemma_4}.
\begin{lemma} \label{lemma_4}
In model (\ref{ARp_logit_general}) with $T\geq p+2$, for all $t\in\{p+1,\ldots,T-1\}$, $s\in\{1,\ldots,t-p\}$ and
$y_{1}^p\in \mathcal{Y}^p$, let
\begin{align*}
    \mu_{s}(\theta)&=\sum_{r=1}^{p}\gamma_{0r} Y_{is-r}+X_{is}'\beta \\
    \kappa_{t}^{y_1|y_{1}^p}(\theta)&=\sum_{r=1}^{p}\gamma_{0r}y_r+X_{it+1}'\beta \\
    \omega_{t,s}^{y_1|y_{1}^p}(\theta)&=\left[1-e^{(\kappa^{y_1|y_{1}^p}_{t}(\theta)-\mu_{s}(\theta))}\right]^{1-y_{1}}\left[1-e^{-(\kappa^{y_1|y_{1}^p}_{t}(\theta)-\mu_{s}(\theta))}\right]^{y_{1}}
\end{align*}
and define the moment functions:
\begin{align*}
 \zeta_{\theta}^{y_1|y_{1}^p}(Y^{t+1}_{it-(2p-1)},Y_{is-p}^{s},X_i)&=\mathds{1}\{Y_{is}=y_1\}+\omega_{t,s}^{y_1|y_{1}^p}(\theta)\mathds{1}\{Y_{is}\neq y_1\}\phi_{\theta}^{y_1|y_{1}^{p}}(Y_{it+1},Y_{it},Y^{t-1}_{it-(2p-1)},X_i) 
\end{align*}
Then,
\begin{align*}
    \mathbb{E}\left[ \zeta_{\theta_0}^{y_1|y_{1}^p}(Y^{t+1}_{it-(2p-1)},Y_{is-p}^{s},X_i)|Y_i^0,Y_{i1}^{s-1},X_i,A_i\right]&= \pi^{y_1|y_{1}^p}_t(A_i,X_{i})
\end{align*}
\end{lemma}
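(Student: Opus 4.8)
The plan is to derive the result from the law of iterated expectations, using the transition function $\phi_{\theta_0}^{y_1|y_{1}^p}$ already supplied by Theorem \ref{theorem_ARp_transit} as the building block, and then closing the computation with the partial fraction identity of Lemma \ref{tech_lemma_1} specialized to $K=1$. The starting observation is that $\zeta_{\theta_0}^{y_1|y_{1}^p}$ is assembled from three pieces whose measurability I can track: the indicators $\mathds{1}\{Y_{is}=y_1\}$ and $\mathds{1}\{Y_{is}\neq y_1\}$ depend only on $Y_{is}$; the weight $\omega_{t,s}^{y_1|y_{1}^p}(\theta_0)$ depends on the data only through $\mu_{s}(\theta_0)=\sum_{r=1}^{p}\gamma_{0r}Y_{is-r}+X_{is}'\beta_0$ and the deterministic index $\kappa_{t}^{y_1|y_{1}^p}(\theta_0)$, hence is measurable with respect to $(Y_i^0,Y_{i1}^{s-1},X_i)$ since every outcome it references sits strictly before period $s$; and $\phi_{\theta_0}^{y_1|y_{1}^p}(Y_{it+1},Y_{it},Y_{it-(2p-1)}^{t-1},X_i)$ involves only outcomes from period $t-(2p-1)$ onward.

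First I would condition on the enlarged information set $(Y_i^0,Y_{i1}^{t-p},X_i,A_i)$. Because $s\leq t-p$, both indicators and the weight $\omega_{t,s}^{y_1|y_{1}^p}(\theta_0)$ are measurable with respect to this set, so they pass outside the inner expectation, and Theorem \ref{theorem_ARp_transit} gives $\mathbb{E}[\phi_{\theta_0}^{y_1|y_{1}^p}\mid Y_i^0,Y_{i1}^{t-p},X_i,A_i]=\pi_{t}^{y_1|y_{1}^p}(A_i,X_i)$. This reduces the inner conditional expectation to $\mathds{1}\{Y_{is}=y_1\}+\omega_{t,s}^{y_1|y_{1}^p}(\theta_0)\mathds{1}\{Y_{is}\neq y_1\}\pi_{t}^{y_1|y_{1}^p}(A_i,X_i)$, a function of $Y_{is}$ alone, the weight and the probability being constants given the smaller conditioning set.

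Next I would take the outer expectation conditional on $(Y_i^0,Y_{i1}^{s-1},X_i,A_i)$. The only random outcome remaining in the integrand is $Y_{is}$, whose conditional law given this set is, by the Markov structure of (\ref{ARp_logit_general}), the one-period logit transition $P(Y_{is}=1\mid\cdot)=e^{\mu_{s}(\theta_0)+A_i}/(1+e^{\mu_{s}(\theta_0)+A_i})$. Substituting this law and splitting into the cases $y_1=0$ and $y_1=1$, the resulting expression is exactly the left-hand side of the first (resp.\ second) identity of Lemma \ref{tech_lemma_1} evaluated at $K=1$ with $v=\mu_{s}(\theta_0)$, $u=\kappa_{t}^{y_1|y_{1}^p}(\theta_0)$ and $a=A_i$; the two branches of $\omega_{t,s}^{y_1|y_{1}^p}$ match precisely the factors $(1-e^{u-v})$ and $(1-e^{-(u-v)})$ appearing in those identities. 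The identity collapses the sum to $\pi_{t}^{y_1|y_{1}^p}(A_i,X_i)$, which is the claim.

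There is no genuine obstacle beyond careful bookkeeping: the substantive content is entirely contained in the partial fraction identity, already proved, together with the verification that the weight and the indicators live in the correct $\sigma$-algebras so that the two-stage conditioning is legitimate. The one point warranting attention is confirming that $\omega_{t,s}^{y_1|y_{1}^p}(\theta_0)$ is measurable with respect to $(Y_i^0,Y_{i1}^{s-1},X_i)$, i.e.\ that $\mu_{s}(\theta_0)$ references only pre-$s$ outcomes, since this is what licenses pulling it through both expectations; this holds precisely because the lag order is $p$ and $\mu_{s}$ involves $Y_{is-1},\dots,Y_{is-p}$.
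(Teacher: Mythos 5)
Your proposal is correct and follows essentially the same route as the paper: the paper proves Lemma \ref{lemma_3} via iterated expectations, measurability of the weight with respect to the conditioning set, the transition-function result, and the $K=1$ identities of Lemma \ref{tech_lemma_1}, explicitly leaving Lemma \ref{lemma_4} to the reader as the identical argument with $\phi_{\theta_0}^{y_1|y_1^p}$ from Theorem \ref{theorem_ARp_transit} in place of Lemma \ref{lemma_2}. Your two-stage conditioning on $(Y_i^0,Y_{i1}^{t-p},X_i,A_i)$ and then $(Y_i^0,Y_{i1}^{s-1},X_i,A_i)$, together with the observation that $\mu_s(\theta_0)$ references only pre-$s$ outcomes (reaching into $Y_i^0$ when $s\leq p$, which the conditioning set covers), is exactly the intended adaptation.
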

\noindent Unsurprisingly, as in the AR(1) case, it becomes possible to construct iteratively more transition functions from those given in Lemma \ref{lemma_4} when at least $T=p+3$ periods are observed post initial condition. They are given in Corollary \ref{corollary_4}
below.
\begin{corollary} \label{corollary_4}
In model (\ref{ARp_logit_general}) with $T\geq p+3$, for all $t\in\{p+1,\ldots,T-1\}$ and collection of ordered indices $s_{1}^J$ with $J\geq 2$ satisfying 
$ t-p\geq  s_1>\ldots>s_J\geq 1$, and for all $y_{1}^p\in \mathcal{Y}^p$, let
\begin{align*}
    \zeta_{\theta}^{0|0,y_{2}^p}(Y^{t+1}_{it-(2p-1)},Y_{is_1-p}^{s_1},\ldots,Y_{is_{J}-p}^{s_J},X_i)&=(1-Y_{is_J})+\omega_{t,s_J}^{0|0,y_{2}^p}(\theta)Y_{is_J} \zeta_{\theta}^{0|0,y_{2}^p}(Y_{it-1}^{t+1},Y_{is_1-p}^{s_1},\ldots,Y_{is_{J-1}-p}^{s_{J-1}},X_i) \\
    \zeta_{\theta}^{1|1,y_{2}^p}(Y^{t+1}_{it-(2p-1)},Y_{is_1-p}^{s_1},\ldots,Y_{is_{J}-p}^{s_J},X_i)&=Y_{is_j}+\omega_{t,s_J}^{1|1,y_{2}^p}(\theta)(1-Y_{is_J}) \zeta_{\theta}^{1|1,y_{2}^p}(Y_{it-1}^{t+1},Y_{is_1-p}^{s_1},\ldots,Y_{is_{J-1}-p}^{s_{J-1}},X_i)
\end{align*}
with weights $\omega_{t,s_J}^{y_1|y_{1}^p}(\theta)$ defined as in Lemma \ref{lemma_4}. Then,
\begin{align*}
    \mathbb{E}\left[ \zeta_{\theta_0}^{y_1|y_1^p}(Y^{t+1}_{it-(2p-1)},Y_{is_1-p}^{s_1},\ldots,Y_{is_{J}-p}^{s_J},X_i)|Y_i^0,Y_{i1}^{s_J-1},X_i,A_i\right]&= \pi^{y_1|y_{1}^p}_t(A_i,X_{i})
\end{align*}
\end{corollary}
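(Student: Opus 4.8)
The plan is to prove the identity by induction on the number of indices $J$, using Lemma \ref{lemma_4} as the base case $J=1$ and the partial fraction identity of Lemma \ref{tech_lemma_1} (specialized to $K=1$) to close the inductive step. I would treat the case $y_1=0$ explicitly; the case $y_1=1$ is perfectly symmetric and uses the second identity of Lemma \ref{tech_lemma_1} in place of the first.

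For the inductive step, fix $y_1^p\in\mathcal{Y}^p$ with $y_1=0$ and abbreviate by $g$ the lower-level function $\zeta_{\theta_0}^{0|0,y_2^p}(Y_{it-1}^{t+1},Y_{is_1-p}^{s_1},\ldots,Y_{is_{J-1}-p}^{s_{J-1}},X_i)$. The truncated collection $s_1>\ldots>s_{J-1}$ still satisfies $t-p\geq s_1$ and $s_{J-1}\geq 1$, so the induction hypothesis (or Lemma \ref{lemma_4} when $J-1=1$) applies and gives $\mathbb{E}[g\mid Y_i^0,Y_{i1}^{s_{J-1}-1},X_i,A_i]=\pi_t^{0|0,y_2^p}(A_i,X_i)$. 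The only genuine subtlety is that the blocks $Y_{is_\ell-p}^{s_\ell}$ each span $p+1$ periods while the indices need not be $p$-separated, so $g$ may itself depend on $Y_{is_J}$; this is exactly what forces a two-step conditioning argument rather than a naive factorization.

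Conditioning throughout on $(Y_i^0,Y_{i1}^{s_J-1},X_i,A_i)$, I would first observe that the weight $\omega_{t,s_J}^{0|0,y_2^p}(\theta_0)$ is measurable with respect to this information set, since $\mu_{s_J}(\theta_0)$ depends only on $X_{is_J}$ and $Y_{is_J-1},\ldots,Y_{is_J-p}$, all contained in $Y_i^0,Y_{i1}^{s_J-1}$. The Markov transition probability of the model yields $\mathbb{E}[1-Y_{is_J}\mid Y_i^0,Y_{i1}^{s_J-1},X_i,A_i]=(1+e^{\mu_{s_J}(\theta_0)+A_i})^{-1}$. For the cross term I would insert the intermediate information set $(Y_i^0,Y_{i1}^{s_J},X_i,A_i)$: because $s_J\leq s_{J-1}-1$, this set is coarser than the one in the induction hypothesis, so the tower property gives $\mathbb{E}[g\mid Y_i^0,Y_{i1}^{s_J},X_i,A_i]=\pi_t^{0|0,y_2^p}(A_i,X_i)$ regardless of whether $g$ depends on $Y_{is_J}$; since $Y_{is_J}$ is now measurable, a second application of the tower property down to $Y_{i1}^{s_J-1}$ gives $\mathbb{E}[Y_{is_J}g\mid Y_i^0,Y_{i1}^{s_J-1},X_i,A_i]=\pi_t^{0|0,y_2^p}(A_i,X_i)\,e^{\mu_{s_J}(\theta_0)+A_i}/(1+e^{\mu_{s_J}(\theta_0)+A_i})$.

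Collecting terms and writing $\kappa=\kappa_t^{0|0,y_2^p}(\theta_0)$, $\mu=\mu_{s_J}(\theta_0)$, together with $\pi_t^{0|0,y_2^p}(A_i,X_i)=(1+e^{\kappa+A_i})^{-1}$ and $\omega_{t,s_J}^{0|0,y_2^p}(\theta_0)=1-e^{\kappa-\mu}$, the conditional expectation of the level-$J$ function reduces to
$$\frac{1}{1+e^{\mu+A_i}}+(1-e^{\kappa-\mu})\frac{e^{\mu+A_i}}{(1+e^{\mu+A_i})(1+e^{\kappa+A_i})},$$
which is precisely the left-hand side of the first identity of Lemma \ref{tech_lemma_1} with $K=1$, $a_1=A_i$, $v_1=\mu$, $u_1=\kappa$, and therefore equals $(1+e^{\kappa+A_i})^{-1}=\pi_t^{0|0,y_2^p}(A_i,X_i)$. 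This closes the induction. The main obstacle is the bookkeeping around the overlapping conditioning blocks: once the two-level tower argument (condition first on $Y_{i1}^{s_J}$, then on $Y_{i1}^{s_J-1}$) is set up correctly, the remainder is just the $K=1$ partial fraction identity already established.
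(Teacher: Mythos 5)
Your proof is correct and follows essentially the same route as the paper: the paper proves Lemma \ref{lemma_3} explicitly and leaves Corollary \ref{corollary_4} to the reader with the remark that it follows the same logic, which is exactly what your induction on $J$ implements --- measurability of the weight $\omega_{t,s_J}^{y_1|y_1^p}(\theta_0)$, a two-level application of the tower property (the paper's proof of Lemma \ref{lemma_3} uses the same device, conditioning on the finer set $Y_{i1}^{t-1}$ where you condition on $Y_{i1}^{s_J}$; both are valid), and the $K=1$ partial fraction identity of Lemma \ref{tech_lemma_1}. Your explicit handling of the possible overlap of the blocks $Y_{is_\ell-p}^{s_\ell}$ (so that the lower-level $\zeta$ may depend on $Y_{is_J}$) is a point the paper glosses over, and your argument resolves it correctly.
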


\indent \textbf{Step 2)}. Provided that $T\geq p+2$, it is clear that the difference between any two distinct transition functions associated to the same transition probability in $t\in\{p+1,\ldots,T-1\}$ will yield a valid moment function.  Proposition \ref{proposition_3} hereinbelow presents one set of valid moment functions that generalize those obtained previously for the one lag case.
\begin{proposition}\label{proposition_3}
In model (\ref{ARp_logit_general}) \\
if $T\geq p+2$,
for all $t\in\{p+1,\ldots,T-1\}$, $s\in\{1,\ldots,t-p\}$ and
$y_{1}^p\in \mathcal{Y}^p$, let
\begin{align*}
    \psi_{\theta}^{y_1|y_{1}^p}(Y^{t+1}_{it-(2p-1)},Y_{is-p}^s,X_i)&=\phi_{\theta}^{y_1|y_{1}^{p}}(Y^{t+1}_{it-(2p-1)},X_i)-\zeta_{\theta}^{y_1|y_{1}^p}(Y^{t+1}_{it-(2p-1)},Y_{is-p}^{s},X_i), 
\end{align*}
if $T\geq p+3$, for all $t\in\{p+1,\ldots,T-1\}$ and collection of ordered indices $s_{1}^J$ with $J\geq 2$ satisfying
$ t-p\geq  s_1>\ldots>s_J\geq 1$, and for all $y_{1}^p\in \mathcal{Y}^p$, let
\begin{align*}
    \psi_{\theta}^{y_1|y_{1}^p}(Y^{t+1}_{it-(2p-1)},Y_{is_1-p}^{s_1},\ldots,Y_{is_{J}-p}^{s_J},X_i)&=\phi_{\theta}^{y_1|y_{1}^{p}}(Y^{t+1}_{it-(2p-1)},X_i)-\zeta_{\theta}^{y_1|y_{1}^p}(Y^{t+1}_{it-(2p-1)},Y_{is_1-p}^{s_1},\ldots,Y_{is_{J}-p}^{s_J},X_i)
\end{align*}
Then,
\begin{align*}
   &\mathbb{E}\left[ \psi_{\theta_0}^{y_1|y_{1}^p}(Y^{t+1}_{it-(2p-1)},Y_{is-p}^s,X_i)|Y_i^0,Y_{i1}^{s-1},X_i,A_i\right]=0 \\
    &\mathbb{E}\left[\psi_{\theta_0}^{y_1|y_{1}^p}(Y^{t+1}_{it-(2p-1)},Y_{is_1-p}^{s_1},\ldots,Y_{is_{J}-p}^{s_J},X_i)|Y_i^0,Y_{i1}^{s_J-1},X_i,A_i\right]=0
\end{align*}
\end{proposition}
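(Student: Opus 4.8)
The plan is to observe that Proposition \ref{proposition_3} is an almost immediate consequence of the transition-function identities already in hand, combined with the law of iterated expectations. The essential point is that both terms in the difference $\psi_{\theta}^{y_1|y_{1}^p}=\phi_{\theta}^{y_1|y_{1}^p}-\zeta_{\theta}^{y_1|y_{1}^p}$ are transition functions \emph{for the same transition probability} $\pi_t^{y_1|y_{1}^p}(A_i,X_i)$: Theorem \ref{theorem_ARp_transit} delivers this for $\phi_{\theta_0}^{y_1|y_{1}^p}$, while Lemma \ref{lemma_4} (for the case $J=1$) and Corollary \ref{corollary_4} (for $J\geq 2$) deliver it for $\zeta_{\theta_0}^{y_1|y_{1}^p}$. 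Since the two pieces integrate to the same object, their difference ought to integrate to zero; the only care needed is that the two defining identities condition on different information sets.

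First I would reconcile these conditioning sets. The identity for $\zeta_{\theta_0}^{y_1|y_{1}^p}$ in Lemma \ref{lemma_4} already conditions on exactly the target set, giving $\mathbb{E}[\zeta_{\theta_0}^{y_1|y_{1}^p}\,|\,Y_i^0,Y_{i1}^{s-1},X_i,A_i]=\pi_t^{y_1|y_{1}^p}(A_i,X_i)$. By contrast, Theorem \ref{theorem_ARp_transit} conditions $\phi_{\theta_0}^{y_1|y_{1}^p}$ on the richer set $(Y_i^0,Y_{i1}^{t-p},X_i,A_i)$. Because $s\in\{1,\ldots,t-p\}$, we have $s-1<t-p$, so $Y_{i1}^{s-1}$ is a sub-vector of $Y_{i1}^{t-p}$ and the target information set is coarser. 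Applying the tower property,
\begin{align*}
\mathbb{E}\left[\phi_{\theta_0}^{y_1|y_{1}^p}\,|\,Y_i^0,Y_{i1}^{s-1},X_i,A_i\right]
&=\mathbb{E}\left[\,\mathbb{E}\left[\phi_{\theta_0}^{y_1|y_{1}^p}\,|\,Y_i^0,Y_{i1}^{t-p},X_i,A_i\right]\,\big|\,Y_i^0,Y_{i1}^{s-1},X_i,A_i\right]\\
&=\mathbb{E}\left[\pi_t^{y_1|y_{1}^p}(A_i,X_i)\,|\,Y_i^0,Y_{i1}^{s-1},X_i,A_i\right]=\pi_t^{y_1|y_{1}^p}(A_i,X_i),
\end{align*}
where the last equality holds because $\pi_t^{y_1|y_{1}^p}(A_i,X_i)$ is a deterministic function of $(A_i,X_i)$ and is therefore measurable with respect to the coarser conditioning set. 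Subtracting the two displays yields $\mathbb{E}[\psi_{\theta_0}^{y_1|y_{1}^p}\,|\,Y_i^0,Y_{i1}^{s-1},X_i,A_i]=0$, which is the first claim.

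For the second claim (the case $T\geq p+3$) I would repeat the identical argument with Corollary \ref{corollary_4} in place of Lemma \ref{lemma_4}: it gives $\mathbb{E}[\zeta_{\theta_0}^{y_1|y_{1}^p}\,|\,Y_i^0,Y_{i1}^{s_J-1},X_i,A_i]=\pi_t^{y_1|y_{1}^p}(A_i,X_i)$, and since $s_J\leq s_1\leq t-p$ we again have $s_J-1<t-p$, so the same tower-property reduction applies to $\phi_{\theta_0}^{y_1|y_{1}^p}$. Honestly, there is no deep obstacle at this level: granted the preceding results, the proposition is pure bookkeeping of conditioning sets. The real difficulty resides upstream, in verifying that $\phi$ and $\zeta$ are transition functions for the \emph{same} probability; those identities rest on the partial fraction decomposition of Lemma \ref{tech_lemma_1}, which I treat as already established. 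A secondary point worth checking explicitly is the measurability claim used above — namely that $\pi_t^{y_1|y_{1}^p}(A_i,X_i)$ depends on $X_i$ only through $X_{it+1}$, so that it genuinely is a function of the conditioning variables and passes through the outer expectation unchanged.
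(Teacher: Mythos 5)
Your proposal is correct and takes essentially the same route as the paper: the paper proves Proposition \ref{proposition_1} in detail via exactly this argument (both terms are transition functions for the same transition probability, and the law of iterated expectations reconciles the finer conditioning set of Theorem \ref{theorem_ARp_transit} with the coarser one of Lemma \ref{lemma_4} and Corollary \ref{corollary_4}) and explicitly states that Proposition \ref{proposition_3} follows by the identical strategy. Your closing measurability check is fine but immediate, since both $A_i$ and the full vector $X_i$ already appear among the conditioning variables.
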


\indent This family of moment functions features precisely $2^T-(T+1-p)2^{p}$ distinct elements for any initial condition. Indeed, fix $Y_{i}^0$ and a $p$-vector $y_{1}^p\in \{0,1\}^p$. Then, for a given time period $t\in\{p+1,\ldots,T-1\}$, there are $\binom{t-p}{1}$ moments of the form $\psi_{\theta}^{y_1|y_{1}^p}(Y^{t+1}_{it-(2p-1)},Y_{is-p}^{s},X_i)$ corresponding to choices of $s\in \{1,\ldots,t-p\}$. Moreover, by choosing any feasible sequence $s_{1}^J$, $J\geq 2$, verifying $t-p\geq  s_1>\ldots>s_J \geq 1$ we produce another $\sum_{l=2}^{t-p}\binom{t-p}{l}$ moment functions of the form $\psi_{\theta}^{y_1|y_{1}^p}(Y^{t+1}_{it-(2p-1)},Y_{is_1-p}^{s_1},\ldots,Y_{is_{J}-p}^{s_J},X_i)$. In total, for period $t$, we count :
\begin{align*}
    \sum_{l=1}^{t-p}\binom{t-p}{l}=2^{t-p}-1
\end{align*}
valid moments. Now, summing over all possible values for $t\in\{p+1,\ldots,T-1\}$ and multiplying by the number of distinct values for $y_{1}^p$, namely $2^p$, we get:
\begin{align*}
    2^p\sum_{t=p+1}^{T-1}\sum_{l=1}^{t-p}\binom{t-p}{l}=  2^p\sum_{t=p+1}^{T-1}(2^{t-p}-1)=2^p\left(2\frac{1-2^{T-p-1}}{1-2}-(T-p-1)\right)=2^{T}-(T+1-p)2^p
\end{align*}
Numerical experimentation for various values of $T$ in the AR(1) and AR(2) cases  suggest that the moment functions of Proposition \ref{proposition_3} are effectively linearly independent. Therefore, Theorem \ref{theorem_nummoments_ARp} implies that they constitute a complete family of moment functions for AR($p$) models. From a practical standpoint, this shows that functional differencing at least in panel data logit models can be broken down into a series of equivalent simpler subproblems period by period that find all moment equality restrictions. Our procedure can be advantageous in sophisticated models with a few lags where an analysis of the full likelihood, a high dimensional object, can prove difficult.

\section{Simulation Experiments} \label{Section_MonteCarlo}
In this section, we report the results of a small set of simulations designed to assess the finite sample performance of GMM estimators based on our moment conditions.  

\subsection{Monte Carlo for an AR(3) logit model}

\noindent For our first example, we consider an AR(3) logit model with $T=5$ periods (i.e 8 periods in total with the initial condition) and a single exogenous covariate. We set the common parameters to $\gamma_{01}=1.0$, $ \gamma_{02}=0.5, \gamma_{03}=0.25, \beta_0=0.5$ and use the following generative model in the spirit of \cite{honore2000panel}:
\begin{align*}
    Y_{i-2}&=\mathds{1}\{X_{i-2}'\beta_0+A_i -\epsilon_{i-2}\geq 0\} \\
    Y_{i-1}&=\mathds{1}\{\gamma_{01}Y_{i-2}+X_{i-1}'\beta_0+A_i -\epsilon_{i-1}\geq 0\} \\
    Y_{i0}&=\mathds{1}\{\gamma_{01}Y_{i-1}+\gamma_{02}Y_{i-2}+X_{i0}'\beta_0+A_i -\epsilon_{i0}\geq 0\} \\
    Y_{it}&=\mathds{1}\left\{\gamma_{01}Y_{it-1}+\gamma_{02}Y_{it-2}+\gamma_{03}Y_{it-3}+X_{it}'\beta_0+A_i-\epsilon_{it}\geq 0\right\}, \quad t= 1,\ldots, 5
 \end{align*}
The disturbances $\epsilon_{it}$ are iid standard logistic over time, $X_{it}$ is iid $\mathcal{N}(0,1)$ and the fixed effects are computed as  $A_i=\frac{1}{\sqrt{8}}\cleansum_{t=-2}^5 X_{it}$. To evaluate the performance of the estimators described below, we simulate data for four sample sizes : 500, 2000, 8000, 16000, and perform 1000 Monte Carlo replications for each design. \\
\indent For $T=5$, we know from Proposition \ref{proposition_3} that $8$ valid  moment functions are available, each stemming from the 8 possible transition probabilities of the model (there are really 16 transition probabilities in total but 8 are redundant since probabilities sum to one). We consider the interaction of all 8 valid moment functions with a constant, the 3 initial conditions $Y_{i-2},Y_{i-1},Y_{i0}$ and the covariates $X_{it}$ in each period $t\in\{1,\ldots,5\}$ to construct the $72\times 1$ moment vector:
\begin{align*}
    m_{\theta}(Y_{i},Y_{i}^0,X_i)=
    \begin{pmatrix}
    &\psi_{\theta}^{0|0,0,0}(Y^{5}_{i-1},Y_{i-2}^1,X_i) \\
    &\psi_{\theta}^{0|0,0,1}(Y^{5}_{i-1},Y_{i-2}^1,X_i) \\
    &\psi_{\theta}^{0|0,1,0}(Y^{5}_{i-1},Y_{i-2}^1,X_i) \\
    &\psi_{\theta}^{0|0,1,1}(Y^{5}_{i-1},Y_{i-2}^1,X_i) \\
    &\psi_{\theta}^{1|1,0,0}(Y^{5}_{i-1},Y_{i-2}^1,X_i) \\
    &\psi_{\theta}^{1|1,0,1}(Y^{5}_{i-1},Y_{i-2}^1,X_i) \\
    &\psi_{\theta}^{1|1,1,0}(Y^{5}_{i-1},Y_{i-2}^1,X_i) \\
    &\psi_{\theta}^{1|1,1,1}(Y^{5}_{i-1},Y_{i-2}^1,X_i)
    \end{pmatrix} \otimes 
    \begin{pmatrix}
    &1 \\
    &Y_{i-2} \\
    &Y_{i-1} \\
    &Y_{i0} \\
    &X_{i1}^{5'} 
\end{pmatrix}
\end{align*}
where $\otimes$ denotes the standard Kronecker product. The choice of this particular set of instruments is of course arbitrary and only motivated by simplicity. We also consider a rescaled version of $ m_{\theta}(Y_{i},Y_{i}^0,X_i)$ that we denote $ \widetilde{m_{\theta}}(Y_{i},Y_{i}^0,X_i)$ where each of the 8 valid moment functions are appropriately rescaled so that $\forall y_{1}^3\in\{0,1\}^3$, $\sup_{X_i,Y_i,\theta} \abs{\psi_{\theta}^{y_1|y_1,y_2,y_3}(Y^{5}_{i-1},Y_{i-2}^1,X_i)}<\infty$. We do so by normalizing  $\psi_{\theta}^{y_1|y_1,y_2,y_3}(Y^{5}_{i-1},Y_{i-2}^1,X_i)$ by the sum of the absolute values of all unique values it can take as a function over choice histories $Y_{i1}^5$. The rationale for normalizing the moments originates from \cite{honore2020moment} who presented numerical evidence that a rescaling of this kind improved the finite sample performance of their estimators in the one and two lags cases. Given, $m_{\theta}(Y_{i},Y_{i}^0,X_i)$ and $ \widetilde{m_{\theta}}(Y_{i},Y_{i}^0,X_i)$, we study the properties of two simple GMM estimators:
\begin{align*}
    \hat{\theta}^{a}&=\argmax_{\theta \in \mathbb{R}^{4}} \left(\frac{1}{N}\sum_{i=1}^N  m_{\theta}(Y_{i},Y_{i}^0,X_i)\right)'\left(\frac{1}{N}\sum_{i=1}^N  m_{\theta}(Y_{i},Y_{i}^0,X_i)\right) \\
    \hat{\theta}^{b}&=\argmax_{\theta \in \mathbb{R}^{4}} \left(\frac{1}{N}\sum_{i=1}^N  \widetilde{m_{\theta}}(Y_{i},Y_{i}^0,X_i)\right)'\left(\frac{1}{N}\sum_{i=1}^N \widetilde{m_{\theta}}(Y_{i},Y_{i}^0,X_i)\right) 
\end{align*}
which both put equal weight on their individual components (i.e the weight matrix is the identity)\footnote{In a previous version of this paper we also considered a two-step \say{rescaled} estimator that uses a diagonal weight matrix with the inverse variance of each component in the spirit of \cite{honore2020moment}. It performs very similarly to the equally-weighted estimator $\hat{\theta}^{b}$.}. Under standard regularity conditions, $\hat{\theta}^{a},\hat{\theta}^{b}$ should be consistent and asymptotically normal.
\begin{table}[!htb]
\caption{Performance of GMM estimators for the AR(3)}  \label{table_1}
\centerline{
\begin{tabular}{llccccccccccc} \toprule  \toprule  
     &  &    $\hat{\gamma_1}^{a}$&$\hat{\gamma_1}^{b}$ &  & $\hat{\gamma_2}^{a}$&$\hat{\gamma_2}^{b}$ & & $\hat{\gamma_3}^{a}$&$\hat{\gamma_3}^{b}$ & & $\hat{\beta}^{a}$&$\hat{\beta}^{b}$ \\ 
\cline{3-4} \cline{6-7} \cline{9-10} \cline{12-13} 
 $N=500$ & \\ \cline{1-1} 
& Bias & -0.52 & -0.50 &  & -0.51 & -0.50 &  & -0.39 & -0.32 &  & -0.15 & 0.10 \\ 
 & MAE & 0.52 & 0.69 &  & 0.51 & 0.58 &  & 0.39 & 0.51 &  & 0.15 & 0.14 \\ 
 $N=2000$ & \\ \cline{1-1} 
& Bias & -0.37 & -0.10 &  & -0.45 & -0.12 &  & -0.31 & -0.04 &  & -0.08 & 0.02 \\ 
 & MAE & 0.37 & 0.42 &  & 0.45 & 0.34 &  & 0.31 & 0.25 &  & 0.08 & 0.06 \\ 
 $N=8000$ & \\ \cline{1-1} 
& Bias & -0.24 & 0.04 &  & -0.32 & 0.01 &  & -0.21 & 0.01 &  & -0.04 & 0.00 \\ 
 & MAE & 0.24 & 0.17 &  & 0.32 & 0.15 &  & 0.21 & 0.11 &  & 0.04 & 0.03 \\ 
 $N=16000$ & \\ \cline{1-1} 
& Bias & -0.18 & 0.01 &  & -0.25 & 0.00 &  & -0.16 & 0.00 &  & -0.03 & 0.00 \\ 
 & MAE & 0.18 & 0.11 &  & 0.25 & 0.10 &  & 0.16 & 0.07 &  & 0.03 & 0.02 \\ 
 \bottomrule \bottomrule 
\end{tabular}
}
\vspace{0.5cm}
\textit{\footnotesize \textsc{Notes}: 
Bias and MAE stand for median bias and median absolute error respectively. Reported results are based on a 1000 replications of the DGP.}
\end{table}

\indent Table \ref{table_1} presents the median bias and median absolute errors of the two GMM estimators for each design $N\in\{500,2000,8000,16000\}$. Figure \ref{figure_montecarlo_AR3_T5}  plots their densities which as expected resemble gaussian distributions for the larger values of $N$. Interestingly, a first observation is that both estimators appear to suffer from a negative bias on the lag parameters at least up to $N=2000$. And while this bias effectively vanishes for the \say{rescaled} GMM estimators for the larger sample size $N\geq 8000$, it remains quite significant for all lag parameters and also the slope coefficient for the \say{unnormalized} estimator. This is evident from the sign of the bias in Table \ref{table_1} and from the fact that all green densities are to the left of the true parameters in Figure \ref{figure_montecarlo_AR3_T5}. This observation confirms the practical importance of normalizing all valid moment functions in binary response logit models to obtain precise estimates in small samples. Focusing on the \say{rescaled} estimator $\hat{\theta}^b$, we can see that it performs relatively well for $N\geq 8000$ with very little bias. This is corroborated in Figure \ref{figure_montecarlo_AR3_T5}: the blue densities are approximately centered at the true parameter values for $N\geq 8000$ . Estimates for the slope parameter $\beta$ are quite accurate even for $N=500$ but precise estimation of the transition parameters requires a larger sample size. In terms of median absolute bias, it is interesting to note a ranking on the precision of estimates of the transition parameters: the coefficient on the first lag is noisier than the coefficient on the second lag which itself is noisier than the coefficient on the third lag for each $N\in\{500,2000,8000,16000\}$. In an unreported set of simulations, we have found that this empirical pattern is robust to other choices of the population parameters and initial condition and also applies to the AR(2) model with a similar data generating process.
\begin{figure}[tbp] 
    \caption{Densities of GMM estimators for the AR(3) with one regressor} \label{figure_montecarlo_AR3_T5}
    \begin{center}
        \begin{tabular}{lcccc}
             &\multicolumn{1}{c}{$N=500$} & \multicolumn{1}{c}{$N=2000$} & \multicolumn{1}{c}{$N=8000$} & 
            \multicolumn{1}{c}{$N=16000$} \\
        \makebox[0pt][r]{\makebox[40pt]{\raisebox{55pt}{\rotatebox[origin=c]{360}{$\gamma_1$}}}} &\includegraphics[width=28mm, height=28mm]{./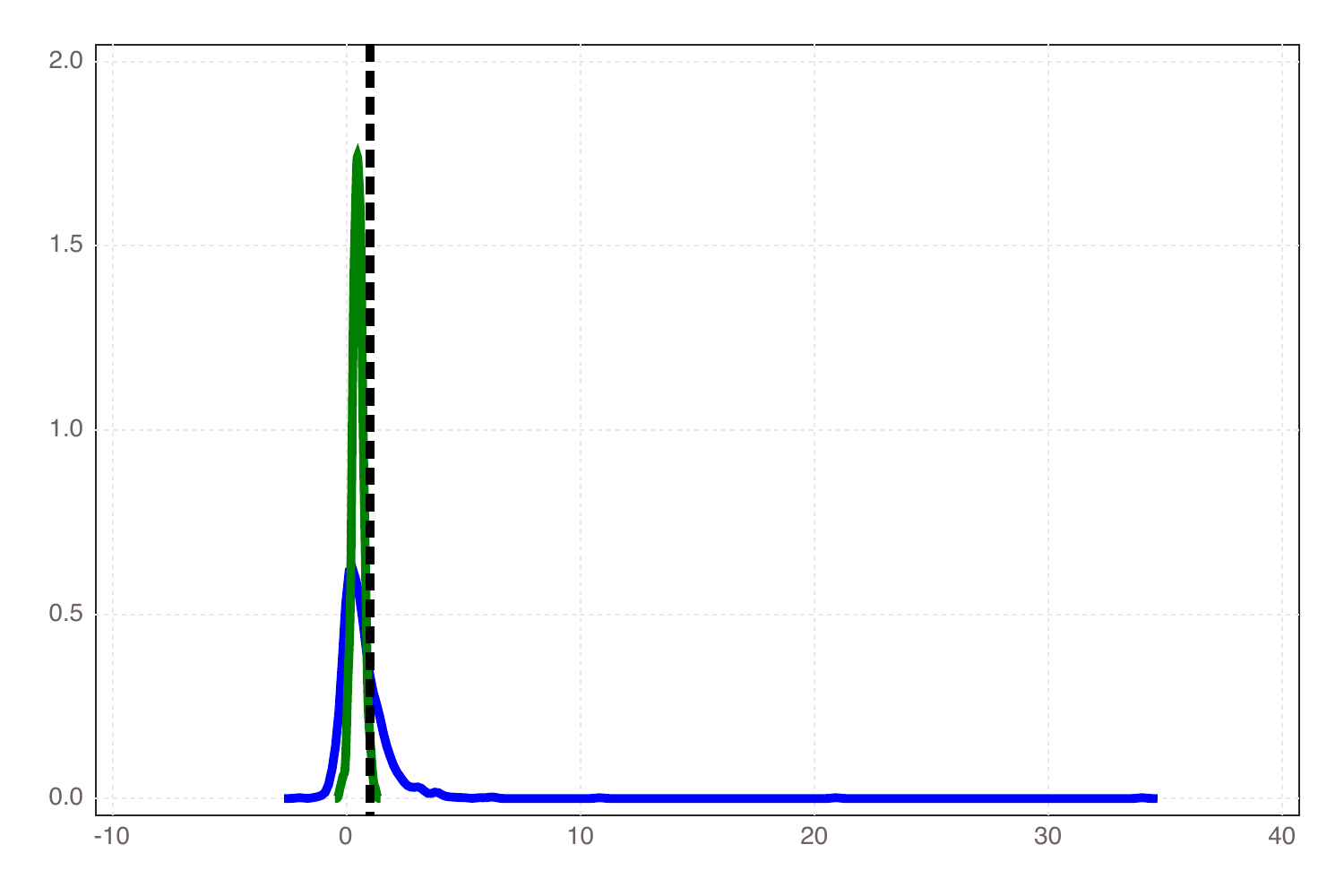} & \includegraphics[width=28mm, height=28mm]{./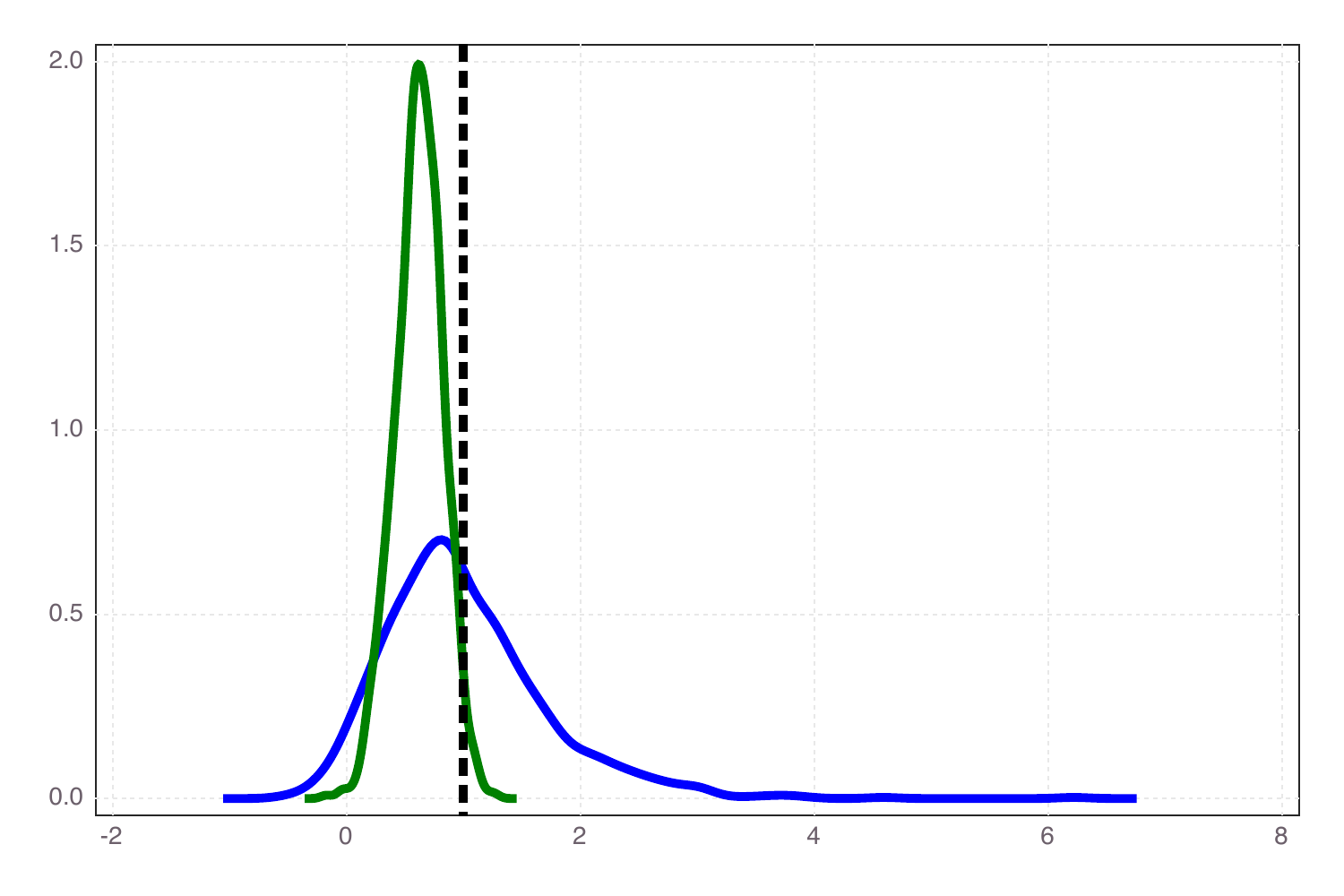} & \includegraphics[width=28mm, height=28mm]{./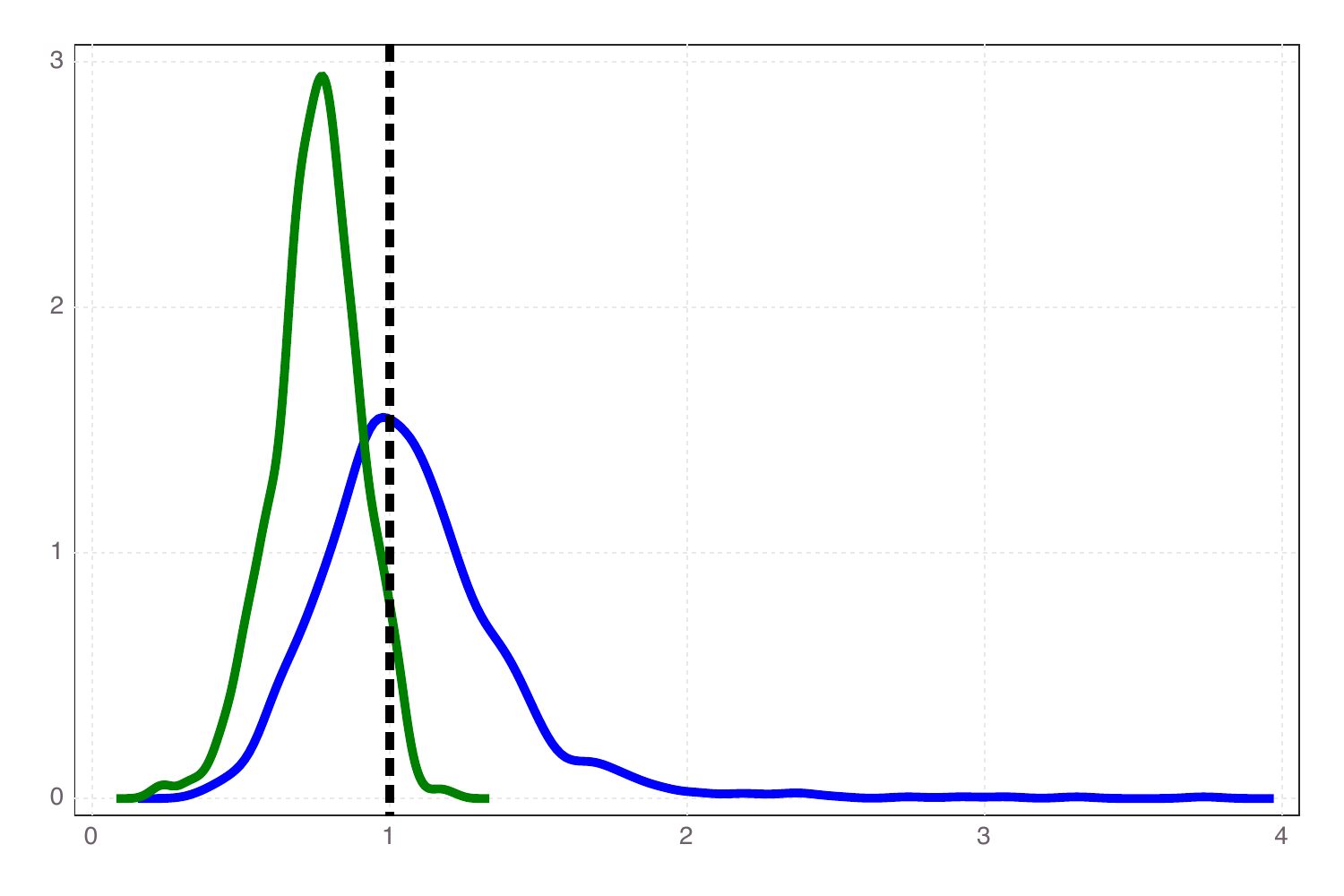} & \includegraphics[width=28mm, height=28mm]{./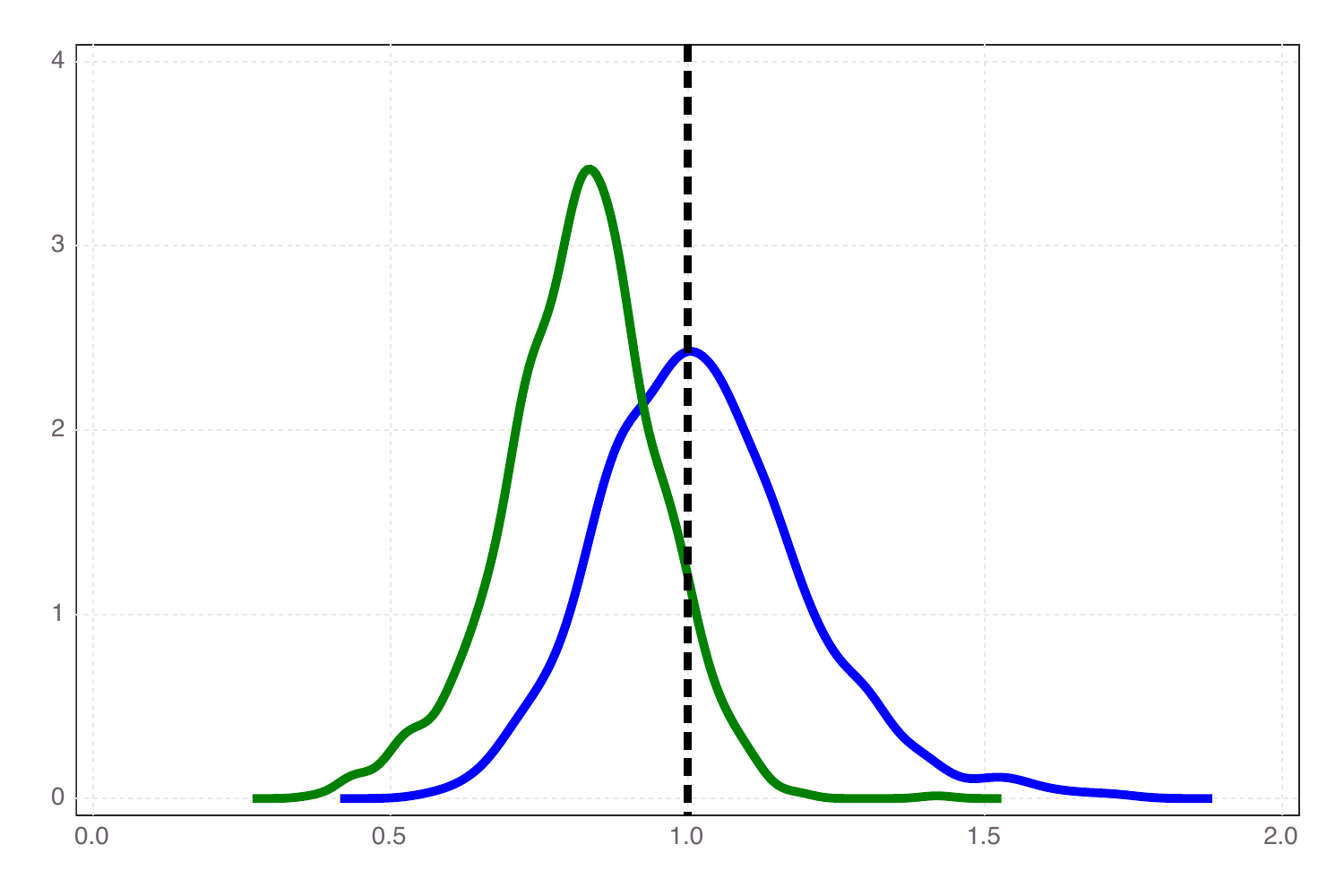} \\
         \makebox[0pt][r]{\makebox[40pt]{\raisebox{55pt}{\rotatebox[origin=c]{360}{$\gamma_2$}}}} &\includegraphics[width=28mm, height=28mm]{./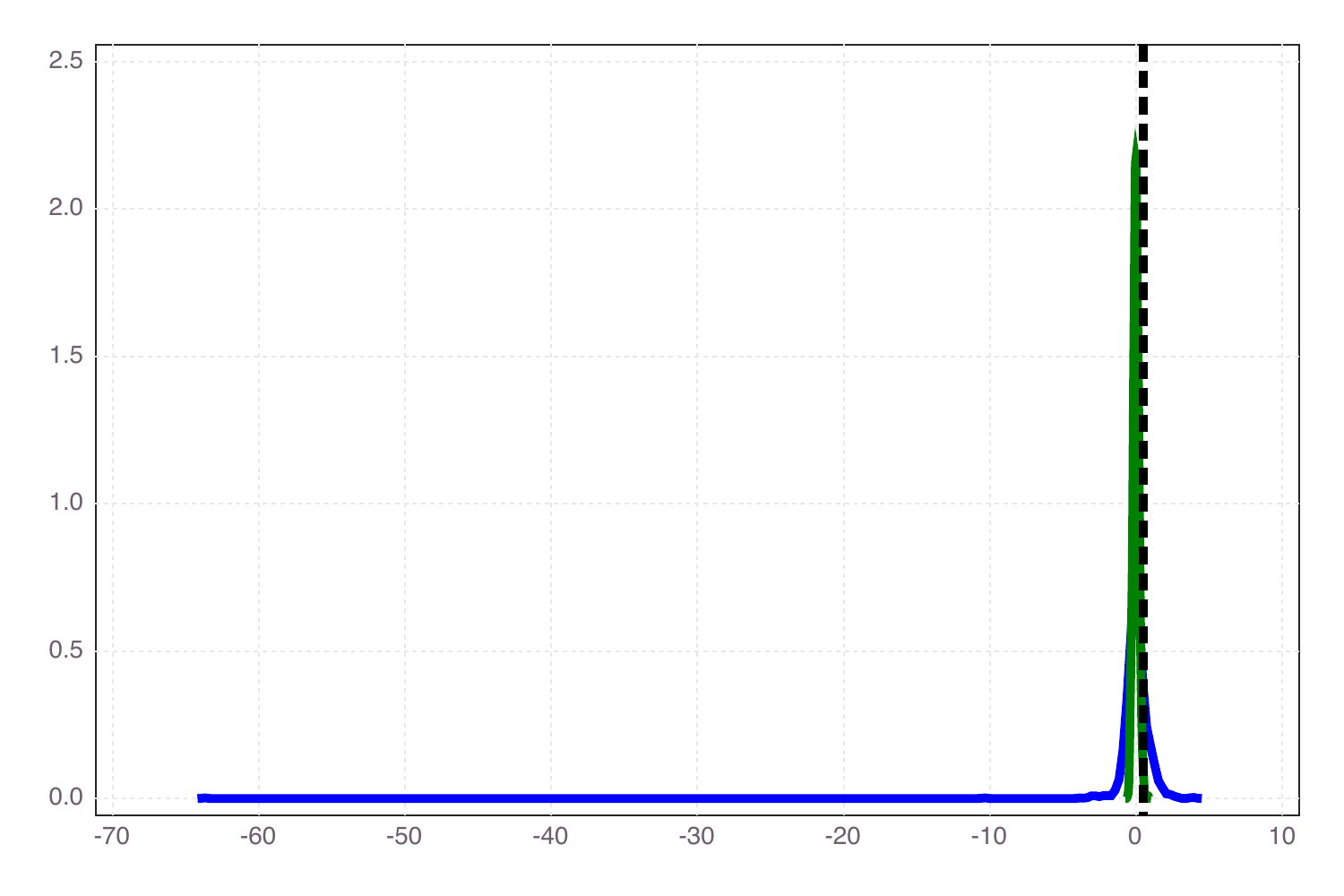} & \includegraphics[width=28mm, height=28mm]{./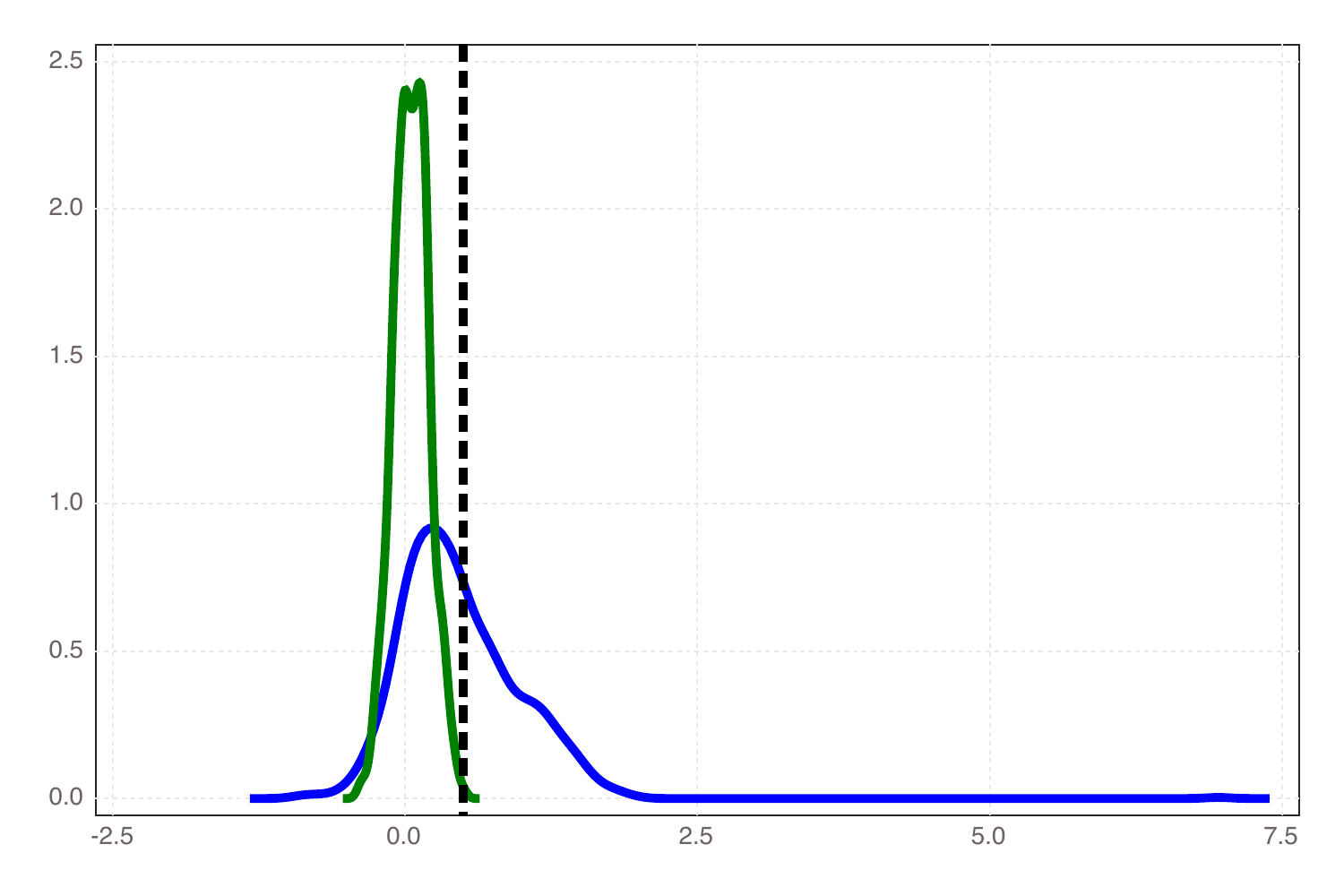} & \includegraphics[width=28mm, height=28mm]{./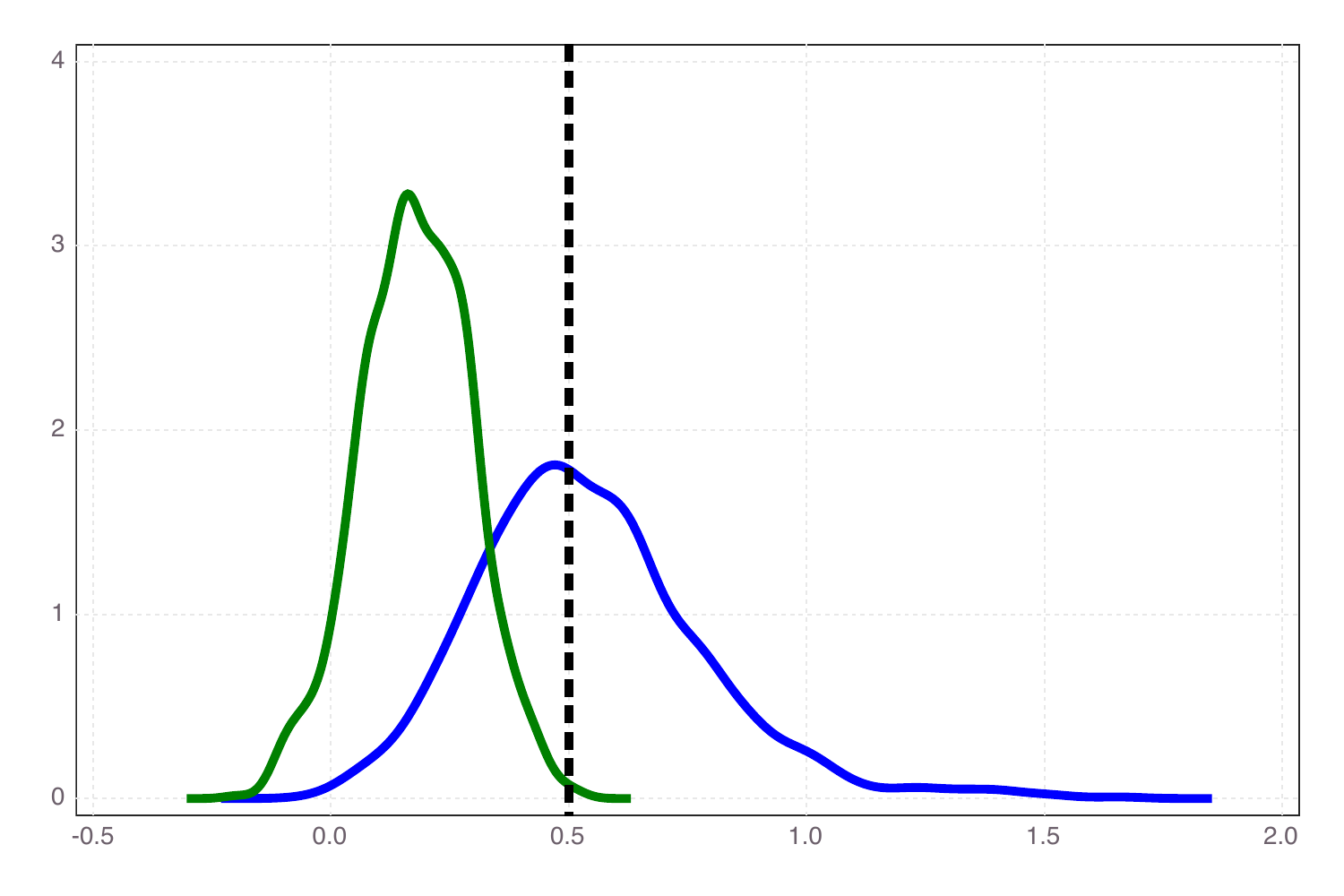} & \includegraphics[width=28mm, height=28mm]{./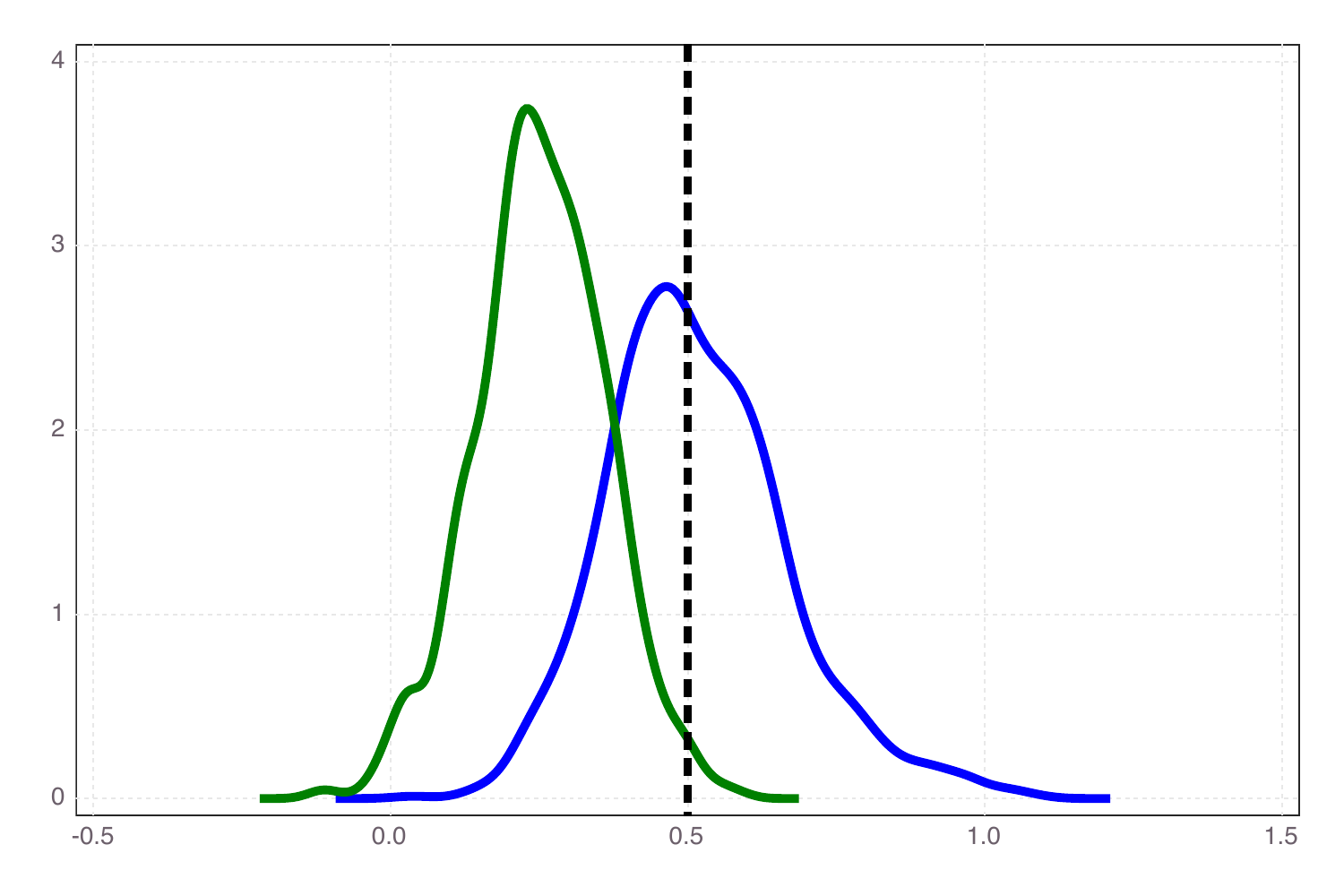} \\
         \makebox[0pt][r]{\makebox[40pt]{\raisebox{55pt}{\rotatebox[origin=c]{360}{$\gamma_3$}}}} &\includegraphics[width=28mm, height=28mm]{./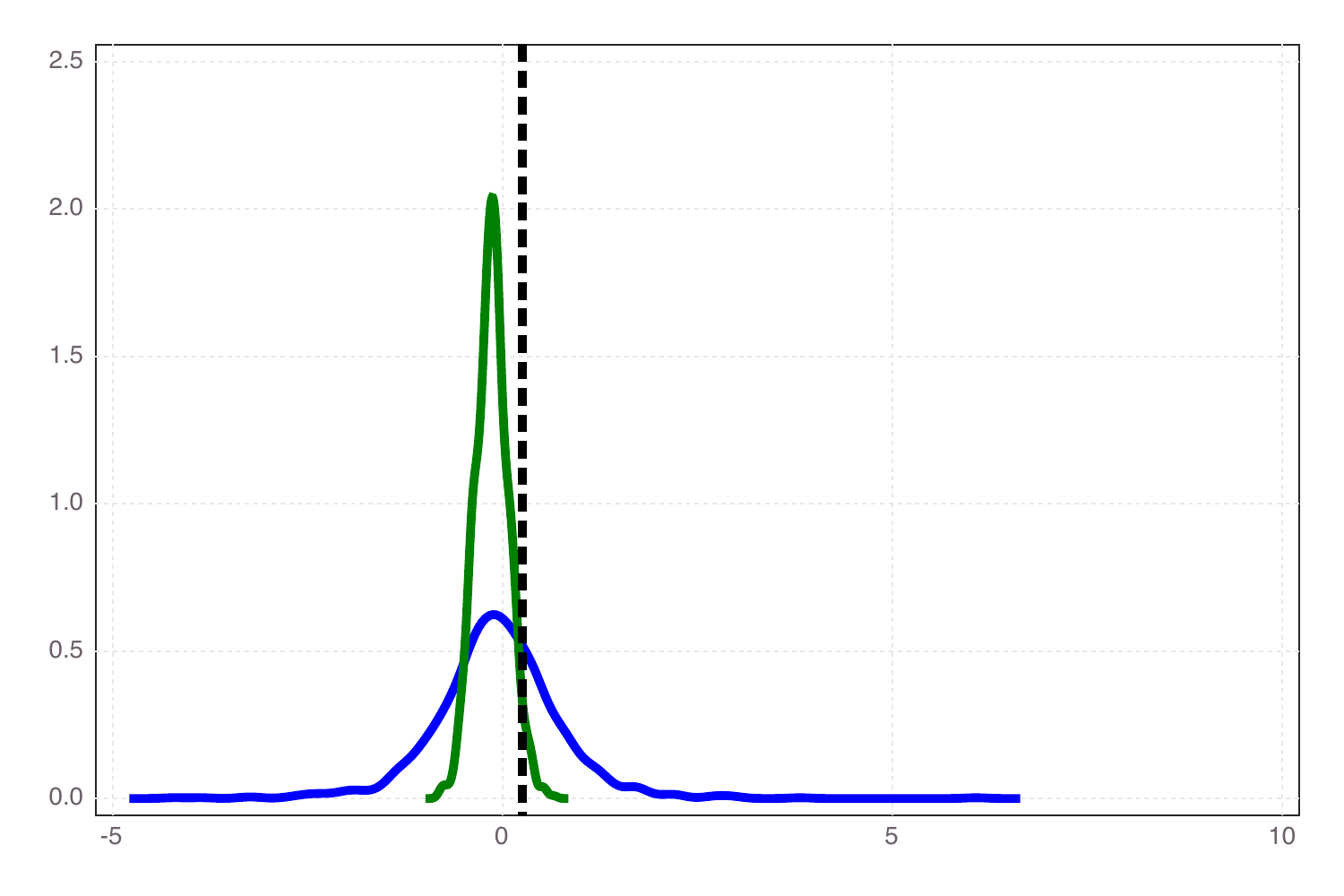} & \includegraphics[width=28mm, height=28mm]{./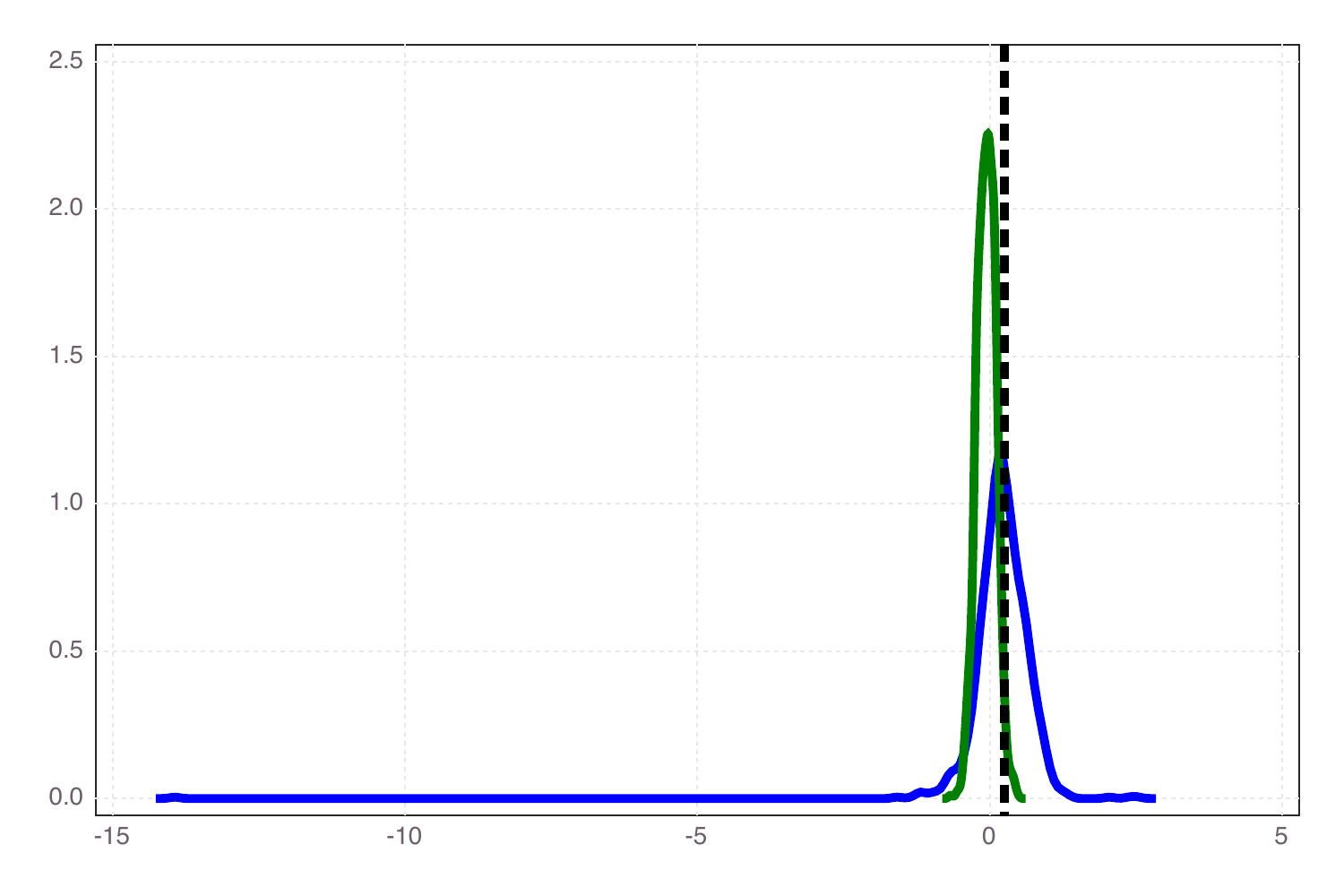} & \includegraphics[width=28mm, height=28mm]{./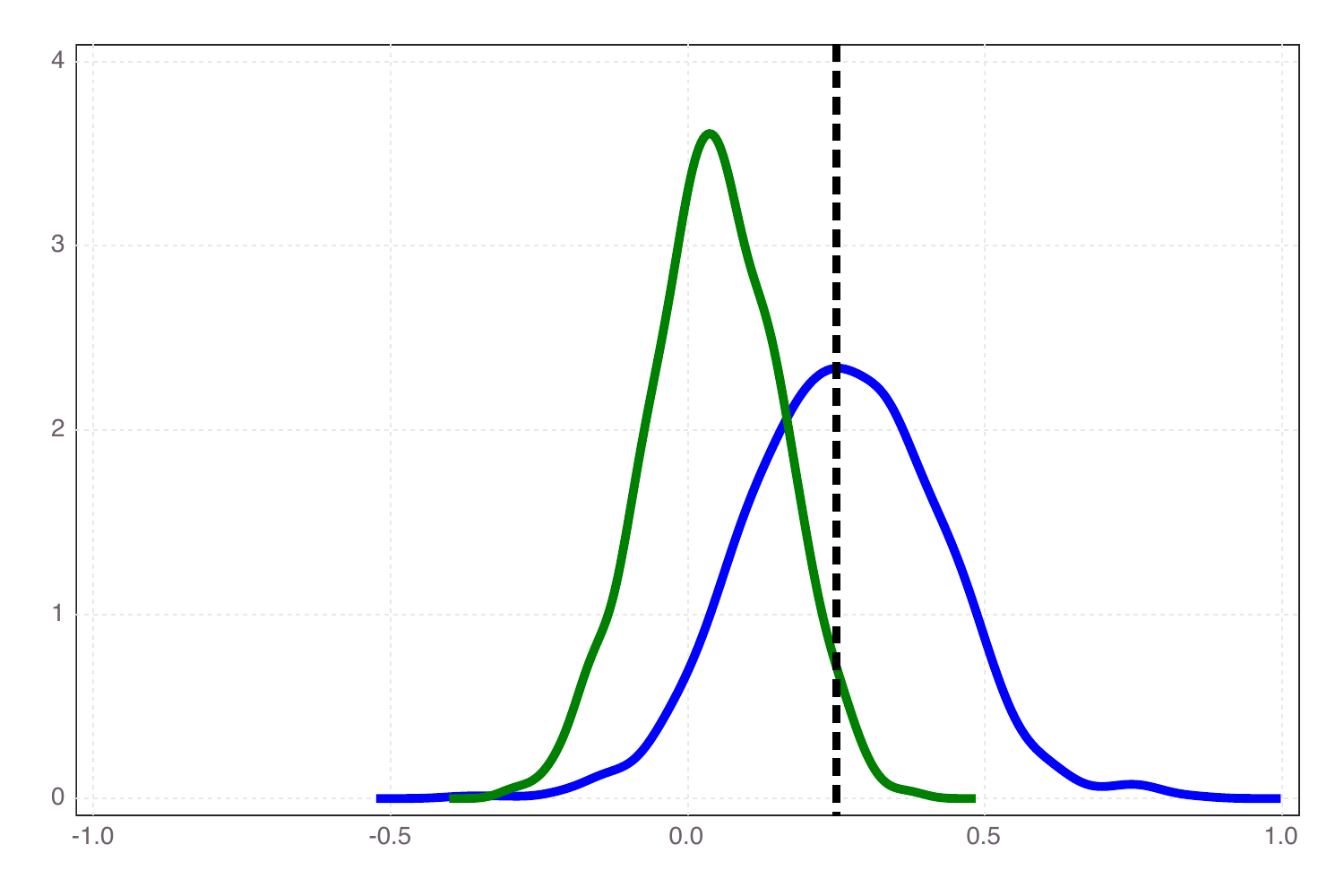} & \includegraphics[width=28mm, height=28mm]{./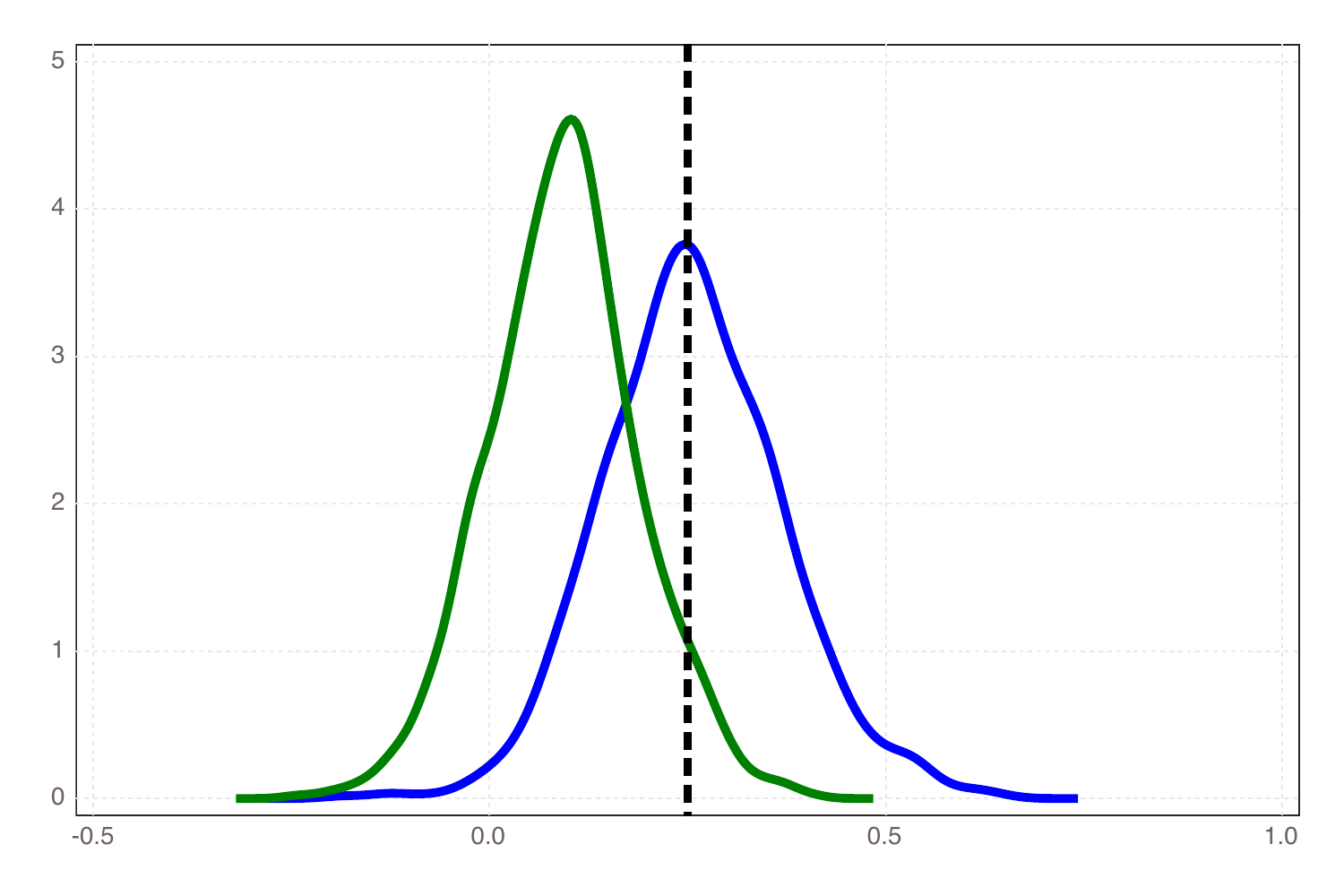} \\
         \makebox[0pt][r]{\makebox[40pt]{\raisebox{55pt}{\rotatebox[origin=c]{360}{$\beta$}}}} &\includegraphics[width=28mm, height=28mm]{./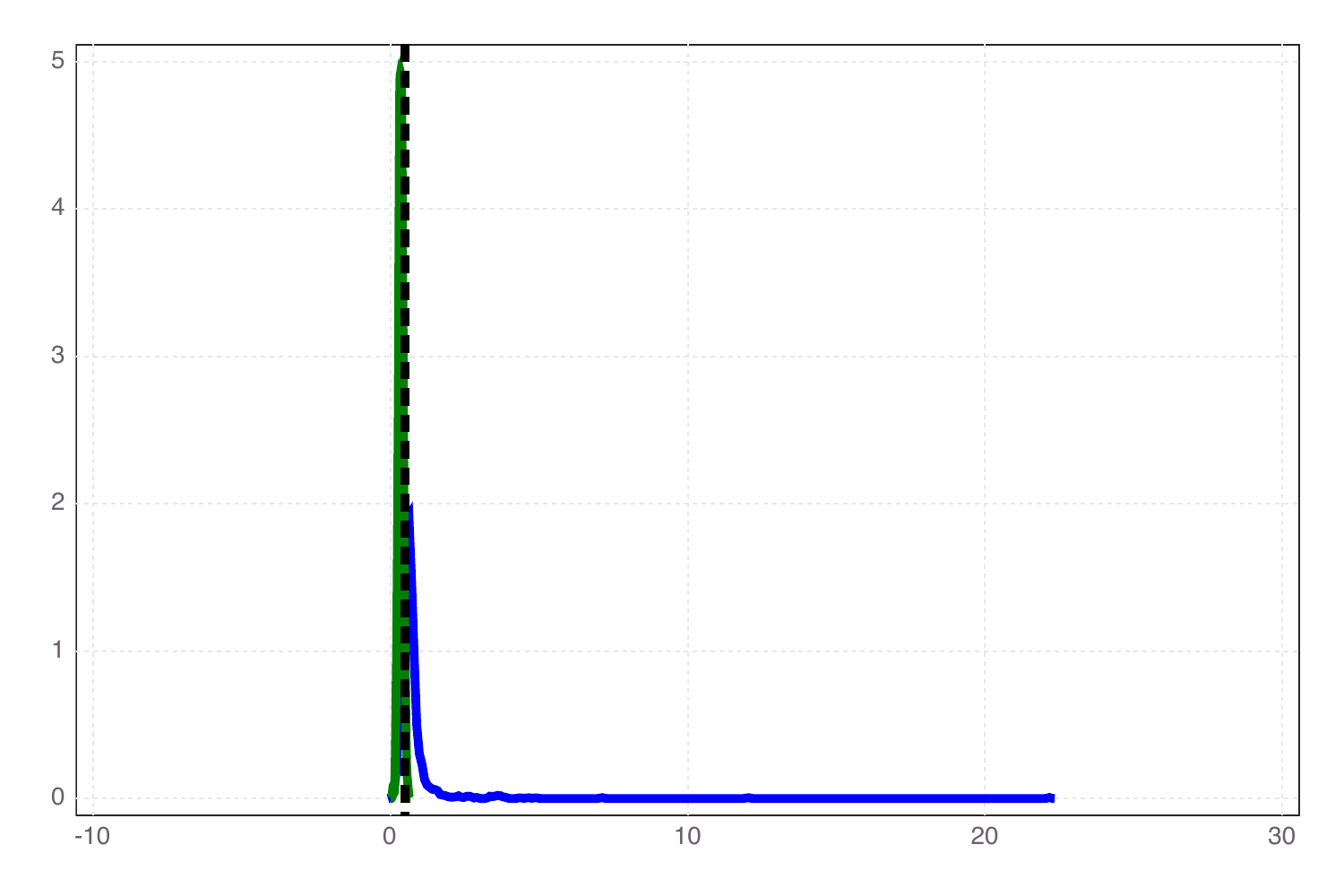} & \includegraphics[width=28mm, height=28mm]{./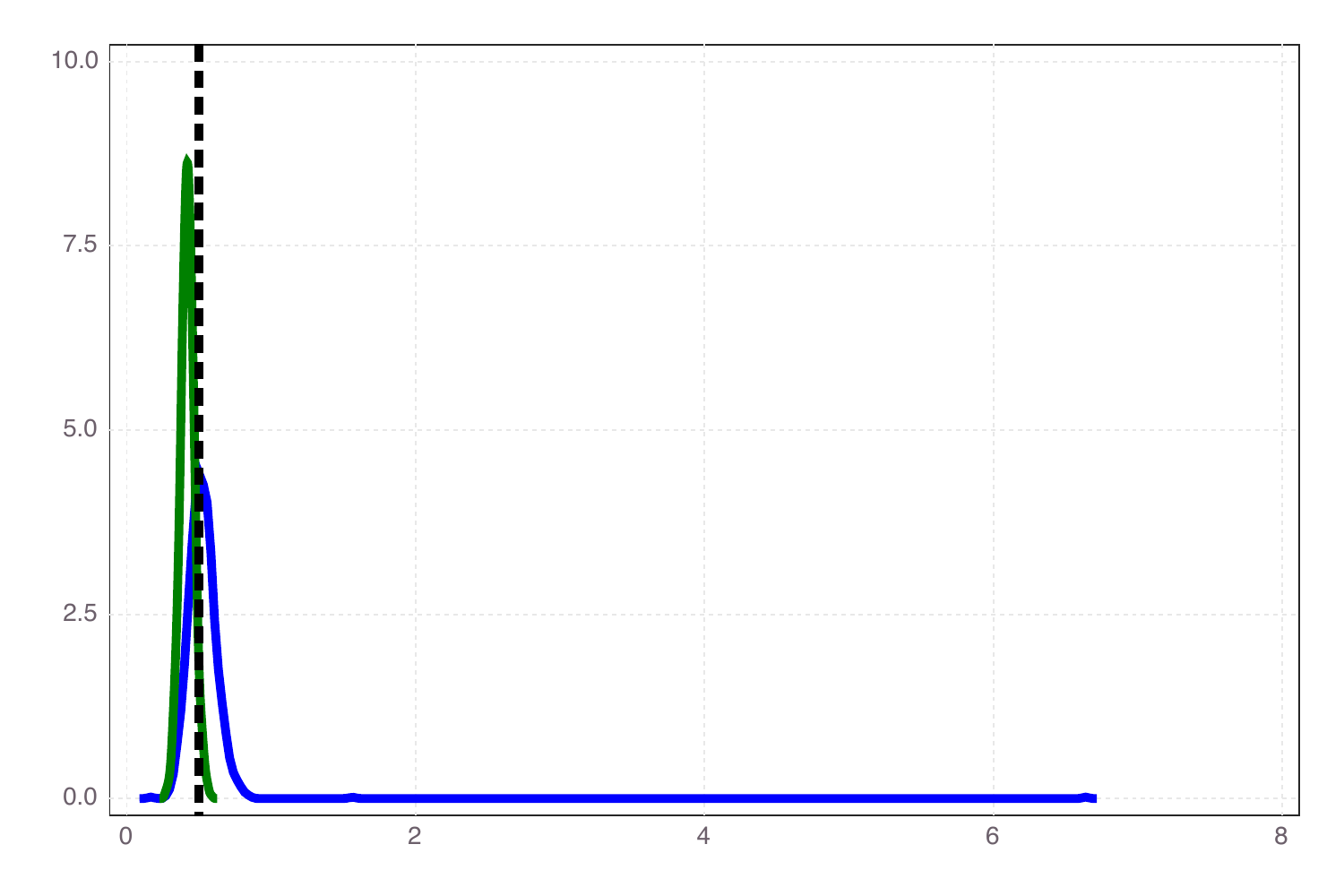} & \includegraphics[width=28mm, height=28mm]{./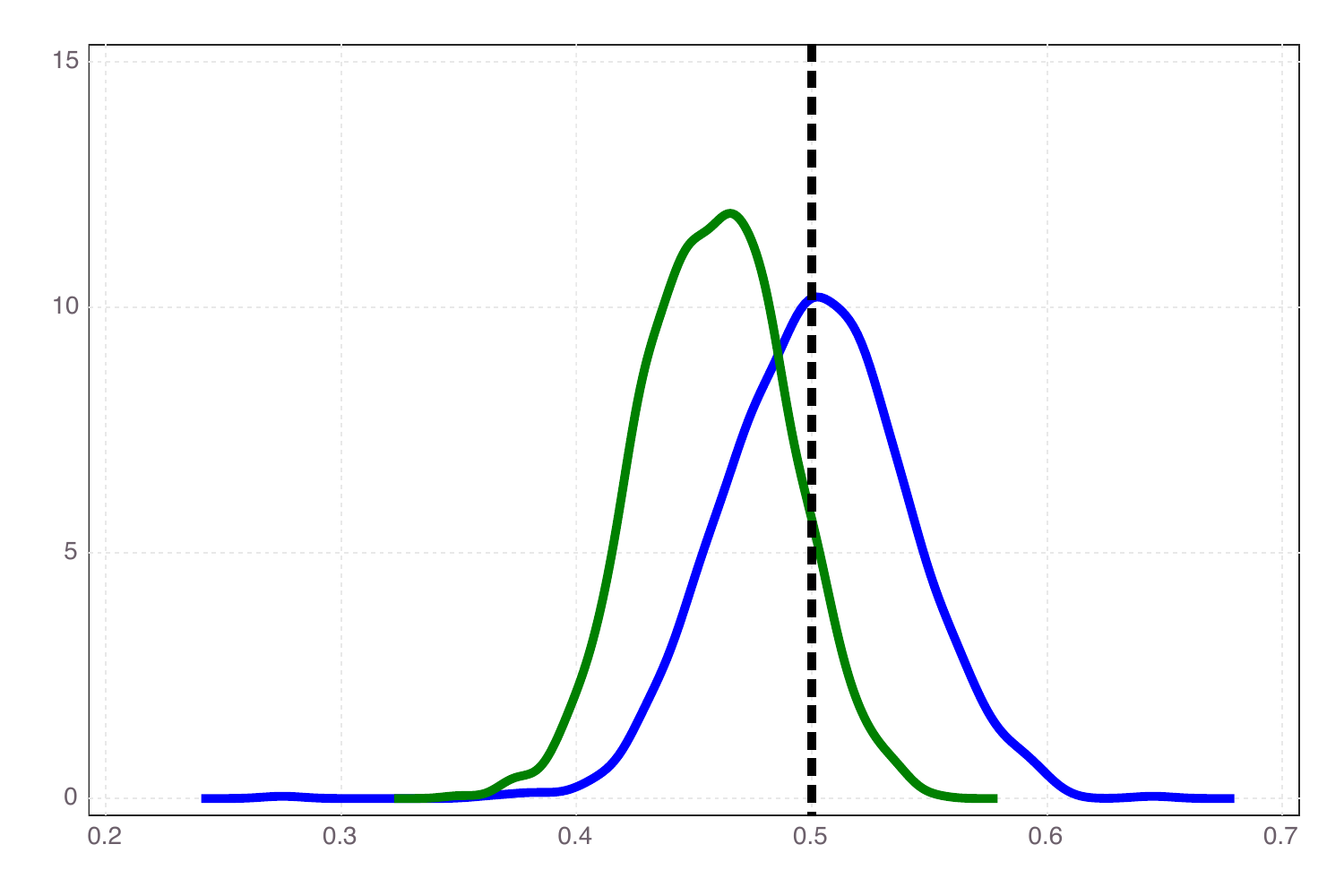} & \includegraphics[width=28mm, height=28mm]{./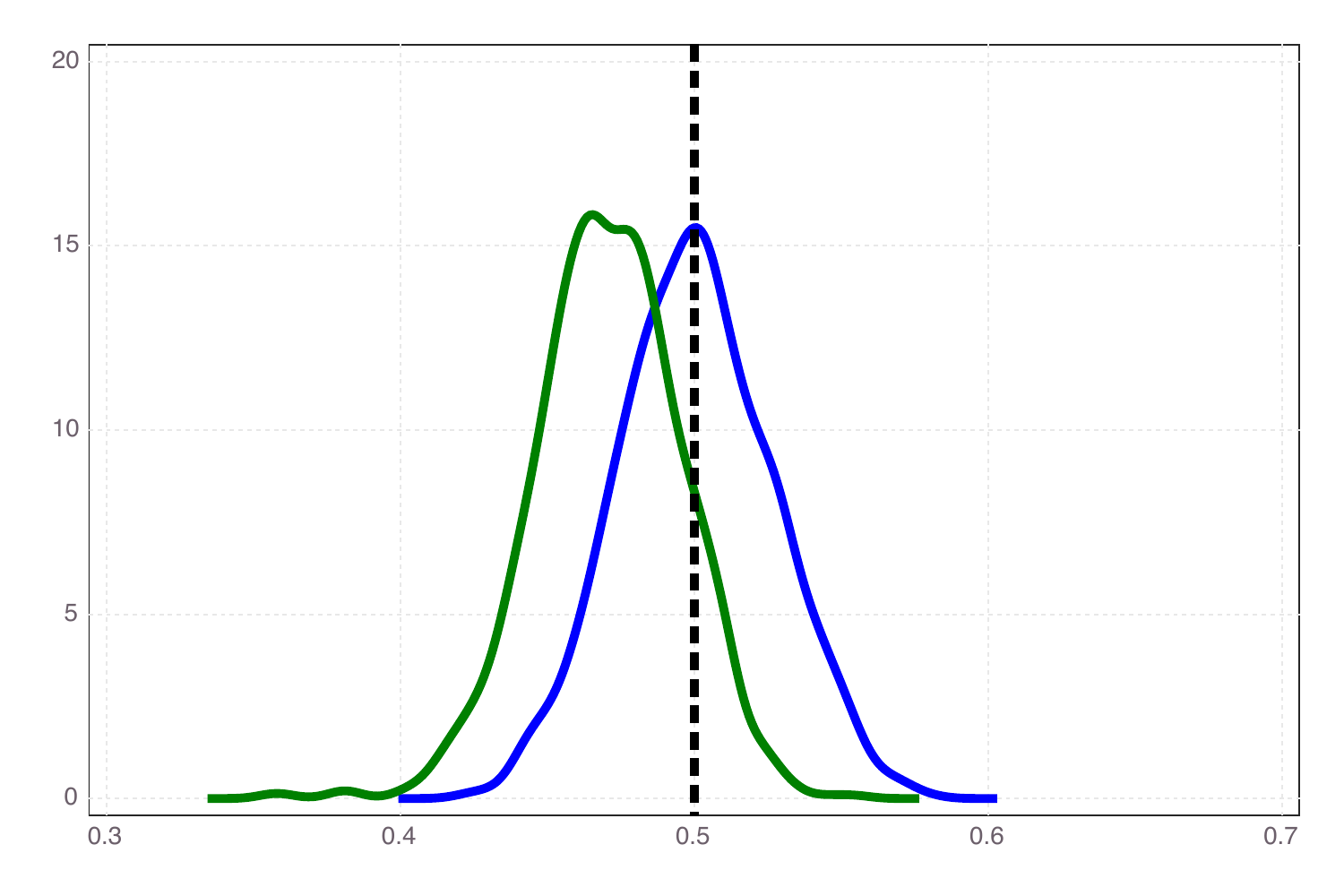}
        \end{tabular}
    \end{center}
    \par
    \textit{{\footnotesize Notes: The densities of estimates based on the first GMM estimator (i.e $\hat{\theta}^{a}$), the second GMM estimator (i.e $\hat{\theta}^{b}$) are indicated in green and blue respectively. Reported results are based on a 1000 replications of the DGP presented above with $\gamma_{01}=1.0, \gamma_{02}=0.5, \gamma_{03}=0.25, \beta_0=0.5$. True parameter values are indicated with a vertical dashed line.}}
\end{figure}
\newpage
\subsection{Monte Carlo for a VAR(1) logit model}
In our next example, we examine a bivariate VAR(1) logit model with $T=3$ and scalar regressors $X_{m,it}$ in each layer $m\in\{1,2\}$. We set the common parameters to $\gamma_{011}=\gamma_{022}=1.0$, $\gamma_{012}=\gamma_{021}=0.5$, $\beta_1=\beta_2=0.5$. The data generating process is:
\begin{align*}
    Y_{m,i0}&=\mathds{1}\left\{X_{m,i0}'\beta_{0m}+A_{m,i}-\epsilon_{m,it}\geq 0\right\}, \quad m=1,2 \\
    Y_{m,it}&=\mathds{1}\left\{\gamma_{0m1}Y_{1,it-1}+\gamma_{0m2}Y_{2,it-1} +X_{m,it}'\beta_{0m}+A_{m,i}-\epsilon_{m,it}\geq 0\right\}, \quad m=1,2, \quad t=1,2,3
\end{align*}
where the disturbances $\epsilon_{m,it}$ are iid standard logistic, the covariates $X_{m,it}$ are iid $\mathcal{N}(0,1)$ and the fixed effects are computed as $A_{m,i}=\frac{1}{\sqrt{4}}\cleansum_{t=0}^3 X_{m,it}$. We consider sample sizes $N\in\{2000,8000,16000\}$ with 1000 Monte Carlo replications per design.\\
We use all four valid moment functions implied by Proposition  \ref{proposition_2} when $T=3$ for the VAR(1) case, viz $\psi_{\theta}^{k|k}(Y^{3}_{i1},Y_{i0}^1,X_i),k\in \{(0,0),(0,1),(1,0),(0,0)\}$ and form the $40\times 1$ moment vector:
\begin{align*}
    m_{\theta}(Y_{i},Y_{i}^0,X_i)=
    \begin{pmatrix}
    &\psi_{\theta}^{(0,0)|(0,0)}(Y^{3}_{i1},Y_{i0}^1,X_i) \\
    &\psi_{\theta}^{(0,1)|(0,1)}(Y^{3}_{i1},Y_{i0}^1,X_i) \\
    &\psi_{\theta}^{(1,0)|(1,0)}(Y^{3}_{i1},Y_{i0}^1,X_i) \\
    &\psi_{\theta}^{(1,1)|(1,1)}(Y^{3}_{i1},Y_{i0}^1,X_i) \\
    \end{pmatrix} \otimes 
    \begin{pmatrix}
    & 1 \\
    &Y_{i}^{0'} \\
    &X_{1,i1}^{3'} \\
    &X_{2,i1}^{3'}
\end{pmatrix}
\end{align*}
Given the importance of rescaling the valid moment functions for better precision of GMM in the context of the AR(3), we also consider a normalized moment vector $ \widetilde{m_{\theta}}(Y_{i},Y_{i}^0,X_i)$ in which each $\psi_{\theta}^{k|k}(Y^{3}_{i1},Y_{i0}^1,X_i)$ is divided by the sum of the absolute values of their unique non-zero entries as a 64-dimensional vector (64 possible choice histories $Y^{3}_{i1}$ per initial condition). With these moment functions in hand, we then compare the finite sample properties of three estimators: i) the VAR(1) analogs of $\hat{\theta}^{a}$ and $\hat{\theta}^{b}$ defined previously for the AR(3), ii) the iterated GMM estimator $\hat{\theta}^{c}$ based on $ m_{\theta}(Y_{i},Y_{i}^0,X_i)$  as in Section \ref{Section_6}. The results of the simulations are summarized in Table \ref{table_2VAR1_transiparam} and Table \ref{table_2VAR1_slopeparam}.

\begin{table}[!htb] 
\caption{Performance of GMM estimators for the bivariate VAR(1): transition parameters}  \label{table_2VAR1_transiparam}
\centerline{
\begin{tabular}{llccccccccccccccc} \toprule  \toprule  
     &  &    $\hat{\gamma_{11}}^{a}$&$\hat{\gamma_{11}}^{b}$&$\hat{\gamma_{11}}^{c}$  &  & $\hat{\gamma_{12}}^{a}$&$\hat{\gamma_{12}}^{b}$&$\hat{\gamma_{12}}^{c}$ & & $\hat{\gamma_{21}}^{a}$&$\hat{\gamma_{21}}^{b}$&$\hat{\gamma_{21}}^{c}$ & & $\hat{\gamma_{22}}^{a}$&$\hat{\gamma_{22}}^{b}$&$\hat{\gamma_{22}}^{c}$ \\ 
\cline{3-5} \cline{7-9} \cline{11-13} \cline{15-17} 
 $N=2000$ & \\ \cline{1-1} 
& Bias & -0.23 & 0.10 & -0.05 &  & -0.21 & -0.04 & -0.04 &  & -0.20 & -0.06 & -0.05 &  & -0.24 & 0.10 & -0.05 \\ 
 & MAE & 0.27 & 0.23 & 0.16 &  & 0.29 & 0.24 & 0.19 &  & 0.27 & 0.23 & 0.19 &  & 0.27 & 0.23 & 0.16 \\ 
 & Iter &  &  & 5 &  &  &  & 5 &  &  &  & 5 &  &  &  & 5 \\ 
 $N=8000$ & \\ \cline{1-1} 
& Bias & -0.07 & 0.03 & -0.00 &  & -0.08 & 0.00 & -0.00 &  & -0.09 & -0.01 & -0.01 &  & -0.06 & 0.03 & -0.00 \\ 
 & MAE & 0.13 & 0.11 & 0.08 &  & 0.14 & 0.12 & 0.09 &  & 0.15 & 0.12 & 0.09 &  & 0.12 & 0.11 & 0.07 \\ 
 & Iter &  &  & 4 &  &  &  & 4 &  &  &  & 4 &  &  &  & 4 \\ 
 $N=16000$ & \\ \cline{1-1} 
& Bias & -0.04 & 0.01 & -0.00 &  & -0.05 & -0.01 & -0.00 &  & -0.07 & -0.01 & -0.00 &  & -0.03 & 0.01 & 0.00 \\ 
 & MAE & 0.09 & 0.08 & 0.05 &  & 0.11 & 0.07 & 0.06 &  & 0.11 & 0.08 & 0.06 &  & 0.08 & 0.08 & 0.06 \\ 
 & Iter &  &  & 3 &  &  &  & 3 &  &  &  & 3 &  &  &  & 3 \\ 
 \bottomrule \bottomrule 
\end{tabular}
}
\vspace{0.5cm}
\textit{\footnotesize \textsc{Notes}: 
Reported results are based on a 1000 replications of the DGP. Bias and MAE stand for median bias and median absolute error respectively. The convergence criterion for the iterated GMM  estimator is $\norm{\hat{\theta}_{s+1}-\hat{\theta}_{s}}<10^{-4}$ and Iter corresponds to the median number of iterations to reach convergence. Bias and MAE for the iterated GMM are reported for replications where convergence is attained which is $\approx 91\%$ for $N=2000$ and $\approx 100\%$ for $N=8000,16000$. }
\end{table}

Similarly to the AR(3) example, both the transition parameters and the slope parameters of $\hat{\theta}^{a}$ are negatively biased for the three sample sizes under consideration. This is particularly true for the \say{between} state-dependence parameters $\hat{\gamma_{12}}^{a},\hat{\gamma_{21}}^{a}$ which maintain a small bias even for $N=8000,16000$. By comparison, the rescaled GMM estimator $\hat{\theta}^{b}$ and the iterated GMM estimator $\hat{\theta}^{c}$ demonstrate better accuracy, especially for $\gamma_{12}$ and $\gamma_{21}$ which are really the key parameters in our empirical application presented in Section \ref{Section_6}. In this specific simulation design, $\hat{\theta}^{c}$  slightly outperforms $\hat{\theta}^{b}$ for all $N=2000,8000,16000$ in terms of median bias and median absolute error for the transition parameters. The comparison is  somewhat less clear for the slope parameters $\beta_1,\beta_2$. \footnote{We also experimented with an iterated GMM estimator based on $\widetilde{m_{\theta}}(Y_{i},Y_{i}^0,X_i)$ and found nearly identical results to $\hat{\theta}^{b}$.}

\begin{table}[!htb]
\caption{Performance of GMM estimators for the bivariate VAR(1): slope parameters}  \label{table_2VAR1_slopeparam}
\centerline{
\begin{tabular}{llccccccc} \toprule  \toprule  
     &  &    $\hat{\beta_{1}}^{a}$&$\hat{\beta_{1}}^{b}$&$\hat{\beta_{1}}^{c}$  &  & $\hat{\beta_{2}}^{a}$&$\hat{\beta_{2}}^{b}$&$\hat{\beta_{2}}^{c}$ \\ 
\cline{3-5} \cline{7-9} 
 $N=2000$ & \\ \cline{1-1} 
& Bias & -0.04 & 0.01 & -0.01 &  & -0.04 & 0.00 & -0.01 \\ 
 & MAE & 0.06 & 0.06 & 0.06 &  & 0.06 & 0.06 & 0.05 \\ 
 & Iter &  &  & 5 &  &  &  & 5 \\ 
 $N=8000$ & \\ \cline{1-1} 
& Bias & -0.01 & -0.00 & 0.00 &  & -0.01 & 0.00 & 0.00 \\ 
 & MAE & 0.03 & 0.03 & 0.03 &  & 0.03 & 0.03 & 0.03 \\ 
 & Iter &  &  & 4 &  &  &  & 4 \\ 
 $N=16000$ & \\ \cline{1-1} 
& Bias & -0.00 & 0.00 & 0.01 &  & -0.00 & 0.00 & 0.01 \\ 
 & MAE & 0.02 & 0.02 & 0.02 &  & 0.02 & 0.02 & 0.02 \\ 
 & Iter &  &  & 3 &  &  &  & 3 \\ 
 \bottomrule \bottomrule 
\end{tabular}
}
\vspace{0.5cm}
\textit{\footnotesize \textsc{Notes}: 
Reported results are based on a 1000 replications of the DGP. Bias and MAE stand for median bias and median absolute error respectively. The convergence criterion for the iterated GMM  estimator is $\norm{\hat{\theta}_{s+1}-\hat{\theta}_{s}}<10^{-4}$ and Iter corresponds to the median number of iterations to reach convergence. Bias and MAE for the iterated GMM are reported for replications where convergence is attained which is $\approx 91\%$ for $N=2000$ and $\approx 100\%$ for $N=8000,16000$. }
\end{table}
Surprisingly, when experimenting with a trivariate logit extension, we found that the analog of $\hat{\theta}^{b}$ performs very poorly for the same simulation design relative to the iterated GMM estimator or even the naive equally-weighted GMM estimator $\hat{\theta}^{a}$. This is perhaps due to the \say{large} rescaling factor applied to each valid moment function in that case which pose problems for the optimization of the GMM objective.  We have not investigated these peculiarities - which could be design specific - further at this moment but a more thorough analysis of the behavior of GMM in future work would be beneficial. The good performance of $\hat{\theta}^c$ and this shortcoming of $\hat{\theta}^{b}$ in the trivariate case was one additional motivation for concentrating on the iterated GMM estimator in our empirical application.

\section{Proofs of Theorem \ref{theorem_nummoments_AR1} and Theorem \ref{theorem_nummoments_ARp}}
We focus our attention on proving Theorem \ref{theorem_nummoments_ARp} since proving Theorem \ref{theorem_nummoments_AR1} would follow nearly identical arguments. At each important step of the proof, we highlight where the arguments for the AR(1) would differ. \\

\noindent Fix a history $y\in \mathcal{Y}^T$ and consider the corresponding basis element $\mathds{1}\{.=y\}$ of $\mathbb{R}^{\mathcal{Y}^T}$. We have:
\begin{align*}
   \mathcal{E}_{y^0,x}^{(p)}\left[\mathds{1}\{.=y\}\right]&=P(Y_i=y|Y_{i}^0=y^0,X_{i}=x,A_{i}=.)
\end{align*}
where by definition, for all $a\in \mathbb{R}$,
\begin{align*}
    P(Y_i=y|Y_{i}^0=y^0,X_{i}=x,A_{i}=a)&=\frac{N^{y|y^0}(e^{a})}{D^{y|y^0}(e^{a})} \\
    N^{y|y^0}(e^{a})&=\prod_{t=1}^Te^{y_t\left(\sum_{r=1}^{p}\gamma_{0r}y_{t-r}+x_t'\beta_0+a\right)} \\
    D^{y|y^0}(e^{a})&=\prod_{t=1}^T\left(1+e^{\sum_{r=1}^{p}\gamma_{0r}y_{t-r}+x_t'\beta_0+a}\right)
\end{align*}
Notice that $N^{y|y^0}(e^{a})$ and $D^{y|y^0}(e^{a})$ are just polynomials of $e^{a}$ - with dependence on $x$ suppressed for conciseness - and that we always have $\deg\left(N^{y|y^0}(e^{a})\right)\leq \deg\left(D^{y|y^0}(e^{a})\right)$ with strict inequality unless $y=1_{T}$. Moreover, since by assumption for any   $t,s\in \{1,\ldots,T-1\}$ and $y,\tilde{y}\in \mathcal{Y}^{p}$, $\gamma_0'y+x_{t}'\beta_0\neq \gamma_0'\tilde{y}+x_{s}'\beta_0$ if $t\neq s$ or $y\neq\tilde{y}$, $D^{y|y^0}(e^{a})$ is a product of distinct irreducible polynomials in $e^{a}$. Therefore, by standard results on \textit{partial fraction decompositions}, we know that there exists a unique set of coefficients $(\lambda_{0}^{y},\lambda_{1}^{y},\ldots, \lambda_{T}^{y})\in \mathbb{R}^{T+1}$ independent of the fixed effect such that:
\begin{align*}
P(Y_i=y|Y_{i}^0=y^0,X_{i}=x,A_{i}=a)&=\lambda_{0}^{y}+\sum_{t=1}^T \lambda_{t}^{y} \frac{1}{1+e^{\sum_{r=1}^{p}\gamma_{0r}y_{t-r}+x_t'\beta_0+a}} \\
&=\lambda_{0}^{y}+T_{0}(a)+T_{1}(a)+T_{2}(a) \\
T_{0}(a)&=\lambda_{1}^{y} \frac{1}{1+e^{\sum_{r=1}^{p}\gamma_{0r}y_{1-r}+x_1'\beta_0+a}} \\
T_{1}(a)&=\sum_{t=2}^{p} \lambda_{t}^{y} \frac{1}{1+e^{\sum_{r=1}^{p}\gamma_{0r}y_{t-r}+x_t'\beta_0+a}} \\
T_{3}(a)&=\sum_{t=p+1}^{T} \lambda_{t}^{y} \frac{1}{1+e^{\sum_{r=1}^{p}\gamma_{0r}y_{t-r}+x_t'\beta_0+a}}
\end{align*}
with $\lambda_{0}^{y}=0$ unless $y=1_T$. This decomposition breaks down the conditional probability $P(Y_i=y|Y_{i}^0=y^0,X_{i}=x,A_{i}=a)$ into components that depend on the initial condition, namely $T_{0}(a),T_{1}(a)$, and components that do not, i.e $T_{2}(a)$. Notice that $T_{1}(a)$ would not appear in the AR(1) case. Starting with the first group, we can write:
\begin{align*}
    T_{0}(a)&=\lambda_{1}^{y}\pi_{0}^{0|y^0}(a,x)\\
    &=\lambda_{1}^{y}\mathds{1}\{y_0=0\}\pi_{0}^{y_0|y^0}(x,a)+\lambda_{1}^{y}\mathds{1}\{y_0=1\}\left(1-\pi_{0}^{y_0|y^0}(x,a)\right) \\
    &=\lambda_{1}^{y}\mathds{1}\{y_0=1\}+\lambda_{1}^{y}\mathds{1}\{y_0=0\}\pi_{0}^{y_0|y^0}(x,a)-\lambda_{1}^{y}\mathds{1}\{y_0=1\}\pi_{0}^{y_0|y^0}(x,a)
\end{align*}
and
\begin{align*}
    T_{1}(a)&=\sum_{t=2}^{p} \lambda_{t}^{y} \sum_{\tilde{y}_{1}^{t-1}\in \mathcal{Y}^{t-1}} \mathds{1}\{y_{t-1}=\tilde{y}_{1},\ldots,y_{1}=\tilde{y}_{t-1}\} \pi_{t-1}^{0|\tilde{y}_{1}^{t-1},y_{0},\ldots,y_{-(p-t)}}(a,x) \\
    &=\sum_{t=2}^{p} \lambda_{t}^{y} \sum_{\tilde{y}_{2}^{t-2}\in \mathcal{Y}^{t-2}} \mathds{1}\{y_{t-1}=0,y_{t-2}=\tilde{y}_2,\ldots,y_{1}=\tilde{y}_{t-1}\} \pi_{t-1}^{0|0,\tilde{y}_{2}^{t-1},y_{0},\ldots,y_{-(p-t)}}(a,x)\\
    &+\sum_{t=2}^{p} \lambda_{t}^{y} \sum_{\tilde{y}_{2}^{t-2}\in \mathcal{Y}^{t-2}} \mathds{1}\{y_{t-1}=1,y_{t-2}=\tilde{y}_2,\ldots,y_{1}=\tilde{y}_{t-1}\}\left(1-\pi_{t-1}^{1|1,\tilde{y}_{2}^{t-1},y_{0},\ldots,y_{-(p-t)}}(a,x)\right)\\
    &=\sum_{t=2}^{p} \lambda_{t}^{y} \sum_{\tilde{y}_{2}^{t-2}\in \mathcal{Y}^{t-2}} \mathds{1}\{y_{t-1}=1,y_{t-2}=\tilde{y}_2,\ldots,y_{1}=\tilde{y}_{t-1}\} \\
    &+\sum_{t=2}^{p} \lambda_{t}^{y} \sum_{\tilde{y}_{2}^{t-2}\in \mathcal{Y}^{t-2}} \mathds{1}\{y_{t-1}=0,y_{t-2}=\tilde{y}_2,\ldots,y_{1}=\tilde{y}_{t-1}\} \pi_{t-1}^{0|0,\tilde{y}_{2}^{t-1},y_{0},\ldots,y_{-(p-t)}}(a,x)\\
    &-\sum_{t=2}^{p} \lambda_{t}^{y} \sum_{\tilde{y}_{2}^{t-2}\in \mathcal{Y}^{t-2}} \mathds{1}\{y_{t-1}=1,y_{t-2}=\tilde{y}_2,\ldots,y_{1}=\tilde{y}_{t-1}\} \pi_{t-1}^{1|1,\tilde{y}_{2}^{t-1},y_{0},\ldots,y_{-(p-t)}}(a,x)\\
\end{align*}
Then, for the second group,
\begin{align*}
    T_{3}(a)&=\sum_{t=p+1}^{T} \lambda_{t}^{y,y^0} \sum_{\tilde{y}_{1}^{p}\in \mathcal{Y}^{p}} \mathds{1}\{y_{t-1}=\tilde{y}_1,\ldots,y_{t-p}=\tilde{y}_p\} \pi_{t-1}^{0|\tilde{y}_{1}^{p}}(a,x) \\
&=\sum_{t=p+1}^{T} \lambda_{t}^{y,y^0} \sum_{\tilde{y}_{2}^{p}\in \mathcal{Y}^{p-1}} \mathds{1}\{y_{t-1}=0,y_{t-2}=y_2,\ldots,y_{t-p}=\tilde{y}_p\}\pi_{t-1}^{0|0,\tilde{y}_{2}^{p}}(a,x) \\
&+\sum_{t=p+1}^{T} \lambda_{t}^{y,y^0} \sum_{\tilde{y}_{2}^{p-1}\in \mathcal{Y}^{p-1}} \mathds{1}\{y_{t-1}=1,y_{t-2}=y_2,\ldots,y_{t-p}=\tilde{y}_p\}\left(1-\pi_{t-1}^{1|1,\tilde{y}_{2}^{p}}(a,x)\right) \\
&=+\sum_{t=p+1}^{T} \lambda_{t}^{y,y^0} \sum_{\tilde{y}_{2}^{p-1}\in \mathcal{Y}^{p-1}} \mathds{1}\{y_{t-1}=1,y_{t-2}=y_2,\ldots,y_{t-p}=\tilde{y}_p\} \\
&+\sum_{t=p+1}^{T} \lambda_{t}^{y,y^0} \sum_{\tilde{y}_{2}^{p}\in \mathcal{Y}^{p-1}} \mathds{1}\{y_{t-1}=0,y_{t-2}=y_2,\ldots,y_{t-p}=\tilde{y}_p\}\pi_{t-1}^{0|0,\tilde{y}_{2}^{p}}(a,x) \\
&-\sum_{t=p+1}^{T} \lambda_{t}^{y,y^0} \sum_{\tilde{y}_{2}^{p-1}\in \mathcal{Y}^{p-1}} \mathds{1}\{y_{t-1}=1,y_{t-2}=y_2,\ldots,y_{t-p}=\tilde{y}_p\}\pi_{t-1}^{1|1,\tilde{y}_{2}^{p}}(a,x)
\end{align*}

\noindent The unique decompositions for each term make it clear that
\begin{align*}
    \mathcal{F}_{y^0,p,T}=\left\{1,\pi_{0}^{y_0|y^0}(.,x),\left\{\left(\pi_{t-1}^{y_{1}|y_{1}^{t-1},y_{0},\ldots,y_{-(p-t)}}(.,x))\right)_{y_{1}^{t-1}\in \mathcal{Y}^{t-1}}\right\}_{t=2}^p,\left\{\left(\pi_{t-1}^{y_1|y_{1}^{p}}(.,x)\right)_{y_{1}^p\in \mathcal{Y}^p}\right\}_{t=p+1}^T\right\}
\end{align*}
forms a basis of $\Ima\left(\mathcal{E}_{y^0,x}^{(p)}\right)$ if we can show that the transition probabilities are elements of $\Ima\left(\mathcal{E}_{y^0,x}^{(p)}\right)$. We now argue that it is indeed the case:
\begin{itemize}
    \item First, $\pi_{0}^{y_0|y^0}(.,x)\in \Ima\left(\mathcal{E}_{y^0,x}^{(p)}\right)$ since
    \begin{align*}
        \mathbb{E}[(1-Y_{i1})|Y_{i}^0=y^0,X_{i}=x,A_{i}=a]=\frac{1}{1+e^{\sum_{r=1}^{p}\gamma_{0r}y_{1-r}+x_1'\beta_0+a}}=\pi_{0}^{y_0|y^0}(a,x), \quad \text{if } y_0=0 \\
        \mathbb{E}[Y_{i1}|Y_{i}^0=y^0,X_{i}=x,A_{i}=a]=\frac{e^{\sum_{r=1}^{p}\gamma_{0r}y_{1-r}+x_1'\beta_0+a}}{1+e^{\sum_{r=1}^{p}\gamma_{0r}y_{1-r}+x_1'\beta_0+a}}=\pi_{0}^{y_0|y^0}(a,x), \quad \text{if } y_0=1
    \end{align*}
    \item Second, $\left\{\left(\pi_{t-1}^{y_1|y_{1}^{p}}(.,x)\right)_{y_{1}^p\in \mathcal{Y}^p}\right\}_{t=p+1}^T\in \Ima\left(\mathcal{E}_{y^0,x}^{(p)}\right)$ by Theorem \ref{theorem_ARp_transit}. For the AR(1) model, one would appeal to Lemma \ref{lemma_2}.\\
    \item Finally, one can easily adapt the proof of  Theorem \ref{theorem_ARp_transit} to show that $\left\{\left(\pi_{t-1}^{y_{1}|y_{1}^{t-1},y_{0},\ldots,y_{-(p-t)}}(.,x))\right)_{y_{1}^{t-1}\in \mathcal{Y}^{t-1}}\right\}_{t=2}^p\in \Ima\left(\mathcal{E}_{y^0,x}^{(p)}\right)$. First, it follows immediately from Lemma  \ref{lemma_theorem_ARp_transit} that:
    \begin{align*}
        \left(\pi_{1}^{y_{1}|y_{1},y_{0},\ldots,y_{-(p-2)}}(.,x))\right)_{y_{1}\in \mathcal{Y}^{t-1}}\in \Ima\left(\mathcal{E}_{y^0,x}^{(p)}\right)
    \end{align*}
    Then, by inspecting the induction argument of Theorem \ref{theorem_ARp_transit}, it is easily seen that the result that for $T\geq p+1$ and $t\in\{p,\ldots,T-1\}$
\begin{align*}
    \mathbb{E}\left[\phi_{\theta_0}^{y_1|y_{1}^{k+1}}(Y_{it+1},Y_{it},Y^{t-1}_{it-(p+k)},X_i)|Y_i^0,Y_{i1}^{t-(k+1)},X_i,A_i\right]&=\pi^{y_1|y_{1}^{k+1},Y_{it-(k+1)},\ldots,Y_{it-(p-1)}}_{t}(A_i,X_i)
\end{align*}
for $k=0,\ldots,p-2$ can be generalized. It actually holds for $t=k+1$ when $k=0,\ldots,p-2$,  yielding
\begin{align*}
    \mathbb{E}\left[\phi_{\theta_0}^{y_1|y_{1}^{t}}(Y_{it+1},Y_{it},Y^{t-1}_{i1-p},X_i)|Y_i^0,X_i,A_i\right]&=\pi^{y_1|y_{1}^{t},Y_{i0},\ldots,Y_{it-(p-1)}}_{t}(A_i,X_i)
\end{align*}
This is the desired result. The terms $\left\{\left(\pi_{t-1}^{y_{1}|y_{1}^{t-1},y_{0},\ldots,y_{-(p-t)}}(.,x))\right)_{y_{1}^{t-1}\in \mathcal{Y}^{t-1}}\right\}_{t=2}^p$ are not present in the AR(1) case which simplifies the argument.
\end{itemize}
Thus, we have shown that $ \mathcal{F}_{y^0,p,T}$ is a basis of $\Ima\left(\mathcal{E}_{y^0,x}^{(p)}\right)$. Next, since $\mathcal{E}_{y^0,x}^{(p)}$ is a linear mapping, we know by the \textit{rank nullity theorem} that:
\begin{align*}
\dim\left(\ker(\mathcal{E}_{y^0,x}^{(p)})\right)=\dim\left(\mathbb{R}^{\{0,1\}^T}\right)-\rank\left(\mathcal{E}_{y^0,x}^{(p)}\right)
\end{align*} 
Therefore, we have the following implications:
\begin{enumerate}
    \item If $T\leq p, \quad |\mathcal{F}_{y^0,p,T}|=1+1+\sum\limits_{t=2}^T2^{t-1}=2+\sum\limits_{t=1}^{T-1}2^{t}=2+2\frac{1-2^{T-1}}{1-2}=2^{T}$. Hence, $\rank\left(\mathcal{E}_{y^0,x}^{(p)}\right)=2^{T}$ and the rank nullity theorem implies $\dim\left(\ker(\mathcal{E}_{y^0,x}^{(p)})\right)=0$
    \item If $T=p+1,\quad |\mathcal{F}_{y^0,p,T}|=1+1+\sum\limits_{t=2}^p2^{t-1}+2^p=2\times2^p=2^{p+1}$. Then, $\rank\left(\mathcal{E}_{y^0,x}^{(p)}\right)=2^{T}$ and the rank nullity theorem implies $\dim\left(\ker(\mathcal{E}_{y^0,x}^{(p)})\right)=0$
    \item If $T\geq p+2$, $|\mathcal{F}_{y^0,p,T}|=1+1+\sum\limits_{t=2}^p2^{t-1}+2^{p}(T-p)=2^{p}+2^{p}(T-p)=(T-p+1)2^{p}$. It follows that $\rank\left(\mathcal{E}_{y^0,x}^{(p)}\right)=(T-p+1)2^{p}$ and $\dim\left(\ker(\mathcal{E}_{y^0,x}^{(p)})\right)=2^T-(T-p+1)2^{p}$
\end{enumerate}

\section{Proofs of Propositions \ref{proposition_1}, \ref{proposition_2}, \ref{proposition_3}}
Propositions \ref{proposition_1}, \ref{proposition_2} and \ref{proposition_3} all follow from the same strategy proof based on the the law of iterated expectations. We focus on Proposition \ref{proposition_1} here and leave the other cases to the reader. \\

\noindent Take any $t,s$ verifying $T-1\geq t>s\geq 1$. For any $k \in \mathcal{Y}$, we have
\begin{align*}
   \mathbb{E}\left[\psi_{\theta_0}^{k|k}(Y_{it-1}^{t+1},Y_{is-1}^{s+1})|Y_{i0},Y_{i1}^{s-1},A_i\right]&= \mathbb{E}\left[\phi_{\theta_0}^{k|k}(Y_{it+1},Y_{it},Y_{it-1})-\phi_{\theta_0}^{k|k}(Y_{is+1},Y_{is},Y_{is-1})|Y_{i0},Y_{i1}^{s-1},A_i\right]\\ 
   &=\mathbb{E}\left[\mathbb{E}\left[\phi_{\theta_0}^{k|k}(Y_{it+1},Y_{it},Y_{it-1})|Y_{i0},Y_{i1}^{t-1},A_i\right]|Y_{i0},Y_{i1}^{s-1},A_i\right]-
\pi^{k|k}(A_i) \\
&=\mathbb{E}\left[\pi^{k|k}(A_i)|Y_{i0},Y_{i1}^{s-1},A_i\right]-
\pi^{k|k}(A_i) \\
&=\pi^{k|k}(A_i)-\pi^{k|k}(A_i) \\
&=0
\end{align*}
The second and third equalities follow from the law of iterated expectation and Lemma \ref{lemma_1}. 

\section{Proofs of Lemma \ref{lemma_1} and Lemma \ref{lemma_2}}
Without loss of generality, we will consider the case with covariates. The proposed functional form for the transition function $\phi_{\theta}^{0|0}(Y_{it+1},Y_{it},Y_{it-1},X_i)$ implies that it is null when $Y_{it}\neq 0$. Hence
\begin{align*}
    &\mathbb{E}\left[\phi_{\theta}^{0|0}(Y_{it+1},Y_{it},Y_{it-1},X_i)|Y_{i0},Y_{i1}^{t-1},X_i,A_i\right]=  \frac{1}{1+e^{\gamma_0 Y_{it-1}+X_{it}'\beta_0+A_i}}\times  \\
    & \left(\frac{e^{X_{it+1}'\beta_0+A_i}}{1+e^{X_{it+1}'\beta_0+A_i}}\phi_{\theta}^{0|0}(1,0,Y_{it-1},X_i)+\frac{1}{1+e^{X_{it+1}'\beta_0+A_i}}\phi_{\theta}^{0|0}(0,0,Y_{it-1},X_i)\right)
\end{align*}
Thus, to obtain the transition probability $\pi^{0|0}_{t}(A_i,X_{i})=\frac{1}{1+e^{X_{it+1}'\beta_0+A_i}}$ at $\theta=\theta_0$, we must set:
\begin{align*}
    \phi_{\theta}^{0|0}(1,0,Y_{it-1},X_i)&=e^{\gamma Y_{it-1}+(X_{it}-X_{it+1})'\beta} \\
    \phi_{\theta}^{0|0}(0,0,Y_{it-1},X_i)&=1 \\
    \phi_{\theta}^{0|0}(k,1,Y_{it-1},X_i)&=0, \quad \forall k \in \mathcal{Y}
\end{align*}
This can be expressed compactly as: $\phi_{\theta}^{0|0}(Y_{it+1},Y_{it},Y_{it-1},X_i)=(1-Y_{it})e^{Y_{it+1}\left(\gamma Y_{it-1}-\Delta X_{it+1}'\beta \right)}$

\noindent Likewise, for  $\phi_{\theta}^{1|1}(Y_{it+1},Y_{it},Y_{it-1},X_i)$ we have:
\begin{align*}
    &\mathbb{E}\left[\phi_{\theta}^{1|1}(Y_{it+1},Y_{it},Y_{it-1},X_i)|Y_{i0},Y_{i1}^{t-1},X_i,A_i\right]=  \frac{e^{\gamma_0 Y_{it-1}+X_{it}'\beta_0+A_i}}{1+e^{\gamma_0 Y_{it-1}+X_{it}'\beta_0+A_i}}\times  \\
    &\left(\frac{e^{\gamma_0+X_{it+1}'\beta_0+A_i}}{1+e^{\gamma_0+X_{it+1}'\beta_0+A_i}}\phi_{\theta}^{1|1}(1,1,Y_{it-1},X_i)+\frac{1}{1+e^{\gamma_0+X_{it+1}'\beta_0+A_i}}\phi_{\theta}^{1|1}(0,1,Y_{it-1},X_i)\right)
\end{align*}
Hence, to get $\pi^{1|1}_{t}(A_i,X_{i})=\frac{e^{\gamma_0+X_{it+1}'\beta_0+A_i}}{1+e^{\gamma_0+X_{it+1}'\beta_0+A_i}}$  at $\theta=\theta_0$, we must set:
\begin{align*}
    \phi_{\theta}^{1|1}(1,1,Y_{it-1},X_i)&=1 \\
    \phi_{\theta}^{1|1}(0,1,Y_{it-1},X_i)&=e^{\gamma (1-Y_{it-1})+(X_{it+1}-X_{it})'\beta} \\
    \phi_{\theta}^{1|1}(k,0,Y_{it-1},X_i)&=0, \quad \forall k \in \mathcal{Y}
\end{align*}
This can be written succinctly as: $\phi_{\theta}^{1|1}(Y_{it+1},Y_{it},Y_{it-1},X_i)=Y_{it}e^{ (1-Y_{it+1})\left(\gamma(1-Y_{it-1})+\beta \Delta X_{it+1}\right)}$

\section{Proofs of Lemmas \ref{lemma_3},\ref{lemma_4} and Corollaries \ref{corollary_2}, \ref{corollary_4}}
The proofs of Lemma \ref{lemma_3}, Lemma \ref{lemma_4}, Corollary \ref{corollary_2}, Corollary \ref{corollary_4} all follow the same logic based on the use of a \textit{partial fraction expansion}. We prove Lemma \ref{lemma_3} here and leave the other cases to the reader.

The result hinges on the simple rational fraction identity provided in Lemma \ref{tech_lemma_1} that for any three reals $v,u,a$, we have:
\begin{align*}
    &\frac{1}{1+e^{v+a}}+(1-e^{u-v})\frac{e^{v+a}}{(1+e^{v+a})(1+e^{u+a})}=\frac{1}{(1+e^{u+a})} \\
    &\frac{e^{v+a}}{1+e^{v+a}}+(1-e^{-(u-v)})\frac{e^{u+a}}{(1+e^{v+a})(1+e^{u+a})}=\frac{e^{u+a}}{(1+e^{u+a})}
\end{align*}
By construction for $T\geq 3$, and $t,s$ such that $T-1\geq t> s\geq 1$:
\begin{align*}
    &\mathbb{E}\left[\zeta_{\theta_{0}}^{0|0}(Y_{it-1}^{t+1},Y_{is-1}^s,X_i)|Y_{i0},Y_{i1}^{s-1},X_i,A_i\right]\\
    &=\mathbb{E}\left[(1-Y_{is})+\omega_{t,s}^{0|0}(\theta_{0})Y_{is}\phi_{\theta_{0}}^{0|0}(Y_{it+1},Y_{it},Y_{it-1},X_i)|Y_{i0},Y_{i1}^{s-1},X_i,A_i\right] \\
    &=\frac{1}{1+e^{\mu_{s}(\theta_{0})+A_i}}+\omega_{t,s}^{0|0}(\theta_{0}) \mathbb{E}\left[Y_{is} \mathbb{E}\left[\phi_{\theta_{0}}^{0|0}(Y_{it+1},Y_{it},Y_{it-1},X_i)|Y_{i0},Y_{i1}^{t-1},X_i,A_i\right]|Y_{i0},Y_{i1}^{s-1},X_i,A_i\right] \\
    &=\frac{1}{1+e^{\mu_{s}(\theta_{0})+A_i}}+\omega_{t,s}^{0|0}(\theta_{0}) \mathbb{E}\left[Y_{is} |Y_{i0},Y_{i1}^{s-1},X_i,A_i\right]\frac{1}{1+e^{\kappa_{t}^{0|0}(\theta_{0})+A_i}} \\
    &=\frac{1}{1+e^{\mu_{s}(\theta_{0})+A_i}}+(1-e^{\kappa_{t}^{0|0}(\theta_{0})-\mu_{s}(\theta_{0})})\frac{e^{\mu_{s}(\theta_{0})+A_i}}{(1+e^{\mu_{s}(\theta_{0})+A_i})(1+e^{\kappa_{t}^{0|0}(\theta_{0})+A_i})} \\
    &=\frac{1}{1+e^{\kappa_{t}^{0|0}(\theta_{0})+A_i}} \\
    &=\pi^{0|0}_{t}(A_i,X_{i})
\end{align*}
The second equality follows from the measureability of the weight $\omega_{t,s}^{0|0}(\theta_{0})$ with respect to the conditioning set. The third equality follows from the law of iterated expectations and Lemma \ref{lemma_2}. The penultimate equality uses the first mathematical identity presented above. \\
Similarly, 
\begin{align*}
     &\mathbb{E}\left[\zeta_{\theta_{0}}^{1|1}(Y_{it-1}^{t+1},Y_{is-1}^s,X_i)|Y_{i0},Y_{i1}^{s-1},X_i,A_i\right]\\
     &=\mathbb{E}\left[Y_{is}+\omega_{t,s}^{1|1}(\theta_{0})(1-Y_{is})\phi_{\theta_{0}}^{1|1}(Y_{it+1},Y_{it},Y_{it-1},X_i)|Y_{i0},Y_{i1}^{s-1},X_i,A_i\right] \\
     &= \frac{e^{\mu_{s}(\theta_{0})+A_i}}{1+e^{\mu_{s}(\theta_{0})+A_i}}+\omega_{t,s}^{1|1}(\theta_{0})\mathbb{E}\left[(1-Y_{is})\mathbb{E}\left[\phi_{\theta_{0}}^{1|1}(Y_{it+1},Y_{it},Y_{it-1},X_i)|Y_{i0},Y_{i1}^{t-1},X_i,A_i\right]|Y_{i0},Y_{i1}^{s-1},X_i,A_i\right] \\
     &= \frac{e^{\mu_{s}(\theta_{0})+A_i}}{1+e^{\mu_{s}(\theta_{0})+A_i}}+\omega_{t,s}^{1|1}(\theta_{0}) \mathbb{E}\left[(1-Y_{is})|Y_{i0},Y_{i1}^{s-1},X_i,A_i\right]\frac{e^{\kappa_{t}^{1|1}(\theta_{0})+A_i}}{1+e^{\kappa_{t}^{1|1}(\theta_{0})+A_i}} \\
     &=\frac{e^{\mu_{s}(\theta_{0})+A_i}}{1+e^{\mu_{s}(\theta_{0})+A_i}}+\left(1-e^{-(\kappa_{t}^{1|1}(\theta_{0})-\mu_{s}(\theta_{0}))}\right)\frac{e^{\kappa_{t}^{1|1}(\theta_{0})+A_i}}{(1+e^{\mu_{s}(\theta_{0})+A_i})(1+e^{\kappa_{t}^{1|1}(\theta_{0})+A_i})} \\
     &=\frac{e^{\kappa_{t}^{1|1}(\theta_{0})+A_i}}{1+e^{\kappa_{t}^{1|1}(\theta_{0})+A_i}} \\
     &=\pi^{1|1}_{t}(A_i,X_{i})
\end{align*}
The second equality follows from the measurability of the weight $\omega_{t,s}^{0|0}(\theta_{0})$ with respect to the conditioning set. The third equality follows from the law of iterated expectations and Lemma \ref{lemma_2}. The penultimate equality uses the second mathematical identity presented above. 

\section{Proof of Theorem \ref{theorem_ARp_transit}}
We start by proving the following Lemma
\begin{lemma} \label{lemma_theorem_ARp_transit}
In model (\ref{ARp_logit_general}), with $T\geq 2$ and $t\in\{1,\ldots,T-1\}$, let
\begin{align*}
    \phi_{\theta}^{0|0}(Y_{it+1},Y_{it},Y_{it-p}^{t-1},X_i)&=(1-Y_{it})e^{Y_{it+1}(\gamma_{1}Y_{it-1}-\sum_{l=2}^{p} \gamma_{l}\Delta Y_{it+1-l}-\Delta X_{it+1}'\beta)} \\
    \phi_{\theta}^{1|1}(Y_{it+1},Y_{t},Y^{t-1}_{it-p},X_i)&=Y_{it}e^{ (1-Y_{it+1})\left(\gamma_{1}(1-Y_{it-1})+\sum_{l=2}^{p} \gamma_{l}\Delta Y_{it+1-l}+\Delta X_{it+1}'\beta\right)}
\end{align*}
Then, 
\begin{align*}
    \mathbb{E}\left[\phi_{\theta_0}^{0|0}(Y_{it+1},Y_{it},Y^{t-1}_{it-p},X_i)|Y_i^0,Y_{i1}^{t-1},X_i,A_i\right]&= \pi^{0|0,Y_{it-1},\ldots,Y_{it-(p-1)}}_{t}(A_i,X_{i})\\
    &=\frac{1}{1+e^{\sum_{l=2}^{p} \gamma_{0l} Y_{it+1-l}+X_{it+1}'\beta_0+A_i}} \\
    \mathbb{E}\left[\phi_{\theta_0}^{1|1}(Y_{it+1},Y_{it},Y^{t-1}_{it-p},X_i)|Y_i^0,Y_{i1}^{t-1},X_i,A_i \right]&=\pi^{1|1,Y_{it-1},\ldots,Y_{it-(p-1)}}_{t}(A_i,X_{i})\\
    &=\frac{e^{\gamma_{01}+\sum_{l=2}^{p} \gamma_{0l} Y_{it+1-l}+X_{it+1}'\beta_0+A_i}}{1+e^{\gamma_{01}+\sum_{l=2}^{p} \gamma_{0l} Y_{it+1-l}+X_{it+1}'\beta_0+A_i}}
\end{align*}
\end{lemma}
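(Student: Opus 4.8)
The plan is to verify both identities by direct computation, mirroring the proof of Lemma \ref{lemma_2} for the AR(1) case but carrying the extra lag terms through the algebra. I would treat the AR($p$) transition $Y_{it}\to Y_{it+1}$ as a logit transition whose linear index depends on the $p$ most recent outcomes, and exploit the fact that each candidate transition function is degenerate in $Y_{it}$: the factor $(1-Y_{it})$ in $\phi^{0|0}_\theta$ forces $Y_{it}=0$ and the factor $Y_{it}$ in $\phi^{1|1}_\theta$ forces $Y_{it}=1$. Conditioning on $(Y_i^0, Y_{i1}^{t-1}, X_i, A_i)$, all the lags $Y_{it-1},\ldots,Y_{it-p}$ and the covariate differences are measurable, so that after invoking the Markov property the conditional expectation factors as a transition probability into the conditioned state at time $t$ times an expectation over $Y_{it+1}$ alone of the exponential weight.

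First I would handle $\phi^{0|0}_{\theta_0}$. Writing $\zeta = \sum_{r=1}^p \gamma_{0r}Y_{it-r}+X_{it}'\beta_0+A_i$ for the index governing the transition into state $0$ at time $t$, and $\eta_0 = \sum_{l=2}^p \gamma_{0l}Y_{it+1-l}+X_{it+1}'\beta_0+A_i$ for the index of the subsequent transition given $Y_{it}=0$, the conditional expectation equals $P(Y_{it}=0\mid\cdot)$ times the expectation over $Y_{it+1}\in\{0,1\}$ of the exponential weight. Summing over $Y_{it+1}$ and matching the result to the target $\pi^{0|0,\ldots}_t=(1+e^{\eta_0})^{-1}$ reduces, after clearing denominators, to the single scalar identity $\eta_0+E=\zeta$, where $E$ denotes the exponent of $\phi^{0|0}_{\theta_0}$ evaluated at $Y_{it+1}=1$. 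The crucial computation is that the difference term $-\sum_{l=2}^p \gamma_{0l}\Delta Y_{it+1-l}$ cancels the lagged-outcome part $\sum_{l=2}^p\gamma_{0l}Y_{it+1-l}$ of $\eta_0$ and regenerates $\sum_{l=2}^p\gamma_{0l}Y_{it-l}$, while $\gamma_{01}Y_{it-1}$ supplies the first-lag contribution and $-\Delta X_{it+1}'\beta_0$ restores $X_{it}'\beta_0$; adding all of these to $\eta_0$ collapses exactly to $\zeta$.

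The case of $\phi^{1|1}_{\theta_0}$ is entirely symmetric: conditioning on $Y_{it}=1$ introduces the index $\eta_1 = \gamma_{01}+\eta_0$, and the analogous bookkeeping shows that the exponent $E'$ of $\phi^{1|1}_{\theta_0}$ at $Y_{it+1}=0$ satisfies $\zeta+E'=\eta_1$, which again forces the $P(Y_{it}=1\mid\cdot)$ factor to cancel and leaves $\pi^{1|1,\ldots}_t = e^{\eta_1}/(1+e^{\eta_1})$. I expect no conceptual obstacle; the only delicate part is the careful index accounting in verifying $\eta_0+E=\zeta$ and $\zeta+E'=\eta_1$, since the $\Delta Y_{it+1-l}$ differences must be expanded and matched term by term against the lags appearing in $\zeta$ and $\eta_0$. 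This cancellation is precisely the ingredient that generalizes the AR(1) construction and furnishes the base case for the recursion in Theorem \ref{theorem_ARp_transit}.
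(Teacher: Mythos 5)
Your proposal is correct and follows essentially the same route as the paper's proof: both exploit the degeneracy of $\phi^{k|k}_{\theta}$ in $Y_{it}$, condition on $(Y_i^0,Y_{i1}^{t-1},X_i,A_i)$ so the expectation reduces to a two-term sum over $Y_{it+1}$ weighted by the logit transition probabilities, and rely on exactly the exponent cancellations $\eta_0+E=\zeta$ and $\zeta+E'=\eta_1$ to clear the period-$t$ denominator $(1+e^{\zeta})$ — the paper merely runs this computation in the constructive direction, solving for the values of $\phi^{k|k}_{\theta}$ at each $(Y_{it+1},Y_{it})$, whereas you verify the stated expressions. Your packaging of the matching condition as a single scalar identity per case is a clean and faithful summary of the same algebra.
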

\noindent Instead of verifying the result directly from the expression given in the Lemma, it is easier to start from the heuristic idea, emphasized throughout the text, that we look for two functions such that:
\begin{align*}
    &\phi_{\theta}^{0|0}(Y_{it+1},Y_{it},Y^{t-1}_{it-p},X_i)=(1-Y_{it})\phi_{\theta}^{0|0}(Y_{it+1},0,Y^{t-1}_{it-p},X_i) \\
     &\phi_{\theta}^{1|1}(Y_{it+1},Y_{it},Y_{it-1},X_i)=Y_{it}\phi_{\theta}^{1|1}(Y_{it+1},1,Y^{t-1}_{it-p},X_i) \\
     &\mathbb{E}\left[\phi_{\theta_0}^{k|k}(Y_{it+1},Y_{it},Y^{t-1}_{it-p},X_i)|Y_i^0,Y_{i1}^{t-1},X_i,A_i\right]= \pi^{k|k,Y_{it-1},\ldots,Y_{it-(p-1)}}_{t}(A_i,X_{i}), \quad  \forall k \in \mathcal{Y}
\end{align*}
 By definition, $\phi_{\theta}^{0|0}(Y_{it+1},Y_{it},Y^{t-1}_{it-p},X_i)$ is null when $Y_{it}\neq 0$. Hence
\begin{align*}
    &\mathbb{E}\left[\phi_{\theta}^{0|0}(Y_{it+1},Y_{it},Y^{t-1}_{it-p},X_i)|Y_i^0,Y_{i1}^{t-1},X,A\right]=  \frac{1}{1+e^{\sum_{l=1}^{p} \gamma_{0l} Y_{it-l}+X_{it}'\beta_0+A_i}}\times (  \\
    & \frac{e^{\sum_{l=2}^{p} \gamma_{0l} Y_{it+1-l}+X_{it+1}'\beta_0+A_i}}{1+e^{\sum_{l=2}^{p} \gamma_{0l} Y_{it+1-l}+X_{it+1}'\beta_0+A_i}}\phi_{\theta}^{0|0}(1,0,Y^{t-1}_{it-p},X_i)+\frac{1}{1+e^{\gamma_{02} Y_{it-1}+X_{it+1}'\beta_0+A_i}}\phi_{\theta}^{0|0}(0,0,Y^{t-1}_{it-p},X_i) )
\end{align*}
Thus, to obtain $\pi^{0|0,Y_{it-1},\ldots,Y_{it-(p-1)}}_{t}(A_i,X_{i})=\frac{1}{1+e^{\sum_{l=2}^{p} \gamma_{0l} Y_{it+1-l}+X_{it+1}'\beta_0+A_i}}$ at $\theta=\theta_0$, we must set:
\begin{align*}
   \phi_{\theta}^{0|0}(1,0,Y^{t-1}_{it-p},X_i)&=e^{\gamma_{1}Y_{it-1}-\sum_{l=2}^{p} \gamma_{l}\Delta Y_{it+1-l}-\Delta X_{it+1}'\beta} \\
    \phi_{\theta}^{0|0}(0,0,Y^{t-1}_{it-p},X_i)&=1 \\
    \phi_{\theta}^{0|0}(k,1,Y^{t-1}_{it-p},X_i)&=0, \forall k \in \mathcal{Y}
\end{align*}
more compactly this writes, $$\phi_{\theta}^{0|0}(Y_{it+1},Y_{it},Y^{t-1}_{it-p},X_i)=(1-Y_{it})e^{Y_{it+1}(\gamma_{1}Y_{it-1}-\sum_{l=2}^{p} \gamma_{l}\Delta Y_{it+1-l}-\Delta X_{it+1}'\beta)}$$
Analogously, $\phi_{\theta}^{1|1}(Y_{it+1},Y_{it},Y^{t-1}_{it-p},X_i)$ is null when $Y_{it}\neq 1$. Hence
\begin{align*}
    &\mathbb{E}\left[\phi_{\theta}^{1|1}(Y_{it+1},Y_{it},Y^{t-1}_{it-p},X_i))|Y_i^0,Y_{1}^{t-1},X,A\right]=  \frac{e^{\sum_{l=1}^{p} \gamma_{0l} Y_{it-l}+X_{it}'\beta_0+A_i}}{1+e^{\sum_{l=1}^{p} \gamma_{0l} Y_{it-l}+X_{it}'\beta_0+A_i}}\times (  \\
    & \frac{e^{\gamma_{01}+\sum_{l=2}^{p} \gamma_{0l} Y_{it+1-l}+X_{it+1}'\beta_0+A_i}}{1+e^{\gamma_{01}+\sum_{l=2}^{p} \gamma_{0l} Y_{it+1-l}+X_{it+1}'\beta_0+A_i}}\phi_{\theta}^{1|1}(1,1,Y^{t-1}_{it-p},X_i)+\frac{1}{1+e^{\gamma_{01}+\gamma_{02} Y_{it-1}+X_{it+1}'\beta_0+A_i}}\phi_{\theta}^{1|1}(0,1,Y^{t-1}_{it-p},X_i) )
\end{align*}
Consequently, to get $\pi^{1|1,Y_{it-1},\ldots,Y_{it-(p-1)}}_{t}(A_i,X_{i})=\frac{e^{\gamma_{01}+\sum_{l=2}^{p} \gamma_{0l} Y_{it+1-l}+X_{it+1}'\beta_0+A_i}}{1+e^{\gamma_{01}+\sum_{l=2}^{p} \gamma_{0l} Y_{it+1-l}+X_{it+1}'\beta_0+A_i}}$ at $\theta=\theta_0$, we must set:
\begin{align*}
    \phi_{\theta}^{1|1}(1,1,Y^{t-1}_{it-p},X_i)&=1 \\
    \phi_{\theta}^{1|1}(0,1,Y^{t-1}_{it-p},X_i)&=e^{\gamma_{1}(1-Y_{it-1})+\sum_{l=2}^{p} \gamma_{l}\Delta Y_{it+1-l}+\Delta X_{it+1}'\beta} \\
    \phi_{\theta}^{1|1}(k,0,Y^{t-1}_{it-p},X_i)&=0, \forall k \in \mathcal{Y}
\end{align*}
This can be written succinctly as: $$ \phi_{\theta}^{1|1}(Y_{it+1},Y_{t},Y^{t-1}_{it-p},X_i)=Y_{it}e^{ (1-Y_{it+1})\left(\gamma_{1}(1-Y_{it-1})+\sum_{l=2}^{p} \gamma_{l}\Delta Y_{it+1-l}+\Delta X_{it+1}'\beta\right)}$$ which completes the proof of the Lemma. \\

\indent Now, for $T\geq p+1$ fix $t\in\{p,\ldots,T-1\}$ and $y=(y_1,\ldots,y_p)=y_{1}^p\in \{0,1\}^p$.
We will prove by finite induction the statement $\mathcal{P}(k)$:
\begin{align*}
    \mathbb{E}\left[\phi_{\theta_0}^{y_1|y_{1}^{k+1}}(Y_{it+1},Y_{it},Y^{t-1}_{it-(p+k)},X_i)|Y_i^0,Y_{i1}^{t-(k+1)},X_i,A_i\right]&=\pi^{y_1|y_{1}^{k+1},Y_{it-(k+1)},\ldots,Y_{it-(p-1)}}_{t}(A_i,X_i)
\end{align*}
for $k=0,\ldots,p-2$ for $p\geq 2$. 
\newpage
\noindent \textbf{Base step:} \\
$\mathcal{P}(0)$ is true by Lemma \ref{lemma_theorem_ARp_transit} which also deals with the edge case $p=2$. Thus, let us assume $p\geq 3$ in the remainder of the induction argument.\\

\noindent \textbf{Induction Step:} \\
Suppose $\mathcal{P}(k-1)$ is true for some $k\in \{1,\ldots,p-2\}$, we show that $\mathcal{P}(k)$ is true.
Using the law of iterated expectations, the induction hypothesis $\mathcal{P}(k-1)$ and the identities of Lemma \ref{tech_lemma_1}, we have: \\
\noindent If $y_1=0,y_{k+1}=1$
\begin{align*}
  & \mathbb{E}\left[\phi_{\theta_0}^{0|0,y_{2}^{k},1}(Y_{it+1},Y_{it},Y^{t-1}_{it-(p+k)},X_i)|Y_i^0,Y_{i1}^{t-(k+1)},X_i,A_i\right] \\
  &=\mathbb{E}\left[(1-Y_{it-k})+w^{0|0,y_{2}^k,1}_t(\theta_0)\phi_{\theta_0}^{0|0,y_{2}^{k}}(Y_{it+1},Y_{it},Y^{t-1}_{it-(p+k-1)},X_i)Y_{it-k}|Y_i^0,Y_{i1}^{t-(k+1)},X_i,A_i\right]\\
  &=\frac{1}{1+e^{u_{t-k}(\theta_0)+A_i}}\\
 &+w^{0|0,y_{2}^k,1}_t(\theta_0)\mathbb{E}\left[\mathbb{E}\left[\phi_{\theta_0}^{0|0,y_{2}^{k}}(Y_{it+1},Y_{it},Y^{t-1}_{it-(p+k-1)},X_i)|Y_i^0,Y_{i1}^{t-k},X_i,A_i\right]Y_{it-k}|Y_i^0,Y_{i1}^{t-(k+1)},X_i,A_i\right] \\
  &=\frac{1}{1+e^{u_{t-k}(\theta_0)+A_i}}w^{0|0,y_{2}^k,1}_t(\theta_0)\mathbb{E}\left[\pi^{0|0,y_{2}^{k},Y_{it-k},\ldots,Y_{it-(p-1)}}_{t}(A_i,X_i)Y_{it-k}|Y_i^0,Y_{i1}^{t-(k+1)},X_i,A_i\right] \\
  &=\frac{1}{1+e^{u_{t-k}(\theta_0)+A_i}}+w^{0|0,y_{2}^k,1}_t(\theta_0)\mathbb{E}\left[\frac{1}{1+e^{\sum_{r=2}^{k}\gamma_{0r}y_r+\sum_{r=k+1}^{p} \gamma_{0r}Y_{it-(r-1)}+X_{it+1}'\beta_0+A_i}}Y_{it-k}|Y_i^0,Y_{i1}^{t-(k+1)},X_i,A_i\right] \\
  &=\frac{1}{1+e^{u_{t-k}(\theta_0)+A_i}}+(1-e^{(k^{0|0,y_{2}^k,1}_{t}(\theta_0)-u_{t-k}(\theta_0))})\frac{1}{1+e^{k^{0|0,y_{2}^k,1}_{t}(\theta_0)+A_i}}\frac{e^{u_{t-k}(\theta_0)+A_i}}{1+e^{u_{t-k}(\theta_0)+A_i}} \\
  &=\frac{1}{1+e^{k^{0|0,y_{2}^k,1}_{t}(\theta_0)+A_i}} \\
  &=\pi^{0|0,y_{2}^{k},1,Y_{it-(k+1)},\ldots,Y_{it-(p-1)}}_{t}(A_i,X_i)
\end{align*}
\noindent If $y_1=0,y_{k+1}=0$
\begin{align*}
    & \mathbb{E}\left[\phi_{\theta_0}^{0|0,y_{2}^{k},0}(Y_{it+1},Y_{it},Y^{t-1}_{it-(p+k)},X_i)|Y_i^0,Y_{i1}^{t-(k+1)},X_i,A_i\right] \\
    &=\mathbb{E}\left[1-Y_{it-k}-w^{0|0,y_{2}^k,0}_t(\theta_0)\left(1-\phi_{\theta_0}^{0|0,y_{2}^{k}}(Y_{it+1},Y_{it},Y^{t-1}_{it-(p+k-1)},X_i)\right)(1-Y_{it-k})|Y_i^0,Y_{i1}^{t-(k+1)},X_i,A_i\right] \\
    &=1-\frac{e^{u_{t-k}(\theta_0)+A_i}}{1+e^{u_{t-k}(\theta_0)+A_i}}\\
    &-w^{0|0,y_{2}^k,0}_t(\theta_0)\times \\
    &\mathbb{E}\left[\mathbb{E}\left[\left(1-\phi_{\theta_0}^{0|0,y_{2}^{k}}(Y_{it+1},Y_{it},Y^{t-1}_{it-(p+k-1)},X_i)\right)|Y_i^0,Y_{i1}^{t-k},X_i,A_i\right](1-Y_{it-k})|Y_i^0,Y_{i1}^{t-(k+1)},X_i,A_i\right] \\
    &=1-\frac{e^{u_{t-k}(\theta_0)+A_i}}{1+e^{u_{t-k}(\theta_0)+A_i}} 
    -w^{0|0,y_{2}^k,0}_t(\theta_0)\mathbb{E}\left[(1-\pi^{0|0,y_{2}^{k},Y_{it-k},\ldots,Y_{it-(p-1)}}_{t}(A_i,X_i))(1-Y_{it-k})|Y_i^0,Y_{i1}^{t-(k+1)},X_i,A_i\right] \\
    &=1-\frac{e^{u_{t-k}(\theta_0)+A_i}}{1+e^{u_{t-k}(\theta_0)+A_i}} \\
    &-w^{0|0,y_{2}^k,0}_t(\theta_0)\mathbb{E}\left[\frac{e^{\sum_{r=2}^{k}\gamma_{0r}y_r+\sum_{r=k+1}^{p} \gamma_{0r}Y_{it-(r-1)}+X_{it+1}'\beta_0+A_i}}{1+e^{\sum_{r=2}^{k}\gamma_{0r}y_r+\sum_{r=k+1}^{p} \gamma_{0r}Y_{it-(r-1)}+X_{it+1}'\beta_0+A_i}}(1-Y_{it-k})|Y_i^0,Y_{i1}^{t-(k+1)},X_i,A_i\right] \\
    &=1-\left(\frac{e^{u_{t-k}(\theta_0)+A_i}}{1+e^{u_{t-k}(\theta_0)+A_i}}
    +(1-e^{-(k^{0|0,y_{2}^k,0}_{t}(\theta_0)-u_{t-k}(\theta_0))})\frac{e^{k^{0|0,y_{2}^k,0}_{t}(\theta_0)+A_i}}{1+e^{k^{0|0,y_{2}^k,0}_{t}(\theta_0)+A_i}}\frac{1}{1+e^{u_{t-k}(\theta_0)+A_i}}\right) \\
    &=1-\frac{e^{k^{0|0,y_{2}^k,0}_{t}(\theta_0)+A_i}}{1+e^{k^{0|0,y_{2}^k,0}_{t}(\theta_0)+A_i}} \\
    &=\frac{1}{1+e^{k^{0|0,y_{2}^k,0}_{t}(\theta_0)+A_i}} \\
    &=\pi^{0|0,y_{2}^{k},0,Y_{it-(k+1)},\ldots,Y_{it-(p-1)}}_{t}(A_i,X_i)
\end{align*}
\noindent If $y_1=1,y_{k+1}=0$
\begin{align*}
     & \mathbb{E}\left[\phi_{\theta_0}^{1|1,y_{2}^{k},0}(Y_{it+1},Y_{it},Y^{t-1}_{it-(p+k)},X_i)|Y_i^0,Y_{i1}^{t-(k+1)},X_i,A_i\right] \\
     &=\mathbb{E}\left[Y_{it-k}+w^{1|1,y_{2}^k,0}_t(\theta_0)\phi_{\theta_0}^{1|1,y_{2}^{k}}(Y_{it+1},Y_{it},Y^{t-1}_{it-(p+k-1)},X_i)(1-Y_{it-k})|Y_i^0,Y_{i1}^{t-(k+1)},X_i,A_i\right]\\
     &=\frac{e^{u_{t-k}(\theta_0)+A_i}}{1+e^{u_{t-k}(\theta_0)+A_i}}+w^{1|1,y_{2}^k,0}_t(\theta_0)\times \\
     &\mathbb{E}\left[\mathbb{E}\left[\phi_{\theta_0}^{1|1,y_{2}^{k}}(Y_{it+1},Y_{it},Y^{t-1}_{it-(p+k-1)},X_i)|Y_i^0,Y_{i1}^{t-k},X_i,A_i\right](1-Y_{it-k})|Y_i^0,Y_{i1}^{t-(k+1)},X_i,A_i\right] \\
     &=\frac{e^{u_{t-k}(\theta_0)+A_i}}{1+e^{u_{t-k}(\theta_0)+A_i}}+w^{1|1,y_{2}^k,0}_t(\theta_0)\mathbb{E}\left[\pi^{1|1,y_{2}^{k},Y_{it-k},\ldots,Y_{it-(p-1)}}_{t}(A_i,X_i)(1-Y_{it-k})|Y_i^0,Y_{i1}^{t-(k+1)},X_i,A_i\right] \\
     &=\frac{e^{u_{t-k}(\theta_0)+A_i}}{1+e^{u_{t-k}(\theta_0)+A_i}}\\
     &+w^{1|1,y_{2}^k,0}_t(\theta_0)\mathbb{E}\left[\frac{e^{\gamma_{01}+\sum_{r=2}^{k}\gamma_{0r}y_r+\sum_{r=k+1}^{p} \gamma_{0r}Y_{it-(r-1)}+X_{it+1}'\beta_0+A_i}}{1+e^{\gamma_{01}+\sum_{r=2}^{k}\gamma_{0r}y_r+\sum_{r=k+1}^{p} \gamma_{0r}Y_{it-(r-1)}+X_{it+1}'\beta_0+A_i}}(1-Y_{it-k})|Y_i^0,Y_{i1}^{t-(k+1)},X_i,A_i\right] \\
     &=\frac{e^{u_{t-k}(\theta_0)+A_i}}{1+e^{u_{t-k}(\theta_0)+A_i}}+(1-e^{-(k^{1|1,y_{2}^k,0}_{t}(\theta_0)-u_{t-k}(\theta_0))})\frac{e^{k^{1|1,y_{2}^k,0}_{t}(\theta_0)+A_i}}{1+e^{k^{1|1,y_{2}^k,0}_{t}(\theta_0)+A_i}}\frac{1}{1+e^{u_{t-k}(\theta_0)+A_i}} \\
     &=\frac{e^{k^{1|1,y_{2}^k,0}_{t}(\theta_0)+A_i}}{1+e^{k^{1|1,y_{2}^k,0}_{t}(\theta_0)+A_i}} \\
     &=\pi^{1|1,y_{2}^{k},0,Y_{it-(k+1)},\ldots,Y_{it-(p-1)}}_{t}(A_i,X_i)
\end{align*} 
\noindent If $y_1=1,y_{k+1}=1$
\begin{align*}
        & \mathbb{E}\left[\phi_{\theta_0}^{1|1,y_{2}^{k},1}(Y_{it+1},Y_{it},Y^{t-1}_{it-(p+k)},X_i)|Y_i^0,Y_{i1}^{t-(k+1)},X_i,A_i\right] \\
         &=\mathbb{E}\left[1-(1-Y_{it-k})-w^{1|1,y_{2}^k,1}_t(\theta_0)\left(1-\phi_{\theta_0}^{1|1,y_{2}^{k}}(Y_{it+1},Y_{it},Y^{t-1}_{it-(p+k-1)},X_i)\right)Y_{it-k}|Y_i^0,Y_{i1}^{t-(k+1)},X_i,A_i\right]\\
         &=1-\frac{1}{1+e^{u_{t-k}(\theta_0)+A_i}}\\
         &-w^{1|1,y_{2}^k,1}_t(\theta_0)\mathbb{E}\left[\mathbb{E}\left[\left(1-\pi^{1|1,y_{2}^{k},Y_{it-k},\ldots,Y_{it-(p-1)}}_{t}(A_i,X_i)\right)|Y_i^0,Y_{i1}^{t-k},X_i,A_i\right]Y_{it-k}|Y_i^0,Y_{i1}^{t-(k+1)},X_i,A_i\right] \\
         &=1-\frac{1}{1+e^{u_{t-k}(\theta_0)+A_i}}\\
         &-w^{1|1,y_{2}^k,1}_t(\theta_0)\mathbb{E}\left[\frac{1}{1+e^{\gamma_{01}+\sum_{r=2}^{k}\gamma_{0r}y_r+\sum_{r=k+1}^{p} \gamma_{0r}Y_{it-(r-1)}+X_{it+1}'\beta_0+A_i}}Y_{it-k}|Y_i^0,Y_{i1}^{t-(k+1)},X_i,A_i\right] \\
         &=1-\left(\frac{1}{1+e^{u_{t-k}(\theta_0)+A_i}}+(1-e^{(k^{1|1,y_{2}^k,1}_{t}(\theta_0)-u_{t-k}(\theta_0))})\frac{1}{1+e^{k^{1|1,y_{2}^k,1}_{t}(\theta_0)+A_i}}\frac{e^{u_{t-k}(\theta_0)+A_i}}{1+e^{u_{t-k}(\theta_0)+A_i}}\right) \\
         &=1-\frac{1}{1+e^{k^{1|1,y_{2}^k,1}_{t}(\theta_0)+A_i}} \\
         &=\frac{e^{k^{1|1,y_{2}^k,1}_{t}(\theta_0)+A_i}}{1+e^{k^{1|1,y_{2}^k,1}_{t}(\theta_0)+A_i}} \\
        &=\pi^{1|1,y_{2}^{k},1,Y_{it-k},\ldots,Y_{it-(p-1)}}_{t}(A_i,X_i)
\end{align*}

\noindent Putting these intermediate results together, we have effectively proved that
\begin{align*}
    &\mathbb{E}\left[\phi_{\theta_0}^{y_1|y_{1}^{k+1}}(Y_{it+1},Y_{it},Y^{t-1}_{it-(p+k)},X_i)|Y_i^0,Y_{i1}^{t-(k+1)},X_i,A_i\right]=\pi^{y_1|y_{1}^{k+1},Y_{it-(k+1)},\ldots,Y_{it-(p-1)}}_{t}(A_i,X_i)
\end{align*}
which shows that $\mathcal{P}(k)$ is true and completes the induction argument. \\
\indent Now, it only remains to show that
\begin{align*}
    \mathbb{E}\left[\phi_{\theta_0}^{y_1|y_{1}^{p}}(Y_{it+1},Y_{it},Y^{t-1}_{it-(2p-1)},X_i)|Y_i^0,Y_{i1}^{t-p},X_i,A_i\right]=\pi^{y_1|y_{1}^{p}}_{t}(A_i,X_i)
\end{align*}
To this end, it suffices to perform calculations identical to those used in the induction argument but using this time
\begin{align*}
    &\mathbb{E}\left[\phi_{\theta_0}^{y_1|y_{1}^{p-1}}(Y_{it+1},Y_{it},Y^{t-1}_{it-(2p-2)},X_i)|Y_i^0,Y_{i1}^{t-(p-1)},X_i,A_i\right]=\pi^{y_1|y_{1}^{p-1},Y_{it-(p-1)}}_{t}(A_i,X_i) \\
    &k^{y_1|y_{1}^{p}}_{t}(\theta)=\sum_{r=1}^{p}\gamma_{r}y_r+X_{it+1}'\beta \\
    &u_{t-(p-1)}(\theta)=\sum_{r=1}^{p} \gamma_{r}Y_{it-(r+p-1)}+X_{it-(p-1)}'\beta \\
    &w^{y_1|y_{1}^{p}}_t(\theta)=\left[1-e^{(k^{y_1|y_{1}^{p}}_{t}(\theta)-u_{t-(p-1)}(\theta))}\right]^{y_{p}}\left[1-e^{-(k^{y_1|y_{1}^{p}}_{t}(\theta)-u_{t-(p-1)}(\theta))}\right]^{1-y_{p}}
\end{align*}
This concludes the proof of the theorem.

\section{Identification of the AR(2) with strictly exogenous regressors and $T=3$} \label{appendix_identif_AR2_logit_general}
\subsection{Identification for \texorpdfstring{$T=3$}{} with variability in the initial condition} \label{appendix_identif_AR2_logit_general_subsec1}
\noindent By Theorem \ref{theorem_ARp_transit}, the transition functions associated to: $\pi_{2}^{0|0,0}(A_i,X_i),\pi_{2}^{0|0,1}(A_i,X_i),\pi_{2}^{1|1,0}(A_i,X_i), \pi_{2}^{1|1,1}(A_i,X_i)$ are given by:
\begin{align*}
    &\phi_{\theta}^{0|0,0}(Y_{i3},Y_{i2},Y^{1}_{i-1},X_i)=e^{\gamma_1Y_{i0}+\gamma_2Y_{i-1}-X_{i31}'\beta}(1-Y_{i1}) \\
    &+\left(1-e^{\gamma_1Y_{i0}+\gamma_2Y_{i-1}-X_{i31}'\beta}\right)(1-Y_{i1})(1-Y_{i2})e^{Y_{i3}(\gamma_{2}Y_{i0}- X_{i32}'\beta)} \\
     &\phi_{\theta}^{0|0,1}(Y_{i3},Y_{i2},Y^{1}_{i-1},X_i)=(1-Y_{i1})+\left(1-e^{-\gamma_1Y_{i0}+\gamma_2(1-Y_{i-1})+X_{i31}'\beta}\right)Y_{i1}(1-Y_{i2})e^{Y_{i3}(\gamma_{1}-\gamma_{2}(1-Y_{i0})- X_{i32}'\beta)} \\
    &\phi_{\theta}^{1|1,1}(Y_{i3},Y_{i2},Y^{1}_{i-1},X_i)=e^{\gamma_1(1-Y_{i0})+\gamma_2(1-Y_{i-1})+X_{i31}'\beta}Y_{i1} \\
    &+\left(1-e^{\gamma_1(1-Y_{i0})+\gamma_2(1-Y_{i-1})+X_{i31}'\beta}\right)Y_{i1}Y_{i2}e^{(1-Y_{i3})(\gamma_{2}(1-Y_{i0})+X_{i32}'\beta)} \\
    &\phi_{\theta}^{1|1,0}(Y_{i3},Y_{i2},Y^{1}_{i-1},X_i)=Y_{i1}+\left(1-e^{-\gamma_1(1-Y_{i0})+\gamma_2Y_{i-1}-X_{i31}'\beta}\right)(1-Y_{i1})Y_{i2}e^{(1-Y_{i3})(\gamma_{1}-\gamma_{2}Y_{i0}+ X_{i32}'\beta)}
\end{align*}
Moreover, an application of Lemma \ref{lemma_theorem_ARp_transit} gives 
\begin{align*}
     &\phi_{\theta}^{0|0}(Y_{i2},Y_{i1},Y_{i-1}^{0},X_i)=(1-Y_{i1})e^{Y_{i2}(\gamma_{1}Y_{i0}-\gamma_{2}(Y_{i0}-Y_{i-1})-X_{i21}'\beta)} \\
    &\phi_{\theta}^{1|1}(Y_{i2},Y_{i1},Y^{0}_{i-1},X_i)=Y_{i1}e^{ (1-Y_{i2})\left(\gamma_{1}(1-Y_{i0})+ \gamma_{2}(Y_{i0}-Y_{i-1})+X_{i21}'\beta\right)}
\end{align*}
such that:
\begin{align*}
    \mathbb{E}\left[\phi_{\theta}^{0|0}(Y_{i2},Y_{i1},Y_{i-1}^{0},X_i)|Y_{i-1},Y_{i0},A_i\right]&=\pi_{1}^{0|0,Y_{i0}}(A_i,X_i)=\frac{1}{1+e^{\gamma_{2} Y_{i0}+X_{i2}'\beta+A_i}} \\
    \mathbb{E}\left[\phi_{\theta}^{1|1}(Y_{i2},Y_{i1},Y_{i-1}^{0},X_i)|Y_{i-1},Y_{i0},A_i\right]&=\pi_{1}^{1|1,Y_{i0}}(A_i,X_i)=\frac{e^{\gamma_{1}+\gamma_{2} Y_{i0}+X_{i2}'\beta+A_i}}{1+e^{\gamma_{1}+\gamma_{2} Y_{i0}++X_{i2}'\beta+A_i}}
\end{align*}

\noindent For $\pi_{2}^{0|0,0}(A_i,X_i)$ and $\pi_{1}^{0|0,Y_{i0}}(A_i,X_i)$ to match, we require  both $Y_{i0}=0$ and $X_{i3}=X_{i2}$ in which case:
\begin{align*}
    \phi_{\theta}^{0|0,0}(Y_{i1}^3,0,Y_{i-1},X_i)&=e^{\gamma_2Y_{i-1}-X_{i31}'\beta}(1-Y_{i1})+ \left(1-e^{\gamma_2Y_{i-1}-X_{i31}'\beta}\right)(1-Y_{i1})(1-Y_{i2}) \\
    \phi_{\theta}^{0|0}(Y_{i1}^2,0,Y_{i-1},X_i)&=(1-Y_{i1})e^{Y_{i2}(\gamma_{2}Y_{i-1}-X_{i31}'\beta)}  \\
    &=(1-Y_{i1})Y_{i2}e^{\gamma_{2}Y_{i-1}-X_{i31}'\beta}+(1-Y_{i1})(1-Y_{i2})
\end{align*}
Therefore,
\begin{align*}
    \psi_{\theta}^{0|0,0}(Y_{i1}^3,0,Y_{i-1},X_i)&=\phi_{\theta}^{0|0,0}(Y_{i1}^3,0,Y_{i-1},X_i)-\phi_{\theta}^{0|0}(Y_{i1}^2,0,Y_{i-1},X_i)=0
\end{align*}
So there is no information about the model parameters in this moment function. \\

\noindent For $\pi_{2}^{0|0,1}(A_i,X_i)$ and $\pi_{1}^{0|0,Y_{i0}}(A_i,X_i)$ to match, we require both $Y_{i0}=1$ and $X_{i3}=X_{i2}$ in which case:
\begin{align*}
    \phi_{\theta}^{0|0,1}(Y_{i1}^3,1,Y_{i-1},X_i)&=(1-Y_{i1})+\left(1-e^{-\gamma_1+\gamma_2(1-Y_{i-1})+X_{i31}'\beta}\right)Y_{i1}(1-Y_{i2})e^{\gamma_{1}Y_{i3}} \\
    \phi_{\theta}^{0|0}(Y_{i1}^2,1,Y_{i-1},X_i)&=(1-Y_{i1})e^{Y_{i2}(\gamma_{1}-\gamma_{2}(1-Y_{i-1})-X_{i31}'\beta)}
\end{align*}
Then, a valid moment condition that depends on all model parameters is:
\begin{align*}
     &\psi_{\theta}^{0|0,1}(Y_{i1}^3,1,Y_{i-1},X_i)=\phi_{\theta}^{0|0,1}(Y_{i1}^3,1,Y_{i-1},X_i)-  \phi_{\theta}^{0|0}(Y_{i1}^2,1,Y_{i-1},X_i) \\
     &=
     \left(1-e^{-\gamma_1+\gamma_2(1-Y_{i-1})+X_{i31}'\beta}\right)e^{\gamma_{1}}Y_{i1}(1-Y_{i2})Y_{i3}\\
     &+\left(1-e^{-\gamma_1+\gamma_2(1-Y_{i-1})+X_{i31}'\beta}\right)Y_{i1}(1-Y_{i2})(1-Y_{i3}) \\
     &-e^{\gamma_{1}-\gamma_{2}(1-Y_{i-1})-X_{i31}'\beta}(1-e^{-\gamma_{1}+\gamma_{2}(1-Y_{i-1})+X_{i31}'\beta})(1-Y_{i1})Y_{i2}
\end{align*}
Rescaling this moment function by the factor $\left(e^{\gamma_{1}-\gamma_{2}(1-Y_{i-1})-X_{i31}'\beta}(1-e^{-\gamma_{1}+\gamma_{2}(1-Y_{i-1})+X_{i31}'\beta})\right)^{-1}$, one obtains \begin{align*}
     \widetilde{\psi_{\theta}^{0|0,1}}(Y_{i1}^3,1,Y_{i-1},X_i)
     &=
     e^{\gamma_{2}(1-Y_{i-1})+X_{i31}'\beta}Y_{i1}(1-Y_{i2})Y_{i3}+e^{-\gamma_{1}+\gamma_{2}(1-Y_{i-1})+X_{i31}'\beta}Y_{i1}(1-Y_{i2})(1-Y_{i3})\\
     &-(1-Y_{i1})Y_{i2}
\end{align*}
Thus, for for the initial condition $Y_{i0}=1, Y_{i-1}=1$, we have
\begin{align*}
     \widetilde{\psi_{\theta}^{0|0,1}}(Y_{i1}^3,1,1,X_i)
     &=
     e^{X_{i31}'\beta}Y_{i1}(1-Y_{i2})Y_{i3}+e^{-\gamma_{1}+X_{i31}'\beta}Y_{i1}(1-Y_{i2})(1-Y_{i3})-(1-Y_{i1})Y_{i2}
\end{align*}
which only depends on $\gamma_1$ and $\beta$. In the notation of \cite{honore2020moment}, this coincides with their moment function $m_{(1,1)}$. Clearly, it is strictly decreasing in $\gamma_1$. Furthermore, this moment function is either increasing or decreasing in $\beta_k$ depending on the sign of $X_{i3k}-X_{i1k}$. \cite{honore2020moment} show that these monotonocity properties can be exploited to uniquely identifies $\gamma_1,\beta$. Instead, for the initial condition $Y_{i0}=1, Y_{i-1}=0$, we have 
\begin{align*}
     \widetilde{\psi_{\theta}^{0|0,1}}(Y_{i1}^3,1,0,X_i)
     &=
     e^{\gamma_{2}+X_{i31}'\beta}Y_{i1}(1-Y_{i2})Y_{i3}+e^{-\gamma_{1}+\gamma_{2}+X_{i31}'\beta}Y_{i1}(1-Y_{i2})(1-Y_{i3})-(1-Y_{i1})Y_{i2}
\end{align*}
which \cite{honore2020moment} denote as $m_{(1,0)}$. Provided that $\gamma_1,\beta$ are identified, the strict monotonicity of the moment functions in $\gamma_2$  ensure that $\gamma_2$ is identified. \\
\noindent Analogously, for $\pi_{2}^{1|1,0}(A_i,X_i)$ and $\pi_{1}^{0|0,Y_{i0}}(A_i)$ to match, we require both $Y_{i0}=0$ and $X_{i3}=X_{i2}$ in which case:
\begin{align*}
    \phi_{\theta}^{1|1,0}(Y_{i1}^3,0,Y_{i-1},X_i)&=Y_{i1}+\left(1-e^{-\gamma_1+\gamma_2Y_{i-1}-X_{i31}'\beta}\right)(1-Y_{i1})Y_{i2}e^{\gamma_{1}(1-Y_{i3})} \\
    \phi_{\theta}^{1|1}(Y_{i1}^2,0,Y_{i-1},X_i)&=Y_{i1}e^{ (1-Y_{i2})\left(\gamma_{1}- \gamma_{2}Y_{i-1}+X_{i31}'\beta\right)}
\end{align*}
Then, a valid moment function that depends on all model parameters is:
\begin{align*}
     \psi_{\theta}^{1|1,0}(Y_{i1}^3,0,Y_{i-1},X_i)&= \phi_{\theta}^{1|1,0}(Y_{i1}^3,0,Y_{i-1},X_i)- \phi_{\theta}^{1|1}(Y_{i1}^2,0,Y_{i-1},X_i) \\
     &=\left(1-e^{-\gamma_1+\gamma_2Y_{i-1}-X_{i31}'\beta}\right)e^{\gamma_{1}}(1-Y_{i1})Y_{i2}(1-Y_{i3}) \\
     &+\left(1-e^{-\gamma_1+\gamma_2Y_{i-1}-X_{i31}'\beta}\right)(1-Y_{i1})Y_{i2}Y_{i3} \\
     &-e^{\gamma_{1}- \gamma_{2}Y_{i-1}+X_{i31}'\beta}\left(1-e^{-\gamma_{1}+ \gamma_{2}Y_{i-1}-X_{i31}'\beta}\right)Y_{i1}(1-Y_{i2})
\end{align*}
Rescaling this moment function by the factor $\left(e^{\gamma_{1}- \gamma_{2}Y_{i-1}+X_{i31}'\beta}\left(1-e^{-\gamma_{1}+ \gamma_{2}Y_{i-1}-X_{i31}'\beta}\right)\right)^{-1}$, one obtains
\begin{align*}
     \widetilde{\psi_{\theta}^{1|1,0}}(Y_{i1}^3,0,Y_{i-1},X_i)     &=e^{\gamma_2Y_{i-1}-X_{i31}'\beta}(1-Y_{i1})Y_{i2}(1-Y_{i3}) +e^{-\gamma_1+\gamma_2Y_{i-1}-X_{i31}'\beta}(1-Y_{i1})Y_{i2}Y_{i3} -Y_{i1}(1-Y_{i2})
\end{align*}
 For the initial condition $Y_{i0}=0,Y_{i-1}=0$, we have 
\begin{align*}
     \widetilde{\psi_{\theta}^{1|1,0}}(Y_{i1}^3,0,0,X_i)     &=e^{-X_{i31}'\beta}(1-Y_{i1})Y_{i2}(1-Y_{i3}) +e^{-\gamma_1-X_{i31}'\beta}(1-Y_{i1})Y_{i2}Y_{i3} -Y_{i1}(1-Y_{i2})
\end{align*}
This moment function also only depends on $\gamma_1, \beta$ and coincides with the moment function $m_{(0,0)}$ in \cite{honore2020moment}. Similarly to $ \widetilde{\psi_{\theta}^{0|0,1}}(Y_{i1}^3,1,1,X_i)$, the monotonicity properties of $\widetilde{\psi_{\theta}^{1|1,0}}(Y_{i1}^3,0,0,X_i)$ can be exploited to uniquely identifies $\gamma_1,\beta$ (see \cite{honore2020moment}). Instead, for the initial condition $Y_{i0}=0,Y_{i-1}=1$, we obtain 
\begin{align*}
     \widetilde{\psi_{\theta}^{1|1,0}}(Y_{i1}^3,0,1,X_i)   &=e^{\gamma_2-X_{i31}'\beta}(1-Y_{i1})Y_{i2}(1-Y_{i3}) +e^{-\gamma_1+\gamma_2-X_{i31}'\beta}(1-Y_{i1})Y_{i2}Y_{i3} -Y_{i1}(1-Y_{i2})
\end{align*}
Provided that $\gamma_1,\beta$ is identified, the strict monotonicity of this moment function in $\gamma_2$ implies that it identifies $\gamma_2$ uniquely.  This is $m_{(0,1)}$ in \cite{honore2020moment}.\\
\indent Lastly, for $\pi_{2}^{1|1,1}(A_i)$ and $\pi_{1}^{1|1,Y_{i0}}(A_i)$ to match, we require both $Y_{i0}=1$ and $X_{i3}=X_{i2}$ in which case:
\begin{align*}
    \phi_{\theta}^{1|1,1}(Y_{i1}^3,1,Y_{i-1},X_i)&=e^{\gamma_2(1-Y_{i-1})+X_{i31}'\beta}Y_{i1}+ \left(1-e^{\gamma_2(1-Y_{i-1})+X_{i31}'\beta}\right)Y_{i1}Y_{i2} \\
    \phi_{\theta}^{1|1}(Y_{i1}^2,1,Y_{i-1},X_i)&=Y_{i1}e^{ (1-Y_{i2})\left( \gamma_{2}(1-Y_{i-1})+X_{i21}'\beta\right)} \\
    &=Y_{i1}(1-Y_{i2})e^{ \gamma_{2}(1-Y_{i-1})+X_{i21}'\beta} +Y_{i1}Y_{i2}
\end{align*}
Then, a valid moment function
\begin{align*}
    \psi_{\theta}^{1|1,1}(Y_{i1}^3,1,Y_{i-1},X_i)&= \phi_{\theta}^{1|1,1}(Y_{i1}^3,1,Y_{i-1},X_i)- \phi_{\theta}^{1|1}(Y_{i1}^2,1,Y_{i-1},X_i) \\
    &=0
\end{align*}
is identically zero and hence contains no information about the model parameters.

\subsection{Proof of Theorem \ref{theorem_AR2_identif}}
\label{proof_theorem_AR2_identif}
We recall from the discussion of Section \ref{Section_AR2_identif} that $T=4$ and $K_{x}\geq 2$ so that there are at least $2$ exogenous explanatory variables. We have $X_{it}=(W_{it},R_{it}')'\in \mathbb{R}^{K_{x}}$, $\beta=(\beta_W,\beta_R')'\in \mathbb{R}^{K_{x}}$ and $Z_{i}=(R_{i}',W_{i1},W_{i3},W_{i4})'\in \mathbb{R}^{4K_{x}-1}$ . Our goal is to prove Theorem \ref{theorem_AR2_identif} under Assumptions \ref{assumption_1} and \ref{assumption_2}. \\
\noindent Specializing Proposition \ref{proposition_3} to the AR(2) with $T=4$ yields the valid moment function:
\begin{align*}
     &\psi_{\theta}^{0|0,0}(Y_{i4},Y_{i3},Y^{2}_{i-1},X_i)=
    \left(e^{\gamma_2Y_{i0}-X_{i42}'\beta}-1\right)(1-Y_{i1})(1-Y_{i2})Y_{i3}\\
    &+\left[e^{\gamma_2Y_{i0}-X_{i42}'\beta}+\left(1-e^{\gamma_2Y_{i0}-X_{i42}'\beta}\right)e^{- X_{i43}'\beta}-1\right](1-Y_{i1})(1-Y_{i2})(1-Y_{i3})Y_{i4} \\
    &+e^{\gamma_1(1-Y_{i0})+\gamma_2(Y_{i0}-Y_{i-1})+X_{i21}'\beta}Y_{i1}(1-Y_{i2})Y_{i3}\\
    &+e^{-\gamma_1Y_{i0}-\gamma_{2}Y_{i-1}+X_{i41}'\beta}\left[e^{\gamma_1+\gamma_2Y_{i0}-X_{i42}'\beta}+\left(1-e^{\gamma_1+\gamma_2Y_{i0}-X_{i42}'\beta}\right)e^{\gamma_{2}- X_{i43}'\beta}\right]Y_{i1}(1-Y_{i2})(1-Y_{i3})Y_{i4} \\
    &+e^{-\gamma_1Y_{i0}-\gamma_{2}Y_{i-1}+X_{i41}'\beta}Y_{i1}(1-Y_{i2})(1-Y_{i3})(1-Y_{i4}) \\
    &-(1-Y_{i1})Y_{i2}
\end{align*}
 Define, the \say{limiting} moment function, where we have taken $W_{i2}$ to $+\infty$
\begin{align} \label{limit_valid_moment_function}
\begin{split}
     \psi_{\theta,\infty}^{0|0,0}(Y_{i4},Y_{i3},Y^{2}_{i-1},Z_i)&=-(1-Y_{i1})(1-Y_{i2})Y_{i3} \\
    &+\left[e^{X_{i34}'\beta}-1\right](1-Y_{i1})(1-Y_{i2})(1-Y_{i3})Y_{i4} \\
    &+e^{-\gamma_1Y_{i0}+\gamma_{2}(1-Y_{i-1})+X_{i31}'\beta}Y_{i1}(1-Y_{i2})(1-Y_{i3})Y_{i4} \\
    &+e^{-\gamma_1Y_{i0}-\gamma_{2}Y_{i-1}+X_{i41}'\beta}Y_{i1}(1-Y_{i2})(1-Y_{i3})(1-Y_{i4})
\end{split}
\end{align} 
For $s\in\{-,+\}^{K_x}$, consider the moment objective
\begin{align*}
    \Psi_{s,y^0}^{0|0,0}(\theta)&=\lim_{w_{2}\to \infty} \mathbb{E}\left[ \psi_{\theta}^{0|0,0}(Y_{i4},Y_{i3},Y^{2}_{i-1},X_i)|Y_{i}^0=y^0,X_i\in \mathcal{X}_s,W_{i2}=w_2 \right]
\end{align*}
We will show in two successive steps (a) and (b) that
\begin{align*}
    \Psi_{s,y^0}^{0|0,0}(\theta)&=\lim_{w_{2}\to \infty} \mathbb{E}\left[ \psi_{\theta,\infty}^{0|0,0}(Y_{i4},Y_{i3},Y^{2}_{i-1},Z_i)|Y_{i}^0=y^0,X_i\in \mathcal{X}_s,W_{i2}=w_2 \right] \quad  \text{(a)}\\
    &= \mathbb{E}\left[ \psi_{\theta,\infty}^{0|0,0}(Y_{i4},Y_{i3},Y^{2}_{i-1},Z_i)|Y_{i}^0=y^0,X_i\in \mathcal{X}_s,W_{i2}=\infty \right] \quad \text{(b)}
\end{align*}
To establish (a), we start by observing that the history sequence $(1-Y_{i1})Y_{i2}$ featuring in $\psi_{\theta}^{0|0,0}$ has expectation zero. To see this, note that by iterated expectations
\begin{align*}
    &\lim_{w_2\to\infty} \mathbb{E}\left[(1-Y_{i1})Y_{i2}|Y_{i}^0=y^0,X_i\in \mathcal{X}_s,W_{i2}=w_2\right] \\
    &=\lim_{w_2\to\infty} \int \frac{e^{\gamma_{02}y_0+x_2'\beta_0+a}}{1+e^{\gamma_{02}y_0+x_2'\beta_0+a}}\frac{1}{1+e^{\gamma_{01}y_0+\gamma_{02}y_{i-1}+x_1'\beta_0+a}} p(a,z|y_{0},\mathcal{X}_s,w_2)dadz
\end{align*}
Now, $p(a,z|y_{0},\mathcal{X}_s,w_2)=p(a|y_{0},z,w_2)p(z|y_{0},\mathcal{X}_s,w_2)=p(a|y_{0},z,w_2)\frac{p(z|y_{0},w_2)\mathds{1}\{X_i\in \mathcal{X}_s\}}{\int_{\mathcal{X}_s}p(z|y_{0},w_2)dz}$. Hence, by part (iii) of Assumption \ref{assumption_2}, an integrable dominating function of the integrand is
\begin{align*}
   \frac{e^{\gamma_{02}y_0+x_2'\beta_0+a}}{1+e^{\gamma_{02}y_0+x_2'\beta_0+A_i}}\frac{1}{1+e^{\gamma_{01}y_0+\gamma_{02}y_{i-1}+x_1'\beta_0+a}} p(a,z|y_{0},\mathcal{X}_s,w_2)\leq d_0(a)\frac{d_{2}(z)}{\int_{\mathcal{X}_s}d_{1}(z)dz}
\end{align*}
Moreover, by parts (ii)-(iii) of Assumption \ref{assumption_2} and the Dominated Convergence Theorem, 
\begin{align*}
    \lim_{w_2\to\infty} p(a,z|y_{0},\mathcal{X}_s,w_2)&=q(a|y_{0},z)\frac{q(z|y_{0})\mathds{1}\{X_i\in \mathcal{X}_s\}}{\int_{\mathcal{X}_s}q(z|y_{0})dz}\equiv q(a,z|y_{0},\mathcal{X}_s)
\end{align*}
Hence another application of the Dominated Convergence Theorem gives
\begin{align*}
    &\lim_{w_2\to\infty} \mathbb{E}\left[(1-Y_{i1})Y_{i2}|Y_{i}^0=y^0,X_i\in \mathcal{X}_s,W_{i2}=w_2\right] \\
    &= \int \lim_{w_2\to\infty} \frac{e^{\gamma_{02}y_0+x_2'\beta_0+a}}{1+e^{\gamma_{02}y_0+x_2'\beta_0+a}}\frac{1}{1+e^{\gamma_{01}y_0+\gamma_{02}y_{i-1}+x_1'\beta_0+a}} p(a,z|y_{0},\mathcal{X}_s,w_2)dadz \\
    &= \int 0\times q(a,z|y_{0},\mathcal{X}_s)dadz \\
    &=0
\end{align*}
where the third line follows from the fact that $\lim_{w_2\to\infty} e^{w_2\beta_W}=0$ by Assumption \ref{assumption_1}. Applying the same arguments to each remaining summand of $\psi_{\theta}^{0|0,0}$ and collecting terms delivers (a). To obtain (b), we note that by part (iv) of Assumption \ref{assumption_1}, $w_2\mapsto \mathbb{E}\left[ \psi_{\theta,\infty}^{0|0,0}(Y_{i4},Y_{i3},Y^{2}_{i-1},Z_i)|Y_{i}^0=y^0,X_i\in \mathcal{X}_s,W_{i2}=w_2\right]$ is continuous with a well defined limit at infinity in light of (a). As a result, we can work directly with its continuous extension at infinity. \\
\indent Let us focus on the initial condition $y_0=y_{-1}=0$. It is clear from Equation (\ref{limit_valid_moment_funcition}) that $\Psi_{s,0,0}^{0|0,0}(\theta)$ does not depend on $\gamma_1$. Furthermore,  by parts (i) of Assumption  \ref{assumption_2} we note that we have the following integrable dominating functions for the derivative:
\begin{align*}
    \abs{\pdv{\psi_{\theta,\infty}^{0|0,0}(Y_{i4},Y_{i3},Y^{2}_{i-1},Z_i)}{\gamma_2}}&=e^{\gamma_{2}+X_{i31}'\beta}Y_{i1}(1-Y_{i2})(1-Y_{i3})Y_{i4}\leq \sup_{g_2\in \mathbb{G}_2,b\in \mathbb{B}} e^{g_{2}+2 \max(\abs{\bar{x}},\abs{\underline{x}})\norm{b}_1}  
    \\
    \abs{\pdv{\psi_{\theta,\infty}^{0|0,0}(Y_{i4},Y_{i3},Y^{2}_{i-1},Z_i)}{\beta_k}}&=\bigg\vert X_{ik ,34}e^{X_{i34}'\beta}(1-Y_{i1})(1-Y_{i2})(1-Y_{i3})Y_{i4} \\
    &+X_{ik,31}e^{\gamma_2+X_{i31}'\beta}Y_{i1}(1-Y_{i2})(1-Y_{i3})Y_{i4} \\
    &+X_{ik,41}e^{\gamma_2+X_{i31}'\beta}Y_{i1}(1-Y_{i2})(1-Y_{i3})(1-Y_{i4})\bigg\vert \\
    &  \leq \abs{X_{ik ,34}}e^{X_{i34}'\beta}+\abs{X_{ik,31}}e^{\gamma_2+X_{i31}'\beta} 
    +\abs{X_{ik,41}}e^{\gamma_2+X_{i31}'\beta} \\
    &\leq 2\max(\abs{\bar{x}},\abs{\underline{x}}) \sup_{b\in \mathbb{B}}e^{2 \max(\abs{\bar{x}},\abs{\underline{x}})\norm{b}_1}(1+2\sup_{g_2\in \mathbb{G}_2} e^{g_2})
\end{align*}
Hence, by Leibniz integral rule, we get
\begin{align*}
    &\pdv{\Psi_{s,0,0}^{0|0,0}(\theta)}{\gamma_2}\\
    &=\mathbb{E}\left[ \pdv{\psi_{\theta,\infty}^{0|0,0}(Y_{i4},Y_{i3},Y^{2}_{i-1},Z_i)}{\gamma_2}|Y_{i}^0=(0,0),X_i\in \mathcal{X}_s,W_{i2}=\infty \right] \\
    &=\mathbb{E}\left[e^{\gamma_{2}+X_{i31}'\beta}Y_{i1}(1-Y_{i2})(1-Y_{i3})Y_{i4}|Y_{i}^0=(0,0),X_i\in \mathcal{X}_s,W_{i2}=\infty \right] \\
    &=\mathbb{E}\left[e^{\gamma_{2}+X_{i31}'\beta} \underbrace{\mathbb{E}\left[Y_{i1}(1-Y_{i2})(1-Y_{i3})Y_{i4}|Y_{i}^0=(0,0),Z_i,W_{i2}=\infty,A_i\right]}_{>0}|Y_{i}^0=(0,0),X_i\in \mathcal{X}_s,W_{i2}=\infty \right] \\
    &>0
\end{align*}
Similarly, 
\begin{align*}
    &\pdv{\Psi_{s,0,0}^{0|0,0}(\theta)}{\beta_k}\\
    &=\mathbb{E}\left[ \pdv{\psi_{\theta,-\infty}^{0|0,0}(Y_{i4},Y_{i3},Y^{2}_{i-1},Z_i)}{\beta_k}|Y_{i}^0=(0,0),X_i\in \mathcal{X}_s,W_{i2}=\infty \right] \\
    &=\mathbb{E}\left[X_{ik ,34}e^{X_{i34}'\beta}\times \right. \\
    &\left. \underbrace{\mathbb{E}\left[(1-Y_{i1})(1-Y_{i2})(1-Y_{i3})Y_{i4}|Y_{i}^0=(0,0),Z_i,W_{i2}=\infty,A_i\right]}_{>0}|Y_{i}^0=(0,0),X_i\in \mathcal{X}_s,W_{i2}=\infty \right] \\
    &+\mathbb{E}\left[X_{ik,31}e^{\gamma_2+X_{i31}'\beta}\times \right.\\
    &\left. \underbrace{\mathbb{E}\left[Y_{i1}(1-Y_{i2})(1-Y_{i3})Y_{i4}|Y_{i}^0=(0,0),Z_i,W_{i2}=\infty,A_i\right]}_{>0}|Y_{i}^0=(0,0),X_i\in \mathcal{X}_s,W_{i2}=\infty \right] \\
    &+\mathbb{E}\left[X_{ik,41}e^{\gamma_2+X_{i31}'\beta}\times \right. \\
    &\left. \underbrace{\mathbb{E}\left[Y_{i1}(1-Y_{i2})(1-Y_{i3})(1-Y_{i4})|Y_{i}^0=(0,0),Z_i,W_{i2}=\infty,A_i\right]}_{>0}|Y_{i}^0=(0,0),X_i\in \mathcal{X}_s,W_{i2}=\infty \right]
\end{align*}
The last display shows that $\pdv{\Psi_{s,0,0}^{0|0,0}(\theta)}{\beta_k}>0$ if $s_k=+$ and $\pdv{\Psi_{s,0,0}^{0|0,0}(\theta)}{\beta_k}<0$ if $s_k=-$. Therefore, appealing to Lemma 2 in \cite{honore2020moment}, we conclude that the $2^{K_x}$ system of equations in $K_{x}+1$ unkowns given by:
\begin{align*}
    \Psi_{s,0,0}^{0|0,0}(\theta)=0, \quad \forall s\in \{-,+\}^{K_x}
\end{align*}
has at most one solution. It is precisely $(\gamma_{02},\beta_{0})$, since the validity of $\psi_{\theta}^{0|0,0}(Y_{i4},Y_{i3},Y^{2}_{i-1},X_i)$ for arbitrary $X_i$ directly implies the validity of the limiting moment $\psi_{\theta,\infty}^{0|0,0}(Y_{i4},Y_{i3},Y^{2}_{i-1},Z_i)$ at \say{$W_{i2}=\infty$}. Then, notice that for any other initial condition $y^0\in\{(0,1),(1,0),(1,1)\}$, the objective 
$\Psi_{s,y^0}^{0|0,0}(\theta)$ is strictly monotonic in $\gamma_1$. Hence, given $(\gamma_{02},\beta_{0})$, it point identifies $\gamma_{01}$. This concludes the proof of Theorem \ref{theorem_AR2_identif}.

\section{Proof of Proposition \ref{proposition_4}}

\noindent We recall that by definition,
\begin{align*}
&\Pi_{t}^{k_{1}^s|l_1^{p}}(y^0,x_{1}^{t+s})=\\
&\mathbb{E}\left[P(Y_{it+s}=k_s,\ldots, Y_{it+1}=k_1\,|\,Y_{it}=l_1,\ldots,Y_{it-(p-1)}=l_p,X_{i1}^{t+s}=x_{1}^{t+s},A_i)\,|\,Y_{i}^0=y^0,X_{i1}^{t+s}=x_{1}^{t+s}\right] 
\end{align*}
We have 
\begin{align*}
    P(Y_{it+s}=k_s,\ldots, Y_{it+1}=k_1\,|\,Y_{it}=l_1,\ldots,Y_{it-(p-1)}=l_p,X_{i1}^{t+s}=x_{1}^{t+s},A_i)=\frac{N^{k_{1}^s|l_1^{p}}(e^{a})}{D^{k_{1}^s|l_1^{p}}(e^{a})}
\end{align*}
where $N^{k_{1}^s|l_1^{p}}(e^{a}),D^{k_{1}^s|l_1^{p}}(e^{a})$ are polynomials in $e^{a}$.
There are two cases to consider. \\

\noindent \underline{\textbf{Case 1:}} $s< p$ \\
Then,
\begin{align*}
    N^{k_{1}^s|l_1^{p}}(e^{a})&=e^{k_1\left(\sum_{r=1}^{p}\gamma_{0r}l_{r}+x_{t+1}'\beta_0+a\right)}\prod_{j=1}^{s-1}e^{k_{j+1}\left(\sum_{r=1}^{j}\gamma_{0r}k_{j+1-r}+\sum_{r=j+1}^{p}\gamma_{0r}l_{r-j}+x_{t+1+j}'\beta_0+a\right)} \\
    D^{k_{1}^s|l_1^{p}}(e^{a})&=\left(1+e^{\sum_{r=1}^{p}\gamma_{0r}l_{r}+x_{t+1}'\beta_0+a}\right)\prod_{j=1}^{s-1}\left(1+e^{\sum_{r=1}^{j}\gamma_{0r}k_{j+1-r}+\sum_{r=j+1}^{p}\gamma_{0r}l_{r-j}+x_{t+1+j}'\beta_0+a}\right)
\end{align*}
We note that $\deg( N^{k_{1}^s|l_1^{p}}(e^{a}))\leq \deg(D^{k_{1}^s|l_1^{p}}(e^{a}))$ with strict inequality unless $k_{1}^s=1_{s}$. Furthermore, since by assumption for any   $t\in\{p,\ldots,T-2\}$, $s\in\{1,\ldots,T-1-t\}$ and $y,\tilde{y}\in \mathcal{Y}^{p}$, $\gamma_0'y+x_{t}'\beta_0\neq \gamma_0'\tilde{y}+x_{t+s}'\beta_0$, $D^{k_{1}^s|l_1^{p}}(e^{a})$ is a product of distinct irreducible polynomials in $e^{a}$. Consequently, standard results on \textit{partial fraction decompositions} entail that there exists a unique set of known coefficients $(\mu,\lambda_0,\lambda_1,\ldots,\lambda_{s-1})\in \mathbb{R}^{s+1}$ such that:

\begin{align*}
    \frac{N^{k_{1}^s|l_1^{p}}(e^{a})}{D^{k_{1}^s|l_1^{p}}(e^{a})}=\mu+\lambda_{0}\frac{1}{\left(1+e^{\sum_{r=1}^{p}\gamma_{0r}l_{r}+x_{t+1}'\beta_0+a}\right)}+\sum_{j=1}^{s-1}\lambda_j\frac{1}{1+e^{\sum_{r=1}^{j}\gamma_{0r}k_{j+1-r}+\sum_{r=j+1}^{p}\gamma_{0r}l_{r-j}+x_{t+1+j}'\beta_0+a}}
\end{align*}
with $\mu=0$ unless $k_{1}^s=1_{s}$. We can rewrite this in terms of transition probabilities as:
\begin{align*}
    \frac{N^{k_{1}^s|l_1^{p}}(e^{a})}{D^{k_{1}^s|l_1^{p}}(e^{a})}&=\mu+\lambda_{0}\pi_{t}^{0|l_{1}^p}(a,x_{t+1})+\sum_{j=1}^{s-1}\lambda_j\pi_{t+j}^{0|k_j,\ldots,k_1,l_{1}^{p-j}}(a,x_{t+1+j}) \\
    &=\mu+\lambda_{0}(1-l_1)\pi_{t}^{l_1|l_{1}^p}(a,x_{t+1})+\lambda_{0}l_1(1-\pi_{t}^{l_1|l_{1}^p}(a,x_{t+1}))+ \\
    &\sum_{j=1}^{s-1}\lambda_j(1-k_j)\pi_{t+j}^{k_j|k_j,\ldots,k_1,l_{1}^{p-j}}(a,x_{t+1+j})+\sum_{j=1}^{s-1}\lambda_jk_j(1-\pi_{t+j}^{k_j|k_j,\ldots,k_1,l_{1}^{p-j}}(a,x_{t+1+j}))
\end{align*}
This last result in conjunction with Theorem \ref{theorem_ARp_transit}, implies that:
\begin{align*}
    \Pi_{t}^{k_{1}^s|l_1^{p}}(y^0,x_{1}^{t+s})&=\mu\\
    &+\mathbb{E}\left[\lambda_{0}(1-l_1)\phi_{\theta_0}^{l_1|l_{1}^{p}}(Y^{t+1}_{it-(2p-1)},x_{1}^{t+s})+\lambda_{0}l_1\left(1-\phi_{\theta_0}^{l_1|l_{1}^{p}}(Y^{t+1}_{it-(2p-1)},x_{1}^{t+s})\right) \right. \\
    & +\left. \sum_{j=1}^{s-1}\lambda_j(1-k_j)\phi_{\theta_0}^{k_j|k_j,\ldots,k_1,l_{1}^{p-j}}(Y^{t+j+1}_{it+j-(2p-1)},x_{1}^{t+s}) \right. \\
    &\left. +\sum_{j=1}^{s-1}\lambda_jk_j\left(1-\phi_{\theta_0}^{k_j|k_j,\ldots,k_1,l_{1}^{p-j}}(Y^{t+j+1}_{it+j-(2p-1)},x_{1}^{t+s})\right) \,|\,Y_{i}^0=y^0,X_{i1}^{t+s}=x_{1}^{t+s}\right]
\end{align*}
which shows that $\Pi_{t}^{k_{1}^s|l_1^{p}}(y^0,x_{1}^{t+s})$ is identified given that $\theta_0$ is identified by assumption. \\

\noindent \underline{\textbf{Case 2:}} $s\geq p$ \\
Then,
\begin{align*}
    D^{k_{1}^s|l_1^{p}}(e^{a})&=\left(1+e^{\sum_{r=1}^{p}\gamma_{0r}l_{r}+x_{t+1}'\beta_0+a}\right)\prod_{j=1}^{p-1}\left(1+e^{\sum_{r=1}^{j}\gamma_{0r}k_{j+1-r}+\sum_{r=j+1}^{p}\gamma_{0r}l_{r-j}+x_{t+1+j}'\beta_0+a}\right)\\
    &\times \prod_{j=p}^{s-1}\left(1+e^{\sum_{r=1}^{p}\gamma_{0r}k_{j+1-r}+x_{t+1+j}'\beta_0+a}\right)
\end{align*}

\begin{align*}
    N^{k_{1}^s|l_1^{p}}(e^{a})&=e^{k_1\left(\sum_{r=1}^{p}\gamma_{0r}l_{r}+x_{t+1}'\beta_0+a\right)}\prod_{j=1}^{p-1}e^{k_{j+1}\left(\sum_{r=1}^{j}\gamma_{0r}k_{j+1-r}+\sum_{r=j+1}^{p}\gamma_{0r}l_{r-j}+x_{t+1+j}'\beta_0+a\right)}\\
    &\times \prod_{j=p}^{s-1}e^{k_{j+1}\left(\sum_{r=1}^{p}\gamma_{0r}k_{j+1-r}+x_{t+1+j}'\beta_0+a\right)}
\end{align*}
Invoking identical arguments as in the case $s<p$, there exists a unique set of known  coefficients $(\mu,\lambda_0,\lambda_1,\ldots,\lambda_{s-1})\in \mathbb{R}^{s+1}$ such that:
\begin{align*}
    \Pi_{t}^{k_{1}^s|l_1^{p}}(y^0,x_{1}^{t+s})&=\mu\\
    &+\mathbb{E}\left[\lambda_{0}(1-l_1)\phi_{\theta_0}^{l_1|l_{1}^{p}}(Y^{t+1}_{it-(2p-1)},x_{1}^{t+s})+\lambda_{0}l_1\left(1-\phi_{\theta_0}^{l_1|l_{1}^{p}}(Y^{t+1}_{it-(2p-1)},x_{1}^{t+s})\right) \right. \\
    & +\left. \sum_{j=1}^{p-1}\lambda_j(1-k_j)\phi_{\theta_0}^{k_j|k_j,\ldots,k_1,l_{1}^{p-j}}(Y^{t+j+1}_{it+j-(2p-1)},x_{1}^{t+s}) \right. \\
    &\left. +\sum_{j=1}^{p-1}\lambda_jk_j\left(1-\phi_{\theta_0}^{k_j|k_j,\ldots,k_1,l_{1}^{p-j}}(Y^{t+j+1}_{it+j-(2p-1)},x_{1}^{t+s})\right) \right. \\
     & \left. + \sum_{j=p}^{s-1}\lambda_j(1-k_j)\phi_{\theta_0}^{k_j|k_j,\ldots,k_{j+1-p}}(Y^{t+j+1}_{it+j-(2p-1)},x_{1}^{t+s})  \right. \\
    & +\left. \sum_{j=p}^{s-1}\lambda_jk_j\left(1-\phi_{\theta_0}^{k_j|k_j,\ldots,k_{j+1-p}}(Y^{t+j+1}_{it+j-(2p-1)},x_{1}^{t+s})\right)\,|\,Y_{i}^0=y^0,X_{i1}^{t+s}=x_{1}^{t+s}\right]
\end{align*}
which again shows that $\Pi_{t}^{k_{1}^s|l_1^{p}}(y^0,x_{1}^{t+s})$ is identified given that $\theta_0$ is identified by assumption. This concludes the proof.

\section{Proof of Lemma \ref{lemma_5}} \label{proof_lemma_5}
Let
  \begin{align*}
     \phi_{\theta}^{k|k}(Y_{it+1},Y_{it},Y_{it-1},X_i)&=\mathds{1}\{Y_{it}=k\}e^{\sum_{m=1}^M (Y_{m,it+1}-k_m)\left(\sum_{j=1}^M \gamma_{mj}(Y_{j,it-1}-k_j)-\Delta X_{m,it+1}'\beta_m\right)}
 \end{align*}
 We verify the claim by direct calculation. 
 {\allowdisplaybreaks
 \begin{align*}
     &\mathbb{E}\left[\phi_{\theta}^{k|k}(Y_{it+1},Y_{it},Y_{it-1},X_i)|Y_{i0},Y_{i1}^{t-1},X_i,A_i\right]=P(Y_{it}=k|Y_{i0},Y_{i1}^{t-1},X_i,A_i) \\
     &\times\sum_{l\in \mathcal{Y}}P(Y_{it+1}=l|Y_{i0},Y_{i1}^{t-1},Y_{it}=k,X_i,A_i)\phi_{\theta}^{k|k}(l,k,Y_{it-1},X_i) \\
     &=\prod_{m=1}^M \frac{e^{k_m(\sum_{j=1}^{M} \gamma_{mj}Y_{j,it-1} +X_{m,it}'\beta_m+A_{m,i})}}{1+e^{\sum_{j=1}^{M} \gamma_{mj}Y_{j,it-1} +X_{m,it}'\beta_m+A_{m,i}}}  \\
     &\times\sum_{l\in \mathcal{Y}}\prod_{m=1}^M \frac{e^{l_m(\sum_{j=1}^{M} \gamma_{mj}k_{j} +X_{m,it+1}'\beta_m+A_{m,i})}}{1+e^{\sum_{j=1}^{M} \gamma_{mj}k_{j} +X_{m,it+1}'\beta_m+A_{m,i}}}e^{\sum_{m=1}^M (l_{m}-k_m)\left(\sum_{j=1}^M \gamma_{mj}(Y_{j,it-1}-k_j)-\Delta X_{m,it+1}'\beta_m\right)} \\
     &=\sum_{l\in \mathcal{Y}}\prod_{m=1}^M\frac{e^{l_m(\sum_{j=1}^{M} \gamma_{mj}Y_{j,it-1} +X_{m,it}'\beta_m+A_{m,i})}}{1+e^{\sum_{j=1}^{M} \gamma_{mj}k_{j} +X_{m,it+1}'\beta_m+A_{m,i}}}\frac{e^{k_m(\sum_{j=1}^{M} \gamma_{mj}k_{j} +X_{m,it+1}'\beta_m+A_{m,i})}}{1+e^{\sum_{j=1}^{M} \gamma_{mj}Y_{j,it-1} +X_{m,it}'\beta_m+A_{m,i}}} \\
     &=\prod_{m=1}^M \frac{e^{k_m(\sum_{j=1}^{M} \gamma_{mj}k_{j} +X_{m,it+1}'\beta_m+A_{m,i})}}{1+e^{\sum_{j=1}^{M} \gamma_{mj}Y_{j,it-1} +X_{m,it}'\beta_m+A_{m,i}}} \frac{1}{1+e^{\sum_{j=1}^{M} \gamma_{mj}k_{j} +X_{m,it+1}'\beta_m+A_{m,i}}}\sum_{l\in \mathcal{Y}} \prod_{m=1}^M e^{l_m(\sum_{j=1}^{M} \gamma_{mj}Y_{j,it-1} +X_{m,it}'\beta_m+A_{m,i})}
 \end{align*}
 }
Now, noting that
\begin{align*}
    \sum_{l\in \mathcal{Y}} \prod_{m=1}^M e^{l_m(\sum_{j=1}^{M} \gamma_{mj}Y_{j,it-1} +X_{m,it}'\beta_m+A_{m,i})}= \prod_{m=1}^M (1+e^{\sum_{j=1}^{M} \gamma_{mj}Y_{j,it-1} +X_{m,it}'\beta_m+A_{m,i}})
\end{align*}
we finally get
 \begin{align*}
     &\mathbb{E}\left[\phi_{\theta}^{k|k}(Y_{it+1},Y_{it},Y_{it-1},X_i)|Y_{i0},Y_{i1}^{t-1},X_i,A_i\right] \\
     &=\prod_{m=1}^M \frac{e^{k_m(\sum_{j=1}^{M} \gamma_{mj}k_{j} +X_{m,it+1}'\beta_m+A_{m,i})}}{1+e^{\sum_{j=1}^{M} \gamma_{mj}Y_{j,it-1} +X_{m,it}'\beta_m+A_{m,i}}} \frac{1}{1+e^{\sum_{j=1}^{M} \gamma_{mj}k_{j} +X_{m,it+1}'\beta_m+A_{m,i}}}\prod_{m=1}^M (1+e^{\sum_{j=1}^{M} \gamma_{mj}Y_{j,it-1} +X_{m,it}'\beta_m+A_{m,i}}) \\
     &=\prod_{m=1}^M \frac{e^{k_m(\sum_{j=1}^{M} \gamma_{mj}k_{j} +X_{m,it+1}'\beta_m+A_{m,i})}}{1+e^{\sum_{j=1}^{M} \gamma_{mj}k_{j} +X_{m,it+1}'\beta_m+A_{m,i}}} \\
     &=\pi^{k|k}_{t}(A_i,X_i)
 \end{align*}
which concludes the proof.
 
\section{Proof of Lemma \ref{lemma_6}}

By definition, for $T\geq 3$, and for $t,s$ such that $T-1\geq t> s\geq 1$:
 {\allowdisplaybreaks
 {\allowdisplaybreaks
\begin{align*}
    &\mathbb{E}\left[\zeta_{\theta}^{k|k}(Y_{it-1}^{t+1},Y_{is-1}^s,X_i)|Y_{i0},Y_{i1}^{s-1},X_i,A_i\right]=P(Y_{is}=k|Y_{i0},Y_{i1}^{s-1},X_i,A_i)+ \\
    &\cleansum_{l\in \mathcal{Y}\setminus \{k\}} \omega_{t,s,l}^{k|k}(\theta) \mathbb{E}\left[\mathds{1}\{Y_{is}=l\} \phi_{\theta}^{k|k}(Y_{it-1}^{t+1},X_i)|Y_{i0},Y_{i1}^{s-1},X_i,A_i\right] \\
    &=\prod_{m=1}^M \frac{e^{k_m(\mu_{m,s}(\theta)+A_{m,i})}}{1+e^{\mu_{m,s}(\theta)+A_{m,i}}}+\cleansum_{l\in \mathcal{Y}\setminus \{k\}} \omega_{t,s,l}^{k|k}(\theta) \pi^{k|k}_{t}(A_i,X_{i}) P(Y_{is}=l |Y_{i0},Y_{i1}^{s-1},X_i,A_i) \\
     &=\prod_{m=1}^M \frac{e^{k_m(\mu_{m,s}(\theta)+A_{m,i})}}{1+e^{\mu_{m,s}(\theta)+A_{m,i}}}+\cleansum_{l\in \mathcal{Y}\setminus \{k\}} \left[1-e^{\sum_{j=1}^M (l_j-k_j)\left[\kappa_{j,t}^{ k|k}(\theta)-\mu_{j,s}(\theta)\right]}\right]\prod_{m=1}^M\frac{e^{k_m(\kappa_{m,t}^{ k|k}(\theta)+A_{m,i})}}{1+e^{\kappa_{m,t}^{ k|k}(\theta)+A_{m,i}}}\frac{e^{l_m(\mu_{m,s}(\theta)+A_{m,i})}}{1+e^{\mu_{m,s}(\theta)+A_{m,i}}} \\
     &=\prod_{m=1}^M\frac{e^{k_m(\kappa_{m,t}^{ k|k}(\theta)+A_{m,i})}}{1+e^{\kappa_{m,t}^{ k|k}(\theta)+A_{m,i}}} \\
     &=\pi_{t}^{k|k}(A_i,X_i)
\end{align*}
}
The first line follows from the measurability of the weight $\omega_{t,s,l}^{k|k}(\theta)$ with respect to the conditioning set and the linearity of conditional expectations. The second line uses the definition of $\mu_{j,s}(\theta)$ and follows from the law of iterated expectations and Lemma \ref{lemma_6}. The third line makes use of the definition of $\kappa_{m,t}^{ k|k}(\theta)$ and $\omega_{t,s,l}^{k|k}(\theta)$ and the penultime line uses Appendix Lemma \ref{tech_lemma_2}. 

\section{Dynamic network formation with transitivity} \label{network_model}

\cite{graham2013comment} studies a variant of model (\ref{VAR1_logit_general}) to describe network formation amongst groups of 3 individuals. This is a panel data setting where a large sample of many such groups and the evolution of their social ties are observed over $T=3$ periods (4 counting the initial condition). Interactions are assumed undirected and modelled at the dyad level as:
\begin{align} \label{network_specification}
\begin{split}
   D_{ijt}&=\mathds{1}\left\{\gamma_0 D_{ijt-1}+ \delta_0 R_{ijt-1}+A_{ij}-\epsilon_{ijt}\geq 0\right\} \quad t=1,\ldots,T \\
   R_{ijt-1}&=D_{ikt-1}D_{jkt-1}
\end{split}
\end{align}
where $i,j,k$ denote the 3 different agents and $D_{ijt}\in\{0,1\}$ encodes the presence or absence of a link between agent $i$ and agent $j$ at time $t$. The network $D_0\in \{0,1\}^3$ forms the initial condition. The parameter $\gamma_0$ captures state dependence while $\delta_0$ captures transitivity in relationships, i.e the effect of sharing friends in common on the propensity to establish friendships. Finally, $A_{ij}$ is an unrestricted dyad level fixed effect that could potentiall capture unobserved homophily and $\epsilon_{ijt}$ is a standard logistic shock, iid over time and individuals. While \cite{graham2013comment} establishes identification of $(\gamma_0,\delta_0)$ for $T=3$ via a conditional likelihood approach in the spirit of \cite{chamberlain_1985}, one limitation of the model is the absence of other covariates, in particular time-specific effects. Controlling for such effects can be essential to adequately capture important variation in social dynamics: think about the persistent impact of Covid-19 on all types of social interactions. A relevant extension is thus:
\begin{align}\label{network_specification_withX}
\begin{split}
   D_{ijt}&=\mathds{1}\left\{\gamma_0 D_{ijt-1}+ \delta_0 D_{ikt-1}D_{jkt-1}+X_{ijt}'\beta_0+A_{ij}-\epsilon_{ijt}\geq 0\right\} \quad t=1,\ldots,T \\
   R_{ijt-1}&=D_{ikt-1}D_{jkt-1}
\end{split}
\end{align}
Letting $\mathbb{D}=\{0,1\}^{3}$ denote the support of the network $D_t=(D_{ijt},D_{ikt},D_{jkt})$, it is straightforward to see that the results developed for the VAR(1) case can be repurposed to suit model (\ref{network_specification_withX}) . For $T=3$, an adaptation of Lemma \ref{lemma_5} yields 8 possible transition functions given by:
\begin{align*}
    \phi_{\theta}^{d|d}(D_{3},D_{2},D_{1},X)=\mathds{1}\{D_{2}=d\} \exp\left(\cleansum_{i<j} (D_{ij3}-d_{ij2})[\gamma (D_{ij1}-d_{ij2})- \Delta R_{ij1}\delta -\Delta X_{ij2}'\beta]\right), \quad d \in \mathbb{D}
\end{align*}
An adaptation of Lemma \ref{lemma_6} implies that we can construct another $8$ transition functions given by
\begin{align*}
     \zeta_{\theta}^{d|d}(D_3,D_2,D_1,D_0,X)=\mathds{1}\{D_{1}=d\}+ \cleansum_{d'\in \mathbb{D}\setminus \{d\}} \omega_{2,1,d'}^{d|d}(\theta) \mathds{1}\{D_{1}=l\} \phi_{\theta}^{d|d}(D_3,D_2,D_2,X), \quad d \in \mathbb{D}
\end{align*}
where 
\begin{align*}
    &\mu_{ij,1}(\theta)= \gamma D_{ij0}+\delta R_{ij0}+X_{ij1}'\beta \\
    &\kappa_{ij,2}^{d|d}(\theta)=\gamma d_{ij}+\delta r_{ij}+X_{ij3}'\beta \\
    &\omega_{2,1,d'}^{d|d}(\theta)=1-e^{\sum_{i<j} (d_{ij}'-d_{ij})\left[\kappa_{ij,2}^{d|d}(\theta)-\mu_{ij,1}(\theta)\right]}
\end{align*}
Therefore, for $T=3$, 8 moment functions that all meaningfully depend on the model parameter are:
\begin{align*}
    \psi_{\theta}^{d|d}(D_3,D_2,D_1,D_0,X)=\phi_{\theta}^{d|d}(D_{3},D_{2},D_{1},X)-\zeta_{\theta}^{d|d}(D_3,D_2,D_1,D_0,X), \quad d \in \mathbb{D}
\end{align*}
Their validity, in the sense of verifying equation (\ref{moment_function_final}), follows from the law of iterated expectations. 

\section{Proof of Lemma \ref{lemma_7}}
Let
\begin{align*}
    \phi_{\theta}^{k|k}(Y_{it-1}^{t+1},X_i)&=\mathds{1}\{Y_{it}=k\} e^{\sum_{c\in \mathcal{Y}\setminus\{k\}} \mathds{1}\{Y_{it+1}=c\} \left(\sum_{j\in \mathcal{Y}} (\gamma_{cj}-\gamma_{kj})\mathds{1}(Y_{it-1}=j)+\gamma_{kk}-\gamma_{ck}+\Delta X_{ikt+1}'\beta_k-\Delta X_{ict+1}'\beta_c\right)}
\end{align*}
We proceed to verify that the claim by direct computation. We have:
\begin{align*}
	&\mathbb{E}\left[\phi_{\theta}^{k|k}(Y_{it+1},Y_{it},Y_{it-1},X_i)|Y_{i0},Y_{i1}^{t-1},X_i\right]=P(Y_{it}=k|Y_i^0,Y_{i1}^{t-1},X_i,A_i)\times \\
     &\sum_{l\in \mathcal{Y}}P(Y_{it+1}=l|Y_i^0,Y_{i1}^{t-1},Y_{it}=k,X_i,A_i)\phi_{\theta}^{k|k}(l,k,Y_{it-1},X_i) \\
     &=\frac{e^{\sum_{c=0}^C \gamma_{kc}\mathds{1}(Y_{it-1}=c)+X_{ikt}'\beta_k+A_{ik}}}{\sum \limits_{j=0}^C e^{\sum_{c=0}^C \gamma_{jc}\mathds{1}(Y_{it-1}=c)+X_{ijt}'\beta_j+A_{ij}}} \times \\
     &\sum_{l \in \mathcal{Y}} \frac{e^{\gamma_{lk}+X_{ilt+1}'\beta_l+A_{il}}}{\sum \limits_{j=0}^C e^{\gamma_{jk}+X_{ijt+1}'\beta_j+A_{ij}}}\phi_{\theta}^{k|k}(l,k,Y_{it-1},X_i)  \\
      &=\frac{e^{\sum_{c=0}^C \gamma_{kc}\mathds{1}(Y_{it-1}=c)+X_{ikt}'\beta_k+A_{ik}}}{\sum \limits_{j=0}^C e^{\sum_{c=0}^C \gamma_{jc}\mathds{1}(Y_{it-1}=c)+X_{ijt}'\beta_j+A_{ij}}} \times \\
     &\left(\frac{e^{\gamma_{kk}+X_{ikt+1}'\beta_k+A_{ik}}}{\sum \limits_{j=0}^C e^{\gamma_{jk}+X_{ijt+1}'\beta_j+A_{ij}}} + \sum_{l \in \mathcal{Y}\setminus\{k\}} \frac{e^{\gamma_{lk}+X_{ilt+1}'\beta_l+A_{il}}}{\sum \limits_{j=0}^C e^{\gamma_{jk}+X_{ijt+1}'\beta_j+A_{ij}}}
     e^{\left(\sum_{j=0}^C (\gamma_{lj}-\gamma_{kj})\mathds{1}(Y_{it-1}=j)+\gamma_{kk}-\gamma_{lk}+\Delta X_{ikt+1}'\beta_k-\Delta X_{ilt+1}'\beta_l\right)}\right) \\
     &=\frac{e^{\sum_{c=0}^C \gamma_{kc}\mathds{1}(Y_{it-1}=c)+X_{ikt}'\beta_k+A_{ik}}}{\sum \limits_{j=0}^C e^{\sum_{c=0}^C \gamma_{jc}\mathds{1}(Y_{it-1}=c)+X_{ijt}'\beta_j+A_{ij}}} \times\frac{e^{\gamma_{kk}+X_{ikt+1}'\beta_k+A_{ik}}}{\sum \limits_{j=0}^C e^{\gamma_{jk}+X_{ijt+1}'\beta_j+A_{ij}}}  \\
     &+\frac{e^{\gamma_{kk}+X_{ikt+1}'\beta_k+A_{ik}}}{\sum \limits_{j=0}^C e^{\sum_{c=0}^C \gamma_{jc}\mathds{1}(Y_{it-1}=c)+X_{ijt}'\beta_j+A_{ij}}} \times\sum_{l \in \mathcal{Y}\setminus\{k\}} \frac{1}{\sum \limits_{j=0}^C e^{\gamma_{jk}+X_{ijt+1}'\beta_j+A_{ij}}}
     e^{\sum_{j=0}^C \gamma_{lj}\mathds{1}(Y_{it-1}=j)+ X_{ilt}'\beta_l+A_{il}} \\
     &=\frac{e^{\gamma_{kk}+X_{ikt+1}'\beta_k+A_{ik}}}{\sum \limits_{j=0}^C e^{\sum_{c=0}^C \gamma_{jc}\mathds{1}(Y_{it-1}=c)+X_{ijt}'\beta_j+A_{ij}}}\frac{1}{\sum \limits_{j=0}^C e^{\gamma_{jk}+X_{ijt+1}'\beta_j+A_{ij}}} \sum_{l \in \mathcal{Y}} 
     e^{\sum_{j=0}^C \gamma_{lj}\mathds{1}(Y_{it-1}=j)+X_{ilt}'\beta_l+A_{il}} \\
     &=\frac{e^{\gamma_{kk}+X_{ikt+1}'\beta_k+A_{ik}}}{\sum \limits_{j=0}^C e^{\gamma_{jk}+X_{ijt+1}'\beta_j+A_{ij}}} \\
     &=\pi_{t}^{k|k}(A_i,X_i)
\end{align*}
which concludes the proof.

\section{Proof of Lemma \ref{lemma_8}}
By construction for $T\geq 3$, and $t,s$ such that $T-1\geq t> s\geq 1$,
\begin{align*}
    &\mathbb{E}\left[\zeta_{\theta_0}^{0|0}(Y_{it-1}^{t+1},Y_{is-1}^s,X_i)|Y_{i0},Y_{i1}^{s-1},X_i,A_i\right] \\
    &=P(Y_{is}=0|Y_{i0},Y_{i1}^{s-1},X_i,A_i)\\
    &+\cleansum_{l\in \mathcal{Y}\setminus \{0\}} \omega_{t,s,l}^{0|0}(\theta)\mathbb{E}\left[\mathds{1}\{Y_{is}=l\} \mathbb{E}\left[\phi_{\theta}^{0|0}(Y_{it-1}^{t+1},X_i)|Y_{i0},Y_{i1}^{t-1},X_i,A_i\right]|Y_{i0},Y_{i1}^{s-1},X_i,A_i\right] \\
    &=\frac{1}{1+\sum_{c=1}^C e^{\mu_{c,s}(\theta)+A_{ic}}}+\cleansum_{l=1}^C \omega_{t,s,l}^{0|0}(\theta)\mathbb{E}\left[\mathds{1}\{Y_{is}=l\} |Y_{i0},Y_{i1}^{s-1},X_i,A_i\right]\pi_{t}^{0|0}(A_i,X_i) \\
    &=\frac{1}{1+\sum_{c=1}^C e^{\mu_{c,s}(\theta)+A_{ic}}}+\cleansum_{l=1}^C \left(1-e^{(\kappa_{l,t}^{0|0}(\theta)-\mu_{l,s}(\theta))}\right)\frac{e^{\mu_{l,s}(\theta)+A_{il}}}{1+\sum_{c=1}^C e^{\mu_{c,s}(\theta)+A_{ic}}}\frac{1}{1+\sum_{c=1}^C e^{\kappa_{c,t}^{0|0}(\theta)+A_{ic}}} \\
    &=\frac{1}{1+\sum_{c=1}^C e^{\kappa_{c,t}^{0|0}(\theta)+A_{ic}}} \\
    &=\pi_{t}^{0|0}(A_i,X_i)
\end{align*}

The first line follows from the measurability of the weight $\omega_{t,s,l}^{0|0}(\theta)$ with respect to the conditioning set and the linearity of conditional expectations. The second line uses the definition of $\mu_{c,s}(\theta)$ and follows from the law of iterated expectations and Lemma \ref{lemma_7}. The third line makes use of the definition of $\kappa_{c,t}^{0|0}(\theta)$, $\omega_{t,s,l}^{0|0}(\theta)$ and the normalization $\gamma_{c0}=\gamma_{0c}=0, A_{0c}=0$ for all $c\in \mathcal{Y}$. The penultime line uses Appendix Lemma \ref{tech_lemma_1}. 

\noindent Likewise, for all $k\in \mathcal{Y}\setminus\{0\}$,
\begin{align*}
    &\mathbb{E}\left[\zeta_{\theta_0}^{k|k}(Y_{it-1}^{t+1},Y_{is-1}^s,X_i)|Y_{i0},Y_{i1}^{s-1},X_i,A_i\right] \\
    &=P(Y_{is}=k|Y_{i0},Y_{i1}^{s-1},X_i,A_i)\\
    &+\cleansum_{l\in \mathcal{Y}\setminus \{k\}} \omega_{t,s,l}^{k|k}(\theta)\mathbb{E}\left[\mathds{1}\{Y_{is}=l\} \mathbb{E}\left[\phi_{\theta}^{k|k}(Y_{it-1}^{t+1},X_i)|Y_{i0},Y_{i1}^{t-1},X_i,A_i\right]|Y_{i0},Y_{i1}^{s-1},X_i,A_i\right] \\
    &=\frac{e^{\mu_{k,s}(\theta)+A_{ik}}}{1+\sum_{c=1}^C e^{\mu_{c,s}(\theta)+A_{ic}}}+\cleansum_{l\in \mathcal{Y}\setminus \{k\}} \omega_{t,s,l}^{k|k}(\theta)\mathbb{E}\left[\mathds{1}\{Y_{is}=l\} |Y_{i0},Y_{i1}^{s-1},X_i,A_i\right]\pi_{t}^{k|k}(A_i,X_i) \\
    &=\frac{e^{\mu_{k,s}(\theta)+A_{ik}}}{1+\sum_{c=1}^C e^{\mu_{c,s}(\theta)+A_{ic}}}\\
    &+\cleansum_{l\in \mathcal{Y}\setminus \{k\}} \left(1-e^{(\kappa_{l,t}^{k|k}(\theta)-\mu_{l,s}(\theta))-( \kappa_{k,t}^{k|k}(\theta)-\mu_{k,s}(\theta))}\right)\frac{e^{\mu_{l,s}(\theta)+A_{il}}}{1+\sum_{c=1}^C e^{\mu_{c,s}(\theta)+A_{ic}}}\frac{e^{\kappa_{k,t}^{k|k}(\theta)+A_{ik}}}{1+\sum_{c=1}^C e^{\kappa_{c,t}^{k|k}(\theta)+A_{ic}}} \\
    &=\frac{e^{\mu_{k,s}(\theta)+A_{ik}}}{1+\sum_{c=1}^C e^{\mu_{c,s}(\theta)+A_{ic}}}+\left(1-e^{-\kappa_{k,t}^{k|k}(\theta)+\mu_{k,s}(\theta)}\right)\frac{1}{1+\sum_{c=1}^C e^{\mu_{c,s}(\theta)+A_{ic}}}\frac{e^{\kappa_{k,t}^{k|k}(\theta)+A_{ik}}}{1+\sum_{c=1}^C e^{\kappa_{c,t}^{k|k}(\theta)+A_{ic}}} \\
    &+\cleansum_{\substack{l=1 \\ l\neq k } }^C \left(1-e^{(\kappa_{l,t}^{k|k}(\theta)-\mu_{l,s}(\theta))-( \kappa_{k,t}^{k|k}(\theta)-\mu_{k,s}(\theta))}\right)\frac{e^{\mu_{l,s}(\theta)+A_{il}}}{1+\sum_{c=1}^C e^{\mu_{c,s}(\theta)+A_{ic}}}\frac{e^{\kappa_{k,t}^{k|k}(\theta)+A_{ik}}}{1+\sum_{c=1}^C e^{\kappa_{c,t}^{k|k}(\theta)+A_{ic}}} \\
    &=\frac{e^{\kappa_{k,t}^{k|k}(\theta)+A_{ik}}}{1+\sum_{c=1}^C e^{\kappa_{c,t}^{k|k}(\theta)+A_{ic}}} \\
    &=\pi_{t}^{k|k}(A_i,X_i)
\end{align*}

The first line follows from the measurability of the weight $\omega_{t,s,l}^{k|k}(\theta)$ with respect to the conditioning set and the linearity of conditional expectations. The second line uses the definition of $\mu_{k,s}(\theta)$ and follows from the law of iterated expectations and Lemma \ref{lemma_7}. The third line makes use of the definition of $\kappa_{c,t}^{ k|k}(\theta)$ and $\omega_{t,s,l}^{k|k}(\theta)$. The fourth line uses the fact that $\kappa_{0,t}^{k|k}(\theta)=\mu_{0,s}(\theta)=0$ due to the normalization $\gamma_{c0}=\gamma_{0c}=0, A_{0c}=0$ for all $c\in \mathcal{Y}$. The penultime line uses Appendix Lemma \ref{tech_lemma_1}. 
\section{Proof of Theorem \ref{thm_effbound}}
In what follows, we will drop the cross-sectional subscript $i$ to economize on space. To avoid excessive repetition, we will detail the argument for the initial condition $Y_0=0$. A set of completely symmetric arguments will deliver the result for $Y_0=1$ and can be provided upon request. For conciseness, we will further omit the conditioning on the initial condition $Y_0=0$ in conditional expectations. \\

\noindent \underline{\textbf{A) Preliminary calculations}} \\
The conditional density of history $(Y_1,Y_2,Y_3)$ of the AR(1) model given initial condition $Y_0$, regressors $X$ and fixed effect $A$ is $f(Y_{1},Y_{2},Y_{3}|Y_0,X,A;\theta)=\prod \limits_{t=1}^3\frac{e^{Y_{t}(\gamma Y_{t-1}+X_{t}'\beta+A)}}{\left(1+e^{\gamma Y_{t-1}+X_{t}'\beta+A}\right)}$. This implies 
\begin{align*}
    \ln f(Y_{1},Y_{2},Y_{3}|Y_0,X,A;\theta)&=
    \sum_{t=1}^3 Y_{t}(\gamma Y_{t-1}+X_{t}'\beta+A)-\sum_{t=1}^3Y_{t-1} \ln \left(1+e^{\gamma+X_{t}'\beta+A}\right) \\
    &-\sum_{t=1}^3(1-Y_{t-1}) \ln \left(1+e^{X_{t}'\beta+A}\right)
\end{align*}
and hence 
\begin{align*}
    \pdv{ \ln f(Y_{1},Y_{2},Y_{3}|Y_0,X,A;\theta)}{\gamma}&=\sum_{t=1}^3 Y_{t}\left(Y_{t-1}-\frac{e^{\gamma+X_{t}'\beta+A}}{1+e^{\gamma+X_{t}'\beta+A}} \right)\\
    \pdv{ \ln f(Y_{1},Y_{2},Y_{3}|Y_0,X,A;\theta)}{\beta}
    &=\sum_{t=1}^3X_{t}\left( Y_{t}-Y_{t-1}\frac{e^{\gamma+X_{t}'\beta+A}}{1+e^{\gamma+X_{t}'\beta+A}}-(1-Y_{t-1})\frac{e^{X_{t}'\beta+A}}{1+e^{X_{t}'\beta+A}}\right)
\end{align*}

Our candidate for the efficient score is the efficient moment based on the conditional moment restriction: $\mathbb{E}\left[\psi_{\theta}(Y_{1}^{3},Y_{0}^1,X)|Y_{0}=0,X\right]=0$. By \cite{chamberlain1987asymptotic}, it is given by,
\begin{align*}
    \psi_{\theta}^{eff}(Y_{1}^{3},X)=-\Omega(X)\psi_{\theta}(Y_{1}^{3},Y_{0}^1,X) 
\end{align*}
where $\Omega(X)=D(X)'\Sigma(X)^{-1}$ (recall that we are omitting the dependence on the initial condition $Y_0=0$ here).  The following expressions for $D(X),\Sigma(X),\Omega(X)$ are useful for the derivations ahead:
\begin{align*}
    D_{11}(X)&= e^{X_{21}'\beta+\gamma}P_{101}(X) \\
    D_{21}(X)&=-e^{X_{13}'\beta-\gamma}P_{011}(X) \\
    D_{1j}(X)&=X_{23,j-1}e^{X_{23}'\beta}P_{001}(X) +X_{21,j-1}e^{X_{21}'\beta+\gamma}P_{101}(X) +X_{31,j-1}e^{X_{31}'\beta}P_{100}(X), \quad j=2,\ldots,K+1 \\
    D_{2j}(X)&=X_{32,j-1}e^{X_{32}'\beta}P_{110}(X) +X_{12,j-1}e^{X_{12}'\beta}P_{010}(X) +X_{13,j-1}e^{X_{13}'\beta-\gamma}P_{011}(X), \quad j=2,\ldots,K+1 \\
    \Sigma_{11}(X)&=(e^{X_{23}'\beta }-1)^2P_{001}(X)+e^{2X_{21}'\beta+2\gamma}P_{101}(X)+e^{2X_{31}'\beta}P_{100}(X)+P_{01}(X) \\
    \Sigma_{22}(X)&=(e^{X_{32}'\beta}-1)^2P_{110}(X) 
    +e^{2X_{12}'\beta}P_{010}(X)+e^{2X_{13}'\beta-2\gamma}P_{011}(X)+P_{10}(X) \\
    \Sigma_{12}(X)&=\Sigma_{21}(X)=-\left(e^{X_{21}'\beta+\gamma}P_{101}(X)+e^{X_{31}'\beta}P_{100}(X)+e^{X_{12}'\beta}P_{010}(X)+e^{X_{13}'\beta-\gamma}P_{011}(X)\right) \\
    det\left(\Sigma(X)\right)&= \Sigma_{11}(X) \Sigma_{22}(X)- \Sigma_{12}(X)^2 \\
    \Omega_{j1}(X)&=\frac{1}{ det\left(\Sigma(X)\right)}\left( D_{1j}(X)\Sigma_{22}(X)-D_{2j}(X)\Sigma_{12}(X)\right), \quad j=1,\ldots,K+1 \\
    \Omega_{j2}(X)&=\frac{1}{ det\left(\Sigma(X)\right)}\left(- D_{1j}(X)\Sigma_{12}(X)+D_{2j}(X)\Sigma_{11}(X)\right), \quad j=1,\ldots,K+1
\end{align*}
were I use the shorthand $P_{y_1\ldots y_n}(X)=P(Y_1=y_1,\ldots,Y_n=y_n|Y_0=0,X)$ \\

\noindent \underline{\textbf{B) Scores and nonparametric tangent set}} \\
With $T=3$, the conditional likelihood of history ($Y_1,Y_2,Y_3$) given $X=x,Y_0=y_0$  writes:
\begin{align*}
\mathcal{L}(\theta)&= \int f(Y_{1},Y_{2},Y_{3}|y_0,x,a;\theta) \pi(a|y_0,x) da 
\end{align*}
where $\pi(.|y_0,x)$ denotes the conditional density of $A$ given $X=x,Y_0=y_0$. Consider a scalar parametric submodel for the heterogeneity distribution $\pi(.|y_0,x;\eta)$ such that $\pi(.|y_0,x)=\pi(.|y_0,x;\eta_0)$. Then, the conditional likelihood of the parametric submodel is
\begin{align*}
\mathcal{L}(\theta,\eta)&=\int f(Y_{1},Y_{2},Y_{3}|y_0,x,a;\theta) \pi(a|y_0,x;\eta) da
\end{align*}
 Define 
\begin{align*}
    C_{y_1y_2y_3}(x_t)&=\mathbb{E}\left[\frac{e^{\gamma+x_{t}'\beta+A}}{1+e^{\gamma+x_{t}'\beta+A}}|Y_1=y_1,Y_2=y_2,Y_3=y_3,X=x\right] \\
    B_{y_1y_2y_3}(x_t)&=\mathbb{E}\left[\frac{e^{x_{t}'\beta+A}}{1+e^{x_{t}'\beta+A}}|Y_1=y_1,Y_2=y_2,Y_3=y_3,X=x\right]
\end{align*}
Careful bookkeeping yield the following scores for $\gamma$ and $\beta$
\begin{align}
\begin{split}
         S_{\gamma}&=\pdv{\ln \mathcal{L}(\theta,\eta)}{\gamma} =\mathbb{E}\left[\pdv{ \ln f(Y_{1},Y_{2},Y_{3}|Y_0,X,A;\theta)}{\gamma}|Y_1,Y_2,Y_3,X=x\right]\\
         &=\left(1-C_{111}(x_2)+1-C_{111}(x_3)\right)Y_1Y_2Y_3+(1-C_{110}(x_2)-C_{110}(x_3))Y_1Y_2(1-Y_3) \\
     &-C_{101}(x_2)Y_1(1-Y_2)Y_3-C_{100}(x_2)Y_1(1-Y_2)(1-Y_3) \\
     &+(1-C_{011}(x_3))(1-Y_1)Y_2Y_3-C_{010}(x_3)(1-Y_1)Y_2(1-Y_3)
     \end{split}
\end{align}
and 
\begin{align*}
    S_{\beta}&=\pdv{\ln \mathcal{L}(\theta,\eta)}{\beta} =\mathbb{E}\left[\pdv{ \ln f(Y_{1},Y_{2},Y_{3}|Y_0,X,A;\theta)}{\beta}|Y_1,Y_2,Y_3,X=x\right]\\
    &=\left(x_1(1-B_{111}(x_1))+x_2(1-C_{111}(x_2))+x_3(1-C_{111}(x_3))\right)Y_1Y_2Y_3 \\
    &+\left(x_1(1-B_{110}(x_1))+x_2(1-C_{110}(x_2))-x_3C_{110}(x_3)\right)Y_1Y_2(1-Y_3) \\
    &+\left(x_1(1-B_{101}(x_1))-x_2C_{101}(x_2)+x_3(1-B_{101}(x_3))\right)Y_1(1-Y_2)Y_3 \\
    &+\left(x_1(1-B_{100}(x_1))-x_2C_{100}(x_2)-x_3B_{100}(x_3)\right)Y_1(1-Y_2)(1-Y_3) \\
    &+\left(-x_1B_{011}(x_1)+x_2(1-B_{011}(x_2))+x_3(1-C_{011}(x_3))\right)(1-Y_1)Y_2Y_3 \\
    &+\left(-x_1B_{010}(x_1)+x_2(1-B_{010}(x_2))-x_3C_{010}(x_3)\right)(1-Y_1)Y_2(1-Y_3) \\
    &+\left(-x_1B_{001}(x_1)-x_2B_{001}(x_2)+x_3(1-B_{001}(x_3))\right)(1-Y_1)(1-Y_2)Y_3 \\
    &+\left(-x_1B_{000}(x_1)-x_2B_{000}(x_2)-x_3B_{000}(x_3)\right)(1-Y_1)(1-Y_2)(1-Y_3)
\end{align*}
The score for the nuisance parameter  is
\begin{align*}
    S_\eta = \pdv{\ln \mathcal{L}(\theta,\eta_0)}{\eta}
    =\mathbb{E}\left[\pdv{\ln \pi(A|y_0,x;\eta_0)}{\eta}|Y_1,Y_2,Y_3,X=x\right]
\end{align*}
Following \cite{hahn2001information}, this implies that the \textit{nonparametric tangent set} is given by
\begin{align*}
    \mathcal{T}=\left\{\mathbb{E}[K(A,x)|Y_1,Y_2,Y_3,x] \text{ such that } \mathbb{E}[K(A,x)|x]=0  \right\}
\end{align*}
 To prove that $\psi_{\theta}^{eff}$ is semiparametrically efficient, we will verify the conditions for an application of Theorem 3.2 in \cite{newey1990semiparametric}. Noting that $\mathcal{L}(\theta,\eta)$ is differentiable in $\theta$, that $ \mathcal{T}$ is linear, and that by Assumption \ref{assumption_effbound}, $\mathbb{E}\left[\psi_{\theta}^{eff}(Y_{1}^{3},X)\psi_{\theta}^{eff}(Y_{1}^{3},X)'\right]=\mathbb{E}\left[D(X)\Sigma(X)^{-1}D(X)'\right]$ is non singular, all that remains to check are: 
i) $\psi_{\theta}^{eff}(Y_{1}^{3},X) \in \mathcal{T}^{\perp}$ and ii) $S_{\theta}-\psi_{\theta}^{eff}(Y_{1}^{3},X) \in  \mathcal{T}$. \\

\noindent \underline{\textbf{C) Verification of condition i) $\mathbf{\psi_{\theta}^{eff}(Y_{1}^{3},X) \in \mathcal{T}^{\perp}}$}} \\
To verify condition i), let us characterize the orthocomplement of $\mathcal{T}$ which will also be useful to verify condition ii). By definition, any $g(Y_1,Y_2,Y_3,x)\in \mathcal{T}^{\perp}$ is such that for any element of $\mathcal{T}$, $\mathbb{E}[K(A,x)|Y_1,Y_2,Y_3,x]$, we have 
\begin{align*}
   0&= \mathbb{E}\left[g(Y_1,Y_2,Y_3,x)\mathbb{E}[K(A,x)|Y_1,Y_2,Y_3,x]|x\right] \\
   &=\int K(a,x)\mathbb{E}\left[g(Y_1,Y_2,Y_3,x)|x,a\right]\pi(a|x)da
\end{align*}
because this equality must be valid for any $K(a,x)$ verifying $\mathbb{E}[K(A,x)|x]=0$, it must be the case that $\mathbb{V}\left(\mathbb{E}\left[g(Y_1,Y_2,Y_3,x)|x,A\right]|x\right)=0$ or equivalently that $\mathbb{E}\left[g(Y_1,Y_2,Y_3,x)|x,A\right]=\mathbb{E}\left[g(Y_1,Y_2,Y_3,x)|x\right]$.  Conversely, this short calculation makes it clear that any $g$ function such that $E\left[g(Y_1,Y_2,Y_3,x)|x,A\right]$ is constant will be an element of $\mathcal{T}^{\perp}$. We conclude that,
\begin{align*}
    \mathcal{T}^{\perp}&=\{g(Y_1,Y_2,Y_3,x)\,|\,\mathbb{E}\left[g(Y_1,Y_2,Y_3,x)-\mathbb{E}\left[g(Y_1,Y_2,Y_3,x)|x\right]|x,A\right]=0\} =\mathbb{R}+\mathcal{T}_{*}^{\perp} \\
    \mathcal{T}_{*}^{\perp}&=\{g_{*}(Y_1,Y_2,Y_3,x)\,|\,\mathbb{E}\left[g_{*}(Y_1,Y_2,Y_3,x)|x,A\right]=0\}
\end{align*}
At this stage, an important observation is that $ \mathcal{T}_{*}^{\perp}$ coincides with the set of valid moment functions in the AR(1) model with $T=3$. By Theorem \ref{theorem_nummoments_AR1}, this is a 2-dimensional space when $T=3$ with basis elements $\psi_{\theta}^{0|0}(Y_{i1}^{3},Y_{i0}^1,X_i),\psi_{\theta}^{1|1}(Y_{i1}^{3},Y_{i0}^1,X_i)$. As a result, we further conclude that  $ \mathcal{T}_{*}^{\perp}=\spn\left(\{\psi_{\theta}^{0|0}(Y_{i1}^{3},Y_{i0}^1,X_i),\psi_{\theta}^{1|1}(Y_{i1}^{3},Y_{i0}^1,X_i)\}\right)$. Hence, 
$\psi_{\theta}^{eff}(Y_{1}^{3},X)\in \mathcal{T}_{*}^{\perp}$ since it is a linear combination of $\psi_{\theta}^{0|0}(Y_{i1}^{3},Y_{i0}^1,X_i)$ and $\psi_{\theta}^{1|1}(Y_{i1}^{3},Y_{i0}^1,X_i)$. Finally since $\mathcal{T}_{*}^{\perp}\subset \mathcal{T}^{\perp}$, $\psi_{\theta}^{eff}(Y_{1}^{3},X)\in \mathcal{T}^{\perp}$. \\

\noindent \underline{\textbf{D) Verification of condition ii) $S_{\theta}-\psi_{\theta}^{eff}(Y_{1}^{3},x) \in  \mathcal{T}$}} \\
To check condition ii) $S_{\theta}-\psi_{\theta}^{eff}(Y_{1}^{3},x) \in  \mathcal{T}$, we will verify the equivalent condition that for any element $g\in \mathcal{T}^{\perp}$, $\mathbb{E}\left[\left(S_{\theta}-\psi_{\theta}^{eff}(Y_{1}^{3},x)\right)g(Y_1,Y_2,Y_3,x)|x\right]=0$. Given our characterization of $\mathcal{T}^{\perp}$, it is equivalent to verify that $\forall k \in \{0,1\}$, $\mathbb{E}\left[\left(S_{\theta}-\psi_{\theta}^{eff}(Y_{1}^{3},x)\right)\psi_{\theta}^{k|k}(Y_{1}^{3},Y_{0}^1,x)|x\right]=0$ \\

\noindent \underline{\textbf{D)1) $S_{\gamma}-\psi_{\gamma}^{eff}(Y_{1}^{3},x)\perp \psi_{\theta}^{0|0}(Y_{1}^{3},Y_{0}^1,x)$}} \\
Let $\Delta_{\gamma}^{0|0}=(S_{\gamma} -\psi_{\gamma}^{eff}(Y_{1}^{3},Y_{i0}^1,x)) \psi_{\theta}^{0|0}(Y_{1}^{3},Y_{0}^1,x)$. It is tedious but straightforward to show that
\begin{align*}
\Delta_{\gamma}^{0|0}&=\Delta_{\gamma,1}^{0|0}+\Delta_{\gamma,2}^{0|0}+\Delta_{\gamma,3}^{0|0}+\Delta_{\gamma,4}^{0|0}+ \Delta_{\gamma,5}^{0|0} \\
\Delta_{\gamma,1}^{0|0}&=(1-C_{101}(x_2))e^{x_{21}'\beta+\gamma}Y_1(1-Y_2)Y_3-C_{100}(x_2)e^{x_{31}'\beta}Y_1(1-Y_2)(1-Y_3) \\
\Delta_{\gamma,2}^{0|0}&=-(1-C_{011}(x_3))(1-Y_1)Y_2Y_3+C_{010}(x_3)(1-Y_1)Y_2(1-Y_3) \\
\Delta_{\gamma,3}^{0|0}&= \Omega_{11}(x)(e^{x_{23}'\beta}-1)^2(1-Y_{1})(1-Y_{2})Y_{3}+\Omega_{11}(x)e^{2x_{21}'\beta+2\gamma}Y_{1}(1-Y_{2})Y_{3}\\
    &+\Omega_{11}(x)e^{2X_{31}'\beta}Y_{1}(1-Y_{2})(1-Y_{3})+\Omega_{11}(x)(1-Y_{1})Y_{2} \\
 \Delta_{\gamma,4}^{0|0}&= -\Omega_{12}(x)e^{x_{21}'\beta+\gamma}Y_1(1-Y_2)Y_3-\Omega_{12}(x)e^{x_{31}'\beta}Y_1(1-Y_2)(1-Y_3)\\
   &-\Omega_{12}(x)e^{x_{12}'\beta}(1-Y_1)Y_2(1-Y_3)-\Omega_{12}(x)e^{x_{13}'\beta-\gamma}(1-Y_1)Y_2Y_3 \\
   \Delta_{\gamma,5}^{0|0}&=-e^{X_{21}'\beta+\gamma}Y_1(1-Y_2)Y_3
\end{align*}
We then note that
\begin{align*}
    \mathbb{E}\left[\Delta_{\gamma,1}^{0|0}|x\right]&= \int \frac{1}{1+e^{\gamma+x_2'\beta+a}}\frac{e^{x_1'\beta+a}}{1+e^{x_1'\beta+a}}\frac{1}{1+e^{\gamma+x_2'\beta+a}}\frac{e^{x_3'\beta+a}}{1+e^{x_3'\beta+a}}e^{x_{21}'\beta+\gamma}\pi(a|x)da \\
    &-\int \frac{e^{\gamma+x_2'\beta+a}}{1+e^{\gamma+x_2'\beta+a}}\frac{e^{x_1'\beta+a}}{1+e^{x_1'\beta+a}}\frac{1}{1+e^{\gamma+x_2'\beta+a}}\frac{1}{1+e^{x_3'\beta+a}}e^{x_{31}'\beta}\pi(a|x)da \\
    &=\int \frac{e^{\gamma+x_2'\beta+a}}{1+e^{\gamma+x_2'\beta+a}}\frac{1}{1+e^{x_1'\beta+a}}\frac{1}{1+e^{\gamma+x_2'\beta+a}}\frac{e^{x_3'\beta+a}}{1+e^{x_3'\beta+a}}\pi(a|x)da \\
    &-\int \frac{e^{\gamma+x_2'\beta+a}}{1+e^{\gamma+x_2'\beta+a}}\frac{1}{1+e^{x_1'\beta+a}}\frac{1}{1+e^{\gamma+x_2'\beta+a}}\frac{e^{x_3'\beta+a}}{1+e^{x_3'\beta+a}}\pi(a|x)da \\
    &=0
\end{align*}
and by a similar calculation $ \mathbb{E}\left[\Delta_{\gamma,2}^{0|0}|x\right]=0$. Next, we immediately have
\begin{align*}
    \mathbb{E}\left[\Delta_{\gamma,3}^{0|0}|x\right]
    &=\Omega_{11}(x)\Sigma_{11}(x) \\
     \mathbb{E}\left[\Delta_{\gamma,4}^{0|0}|x\right]&=\Omega_{12}(x)\Sigma_{12}(x) \\
    \mathbb{E}\left[\Delta_{\gamma,5}^{0|0}|x\right]&=-e^{X_{21}'\beta+\gamma}P_{101}(x)
\end{align*}
 and hence,
\begin{align*}
    \Delta_{\gamma}^{0|0}=\Omega_{11}(x)\Sigma_{11}(X)+\Omega_{12}(x)\Sigma_{12}(x)-e^{x_{21}'\beta+\gamma}P_{101}(x)=D_{11}(x)-D_{11}(x)=0
\end{align*}

\noindent \underline{\textbf{D)2) $S_{\gamma}-\psi_{\gamma}^{eff}(Y_{1}^{3},x)\perp \psi_{\theta}^{1|1}(Y_{1}^{3},Y_{0}^1,x)$}} \\
Let $\Delta_{\gamma}^{1|1}=(S_{\gamma} -\psi_{\gamma}^{eff}(Y_{1}^{3},Y_{i0}^1,x)) \psi_{\theta}^{1|1}(Y_{1}^{3},Y_{0}^1,x)$. It can be decomposed as follows
\begin{align*}
    \Delta_{\gamma}^{1|1}&= \Delta_{\gamma,1}^{1|1}+\Delta_{\gamma,2}^{1|1}+\Delta_{\gamma,3}^{1|1}+\Delta_{\gamma,4}^{1|1}+\Delta_{\gamma,5}^{1|1} \\
    \Delta_{\gamma,1}^{1|1}&=-(e^{X_{32}'\beta}-1)C_{1,1,0}(x_2)Y_1Y_2(1-Y_3)+C_{1,0,1}(x_2)Y_1(1-Y_2)Y_3 +C_{1,0,0}(x_2)Y_1(1-Y_2)(1-Y_3)  \\
    \Delta_{\gamma,2}^{1|1}&=+(e^{X_{32}'\beta}-1)(1-C_{1,1,0}(x_3))Y_1Y_2(1-Y_3)-e^{x_{12}'\beta}C_{0,1,0}(x_3)(1-Y_1)Y_2(1-Y_3)\\
    &-e^{x_{13}'\beta-\gamma}C_{0,1,1}(x_3)(1-Y_1)Y_2Y_3 \\
    \Delta_{\gamma,3}^{1|1}&= -\Omega_{11}(x)e^{x_{21}'\beta+\gamma}Y_1(1-Y_2)Y_3-\Omega_{11}(x)e^{x_{31}'\beta}Y_1(1-Y_2)(1-Y_3)\\
    &-\Omega_{11}(x)e^{X_{12}'\beta}(1-Y_1)Y_2(1-Y_3)-\Omega_{11}(x)e^{x_{13}'\beta-\gamma}(1-Y_1)Y_2Y_3 \\
    \Delta_{\gamma,4}^{1|1}&=+\Omega_{12}(x)(e^{X_{32}'\beta}-1)^2Y_1Y_2(1-Y_3)+\Omega_{12}(x)e^{2x_{12}'\beta}(1-Y_1)Y_2(1-Y_3)\\
    &+\Omega_{12}(x)e^{2x_{13}'\beta-2\gamma}(1-Y_1)Y_2Y_3+\Omega_{12}(x)Y_1(1-Y_2) \\
     \Delta_{\gamma,5}^{1|1}&=e^{x_{13}'\beta-\gamma}(1-Y_1)Y_2Y_3
\end{align*}
First, we have
\begin{align*}
    \mathbb{E}\left[\Delta_{\gamma,1}^{1|1}|x\right]&=-\int \frac{e^{\gamma+x_2'\beta+a}}{1+e^{\gamma+x_2'\beta+a}}\frac{e^{x_1'\beta+a}}{1+e^{x_1'\beta+a}}\frac{e^{\gamma+x_2'\beta+a}}{1+e^{\gamma+x_2'\beta+a}} \frac{1}{1+e^{\gamma+x_3'\beta+a}}(e^{x_{32}'\beta}-1)\pi(a|x)da \\
    &+\int \frac{e^{\gamma+x_2'\beta+a}}{1+e^{\gamma+x_2'\beta+a}}\frac{e^{x_1'\beta+a}}{1+e^{x_1'\beta+a}}\frac{1}{1+e^{\gamma+x_2'\beta+a}} \pi(a|x)da \\
     &=-\int \frac{1}{1+e^{\gamma+x_2'\beta+a}}\frac{e^{x_1'\beta+a}}{1+e^{x_1'\beta+a}}\frac{e^{\gamma+x_2'\beta+a}}{1+e^{\gamma+x_2'\beta+a}} \frac{e^{\gamma+x_3'\beta+a}}{1+e^{\gamma+x_3'\beta+a}}\pi(a|x)da \\
     &+\int \frac{e^{\gamma+x_2'\beta+a}}{1+e^{\gamma+x_2'\beta+a}}\frac{e^{x_1'\beta+a}}{1+e^{x_1'\beta+a}}\frac{e^{\gamma+x_2'\beta+a}}{1+e^{\gamma+x_2'\beta+a}} \frac{1}{1+e^{\gamma+x_3'\beta+a}}\pi(a|x)da \\
    &+\int \frac{e^{\gamma+x_2'\beta+a}}{1+e^{\gamma+x_2'\beta+a}}\frac{e^{x_1'\beta+a}}{1+e^{x_1'\beta+a}}\frac{1}{1+e^{\gamma+x_2'\beta+a}} \pi(a|x)da \\
    &=+\int \frac{e^{\gamma+x_2'\beta+a}}{1+e^{\gamma+x_2'\beta+a}}\frac{e^{x_1'\beta+a}}{1+e^{x_1'\beta+a}}\frac{e^{\gamma+x_2'\beta+a}}{1+e^{\gamma+x_2'\beta+a}} \frac{1}{1+e^{\gamma+x_3'\beta+a}}\pi(a|x)da \\
    &+\int \frac{e^{\gamma+x_2'\beta+a}}{1+e^{\gamma+x_2'\beta+a}}\frac{e^{x_1'\beta+a}}{1+e^{x_1'\beta+a}}\frac{1}{1+e^{\gamma+x_2'\beta+a}}\frac{1}{1+e^{\gamma+x_3'\beta+a}} \pi(a|x)da \\
    &=\mathbb{E}[Y_1Y_2(1-Y_3)|Y_0=0,x]
\end{align*}
By a very similar calculation, $ \mathbb{E}\left[\Delta_{\gamma,2}^{1|1}|x\right]=-\mathbb{E}[Y_1Y_2(1-Y_3)|Y_0=0,x]$. Then, 
\begin{align*}
    \mathbb{E}\left[\Delta_{\gamma,3}^{1|1}|x\right]&=\Omega_{11}(x)\Sigma_{12}(x) \\
    \mathbb{E}\left[\Delta_{\gamma,4}^{1|1}|x\right]&=\Omega_{12}(x)\Sigma_{22}(x) \\
\mathbb{E}\left[\Delta_{\gamma,5}^{1|1}|x\right]&=+e^{x_{13}'\beta-\gamma}P_{011}(x)
\end{align*}
It follows that 
\begin{align*}
    \mathbb{E}\left[\Delta_{\gamma}^{1|1}|x\right]=\Omega_{11}(x)\Sigma_{12}(x)+\Omega_{12}(x)\Sigma_{22}(x)+e^{x_{13}'\beta-\gamma}P_{011}(x)=D_{21}(x)-D_{21}(x)=0
\end{align*}

\noindent \underline{\textbf{D)3) $S_{\beta}-\psi_{\beta}^{eff}(Y_{1}^{3},x)\perp \psi_{\theta}^{0|0}(Y_{1}^{3},Y_{0}^1,x)$}} \\
Fix $j\in \{2,\ldots,K+1\}$.
Let $\Delta_{\beta_{j-1}}^{0|0}=(S_{\beta_{j-1}} -\psi_{\beta_{j-1}}^{eff}(Y_{1}^{3},Y_{i0}^1,x)) \psi_{\theta}^{0|0}(Y_{1}^{3},Y_{0}^1,x)$. Tedious calculations and rearrangements lead to the following decomposition:

\begin{align*}
    \Delta_{\beta_{j-1}}^{0|0}&= \Delta_{\beta_{j-1},1}^{0|0}+\Delta_{\beta_{j-1},2}^{0|0}+\Delta_{\beta_{j-1}}^{0|0}(x_1)+\Delta_{\beta_{j-1}}^{0|0}(x_2)+\Delta_{\beta_{j-1}}^{0|0}(x_3)
\end{align*}
where
\begin{align*}
    \Delta_{\beta_{j-1}}^{0|0}(x_1)&= \Delta_{\beta_{j-1},1}^{0|0}(x_1)+ \Delta_{\beta_{j-1},2}^{0|0}(x_1)\\
    \Delta_{\beta_{j-1},1}^{0|0}(x_1)&=-(e^{x_{23}'\beta }-1)x_{1,j-1}B_{001}(x_{1})(1-Y_1)(1-Y_2)Y_3 -e^{x_{21}'\beta+\gamma}x_{1,j-1}B_{101}(x_{1})Y_1(1-Y_2)Y_3 \\
    &-e^{x_{31}'\beta}x_{1,j-1}B_{100}(x_{1})Y_1(1-Y_2)(1-Y_3) \\
     &+x_{1,j-1}B_{011}(x_{1})(1-Y_1)Y_2Y_3+x_{1,j-1}B_{010}(x_{1})(1-Y_1)Y_2(1-Y_3) \\
    \Delta_{\beta_{j-1},2}^{0|0}(x_1)&=e^{x_{21}'\beta+\gamma}x_{1,j-1}(Y_1(1-Y_2)Y_3+e^{x_{31}'\beta}x_{1,j-1}Y_1(1-Y_2)(1-Y_3)
\end{align*}
and
\begin{align*}
    \Delta_{\beta_{j-1}}^{0|0}(x_2)&= \Delta_{\beta_{j-1},1}^{0|0}(x_2)+\Delta_{\beta_{j-1},2}^{0|0}(x_2)+\Delta_{\beta_{j-1},3}^{0|0}(x_2)\\
     \Delta_{\beta_{j-1},1}^{0|0}(x_2)&=e^{x_{23}'\beta}x_{2,j-1}(1-B_{001}(x_{2}))(1-Y_1)(1-Y_2)Y_3 +x_{2,j-1}B_{001}(x_{2})(1-Y_1)(1-Y_2)Y_3  \\
     &-x_{2,j-1}(1-B_{011}(x_{2}))(1-Y_1)Y_2Y_3 -x_{2,j-1}(1-B_{010}(x_{2}))(1-Y_1)Y_2(1-Y_3) \\
     \Delta_{\beta_{j-1},2}^{0|0}(x_2)&=  +e^{x_{21}'\beta+\gamma}x_{2,j-1}(1-C_{101}(x_{2}))Y_1(1-Y_2)Y_3-e^{x_{31}'\beta}x_{2,j-1}C_{100}(x_{2})Y_1(1-Y_2)(1-Y_3) \\
    \Delta_{\beta_{j-1},3}^{0|0}(x_2)&=-e^{x_{23}'\beta}x_{2,j-1}(1-Y_1)(1-Y_2)Y_3-e^{x_{21}'\beta+\gamma}x_{2,j-1}Y_1(1-Y_2)Y_3
\end{align*}
and
\begin{align*}
    \Delta_{\beta_{j-1}}^{0|0}(x_3)&=\Delta_{\beta_{j-1},1}^{0|0}(x_3)+\Delta_{\beta_{j-1},2}^{0|0}(x_3)+\Delta_{\beta_{j-1},3}^{0|0}(x_3) \\
    \Delta_{\beta_{j-1},1}^{0|0}(x_3)&=-(e^{x_{23}'\beta}-1)x_{3,j-1}B_{001}(x_{3})(1-Y_1)(1-Y_2)Y_3 -x_{3,j-1}(1-Y_1)(1-Y_2)Y_3  \\
    &+e^{x_{21}'\beta+\gamma}x_{3,j-1}(1-B_{101}(x_{3}))Y_1(1-Y_2)Y_3 +e^{x_{31}'\beta}x_{3,j-1}(1-B_{100}(x_{3}))Y_1(1-Y_2)(1-Y_3) \\
    \Delta_{\beta_{j-1},2}^{0|0}(x_3)&=-x_{3,j-1}(1-C_{011}(x_{3}))(1-Y_1)Y_2Y_3+x_{3,j-1}C_{010}(x_{3})(1-Y_1)Y_2(1-Y_3) \\
     \Delta_{\beta_{j-1},3}^{0|0}(x_3)&=e^{x_{23}'\beta }x_{3,j-1}(1-Y_1)(1-Y_2)Y_3 -e^{x_{31}'\beta}x_{3,j-1}Y_1(1-Y_2)(1-Y_3)
\end{align*}
and last
\begin{align*}
    \Delta_{\beta_{j-1},1}^{0|0}&=+\Omega_{j1}(x)(e^{x_{23}'\beta }-1)^2(1-Y_{1})(1-Y_{2})Y_{3} +\Omega_{j1}(x)e^{2x_{21}'\beta+2\gamma}Y_{1}(1-Y_{2})Y_{3} \\
   &+\Omega_{j1}(x)e^{2x_{31}'\beta}Y_{1}(1-Y_{2})(1-Y_{3}) +\Omega_{j1}(x)(1-Y_{1})Y_{2} \\
     \Delta_{\beta_{j-1},2}^{0|0}&=-\Omega_{j2}(x)e^{x_{21}'\beta+\gamma}Y_1(1-Y_2)Y_3 -\Omega_{j2}(x)e^{x_{31}'\beta}Y_1(1-Y_2)(1-Y_3)\\
   &-\Omega_{j2}(x)e^{x_{12}'\beta}(1-Y_1)Y_2(1-Y_3)-\Omega_{j2}(x)e^{x_{13}'\beta-\gamma}(1-Y_1)Y_2Y_3
\end{align*}
Starting first with the terms in \say{$x_1$}, we have:
\begin{align*}
    &\frac{1}{x_{1,j-1}}\mathbb{E}[\Delta_{\beta_{j-1},1}^{0|0}(x_1)|x]=\mathbb{E}\left[\frac{e^{x_1'\beta+A}}{1+e^{x_1'\beta+A}}\mathbb{E}\left[-\psi_{\theta}^{0|0}(Y_{1}^{3},Y_{0}^1,x)|x,A\right]|x\right] =0 \\
    & \mathbb{E}[\Delta_{\beta_{j-1},2}^{0|0}(x_1)|x]=e^{x_{21}'\beta+\gamma}x_{1,j-1}P_{101}(x)+e^{x_{31}'\beta}x_{1,j-1}P_{100}(x)
\end{align*}
Next, for the terms in \say{$x_2$}, we have:
\begin{align*}
    \frac{1}{x_{2,j-1}}\mathbb{E}\left[ \Delta_{\beta,1}(x_2)|x\right]&=
    \int  \frac{1}{1+e^{x_2'\beta+a}} \frac{1}{1+e^{x_1'\beta+a}} \frac{1}{1+e^{x_2'\beta+a}} \frac{e^{x_3'\beta+a}}{1+e^{x_3'\beta+a}}e^{x_{23}'\beta}\pi(a|x)da \\
    &+\int  \frac{e^{x_2'\beta+a}}{1+e^{x_2'\beta+a}} \frac{1}{1+e^{x_1'\beta+a}} \frac{1}{1+e^{x_2'\beta+a}} \frac{e^{x_3'\beta+a}}{1+e^{x_3'\beta+a}}\pi(a|x)da \\
    &-\int  \frac{1}{1+e^{x_2'\beta+a}} \frac{1}{1+e^{x_1'\beta+a}} \frac{e^{x_2'\beta+a}}{1+e^{x_2'\beta+a}} \pi(a|x)da \\
     &= \int  \frac{1}{1+e^{x_2'\beta+a}} \frac{1}{1+e^{x_1'\beta+a}} \frac{e^{x_2'\beta+a}}{1+e^{x_2'\beta+a}} \pi(a|x)da \\
       &-\int  \frac{1}{1+e^{x_2'\beta+a}} \frac{1}{1+e^{x_1'\beta+a}} \frac{e^{x_2'\beta+a}}{1+e^{x_2'\beta+a}} \pi(a|x)da \\
       &=0  \\
\end{align*}
\begin{align*}
        \frac{1}{x_{2,j-1}}E\left[ \Delta_{\beta,2}(x_2)|x\right]&=\int  \frac{1}{1+e^{\gamma+x_2'\beta+a}} \frac{e^{x_1'\beta+a}}{1+e^{x_1'\beta+a}} \frac{1}{1+e^{\gamma+x_2'\beta+a}} \frac{e^{x_3'\beta+a}}{1+e^{x_3'\beta+a}}e^{x_{21}'\beta+\gamma}\pi(a|x)da \\
        &-\int  \frac{e^{\gamma+x_2'\beta+a}}{1+e^{\gamma+x_2'\beta+a}} \frac{e^{x_1'\beta+a}}{1+e^{x_1'\beta+a}} \frac{1}{1+e^{\gamma+x_2'\beta+a}} \frac{1}{1+e^{x_3'\beta+a}}e^{x_{31}'\beta}\pi(a|x)da \\
        &=\int  \frac{1}{1+e^{\gamma+x_2'\beta+a}} \frac{1}{1+e^{x_1'\beta+a}} \frac{e^{\gamma+x_2'\beta+a}}{1+e^{\gamma+x_2'\beta+a}} \frac{e^{x_3'\beta+a}}{1+e^{x_3'\beta+a}}\pi(a|x)da \\
        &-\int  \frac{e^{\gamma+x_2'\beta+a}}{1+e^{\gamma+x_2'\beta+a}} \frac{1}{1+e^{x_1'\beta+a}} \frac{1}{1+e^{\gamma+x_2'\beta+a}} \frac{e^{x_3'\beta+a}}{1+e^{x_3'\beta+a}}\pi(a|x)da  \\
        &=0 \\
        \mathbb{E}\left[\Delta_{\beta_{j-1},3}^{0|0}(x_2)|x\right]&=-e^{x_{23}'\beta}x_{2,j-1}P_{001}(x)-e^{x_{21}'\beta+\gamma}x_{2,j-1}P_{101}(x)
\end{align*}
By the same token, for the terms in \say{$x_3$}, one arrives at $\mathbb{E}\left[\Delta_{\beta_{j-1},1}^{0|0}(x_3)|x\right]=\mathbb{E}\left[\Delta_{\beta_{j-1},2}^{0|0}(x_3)|x\right]=0$ and 
\begin{align*}
    \mathbb{E}\left[\Delta_{\beta_{j-1},1}^{0|0}(x_3)|x\right]&=\mathbb{E}\left[\Delta_{\beta_{j-1},2}^{0|0}(x_3)|x\right]=0\\
      \mathbb{E}\left[\Delta_{\beta_{j-1},3}^{0|0}(x_3)|x\right]&=e^{x_{23}'\beta }x_{3,j-1}P_{001}(x)-e^{x_{31}'\beta}x_{3,j-1}P_{100}(x)
\end{align*}
Finally, $\mathbb{E}[\Delta_{\beta_{j-1},1}^{0|0}|x]=\Omega_{j,1}(x)\Sigma_{11}(x),\mathbb{E}[\Delta_{\beta_{j-1},2}^{0|0}|x]=\Omega_{j,2}(x)\Sigma_{12}(x)$. Collecting terms, we get
\begin{align*}
      \mathbb{E}\left[\Delta_{\beta_{j-1}}^{0|0}|x\right]&=e^{x_{21}'\beta+\gamma}x_{1,j-1}P_{101}(x)+e^{x_{31}'\beta}x_{1,j-1}P_{100}(x)-e^{x_{23}'\beta}x_{2,j-1}P_{001}(x)-e^{x_{21}'\beta+\gamma}x_{2,j-1}P_{101}(x) \\
      &e^{x_{23}'\beta }x_{3,j-1}P_{001}(x)-e^{x_{31}'\beta}x_{3,j-1}P_{100}(x)+\Omega_{j1}(x)\Sigma_{11}(x)+\Omega_{j2}(x)\Sigma_{12}(x) \\
      &=-D_{1j}(x)+D_{1j}(x) \\
      &=0
\end{align*}
This is of course valid for all slope parameters $\beta_j$ and hence  $S_{\beta}-\psi_{\beta}^{eff}(Y_{1}^{3},x)\perp \psi_{\theta}^{0|0}(Y_{1}^{3},Y_{0}^1,x)$ \\

\noindent \underline{\textbf{D)4) $S_{\beta}-\psi_{\beta}^{eff}(Y_{1}^{3},x)\perp \psi_{\theta}^{1|1}(Y_{1}^{3},Y_{0}^1,x)$}} \\
\noindent Fix $j\in \{2,\ldots,K+1\}$.
Let $\Delta_{\beta_{j-1}}^{1|1}=(S_{\beta_{j-1}} -\psi_{\beta_{j-1}}^{eff}(Y_{1}^{3},Y_{i0}^1,x)) \psi_{\theta}^{1|1}(Y_{1}^{3},Y_{0}^1,x)$. A last set of lengthy calculations and rearrangements lead to the following decomposition:
\begin{align*}
    \Delta_{\beta_{j-1}}^{1|1}&= \Delta_{\beta_{j-1},1}^{1|1}+ \Delta_{\beta_{j-1},2}^{1|1} + \Delta_{\beta_{j-1}}^{1|1}(x_1)+\Delta_{\beta_{j-1}}^{1|1}(x_2)+\Delta_{\beta_{j-1}}^{1|1}(x_3)
\end{align*}
where
\begin{align*}
    \Delta_{\beta_{j-1}}^{1|1}(x_1)&= \Delta_{\beta_{j-1},1}^{1|1}(x_1)+\Delta_{\beta_{j-1},2}^{1|1}(x_1) \\
    \Delta_{\beta_{j-1},1}^{1|1}(x_1)&= +(e^{x_{32}'\beta}-1)x_{1,j-1}(1-B_{110}(x_1))Y_1Y_2(1-Y_3) +e^{x_{12}'\beta}x_{1,j-1}(1-B_{010}(x_1))(1-Y_1)Y_2(1-Y_3) \\
    &+e^{x_{13}'\beta-\gamma}x_{1,j-1}(1-B_{011}(x_1))(1-Y_1)Y_2Y_3 \\
     &-x_{1,j-1}(1-B_{101}(x_1))Y_1(1-Y_2)Y_3 -x_{1,j-1}(1-B_{100}(x_1))Y_1(1-Y_2)(1-Y_3)  \\
    \Delta_{\beta_{j-1},2}^{1|1}(x_1)&=-e^{x_{12}'\beta}x_{1,j-1}(1-Y_1)Y_2(1-Y_3)-e^{x_{13}'\beta-\gamma}x_{1,j-1}(1-Y_1)Y_2Y_3
\end{align*}
and
\begin{align*}
    \Delta_{\beta_{j-1}}^{1|1}(x_2)&=\Delta_{\beta_{j-1},1}^{1|1}(x_2)+\Delta_{\beta_{j-1},2}^{1|1}(x_2)+\Delta_{\beta_{j-1},3}^{1|1}(x_2) \\
    \Delta_{\beta_{j-1},1}^{1|1}(x_2)&= -e^{x_{32}'\beta}x_{2,j-1}C_{110}(x_2)Y_1Y_2(1-Y_3)-x_{2,j-1}(1-C_{110}(x_2))Y_1Y_2(1-Y_3) \\
     &+x_{2,j-1}C_{101}(x_2)Y_1(1-Y_2)Y_3 +x_{2,j-1}C_{100}(x_2)Y_1(1-Y_2)(1-Y_3) \\
    \Delta_{\beta_{j-1},2}^{1|1}(x_2)&=-e^{x_{12}'\beta}x_{2,j-1}B_{010}(x_2)(1-Y_1)Y_2(1-Y_3)+e^{x_{13}'\beta-\gamma}x_{2,j-1}(1-B_{011}(x_2))(1-Y_1)Y_2Y_3 \\
    \Delta_{\beta_{j-1},3}^{1|1}(x_2)&=e^{x_{32}'\beta}x_{2,j-1}Y_1Y_2(1-Y_3)+e^{x_{12}'\beta}x_{2,j-1}(1-Y_1)Y_2(1-Y_3)
\end{align*}
and
\begin{align*}
\Delta_{\beta_{j-1}}^{1|1}(x_3)&=\Delta_{\beta_{j-1},1}^{1|1}(x_3)+\Delta_{\beta_{j-1},2}^{1|1}(x_3)+\Delta_{\beta_{j-1},3}^{1|1}(x_3)\\
    \Delta_{\beta_{j-1},1}^{1|1}(x_3)&= +e^{x_{32}'\beta}x_{3,j-1}(1-C_{110}(x_3))Y_1Y_2(1-Y_3)+x_{3,j-1}C_{110}(x_3)Y_1Y_2(1-Y_3) \\
    &-e^{x_{12}'\beta}x_{3,j-1}C_{010}(x_3)(1-Y_1)Y_2(1-Y_3)-e^{x_{13}'\beta-\gamma}x_{3,j-1}C_{011}(x_3)(1-Y_1)Y_2Y_3 \\
   \Delta_{\beta_{j-1},2}^{1|1}(x_3)&=-x_{3,j-1}(1-B_{101}(x_3))Y_1(1-Y_2)Y_3 +x_{3,j-1}B_{100}(x_3)Y_1(1-Y_2)(1-Y_3)  \\
     \Delta_{\beta_{j-1},3}^{1|1}(x_3)&=e^{x_{13}'\beta-\gamma}x_{3,j-1}(1-Y_1)Y_2Y_3-e^{x_{32}'\beta}x_{3,j-1}Y_1Y_2(1-Y_3)
\end{align*}
and last
\begin{align*}
     \Delta_{\beta_{j-1},1}^{1|1}&=-\Omega_{j1}(x)e^{x_{21}'\beta+\gamma}Y_1(1-Y_2)Y_3-\Omega_{j1}(x)e^{x_{31}'\beta}Y_1(1-Y_2)(1-Y_3)\\
    &-\Omega_{j1}(x)e^{x_{12}'\beta}(1-Y_1)Y_2(1-Y_3)-\Omega_{j1}(x)e^{x_{13}'\beta-\gamma}(1-Y_1)Y_2Y_3 \\
     \Delta_{\beta_{j-1},2}^{1|1}&=+\Omega_{j2}(x)(e^{x_{32}'\beta}-1)^2Y_1Y_2(1-Y_3)
    +\Omega_{j2}(x)e^{2x_{12}'\beta}(1-Y_1)Y_2(1-Y_3)\\
    &+\Omega_{j2}(x)e^{2x_{13}'\beta-2\gamma}(1-Y_1)Y_2Y_3+\Omega_{j2}(x)Y_1(1-Y_2) 
\end{align*}
Starting first with the terms in \say{$x_1$}, we have:
\begin{align*}
     \frac{1}{x_{1,j-1}}\mathbb{E}\left[\Delta_{\beta_{j-1},1}^{1|1}(x_1)|x\right]&= \mathbb{E}\left[\frac{1}{1+e^{x_1'\beta+A}}\mathbb{E}\left[\psi_{\theta}^{1|1}(Y_{i1}^{3},Y_{i0}^1,X_i)|x,A\right]|x\right]=0 \\
      \mathbb{E}\left[\Delta_{\beta_{j-1},2}^{1|1}(x_1)|x\right]&=-e^{x_{12}'\beta}x_{1,j-1}P_{010}(x)-e^{x_{13}'\beta-\gamma}x_{1,j-1}P_{011}(x)
\end{align*}
For the terms in \say{$x_2$}
\begin{align*}
      \frac{1}{x_{2,j-1}}\mathbb{E}\left[\Delta_{\beta_{j-1},1}^{1|1}(x_2)|x\right]&= -\int \frac{e^{\gamma+x_2'\beta+a}}{1+e^{\gamma+x_2'\beta+a}} \frac{e^{x_1'\beta+a}}{1+e^{x_1'\beta+a}} \frac{e^{\gamma+x_2'\beta+a}}{1+e^{\gamma+x_2'\beta+a}} \frac{1}{1+e^{\gamma+x_3'\beta+a}}e^{x_{32}'\beta}\pi(a|x)da \\
    &-\int \frac{1}{1+e^{\gamma+x_2'\beta+a}} \frac{e^{x_1'\beta+a}}{1+e^{x_1'\beta+a}} \frac{e^{\gamma+x_2'\beta+a}}{1+e^{\gamma+x_2'\beta+a}} \frac{1}{1+e^{\gamma+x_3'\beta+a}}\pi(a|x)da\\
    &+\int  \frac{e^{\gamma+x_2'\beta+a}}{1+e^{\gamma+x_2'\beta+a}}\frac{e^{x_1'\beta+a}}{1+e^{x_1'\beta+a}}  \frac{1}{1+e^{\gamma+x_2'\beta+a}} \pi(a|x)da \\
    &=-\int \frac{e^{\gamma+x_2'\beta+a}}{1+e^{\gamma+x_2'\beta+a}} \frac{e^{x_1'\beta+a}}{1+e^{x_1'\beta+a}} \frac{1}{1+e^{\gamma+x_2'\beta+a}} \frac{e^{\gamma+x_3'\beta+a}}{1+e^{\gamma+x_3'\beta+a}}\pi(a|x)da \\
    &+\int  \frac{e^{\gamma+x_2'\beta+a}}{1+e^{\gamma+x_2'\beta+a}}\frac{e^{x_1'\beta+a}}{1+e^{x_1'\beta+a}}  \frac{1}{1+e^{\gamma+x_2'\beta+a}}\frac{e^{\gamma+x_3'\beta+a}}{1+e^{\gamma+x_3'\beta+a}} \pi(a|x)da \\
    &=0 \\
    \frac{1}{x_{2,j-1}}\mathbb{E}\left[\Delta_{\beta_{j-1},2}^{1|1}(x_2)|x\right]&=-\int \frac{e^{x_2'\beta+a}}{1+e^{x_2'\beta+a}}\frac{1}{1+e^{x_1'\beta+a}}\frac{e^{\gamma+x_2'\beta+a}}{1+e^{\gamma+x_2'\beta+a}}\frac{1}{1+e^{\gamma+x_3'\beta+a}}e^{x_{12}'\beta}\pi(a|x)da \\
    &+\int \frac{1}{1+e^{x_2'\beta+a}}\frac{1}{1+e^{x_1'\beta+a}}\frac{e^{\gamma+x_2'\beta+a}}{1+e^{\gamma+x_2'\beta+a}}\frac{e^{\gamma+x_3'\beta+a}}{1+e^{\gamma+x_3'\beta+a}}e^{x_{13}'\beta-\gamma}\pi(a|x)da \\
    &=-\int \frac{1}{1+e^{x_2'\beta+a}}\frac{e^{x_1'\beta+a}}{1+e^{x_1'\beta+a}}\frac{e^{\gamma+x_2'\beta+a}}{1+e^{\gamma+x_2'\beta+a}}\frac{1}{1+e^{\gamma+x_3'\beta+a}}\pi(a|x)da \\
    &+\int \frac{1}{1+e^{x_2'\beta+a}}\frac{e^{x_1'\beta+a}}{1+e^{x_1'\beta+a}}\frac{e^{\gamma+x_2'\beta+a}}{1+e^{\gamma+x_2'\beta+a}}\frac{1}{1+e^{\gamma+x_3'\beta+a}}\pi(a|x)da \\
    &=0 \\
    \mathbb{E}\left[\Delta_{\beta_{j-1},3}^{1|1}(x_2)|x\right]&=e^{x_{32}'\beta}x_{2,j-1}P_{110}(x)+e^{x_{12}'\beta}x_{2,j-1}P_{010}(x)
\end{align*}
Similar calculations for the terms in \say{$x_3$} yield $  \mathbb{E}\left[\Delta_{\beta_{j-1},1}^{1|1}(x_3)|x\right]=\mathbb{E}\left[\Delta_{\beta_{j-1},2}^{1|1}(x_3)|x\right]=0$ and $\mathbb{E}\left[\Delta_{\beta_{j-1},3}^{1|1}(x_3)|x\right]=e^{x_{13}'\beta-\gamma}x_{3,j-1}P_{011}(x)-e^{x_{32}'\beta}x_{3,j-1}P_{110}(x)$. Finally, 
\begin{align*}
    \mathbb{E}\left[\Delta_{\beta_{j-1},1}^{1|1}|x\right]&=-\Omega_{j1}(x)e^{x_{21}'\beta+\gamma}P_{101}(x)-\Omega_{j1}(x)e^{x_{31}'\beta}P_{100}(x)\\
    &-\Omega_{j1}(x)e^{x_{12}'\beta}P_{010}(x)-\Omega_{j1}(x)e^{x_{13}'\beta-\gamma}P_{011}(x) \\
    &=\Omega_{j1}(x)\Sigma_{12}(x) \\
     \mathbb{E}\left[\Delta_{\beta_{j-1},2}^{1|1}|x\right]&=+\Omega_{j2}(x)(e^{x_{32}'\beta}-1)^2P_{110}(x)
    +\Omega_{j2}(x)e^{2x_{12}'\beta}P_{010}(x)\\
    &+\Omega_{j2}(x)e^{2x_{13}'\beta-2\gamma}P_{011}(x)+\Omega_{j2}(x)P_{10}(x) \\
    &=\Omega_{j2}(x)\Sigma_{22}(x)
\end{align*}
Putting the different pieces together, we ultimately obtain
\begin{align*}
    \mathbb{E}\left[ \Delta_{\beta_{j-1}}^{1|1}|x\right]&=-e^{x_{12}'\beta}x_{1,j-1}P_{010}(x)-e^{x_{13}'\beta-\gamma}x_{1,j-1}P_{011}(x) \\
    &+e^{x_{32}'\beta}x_{2,j-1}P_{110}(x)+e^{x_{12}'\beta}x_{2,j-1}P_{010}(x) \\
    &+e^{x_{13}'\beta-\gamma}x_{3,j-1}P_{011}(x)-e^{x_{32}'\beta}x_{3,j-1}P_{110}(x) \\
    &+\Omega_{j1}(x)\Sigma_{12}(x)+\Omega_{j2}(x)\Sigma_{22}(x) \\
    &=-D_{2j}(x)+D_{2j}(x) \\
    &=0
\end{align*}
This is of course valid for all slope parameters $\beta_j$ and hence  $S_{\beta}-\psi_{\beta}^{eff}(Y_{1}^{3},x)\perp \psi_{\theta}^{1|1}(Y_{1}^{3},Y_{0}^1,x)$ \\

\noindent \underline{\textbf{E) Conclusion}} \\
Having verified all the conditions of Theorem 3.2 in \cite{newey1990semiparametric} for the initial condition $Y_0=0$, we conclude that in that case $\psi_{\theta}^{eff}(Y_{1}^{3},X)$ is the efficient score of the AR(1) model. The semiparametric efficiency bound is given by $\mathbb{E}\left[D(X)'\Sigma(X)^{-1}D(X)\right]^{-1}$. Symmetric results can be shown to hold for the case $Y_0=1$.

\end{document}